\renewcommand*{\backref}[1]{}
\renewcommand\@pnumwidth{20pt}
\def\cleardoublepage{\clearpage\if@twoside \ifodd\c@page\else
    \hbox{}
    \thispagestyle{empty}
    \newpage
    \if@twocolumn\hbox{}\newpage\fi\fi\fi}
\makeatother \clearpage{\pagestyle{plain}\cleardoublepage}
\newcommand*\chapterlabel{}
\titleformat{\chapter}[display]  % type (section,chapter,etc...) to vary,  shape (eg display-type)
	{\normalfont\bfseries\Huge} % format of the chapter
	{\gdef\chapterlabel{\thechapter\ }}     % the label 
 	{0pt} % separation between label and chapter-title
 	  {\begin{tikzpicture}[remember picture,overlay]
    \node[yshift=-8cm] at (current page.north west)
      {\begin{tikzpicture}[remember picture, overlay]
        \node[anchor=north east,yshift=-7.2cm,xshift=34mm,minimum height=30mm,inner sep=0mm] at (current page.north west)
        {\parbox[top][30mm][t]{15mm}{\raggedleft \rule{0cm}{0.6cm}\color{black}\chapterlabel}};  %the empty rule is just to get better base-line alignment
        \node[anchor=north west,yshift=-7.2cm,xshift=37mm,text width=\textwidth,minimum height=30mm,inner sep=0mm] at (current page.north west)
              {\parbox[top][30mm][t]{\textwidth}{\rule{0cm}{0.6cm}\color{black}#1}};
       \end{tikzpicture}
      };
   \end{tikzpicture}
   \gdef\chapterlabel{}
  } % code before the title body
\titlespacing*{\chapter}{-3.7cm}{50pt-\parskip-\parskip}{30pt+\parskip+\parskip}
\titlespacing*{\section}{0pt}{13.2pt}{1em-\parskip}  % 13.2pt is line spacing for a text with 11pt font size
\titlespacing*{\subsection}{0pt}{13.2pt}{1em-\parskip}
\titlespacing*{\subsubsection}{0pt}{13.2pt}{1em-\parskip}
\titlespacing*{\paragraph}{0pt}{13.2pt}{1em-\parskip}
\newcounter{myparts}
\newcommand*\partlabel{}
\titleformat{\part}[display]  % type (section,chapter,etc...) to vary,  shape (eg display-type)
	{\normalfont\bfseries\Huge} % format of the part
	{\gdef\partlabel{\thepart\ }}     % the label 
 	{0pt} % separation between label and part-title
 	  {\ifpdf\setlength{\unitlength}{20mm}\else\setlength{\unitlength}{0mm}\fi
	  \addtocounter{myparts}{1}
	  \begin{tikzpicture}[remember picture,overlay]
    \node[anchor=north west,xshift=-65mm,yshift=-6.9cm-\value{myparts}*20mm] at (current page.north east) % for unknown reasons: 3mm missing -> 65 instead of 62
      {\begin{tikzpicture}[remember picture, overlay]
        \draw[fill=black] (0,0) rectangle(62mm,20mm);   % -\value{myparts}\unitlength
        \node[anchor=north west,yshift=-6.1cm-\value{myparts}*\unitlength,xshift=-60.5mm,minimum height=30mm,inner sep=0mm] at (current page.north east)
        {\parbox[top][30mm][t]{55mm}{\raggedright \color{white}Part \partlabel \rule{0cm}{0.6cm}}};  %the empty rule is just to get better base-line alignment
        \node[anchor=north east,yshift=-6.1cm-\value{myparts}*\unitlength,xshift=-63.5mm,text width=\textwidth,minimum height=30mm,inner sep=0mm] at (current page.north east)
              {\parbox[top][30mm][t]{\textwidth}{\raggedleft \rule{0cm}{0.6cm}\color{black}#1}};
       \end{tikzpicture}
      };
   \end{tikzpicture}
   \gdef\partlabel{}
  } % code before the title body
\titlespacing*{\part}{11.06cm}{26.4pt-\parskip-\parskip}{0pt}
\def\resetMathstrut@{%
  \setbox\z@\hbox{%
    \mathchardef\@tempa\mathcode`\(\relax
      \def\@tempb##1"##2##3{\the\textfont"##3\char"}%
      \expandafter\@tempb\meaning\@tempa \relax
  }%
  \ht\Mathstrutbox@1.2\ht\z@ \dp\Mathstrutbox@1.2\dp\z@
}
\newcommand\blfootnote[1]{%
	\begingroup
	\renewcommand\thefootnote{}\footnote{#1}%
	\addtocounter{footnote}{-1}%
	\endgroup
}
\theoremstyle{definition}
\newtheorem{definition}{Definition}%[section]
\newtheorem{theorem}{Theorem}
\newcommand{\norm}[1]{\left\lVert#1\right\rVert}
\begin{document}

\setlength{\parindent}{10pt}
\setlength{\parskip}{0pt} %(needs to be before titlepage and frontmatter to keep the table of contents lists short)
\frontmatter
\begin{titlepage}
\begin{center}

\null\vspace{1cm}
{\huge \textbf{Towards Everyday Virtual Reality} \\[0.3cm] \textbf{through Eye Tracking}} \\[2.6cm]

{\large \textbf{Dissertation}} \\[0.3cm]
{\large der Mathematisch-Naturwissenschaftlichen Fakult{\"a}t}\\[0.3cm]
{\large der Eberhard Karls Universit{\"a}t T{\"u}bingen}\\[0.3cm]
{\large zur Erlangung des Grades eines}\\[0.3cm]
{\large Doktors der Naturwissenschaften}\\[0.3cm]
{\large (Dr. rer. nat.)}\\[3cm]

vorgelegt von\\[0.3cm]
{\large M. Sc. Efe BOZKIR}\\[0.3cm]
{\large aus Kar{\c{s}}{\i}yaka, T{\"u}rkei}\\[3cm]

\vfill
{\large T{\"u}bingen}\\[0.3cm]
{\large 2021}\\[0.3cm]

\end{center}
\end{titlepage}

\thispagestyle{empty}

\mbox{}

\vspace{16.8cm}
\noindent Gedruckt mit Genehmigung der Mathematisch-Naturwissenschaftlichen Fakult{\"a}t der \\Eberhard Karls Universit{\"a}t T{\"u}bingen.\\[0.5cm]

\noindent \hspace{-0.2cm}\begin{tabular}{l l}
Tag der m\"undlichen Qualifikation:& 25.02.2022 \vspace{0.2cm}\\
Dekan:& Prof.~Dr.~Thilo~Stehle \vspace{0.2cm}\\
1.~Berichterstatterin:& Prof.~Dr.~Enkelejda Kasneci \vspace{0.2cm}\\
2.~Berichterstatter:& Jun.-Prof.~Dr.~Michael Krone\\
\end{tabular}

\cleardoublepage
\thispagestyle{empty}

\vspace*{5cm}

\begin{raggedleft}
    	\textit{``Our true mentor in life is science.''}\\
     -- Mustafa Kemal Atatürk\\
\end{raggedleft}

\vspace{4.25cm}

\begin{center}
    To my family
\end{center}

\setcounter{page}{0}
\chapter*{Acknowledgments}
\vspace{5pt}
\markboth{Acknowledgments}{Acknowledgments}
\addcontentsline{toc}{chapter}{Acknowledgments}

Firstly, I would like to thank my supervisor, Prof. Dr. Enkelejda Kasneci for accepting me to the Chair for Human-Computer Interaction (former Perception Engineering), guiding me throughout my doctoral studies, and for being one of the most supportive people that I have ever worked with through my entire academic and professional career. Special and sincere thanks to Jun.-Prof. Dr. Michael Krone, Prof. Dr. Oliver Bringmann, and Prof. Dr. Richard Göllner for evaluating my work. Furthermore, I thank Margot Reimold for helping me with all the bureaucratic burdens and making my life easier.\\

\vspace{-5px}

During this work, I was quite lucky to have plenty of pleasant and fruitful collaborations, including interdisciplinary ones. I particularly thank Prof. Dr. Richard Göllner, Dr. Lisa Hasenbein, Philipp Stark, and Joseph Ferdinand for the engaging collaborations on the classroom studies. Furthermore, I thank Prof. Dr. Rafael F. Schaefer, Prof. Dr. Nico Pfeifer, Dr. Onur Günlü, Dr. Mete Akgün, and Ali Burak Ünal for the delightful discussions and work on the privacy-related topics. In addition, I thank our collaborators in Greece in the DAAD project, especially Prof. Dr. Athanassios Skodras.\\

\vspace{-5px}

Without an enjoyable work environment, I would not be successful. Therefore, I thank the past and present members of Chair for Human-Computer Interaction, many of them becoming my collaborators in the meantime, including Dr. Wolfgang Fuhl, Dr. Shahram Eivazi, Dr. Thomas Kübler, Dr. Nora Castner, Dr. David Geisler, Dr. Thiago Santini, Yao Rong, Benedikt Hosp, Daniel Weber, Björn Severitt, and others. Special thanks go to my co-workers in the office, Dr. Ömer Sümer, Hong Gao, and Babette Bühler, for the privilege of beneficial discussions and for sharing nice moments. Furthermore, I thank the members of the Data Science \& Analytics Research Group for the nice ambiance and especially Martin Pawelczyk for fruitful brainstormings on different ideas.\\

\vspace{-5px}

A great appreciation goes to my friends in Tübingen outside the lab for all the fun, silly jokes, and unforgettable memories together. I particularly thank Caner Bağcı, Dr. Murat Seçkin Ayhan, Dr. Mete Akgün, Ali Burak Ünal, Mehmet Direnç Mungan, Emre Barış Karaaslan, Dr. Ezgi Süheyla Karaaslan, and Bilge Sürün. Moreover, I thank my friends in Munich, especially Selin Kenet, Umut Kaya, and Ümit Suat Mayadalı. Even though we were physically away from each other during all that time, I very much enjoyed our online games and chats, especially during the pandemic. You made my life more liveable during weird times.\\

\vspace{-5px}

I have been living away from my family for a long time but thanks to their warm welcome in Tübingen from the very first day, I have again remembered the family feeling with them. My thanks go to my cousin Dr. Acun Papakçı, Dr. Elif Kösedağı, Arinna, and Taru.\\

\vspace{-5px}

Being the only child of my parents led me to have the most incredible friends ever. Even though I live far away from all of you, I feel your sister- and brotherhood all the time. My thanks go to Esra İçer, Dr. Arda Erdut, Utku Barış Özsoy, Kutay Yücetürk, Yuşa Öz, Ömer Kaya, Emin Melih Özsoy, Levent Budakoğlu, Altuğ Bayram, and Mertcan Yığın. Moreover, I sincerely thank my cousin Batuhan Sarıoğlu for being there all the years, even though our paths fell physically apart after studying together in Istanbul. Additionally, I thank uncle Ercüment and Ayhan, for their encouragement and trust even starting from my high school years. I really appreciate it.\\

\vspace{-5px}

Lastly and most importantly, I thank my mother Neşe and my father Fevzi for their unconditional love, support, encouragement, and confidence. These literally mean the world to me.

\bigskip

\hfill Efe~BOZKIR

\cleardoublepage
\chapter*{Abstract}
\vspace{5pt}
\markboth{Abstract}{Abstract}
\addcontentsline{toc}{chapter}{Abstract} % adds an entry to the table of contents

With developments in computer graphics, hardware technology, perception engineering, and human-computer interaction, virtual reality and virtual environments are becoming more integrated into our daily lives. Head-mounted displays, however, are still not used as frequently as other mobile devices such as smart phones and watches. With increased usage of this technology and the acclimation of humans to virtual application scenarios, it is possible that in the near future an everyday virtual reality paradigm will be realized.\\

\vspace{-5px}

When considering the marriage of everyday virtual reality and head-mounted displays, eye tracking is an emerging technology that helps to assess human behaviors in a real time and non-intrusive way. Still, multiple aspects need to be researched before these technologies become widely available in daily life. Firstly, attention and cognition models in everyday scenarios should be thoroughly understood. Secondly, as eyes are related to visual biometrics, privacy preserving methodologies are necessary. Lastly, instead of studies or applications utilizing limited human participants with relatively homogeneous characteristics, protocols and use-cases for making such technology more accessible should be essential.\\

\vspace{-5px}

In this work, taking the aforementioned points into account, a significant scientific push towards everyday virtual reality has been completed with three main research contributions. Human visual attention and cognition have been researched in virtual reality in two different domains, including education and driving. Research in the education domain has focused on the effects of different classroom manipulations on human visual behaviors, whereas research in the driving domain has targeted safety related issues and gaze-guidance. The user studies in both domains show that eye movements offer significant implications for these everyday setups. The second substantial contribution focuses on privacy preserving eye tracking for the eye movement data that is gathered from head-mounted displays. This includes differential privacy, taking temporal correlations of eye movement signals into account, and privacy preserving gaze estimation task by utilizing a randomized encoding-based framework that uses eye landmarks. The results of both works have indicated that privacy considerations are possible by keeping utility in a reasonable range. Even though few works have focused on this aspect of eye tracking until now, more research is necessary to support everyday virtual reality. As a final significant contribution, a blockchain- and smart contract-based eye tracking data collection protocol for virtual reality is proposed to make virtual reality more accessible. The findings present valuable insights for everyday virtual reality and advance the state-of-the-art in several directions.\\

\begin{otherlanguage}{german}
\cleardoublepage
\chapter*{Zusammenfassung}
\vspace{5pt}
\markboth{Zusammenfassung}{Zusammenfassung}
\addcontentsline{toc}{chapter}{Zusammenfassung}

Durch Entwicklungen in den Bereichen Computergrafik, Hardwaretechnologie, Perception Engineering und Mensch-Computer Interaktion, werden Virtual Reality und virtuelle Umgebungen immer mehr in unser tägliches Leben integriert. Head-Mounted Displays werden jedoch im Vergleich zu anderen mobilen Geräten, wie Smartphones und Smartwatches, noch nicht so häufig genutzt. Mit zunehmender Nutzung dieser Technologie und der Gewöhnung von Menschen an virtuelle Anwendungsszenarien ist es wahrscheinlich, dass in naher Zukunft ein alltägliches Virtual-Reality-Paradigma realisiert wird.\\ 

\vspace{-5px}

Im Hinblick auf die Kombination von alltäglicher Virtual Reality und Head-Mounted-Displays, ist Eye Tracking eine neue Technologie, die es ermöglicht, menschliches Verhalten in Echtzeit und nicht-intrusiv zu messen. Bevor diese Technologien in großem Umfang im Alltag eingesetzt werden können, müssen jedoch noch zahlreiche Aspekte genauer erforscht werden. Zunächst sollten Aufmerksamkeits- und Kognitionsmodelle in Alltagsszenarien genau verstanden werden. Des Weiteren sind Maßnahmen zur Wahrung der Privatsphäre notwendig, da die Augen mit visuellen biometrischen Indikatoren assoziiert sind. Zuletzt sollten anstelle von Studien oder Anwendungen, die sich auf eine begrenzte Anzahl menschlicher Teilnehmer mit relativ homogenen Merkmalen stützen, Protokolle und Anwendungsfälle für eine bessere Zugänglichkeit dieser Technologie von wesentlicher Bedeutung sein.\\

\vspace{-5px}

In dieser Arbeit wurde unter Berücksichtigung der oben genannten Punkte ein bedeutender wissenschaftlicher Vorstoß mit drei zentralen Forschungsbeiträgen in Richtung alltäglicher Virtual Reality unternommen. Menschliche visuelle Aufmerksamkeit und Kognition innerhalb von Virtual Reality wurden in zwei unterschiedlichen Bereichen, Bildung und Autofahren, erforscht. Die Forschung im Bildungsbereich konzentrierte sich auf die Auswirkungen verschiedener Manipulationen im Klassenraum auf das menschliche Sehverhalten, während die Forschung im Bereich des Autofahrens auf sicherheitsrelevante Fragen und Blickführung abzielte. Die Nutzerstudien in beiden Bereichen zeigen, dass Blickbewegungen signifikante Implikationen für diese alltäglichen Situationen haben. Der zweite wesentliche Beitrag fokussiert sich auf Privatsphäre bewahrendes Eye Tracking für Blickbewegungsdaten von Head-Mounted Displays. Dies beinhaltet Differential Privacy, welche zeitliche Korrelationen von Blickbewegungssignalen berücksichtigt und Privatsphäre wahrende Blickschätzung durch Verwendung eines auf randomisiertem Encoding basierenden Frameworks, welches Augenreferenzunkte verwendet. Die Ergebnisse beider Arbeiten zeigen, dass die Wahrung der Privatsphäre möglich ist und gleichzeitig der Nutzen in einem akzeptablen Bereich bleibt. Wenngleich es bisher nur wenig Forschung zu diesem Aspekt von Eye Tracking gibt, ist weitere Forschung notwendig, um den alltäglichen Gebrauch von Virtual Reality zu ermöglichen. Als letzter signifikanter Beitrag, wurde ein Blockchain- und Smart Contract-basiertes Protokoll zur Eye Tracking Datenerhebung für Virtual Reality vorgeschlagen, um Virtual Reality besser zugänglich zu machen. Die Ergebnisse liefern wertvolle Erkenntnisse für alltägliche Nutzung von Virtual Reality und treiben den aktuellen Stand der Forschung in mehrere Richtungen voran.

\end{otherlanguage}

\cleardoublepage
\pdfbookmark{\contentsname}{toc}
\tableofcontents

\cleardoublepage
\phantomsection
\listoffigures

\cleardoublepage
\phantomsection
\listoftables

\cleardoublepage
\phantomsection
\addcontentsline{toc}{chapter}{List of Abbreviations}
\chapter*{List of Abbreviations}
\markboth{List of Abbreviations}{List of Abbreviations}
\begin{acronym}
\vspace{10pt}
\acro{ADHD}{\quad Attention-Deficit/Hyperactivity Disorder}
\acro{AMT}{\quad Amazon Mechanical Turk}
\acro{ANOVA}{\quad Analysis of Variance}
\acro{AR}{\quad Augmented Reality}
\acro{ART}{\quad Aligned Rank Transform}
\acro{CFPA}{\quad Chunk-based Fourier Perturbation Algorithm}
\acro{CPT}{\quad Continuous Performance Task}
\acro{DARE}{\quad Decomposable and Affine Randomized Encoding}
\acro{DCFPA}{\quad Difference- and Chunk-based Fourier Perturbation Algorithm}
\acro{DFT}{\quad Discrete Fourier Transform}
\acro{DP}{\quad Differential Privacy}
\acro{DT}{\quad Decision Tree}
\acro{EEG}{\quad Electroencephalography}
\acro{ETH}{\quad Ether}
\acro{FOV}{\quad Field of View}
\acro{FPA}{\quad Fourier Perturbation Algorithm}
\acro{GAN}{\quad Generative Adversarial Network}
\acro{GDPR}{\quad General Data Protection Regulation}
\acro{HCI}{\quad Human-Computer Interaction} 
\acro{HDD}{\quad Head-down Display} 
\acro{HMAC}{\quad Keyed-Hashing for Message Authentication} 
\acro{HMD}{\quad Head-mounted Display} 
\acro{HUD}{\quad Head-up Display} 
\acro{I-VT}{\quad Velocity-Threshold Identification} 
\acro{IDFT}{\quad Inverse Discrete Fourier Transform} 
\acro{IVR}{\quad Immersive Virtual Reality} 
\acro{k-NN}{\quad k-Nearest Neighbor}
\acro{LPA}{\quad Laplace Perturbation Algorithm}
\acro{MR}{\quad Mixed Reality}
\acro{NMSE}{\quad Normalized Mean Square Error}
\acro{PDF}{\quad Probability Density Function}
\acro{OOI}{\quad Object-of-Interest}
\acro{RBF}{\quad Radial Basis Function}
\acro{RE}{\quad Randomized Encoding}
\acro{RF}{\quad Random Forest}
\acro{SHA}{\quad Secure Hash Algorithm}
\acro{SMC}{\quad Secure Multi-party Computation} 
\acro{SVM}{\quad Support Vector Machine} 
\acro{SVR}{\quad Support Vector Regression} 
\acro{TTC}{\quad Time-to-Collision}
\acro{VR}{\quad Virtual Reality}
\acro{XR}{\quad Extended Reality}
\end{acronym}

% space before each new paragraph according to the template guidelines.
%(needs to be after titlepage and frontmatter to keep the table of contents lists short)
\setlength{\parskip}{1em}

%%%%%%%%%%%%%%%%%%%%%%%%%%%%%%%%%%%%%%%%%%%%%%
%%%%% MAIN: The chapters of the thesis
%%%%%%%%%%%%%%%%%%%%%%%%%%%%%%%%%%%%%%%%%%%%%%
\mainmatter
\cleardoublepage
\chapter{List of Publications}
\label{chapter_publications}
\vspace{-1em}
\section*{Accepted Publications Relevant to This Thesis}

\begin{enumerate}
	\item\label{publist_lbl_PLOSONE} \textbf{Efe Bozkir}, Onur Günlü, Wolfgang Fuhl, Rafael F. Schaefer, and Enkelejda Kasneci. Differential privacy for eye tracking with temporal correlations. \emph{PLoS ONE}, 16(8):e0255979, 2021. doi: 10.1371/journal.pone.0255979.

	\item\label{publist_lbl_CHI21} Hong Gao, \textbf{Efe Bozkir}, Lisa Hasenbein, Jens-Uwe Hahn, Richard
	Göllner, and Enkelejda Kasneci. Digital transformations of classrooms in virtual reality. In~\emph{Proceedings of the 2021 CHI Conference on Human Factors in Computing Systems}, New York, NY, USA, 2021. ACM. doi: 10.1145/3411764.3445596.
	
	\item\label{publist_lbl_VR21} \textbf{Efe Bozkir}, Philipp Stark, Hong Gao, Lisa Hasenbein, Jens-Uwe Hahn, Enkelejda Kasneci, and Richard Göllner. Exploiting object-of-interest information to understand attention in VR classrooms. In~\emph{2021 IEEE Virtual Reality and 3D User Interfaces (VR)}, New York, NY, USA, 2021. IEEE. doi: 10.1109/VR50410.2021.00085.
	
	\item\label{publist_lbl_AIVR20} \textbf{Efe Bozkir}, Shahram Eivazi, Mete Akgün, and Enkelejda Kasneci. Eye tracking data collection protocol for VR for remotely located subjects using blockchain and smart contracts. In~\emph{2020 IEEE International Conference on Artificial Intelligence and Virtual Reality (AIVR) Work-in-progress papers}, New York, NY, USA, 2020. IEEE. doi: 10.1109/AIVR50618.2020.00083.
	
	\item\label{publist_lbl_ETRA20} \textbf{Efe Bozkir}, Ali Burak Ünal, Mete Akgün, Enkelejda Kasneci, and Nico Pfeifer. Privacy preserving gaze estimation using synthetic images via a randomized encoding based framework. In~\emph{ACM Symposium on Eye Tracking Research and Applications}, New York, NY, USA, 2020. ACM. doi: 10.1145/3379156.3391364.
	
	\item\label{publist_lbl_SAP19} \textbf{Efe Bozkir}, David Geisler, and Enkelejda Kasneci. Assessment of driver attention during a safety critical situation in VR to generate VR-based training. In~\emph{ACM Symposium on Applied Perception 2019}, New York, NY, USA, 2019. ACM. doi: 10.1145/3343036.3343138.
	
	\item\label{publist_lbl_VRW19} \textbf{Efe Bozkir}, David Geisler, and Enkelejda Kasneci. Person independent, privacy preserving, and real time assessment of cognitive load using eye tracking in a virtual reality setup. In~\emph{2019 IEEE Conference on Virtual Reality and 3D User Interfaces (VR) Workshops}, New York, NY, USA, 2019. IEEE. doi: 10.1109/VR.2019.8797758.
	
\end{enumerate} 

\section*{Other Publications not Relevant to This Thesis}

\begin{enumerate}
\setcounter{enumi}{7}
	\item Wolfgang Fuhl, \textbf{Efe Bozkir}, and Enkelejda Kasneci. Reinforcement learning for the privacy preservation and manipulation of eye tracking data. In~\emph{Artificial Neural Networks and Machine Learning -- ICANN 2021}, Cham, Switzerland, 2021. Springer. doi: 10.1007/978-3-030-86380-7\_48
	
	\item Ömer Sümer, \textbf{Efe Bozkir}, Thomas Kübler, Sven Grüner, Sonja Utz,
	and Enkelejda Kasneci. FakeNewsPerception: An eye movement dataset on the perceived believability of news stories. \emph{Data in Brief}, 35, 2021. doi: 10.1016/j.dib.2021.106909.
	
	\item Wolfgang Fuhl, \textbf{Efe Bozkir}, Benedikt Hosp, Nora Castner, David Geisler, Thiago C. Santini, and Enkelejda Kasneci. ``Encodji: Encoding gaze data into emoji space for an amusing scanpath classification approach''. In~\emph{Proceedings of the 11th ACM Symposium on Eye Tracking Research \& Applications}, New York, NY, USA, 2019. ACM. doi: 10.1145/3314111.3323074.
	
\end{enumerate}

\newpage

\section{Scientific Contribution}
\label{scientific_contribution}
This work advances the state-of-the-art in multiple directions and is considered a significant research step towards the achievement of everyday virtual reality through eye tracking. Contributions include valuable insights on human visual attention and cognition studied in multiple domains, namely education and driving, privacy preserving manipulation of eye movements with differential privacy and a randomized encoding-based framework, and a versatile protocol employing blockchain and smart contracts for eye tracking data collection suitable for subjects that are located remotely in virtual reality.

Chapter~\ref{chapter_introduction} introduces each topic under the umbrella subject of everyday virtual reality. Seven scientific publications at renowned venues from 2019 to 2021 served as motivation for this work and are introduced in Chapter~\ref{chapter_motivation_findings} along with their fundamental methodologies and findings. Chapter~\ref{chapter_discussion} discusses the outcomes and provides an outlook on how to make virtual reality collectively more available in human everyday life.

\cleardoublepage

\chapter{Introduction} 
\label{chapter_introduction}
With recent developments in different fields of computer science, such as computer graphics, sensing technology, and artificial intelligence, and with the decreased cost of smart glasses and head-mounted displays (HMDs)~\cite{FT_oculus_price_cut}, virtual and augmented reality (VR/AR) are fast becoming integrated into our daily lives. It is estimated that the \acs{VR} headset market size will grow at an annual rate of $\approx 28\%$ from $2021$ to $2028$~\cite{VR_marketsize_research}. This indicates that humans will use such devices more regularly in variety of daily routines, including entertainment, education, and training. Currently, different devices with wide range of technical capabilities are available, from cheap, low-end options like Google Cardboard~\cite{google_card_board} to high-end devices like HTC Vive Pro Eye~\cite{htc_pro_eye_online}.

Virtual reality is defined by LaValle~\cite[p.~1]{VR_Book_Lavalle:2020} as ``inducing a targeted behavior in an organism by using sensory stimulation, while the organism has little or no awareness of the interference.''. In his definition, there are four main components including targeted behavior, organism, sensory stimulation, and awareness. Targeted behavior is essentially defined as the experience that the living organism is having. For instance, it could be the experience of a student attending a lecture in an immersive \acs{VR} classroom or a novice driver training in \acs{VR} for safety critical situations that could occur in the real world. According to LaValle~\cite{VR_Book_Lavalle:2020}, the organism could be any life form; however, in the context of this work, the humans are the focus. In addition, while sensory stimulation is carried out by integrating regular or alternative senses for humans, awareness of the experience is related to sense of presence.

Sense of presence is an important issue for virtual reality. It can be discussed through a philosophic perspective by taking Matrix\footnote{\textit{The Matrix Trilogy} is a science fiction trilogy that was written and directed by the Wachowskis~\cite{matrix_triology_link}.}-like utopic scenarios and environments into account. Alternative and virtual realities have been traced back to the \nth{18} century and are thought to originate with the writings of Immanuel Kant~\cite{sommerring_kant:1796} who discussed the realities that occur in one's mind, but differ from the real world~\cite{VR_Book_Lavalle:2020}. Later in \nth{20} century, Antonin Artaud used the term ``la réalité virtuelle'' to describe theatre as being similar to a second reality in his work~\cite{Artaud:1938_fr}. The terms ``artificial reality'' and later ``virtual reality'' in the technical domain were used by Myron Krueger and Jaron Lanier in 1960s and 1980s, respectively~\cite{krueger_artificial_reality_1983,metaphysics_of_vr_heim1993,burbules_2004}. Philosophical discussions of terminology and the utmost possibilities of virtual realities and their effects on humans aside, \acs{VR} researchers and practitioners have been working on different aspects of this technology, such as hardware, software, and human perception, for the last several decades. While we are still far from an extreme sense of presence in \acs{VR} with the current technology due to reasons such as low resolutions in \acs{VR} \acs{HMD}s, limited field-of-view (FOV), or possible cybersickness when \acs{HMD}s are being used, Jason Paul of NVIDIA has estimated that in 2017 we may be only two decades away from generating resolutions that match human eyes~\cite{nvidia_estimate_VR}. Working towards such a goal, research and development of \acs{VR} hardware and software go hand in hand with research of human physiology and perception, thus connecting the perception engineering discipline as a whole~\cite{VR_Book_Lavalle:2020}.

With continuous efforts in perception engineering, it is likely that \acs{VR} technology will be used more frequently in human daily life. Garner et al.~\cite{Garner2018_everydayVR} define everyday \acs{VR} as any kind of activity that people are linked to once a day. The authors provide real world examples as well as use-cases for classrooms, skills training, and workplaces. Apart from the applications, there is also a push in the \acs{VR} research community in this direction. In 2021, the Workshop on Everyday Virtual Reality was organized at the IEEE VR for the \nth{7} year in a row~\cite{ieee_wevr21}.

A number of applications and research questions in the \acs{VR} domain are evaluated by using pre- or post-tests and questionnaires such as presence and realism~\cite{schubert_presence,lombard_presence}, simulator sickness~\cite{SSQ_original_paper}, and mental workload~\cite{NASATLX_20years_later}. While these evaluation methods are well established and psychologically relevant, they do not offer insights on the temporal dynamics of visual attention and cognition during the use of \acs{VR} systems, a useful approach when everyday use-cases are considered. In particular, eye, head, and hand tracking sensors could be used for immersive \acs{VR} environments that are realized with \acs{HMD}s. Furthermore, the data acquired from these sensing modalities could be combined with self-reported measurements to gain more insights from users.

Eyes and their movements are of a particular interest since it is possible to analyze where people look at specific points in time along with the presented stimulus as long as an accurate estimation of eye regions and gaze is carried out. While in-the-wild scenarios for such tasks are more challenging due to illuminations or occlusions~\cite{fuhl_pupil_det_in_the_wild}, \acs{HMD}-based \acs{VR} setups offer more controlled conditions as the eye tracking sensors are located within the \acs{HMD}s. This enables not only eye images that can be recorded closer to eye regions, but also presents the possibility of configurations with controlled illumination that could be convenient for evaluating pupil related measurements for mental workload~\cite{beatty:1982}. With recent developments in sensing technologies, it is possible to have integrated eye trackers in modern high-end \acs{HMD}s (e.g., HTC Vive Pro Eye) or plug eye tracker sensors to such \acs{HMD}s~\cite{Kassner:2014:POS:2638728.2641695}. These have further eased the building of data collection pipelines for eye movements and the understanding of human visual attention using data from immersive virtual environments.

While it is possible to infer valuable information using eye movements in \acs{VR} environments, such as salient regions of the scene~\cite{sitzmann_saliency_tvcg18}, human intentions~\cite{David-John_intent_VR_etra21}, or forecasting eye fixations~\cite{fixationnet_bulling_tvcg21} which can be used for user assistive tasks, eyes include biometric information. For instance, personal authentication via iris textures is a well known approach~\cite{Daugman_1993, KUMAR20101016}. Additionally, eye movements~\cite{Zhang:2018:CAU:3178157.3161410} or the combination of aggregated eye movement features~\cite{Steil:2019:PPH:3314111.3319913} assist with biometric authentication. Zhang et al.~\cite{Zhang:2018:CAU:3178157.3161410} report that eye movement-based authentication schemes could be used with \acs{VR} devices for applications such as in-app purchases~\cite{george:seamlessAndSecureVR:2017}. In addition, Steil et al.~\cite{Steil:2019:PPH:3314111.3319913} found that people agree on sharing eye tracking data if a governmental health agency is available within the process or the data is used for research purposes. Taking authentication and personal identification possibilities into account, eye tracking data should be maintained in such a way that if authentication is not required for a task performed in the virtual environment, personal identities are still protected from adversaries. Still, other tasks which require eye movement data such as gaze guidance or foveated rendering should not be significantly affected due to privacy protection. Recently, in both Europe and in the US, data protection regulations have been legally introduced~\cite{EUdataregulations2018,CCPA_18}. With the increased amount of daily \acs{VR} applications and their usage in the commercial domain, it is foreseeable that the applications and scenarios that enable daily usage of \acs{VR} \acs{HMD}s along with biological signals leading to authentication such as eye movements will require dedicated privacy protection mechanisms.

Apart from the potential of eye movements in \acs{VR} and accompanying privacy concerns, another direction that would collectively increase the accessibility of \acs{VR} environments is the enabling of remote interaction including multiple people as opposed to single player-like application scenarios. This is still a challenge as \acs{VR} \acs{HMD}s are not used as frequently as other personal devices like mobile phones or smart watches. One reason may be that such devices are perceived as being solely for entertainment purposes and, even with their decreased costs, it is not straightforward to use them in the daily life. However, especially with occurrences such as the COVID-19 pandemic, the value of remote work and collaboration has become more appreciated. With the immersion that \acs{VR} setups provide, many domains could be digitally transformed. High-end \acs{HMD}s along with native \acs{VR} applications provide high quality resolutions and immersive and realistic environments. When multi-person setups and corresponding communication between people are considered, however, it is necessary to have communication via web-services and often by introducing third-party entities in order to enable such conditions. This creates opportunity for additional parties to manipulate data if the collected data is shared across parties. In addition, implementing and adapting such applications and systems requires a lot of effort. In terms of development efforts, accessibility, and cost, web-based \acs{VR} is an emerging area that enables people to use \acs{VR} environments at a lower cost with solutions such as Google Cardboard. While easier implementation and integration is inherent in such applications, it is not very straightforward to obtain complex 3D scenes and high quality sensor readings such as eye tracking (i.e., due to unavailability of standardized sensors and low sampling frequencies of such setups.). Overall, in the near or distant future both paradigms will be developed and utilized while \acs{VR} environments and \acs{HMD}s become mainstream in our daily lives. The issues with both approaches will be solved or mitigated depending on the prerequisites of the application domains and trade-offs will be decided. 

Taking all into consideration, the rest of this chapter is organized as follows. As the content of this thesis is research on everyday \acs{VR} through the leveraging of eye movements in \acs{VR}, privacy preserving eye tracking, and the accessibility of \acs{VR} using biological signals such as eye movements, relevant introductions to each topic have been covered. The possibilities of eye tracking signals in \acs{VR} are discussed in Section~\ref{lbl:section_eyetracking_inVR}. Then, privacy considerations with an emphasis on authentication are explored in Section~\ref{lbl:section_privacy_eyetracking}. Later, the accessibility of \acs{VR} for everyday setups with a focus on eye movements is examined in Section~\ref{lbl:sectionAccessiblity_VR}. The final section, Section~\ref{lbl:toward_everydayVR_summary}, introduces how these topics could be combined within the framework of everyday virtual reality.

\section{Eye Tracking in Virtual Reality} 
\label{lbl:section_eyetracking_inVR}
Assessing eye movements could yield a plethora of possibilities for human-computer interaction. Eye movements represent the information gathered when humans look at specific areas of the presented stimulus. Such movements do not happen fully consciously. In the last several decades, researchers have used eye movements to evaluate human behaviors in different tasks or domains such as during reading~\cite{Rayner1998EyeMI}, multimedia learning~\cite{alemdag2018}, web search~\cite{Granka_2004_SIGIR}, driving~\cite{10.1145/1743666.1743701}, in medicine~\cite{Holzman179}, linguistics~\cite{conklin2018eye}, user experience design~\cite{bergstrom2014eye}, psychology~\cite{Mele2012}, education~\cite{Jarodzka_Holmqvist_Gruber_2017}, programming~\cite{10.1145/3145904}, or marketing~\cite{Eye_tracking_for_vis_marketing_2008}. This use in such a variety of applications and domains in fact shows the potential of eye movements for future directions. In the context of VR, eye movements are also used to assess human attention in an offline way or to actively provide gaze related interaction during virtual experiences. In this section, a proportion of works related to eye tracking and virtual reality are analyzed and reviewed.

To apply fine grained eye movement analyses in the context of immersive virtual environments with \acs{HMD}s, first eye regions and gaze vectors are estimated using eye tracking sensors that are located within \acs{HMD}s. Some \acs{HMD}s provide the opportunity to use the integrated eye trackers (e.g., HTC Vive Pro Eye) or there is also the potential of integrating eye tracking plug-ins (e.g., Pupil-Labs Eye Trackers~\cite{Kassner:2014:POS:2638728.2641695}) without significant effort. One could deploy custom and low-cost sensors~\cite{10.1145/2968219.2968274} along with gaze estimation models and track the eyes as well. These approaches include the steps of estimating eye regions such as pupil and iris and detecting gaze directions using computer vision and machine learning techniques. While gaze estimation is especially important for detecting eye movements such as fixations or saccades, semantic segmentation of eye regions is also convenient for privacy reasons. One might, for example, want to obfuscate iris texture~\cite{iris_obfuscation_etra21} if eye videos are saved during the experience. Kansal and Nathan~\cite{eyenet_segmentation_ICCVW19} proposed a convolutional encoder-decoder network for eye region segmentation, whereas Chaudhary et al.~\cite{chaudhary_ritnet_iccvw_19} used combination of U-Net~\cite{UNet_miccai15} and DenseNet~\cite{DenseNet_CVPR17} to carry out real-time semantic segmentation for eyes. Boutros et al.~\cite{Boutros_2019_ICCV} have proposed a baseline multi-scale segmentation network in which the number of parameters are significantly diminished while reducing the overall accuracy marginally. Cycle \acs{GAN}s~\cite{CycleGAN2017} for eye region segmentation~\cite{Fuhl_2019_ICCV}, ellipse segmentation framework for robust gaze tracking~\cite{ellseg_semantic_seg_for_gaze_tracking_TVCG21}, fast and efficient few-shot segmentation for eyes~\cite{eyeseg_ECCVW20}, domain adaptation for eye segmentation~\cite{domain_adaptation_for_eye_segmentation_ECCVW20}, and identity-preserving eye image generation from semantic segmentation~\cite{D-ID-Net_identity_preserving_image_generation_iccvw19} have also been proposed in the literature. These works focused on practical usability, especially in terms of real-time working functionality or privacy-preservation which are optimal for \acs{VR}/\acs{AR} use-cases.

In addition to the segmentation tasks, different approaches have also been introduced for gaze estimation by using pupil detection~\cite{Starbust_2005, Swirski2013, Kassner:2014:POS:2638728.2641695, Excuse_Wolfgang_2015, Javadi_etal_2015, Else_Wolfgang_2016, fuhl2016non, fuhl2018bore, CBF_Wolfgang_2018, santini2018purest, santini2019get} and recently by mainly machine learning~\cite{Sugano_CVPR2014, Zhang_CVPR2015, Wood_etal_ECCV2016, wood2016_etra, fuhl2016pupilnet, fuhl2017pupilnet, Wang_2018_CVPR, Park_ECCV2018, Park2018, NVGaze_chi19, Yu_2020_CVPR}. While some of the aforementioned works are not directly designed for \acs{VR} and \acs{HMD}s, they provide implications for gaze estimation. Affordable and low-cost solutions are also introduced in this context~\cite{Stengel_MM15_Affordable, low_cost_gaze_tracking_google_cardboard_vrst16}. Estimating the pupil and eye gaze opens up the possibility of detecting eye movement events such as fixations and saccades which can be linked to visual attention and other user states. Fixations are periods during which users fix their gaze on a certain area in the stimulus for a significant amount of time. On the other hand, saccades represent a high-speed shift of eye gaze from one fixation to another. Together, they generate the visual scanpath~\cite{kubler2014subsmatch, geisler2020minhash}. Previous literature has found that long fixations are related to individuals engaging more with the content in the stimulus or that they represent an increased amount of cognitive processes~\cite{just1976eye}. Additionally, longer saccade durations indicate inefficient scanning or searching~\cite{goldberg1999computer}, whereas large saccade amplitudes are related to distant attention shift~\cite{goldberg2002eye}. Detection of such events could be done by applying threshold-based algorithms~\cite{salvucci2000identifying,agtzidis2019360} or by using probabilistic approaches~\cite{tafaj2012bayesian, thiago_bayesian_identification_etra16}. These measures might represent different concepts with presented stimulus; however, they are a valuable source of information for \acs{VR} setups and for related design decisions. In addition to fixation and saccade related features, with the detection of eye regions such as pupil area, one could also use pupil diameter for assessing cognitive load~\cite{beatty:1982, appel2021cross}.

In the \acs{VR} context, using such generated features or raw signals provides a variety of online and offline opportunities. While not being limited to the following, a few of these opportunities include foveated/gaze-contingent rendering, saliency prediction, eye-based interaction, user intent analysis and prediction, assessment of cognitive load, and visual attention analysis. Foveated rendering concentrates on rendering the content that users visually focus in high quality while presenting the remaining stimulus in relatively lower quality. From a computational perspective, this might contribute to making \acs{VR} \acs{HMD}s more accessible in daily scenarios. To this end, a lot of effort has been concentrated on researching different aspects~\cite{9005240,towards_foveated_rendering_gazetracked_VR, Arabadzhiyska2017,is_foveated_perceivable_in_VR_MM17, data_augmentation_for_saccade_landing_point_pred_etra20,gazestereo3d_utilizes_fixation_gaze_est_TOG16, Predicting_gaze_dpth_good_for_gazecontingent_ETRA18}. Arabadzhiyska et al.~\cite{Arabadzhiyska2017} studied prediction of saccade ending points in the context of foveated rendering by considering quality mismatch is not noticeable during saccadic supression. Griffith and Komogortsev~\cite{data_augmentation_for_saccade_landing_point_pred_etra20}, on the other hand, proposed a data augmentation strategy to improve saccade landing point prediction. Hsu et al.~\cite{is_foveated_perceivable_in_VR_MM17} analyzed the perceivability of foveated rendering in different settings in which the technical assessments could prove beneficial for the \acs{VR} community. Meng et al.~\cite{9005240} proposed a foveated rendering approach based on eye-dominance and indicated that their approach offers a better rendering performance than conventional approaches. The task of predicting future gaze locations is not only related to foveated rendering, but also to saliency prediction, and has recently been studied for virtual environments as well~\cite{8998375,SGaze_tvcg19,fixationnet_bulling_tvcg21}. While saliency maps and fixation predictions on the images are studied extensively in the literature~\cite{Kummerer16054,Kummerer_2017_ICCV,10.1007/978-3-030-58558-7_25}, humans explore immersive virtual environments with \acs{HMD}s differently. Thus, attention models also differ compared to conventional setups~\cite{sitzmann_saliency_tvcg18}. Several works have focused on saliency related tasks in 3D virtual environments for head movement prediction~\cite{360VideoSaliency_in_HMD}, salient object detection~\cite{salient_object_det_360_and_dataset_TVCG20}, or visualization of 3D heatmaps~\cite{model_based_realtime_vis_3dheatmaps_ETRA16}. In most of the works, eye tracking data provides a variety of benefits both from the research and practical perspective.

While it is likely that estimating saliency for \acs{VR} or improving foveated rendering configurations will lead people to use this technology more comfortably in daily life, understanding human behavior and interaction during the use of such technology is another key factor and more related to the focus of this work. To this end, Hirzle et al.~\cite{foundations_gaze_interaction_chi19} discuss the foundations of gaze interaction in \acs{HMD}s. In terms of eye-based interaction, many different aspects including gaze- and blink-based inputs~\cite{BCI_gaze_textextry_VR_iui18, gaze_typing_in_VR_impact_of_many_factors_ETRA18, blinktype_text_entry_VR_ismar20}, object interactions via gaze~\cite{10.1145/3314111.3319815, gaze-enhanced_menus_in_VR_vrst20}, and navigation in \acs{VR}~\cite{blinks_and_redirected_walking_algos_vrst18, virtual_navig_automatic_man_pos_before_teleport_vrst20} have been studied. In these works, there are several findings that concern the use of eye features. For instance, Lu et al.~\cite{blinktype_text_entry_VR_ismar20} have reported that according to users' subjective feedback, for hands-free interaction in \acs{VR} blinking is preferred as opposed to dwelling over content for a specific amount of time, which is considered more common. Nguyen and Kunz~\cite{blinks_and_redirected_walking_algos_vrst18} have found that users do not detect the scene rotations up to a certain degree during blinks due to visual supression, which is helpful for redirected walking algorithms in \acs{VR}. Furthermore, Sidenmark and Lundström~\cite{10.1145/3314111.3319815} have indicated that interactions with stationary objects during hand interactions in \acs{VR} might be favorable in terms of attaining fixations.

User intention analysis, efforts to understand human visual attention, and cognitive load assessment via eyes during immersive \acs{VR} experiences are of a particular interest for many researchers. This is due to the possibility of supporting and assisting users during the \acs{VR} experience when such information is known. Additionally, while it is not possible, at present, to create identical configurations with the real world due to technological limitations, one can create hypothetical or utopic scenarios by using human behaviors obtained from different situations. Taking these into account, this direction differs from others such as foveated rendering, saliency estimation, or interactions within virtual spaces. Additionally, understanding human visual attention and perception can also improve the interaction experience during the virtual experience. In the human-aware interaction direction, prediction of touch intentions using gaze~\cite{sparse_haptic_proxy_chi17}, prediction of interaction intentions using gaze information~\cite{David-John_intent_VR_etra21}, prediction of tasks using eye movements~\cite{predicting_task_basedon_eye_movements_VR_etra20} are possible. From a visual attention perspective, a data-driven and eye tracking based approach for locating elements in 3D virtual spaces~\cite{vis_att_optimizing_obj_placement_via_gaze}, immersion preserving attention guidance~\cite{attention_guidance_chi20_HiveFive}, and improvement of driving habits with the help of visual attention analyses~\cite{8448290} have been studied. As mental and cognitive load can be also assessed using eyes, particularly via pupil sizes~\cite{beatty:1982, kasneci2021your}, such information may be helpful for context sensitive assistance and support for users. Recently, Luong et al.~\cite{mental_workload_physiological_performance_features_ismar20} have studied the real time recognition of mental workload using physiological features, including the features related to pupillary activities in a \acs{VR} flight simulator, and obtained up to $65\%$ accuracy. Another aviation use-case was demonstrated by Wilson et al.~\cite{aviation_cognitive_load_IMWUT21} with deep learning including eye features such as pupillary responses and blinks. The authors have shown that it is possible to classify two level cognitive load over $81\%$ accuracy. Similarly, Kübler et al.~\cite{Kubler_Kasneci_Vintila_2017} have studied the pupillary responses to hazard perception in a 360-degree virtual reality setup and found that pupil dilations are helpful for perception.

Apart from the aforementioned directions and use-cases, eye tracking and gaze measurements have been used in \acs{VR} research in different settings such as in medicine~\cite{7829437}, expert-novice analysis based training~\cite{hosp_goalkeeper_expertise_21}, or education~\cite{ahuja_etal_chi21_classroom_twins}. The majority of the studies provide implications through eye movements or pupil related activities, which shows the overall usefulness of eye tracking for \acs{VR} and its future potential.

\section{Privacy and its Considerations for Eye Tracking}
\label{lbl:section_privacy_eyetracking}
Despite the advantages of using sensor data and eye tracking, privacy risks exist. Some of the privacy risks of extended reality (XR) that also mention eye tracking data were discussed by Mhaidli and Schaub~\cite{some_privacy_risks_of_xr_chi21}. Silva et al. focused on eye tracking support for visual analytics and identified that ensuring privacy~\cite[p.~8]{eye_track_support_for_Vis_Analytics_inc_privacy_ETRA19} is a major theme in terms of oppotunities and challenges for eye movements. Katsini et al.~\cite{katsini_secpriv_survey_chi20} discussed the aspects of eye gaze in terms of security and privacy by focusing more on authentication schemes. Similarly, Liebling and Preibusch~\cite{Liebling2014} have argued the need for privacy mechanisms in pervasive eye tracking. Recently, there has already been a push for privacy in both the eye tracking and \acs{VR} communities. With legal regulations like General Data Protection Regulation (GDPR)~\cite{EUdataregulations2018} or similar, it is possible that there will be more emphasis in this direction especially with \acs{VR} devices being used more frequently in everyday life.

One of the most straightforward use-cases of eye data in the biometrics domain is iris authentication. Iris textures can be used reliably for biometric authentication~\cite{Daugman_1993,KUMAR20101016} and iris recognition systems are already used widely, for instance at airports (e.g., in the UK, The Netherlands, and in Canada)~\cite[pp.~2-3]{Daugman2009}. Security protocols such as multi-party computations or homomorphic encryption schemes have recently been applied for such purposes as well~\cite{Barni_SEMBA_19,XSong2020}. While not providing the same level of authentication accuracy as iris recognition schemes, multiple studies have shown that eye movements can also be used for biometric authentication~\cite{Kinnunen:2010:TTP:1743666.1743712,6712725,Komogortsev2010}. Due to the lower success rates, Komogortsev et al.~\cite[p.~4]{Komogortsev2010} have argued that such methods could be integrated with biometric authentication systems that use face or iris recognition as an additional security layer. Eberz et al.~\cite{Eberz:2016:LLE:2957761.2904018} have suggested that eye movement-based authentication could be applicable with settings available in consumer level equipment. Similarly, Zhang et al.~\cite{Zhang:2018:CAU:3178157.3161410} have argued for possible use-cases for \acs{VR} setups with eye movement-based authentication. Furthermore, Zhu et al.~\cite{blinkey_VR_authentication_IMWUT20} proposed a two-factor user authentication method for \acs{VR} \acs{HMD}s based on blinking behaviors and pupil sizes. Such schemes are considerably helpful for authentication-requiring use-cases such as in-app purchases or login scenarios in \acs{VR} and are resilient to shoulder surfing attacks especially in \acs{HMD} consisting situations. However, apart from the authentication-requiring scenarios, collecting and aggregating eye movement data without privacy protection could introduce a privacy breach given the wider use of \acs{VR} devices and the works that map eye movements to user authentication and identification.

Computational privacy and related algorithms could be developed for and applied to any kind of time-series data which is similar to eye movement observations obtained from \acs{VR} displays. However, if the intent is to make privacy preserving mechanisms work practically, one should consider the real time working capability of such mechanisms. From the privacy perspective, multiple aspects should be covered. These include differential privacy (DP)~\cite{dwork2006,dwork_DP_only,dwork2014}, secure multi-party computation~\cite{Yao_1982,Yao_1986} (SMC)-based solutions, and more practical use-cases such as data masking. Differential privacy is an overall scheme for sharing data without compromising the information by which individuals participate in a corresponding dataset by introducing noise on the data~\cite{dwork2006}. The main issue is to find a proper utility-privacy trade-off. In the \acs{SMC}-based works, the main idea is to compute an output without compromising the raw data of the input parties, usually by sharing secrets. Yao~\cite[p.~1]{Yao_1982} gives the example of two millionaires who want to know who is richer without providing information about their own wealth. Other solutions such as randomized encoding~\cite{applebaum2006cryptography,applebaum2017garbled} could be also applied; however, computation complexity or communication costs play an important role in terms of the practical usability of the solutions. From an eye tracking in \acs{VR} perspective, multiple input parties could calculate intentions, activities, or estimate gaze without compromising the raw eye movement information that is obtained from their eye trackers.

In the literature, privacy aspects of eye tracking data have not been researched in-depth yet. Recently, Steil et al.~\cite{steil_diff_privacy} and Liu et al.~\cite{Liu2019} applied standard differential privacy mechanisms on eye movement features (e.g., rate of fixations, mean saccade amplitudes) and on heatmaps, respectively. While these works are the first in the literature to introduce differential privacy mechanisms on the eye tracking data, they do not address the correlations and privacy loss due to them~\cite{Zhangetal_2020} in the differential privacy context. Li et al.~\cite{263891} proposed a formal approach for area-of-interests that works in real time. There are several approaches that focus on iris obfuscation~\cite{iris_obfuscation_etra21} by degrading iris authentication~\cite{John:2019:EDI:3314111.3319816,John_pixel_noise_etra20} or removing iris biometric on the eye images~\cite{John_sec_util_iris_auth_TVCG20,10.1145/3379156.3391375}. Adversarial attacks have also been performed on the classifiers based on eye tracking~\cite{adversarial_hagestedt_etra20}. Furthermore, while not being directly related to \acs{VR}, Steil et al.~\cite{Steil:2019:PPH:3314111.3319913} have studied an automated shutter mechanism on the field camera of an eye tracker based on the scene privacy which could be applied to \acs{AR} setups. In this work, the authors found that scene privacy, namely privacy of the stimulus, could be detected to some extent solely by using eye movement features.

\section{Accessibility of Virtual Reality for Everyday Scenarios}
\label{lbl:sectionAccessiblity_VR}
It is likely that \acs{VR} applications are going to be used by wider communities going forward, taking the market size estimations~\cite{VR_marketsize_research} and current research into account. Even today, users can access \acs{VR} applications through application stores (e.g., Steam~\cite{steam_store}, Oculus Go Store~\cite{oculus_app_lifecycle}), \acs{VR} supported platforms (e.g., Mozilla Hubs~\cite{mozilla_hubs_link}), or even through YouTube~\cite{youtube_online_link} with 360-degree videos. The access to \acs{VR} content through web browsers and services is indeed a valuable contribution since more people can potentially access \acs{VR}. However, native applications designed for a specific system allow for higher quality graphical stimulus and usually a better human-computer interaction experience due to already available physics engines and potential easier use of sophisticated sensors.

Considering either native or web-based \acs{VR} applications, a lot of research and development in computer hardware, software, and perception engineering has been carried out to make different kinds of \acs{VR} systems and applications more available in daily life. On the research side, particularly in perception engineering, human attention and behaviors are important. Developed systems are usually evaluated with human data for many reasons. For instance, if the purpose of a system is entertainment, practitioners may focus on eliminating cybersickness during the virtual experience. In another example, for a system that is designed for training, practitioners may target for scenarios that users might encounter in their real lives. Overall, these approaches require collection of human data during confrontation with developed simulations. The knowledge that is obtained through research on human data is used to provide more advanced virtual experiences. With the many iterations of this loop, everyday usage of \acs{VR} and \acs{HMD}s will be more feasible in the future.

On the research side, collection of human data in \acs{VR} experiments is mostly done in laboratory settings, limiting number of participants in the experiments. This can also lead to homogeneous characteristics of the participants. When the variety of users that can actually own \acs{HMD}s and experience \acs{VR} applications is considered, there is a vicious cycle that can be broken through the development of remote data collection routines and protocols. Such data covers both questionnaires, which is usually easier to collect by using web services, and other sensor data such as eye tracking. To tackle this issue, Ma et al.~\cite{Ma_etal_WebbasedVR_power_by_crowd_2018} have proposed a solution using Amazon Mechanical Turk (AMT)~\cite{amt_online_link} and a web-based \acs{VR} application. In their solution, users have to validate the existence of their VR devices by taking photos including their worker IDs. While they did not collect any eye movements, the effort is a valid solution for the problem of enabling the crowd for \acs{VR} experiments. Very recently, Rivu et al.~\cite{rivu2021remote} addressed a very similar issue and reported four approaches to conducting remote \acs{VR} studies including providing a standalone application through direct download or an application store (e.g., Steam), uploading a standalone application to a \acs{VR} platform (e.g., Mozilla Hubs), or directly setting up the \acs{VR} application on a \acs{VR} platform. The authors have indicated that providing the standalone \acs{VR} application through a direct download option is preferable for the most advanced functionalities and data collection options; however, recruitment would need to be done through social media or forums. In addition, they have also emphasized the importance of clear instructions for the experiment and possible ways for the user to connect with the experimenter if needed. In general, the options Rivu et al.~\cite{rivu2021remote} have proposed should go hand-in-hand and used by practitioners depending on the requirements. For instance, if \acs{VR} platforms do not have an option to collect a specific type of data (e.g., eye movements, hand tracking), then it is more reasonable to provide the applications via a direct link. On the other hand, if it is too much effort to advertise the application with the use of direct download, one could pursue recruitment through application stores by choosing the second option proposed by Rivu et al.~\cite{rivu2021remote}.

In summary, researchers should strive not only to propose novel solutions in perception engineering and \acs{VR}, but to make these solutions accessible for wider communities during research cycles and to evaluate the solutions with a greater number of participants. Until now, this aspect, in particular, is researched by few and remains an open research direction.

\section{Towards Everyday Virtual Reality}
\label{lbl:toward_everydayVR_summary}
Developments in eye tracking methodology, usage of eye tracking data in \acs{VR} studies and applications, privacy preserving considerations for such data, and efforts to make \acs{VR} more accessible will help integrate virtual experiences in everyday life. To this end, in line with the everyday \acs{VR} definition provided by Garner et al.~\cite{Garner2018_everydayVR}, this work proposes novel solutions to a variety of problems facing everyday \acs{VR} and presents an overall framework. 

Multi-modal gaze, i.e., head pose and eye gaze, in \acs{VR} and pupillary measurements could help both for design considerations of virtual environments and for online user assistive tasks during virtual experiences. Design considerations for some of the everyday tasks and setups are more important than the others. For instance, one could design direct replicas of real world configurations based on scientific findings in real world settings for some of the everyday setups in \acs{VR}. However, as attention models could differ in real and virtual worlds, a dedicated attention assessment and related cognitive processes should be carried out. A classroom environment for learning falls directly in the category of everyday \acs{VR} and relevant knowledge transfer can be done from real world studies as comparison. With the increasing popularity of e-learning platforms and, very recently, the necessity for online learning due to the COVID-19 pandemic, learning in remote setups has emerged as standard. However, most of the setups lack immersion and provide limited interaction. Learning in \acs{VR} might solve such disadvantages, but, at the same time, human behaviors should be analyzed before offering \acs{VR}-based learning to make it highly accessible.

Another potential, but unconventional direction for \acs{VR} environments that take place more frequently in everyday life is training. This may be considered unconventional in the umbrella term of everyday \acs{VR} since once required expertise is gained after regular use of a dedicated virtual environment one may not necessarily need to use it moving forward. In terms of training, there are multiple ways to proceed. A straightforward application is providing novices in some specific domain with an immersive environment in which to gain new information. An alternative is replicating unexpected and unusual occasions that can happen in everyday life. Training in the real world for such occasions might be dangerous or even impossible depending on the domain. For instance, safety related situations that happen in maritime applications~\cite{safetystuff_maritime_VR_2020} could benefit from virtual training. More related to everyday \acs{VR} is the example of driver training. Skill training apart, according to Bialkova and Ettema~\cite[p.~2]{8809586}, driving and transportation scenarios are considered within the everyday VR context. While actual driving instruction is performed with real vehicles, there are some safety critical situations during the training of novice drivers that are ethically impossible to create in the real world. For example, the scenario of a pedestrian crossing the street unexpectedly. Additionally, with the growing number of semi-autonomous vehicles and the assumption that fully autonomous vehicles will be available in the future, human-machine interaction is likely to be crucial in traffic scenarios. While not as unique as the maritime example, the number of different interaction scenarios with semi-autonomous vehicles that one may encounter is limited in real world driving learning. With \acs{VR} and pre-programmed routines, novices can train for such situations in \acs{VR} in large part thanks to gaze-based assistance and behavior analysis. However, before all of these, it should be researched whether such gaze-based assistance for safety critical situations has a positive effect on drivers.

The growing importance of privacy-related topics, legally and technically, will require eye tracking-enabled \acs{VR} experiences and configurations to take privacy issues into account. Data conventionally collected before and after the experiments with questionnaires can be anonymized without too much effort. Users at least are usually more accustomed to such data collection protocols. Furthermore, the usage of questionnaires or similar methods does not make much sense when users' everyday experiences with virtual environments (e.g., at their homes or personal spaces) are taken into account. As suggested by Steil et al.~\cite[p.~8]{steil_diff_privacy}, users might not be extensively aware of the kind of inference generated using eye tracking in the \acs{VR} and \acs{AR} context. Differential privacy~\cite{dwork2006,dwork_DP_only} is especially common in the database applications area, and could be used effectively in eye tracking signals by calibrating the required amount of noise to add to signals while providing data for further inference. Well-established formal methods in the differentially private eye tracking field will further help usage of individual protected eye tracking data collected in everyday scenarios and setups. Additionally, \acs{VR} \acs{HMD}s are on the way to becoming personal devices like mobile phones and smart watches. For providing assistance during experience, machine learning models are trained with huge amounts of data, collected from various people. Such models can also be trained in cloud for scalability and better processing power. In a very primitive and naive setup, users can upload their collected eye tracking data directly to the cloud. However, one may not share the raw data due to aforementioned inference possibilities especially by the commercial applications. In such cases, use of secure multi-party computation~\cite{Yao_1982, Yao_1986}, randomized encoding~\cite{applebaum2006cryptography,applebaum2006computationally}, or similar procedures are reasonable as long as real time capability of such solutions in the test time is empirically evidenced. The solution can be anything from privacy preserving gaze estimation to gaze prediction in \acs{VR}; however, such setups depend on the classifiers that are used and real time working capability can be affected by the effectiveness of the communication with the cloud instances. Overall, the use of cloud-based predictive models by preserving the privacy computationally is actually very relevant for everyday \acs{VR} setups when the contemporary cloud infrastructure is utilized~\cite{cloud_infrastucture_spent_2020}.

To enable everyday usage, the inclusion of more people in \acs{VR} systems, from a data perspective, is necessary. While the decreased cost of \acs{HMD}s, marketing, and other relevant attempts could affect this, one important factor is the availability of large scale data collection possibilities and related protocols and applications. For behavioral data other than sensors such as eye tracking, \acs{AMT} has been used widely for crowdsourcing purposes. However, there are several issues with such paradigms if they are applied to \acs{VR}/\acs{AR} sensor data collection, particularly eye tracking. Either forwarding from \acs{AMT} (or similar) to web-based custom \acs{VR} services~\cite{Ma_etal_WebbasedVR_power_by_crowd_2018} or the supplying of stand-alone applications~\cite{rivu2021remote} should be done. As web-based \acs{VR} currently faces major challenges such as lower resolutions or lack of integrated high quality eye tracker devices, providing standalone applications via different channels such as forums or social media is more reasonable. However, in this approach, one should solve the issue of participant compensation and data sharing optimally without any centralized third-parties due to the behavioral biometrics that are inherent in eye movements. If third-parties are involved in the process, an extra layer of protection e.g., by using cryptography, can be applied, but this would increase the overall complexity. Furthermore, validation of data authenticity should be performed to check for adversarial participant behavior.
\cleardoublepage

\chapter{Motivation and Main Findings} 
\label{chapter_motivation_findings}
In this chapter, the papers I have authored in the direction of everyday \acs{VR} during my doctoral studies have been summarized in terms of motivation, methodology, and findings according to the open directions that are laid out in Chapter~\ref{chapter_introduction}. The published papers are available in Chapters~\ref{appendix_A},~\ref{appendix_B}, and~\ref{appendix_C}. 

Eye tracking in \acs{VR} for visual attention and cognitive processes has been researched and studied in two different domains, namely education and driving. In the education-related studies, the research focus is relating conventional eye movements and pupillometry and virtual objects of interests with cognitive processes and visual attention, respectively. In the driving studies, the effect of gaze-aware assistance on human behavior and the feasibility of cognitive load estimation during a safety critical pedestrian crossing have been researched. These are introduced in Section~\ref{section_main_visattention}. 

Privacy preserving eye tracking has been researched in two directions as well. First, differential privacy mechanisms for eye movement signals that have been collected from \acs{VR} and \acs{AR}-related setups taking temporal correlations into account are discussed. Next, the privacy preserving gaze estimation task for \acs{VR} using two input and one function parties is introduced in Section~\ref{section_main_privacy}.

Lastly, for the accessibility of \acs{VR} along with eye tracking, a blockchain- and smart contract-based eye tracking data collection protocol for remote participants is introduced in Section~\ref{section_main_accessibility}. 

Overall, three sub-directions that are researched in this thesis introduce novel use-cases while advancing the state-of-the-art in multiple directions, and propose together a framework that is depicted in Figure~\ref{fig_thesis_overall_framework} which further pushes for everyday \acs{VR}.

\begin{figure}[ht]
\centering
\includegraphics[clip, trim = 2cm 2cm 5cm 1cm, width=\linewidth]{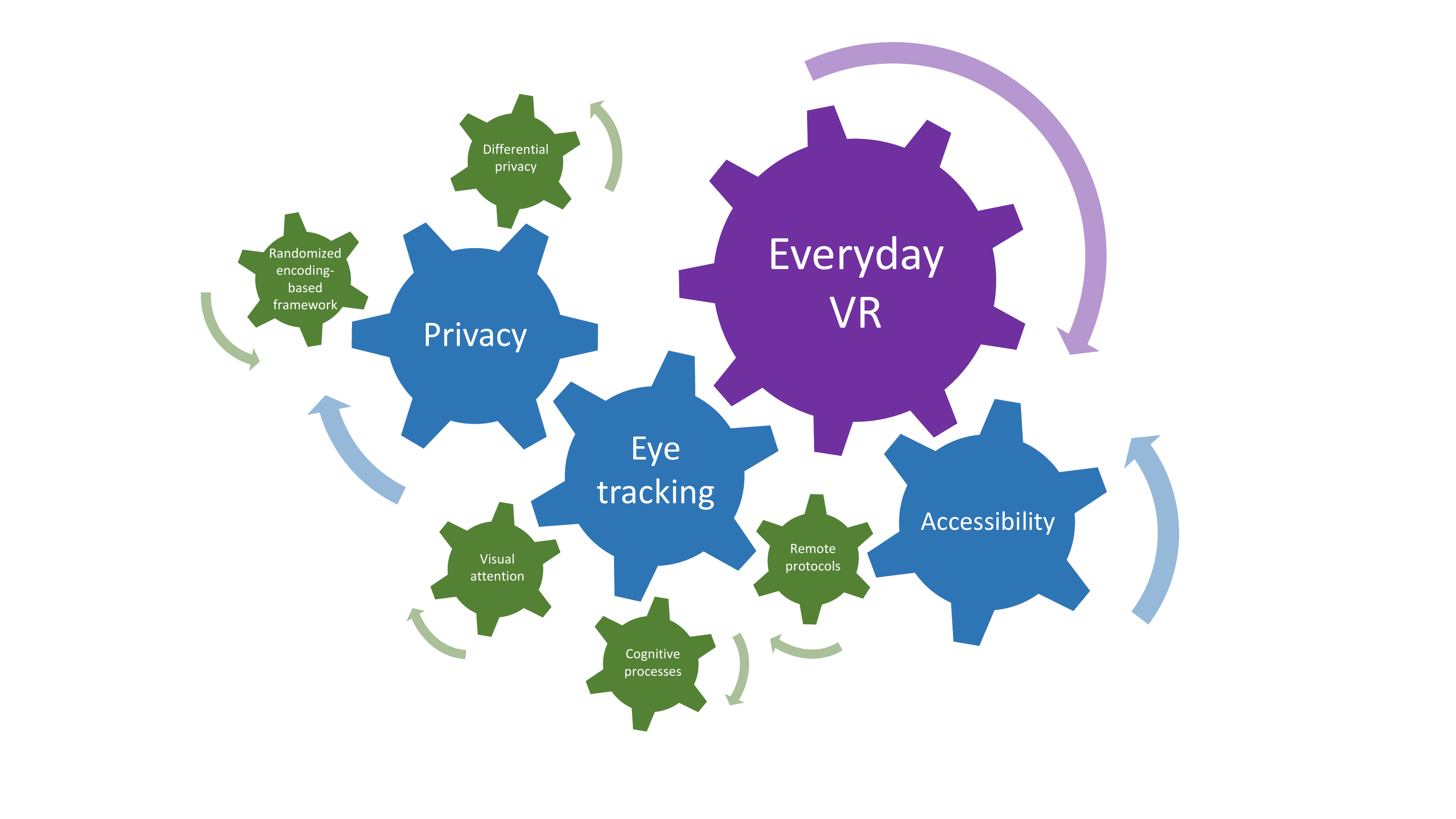}
\caption{Overall framework of contributions in the thesis.}
\label{fig_thesis_overall_framework}%
\end{figure}

\newpage

\section{Visual Attention and Cognitive Processes in Virtual Reality}
\label{section_main_visattention}
The first two subsections (Sections~\ref{subsection_main_CHI},~\ref{subsection_main_VR}) focus on the education content in \acs{VR} for the design considerations of virtual spaces in everyday educational VR, whereas the latter two (Sections~\ref{subsection_main_SAP},~\ref{subsection_main_VRW}) focus on the driving domain with the purpose of assessing the feasibility of \acs{VR} for safety critical scenario training.

\subsection{Eye Movements in Virtual Classrooms}
\label{subsection_main_CHI}

This subsection is based on the paper~\ref{publist_lbl_CHI21} in Chapter~\ref{chapter_publications}, \emph{Digital transformations of classrooms in virtual reality} at \emph{2021 CHI Conference on Human Factors in Computing Systems}.

\subsubsection*{Motivation and Main Methodology}
How to design and visualize interaction related content is crucial when everyday virtual learning environments are taken into consideration, as these might affect motivation, engagement, performance, and eventually learning outcomes of students in the long term. While there are many studies which concentrate on a variety of issues for the real world classroom~\cite{goldberg2019attentive,fauth2020don}, few conclusions are drawn for \acs{VR} environments. In the real world studies, human behaviors are usually extracted and analysed based on head and body movements. In \acs{VR}, it is possible to extract eye movements and pupillary information, which may be linked to cognitive processes, given the fact that modern high-end \acs{HMD}s integrate eye trackers. In fact, virtual environments, especially immersive ones, could have a large range of benefits for humans when environmental and physical inaccessibilities are considered. For instance, during the COVID-19 pandemic, many institutions temporarily switched to remote teaching and learning. While this switch was particularly motivated by the pandemic, remote teaching and learning can also be useful for handicapped people or when in-person meetings are not feasible due to the restrictions of physical distance. For example, while during the pandemic, many of the scientific conferences and meetings were organized remotely or as hybrid events. After the pandemic, such a hybrid model could remain to accommodate difficulties due to remoteness. Virtual environments have already been realized using relatively trivial tools such as Mozilla Hubs~\cite{mozilla_hubs_link}. Before more sophisticated tools are employed for such teaching, learning, and interaction purposes, the effects of different configurations on humans should be researched in several virtual environment types, including halls, auditoriums, and classrooms. The major motivation of this work is to understand the effects of different virtual classroom configurations on students including different avatar representation styles of virtual characters, namely realistic and cartoon, different participant locations in the classroom, including back and front, and different hand-raising behaviors of virtual peer-learners, particularly 20\%, 35\%, 65\%, and 80\%. An example view from the used \acs{VR} classroom is depicted in Figure~\ref{figmain_VR_Classroom}.

\begin{figure}[!h]
   \includegraphics[width=1\linewidth]{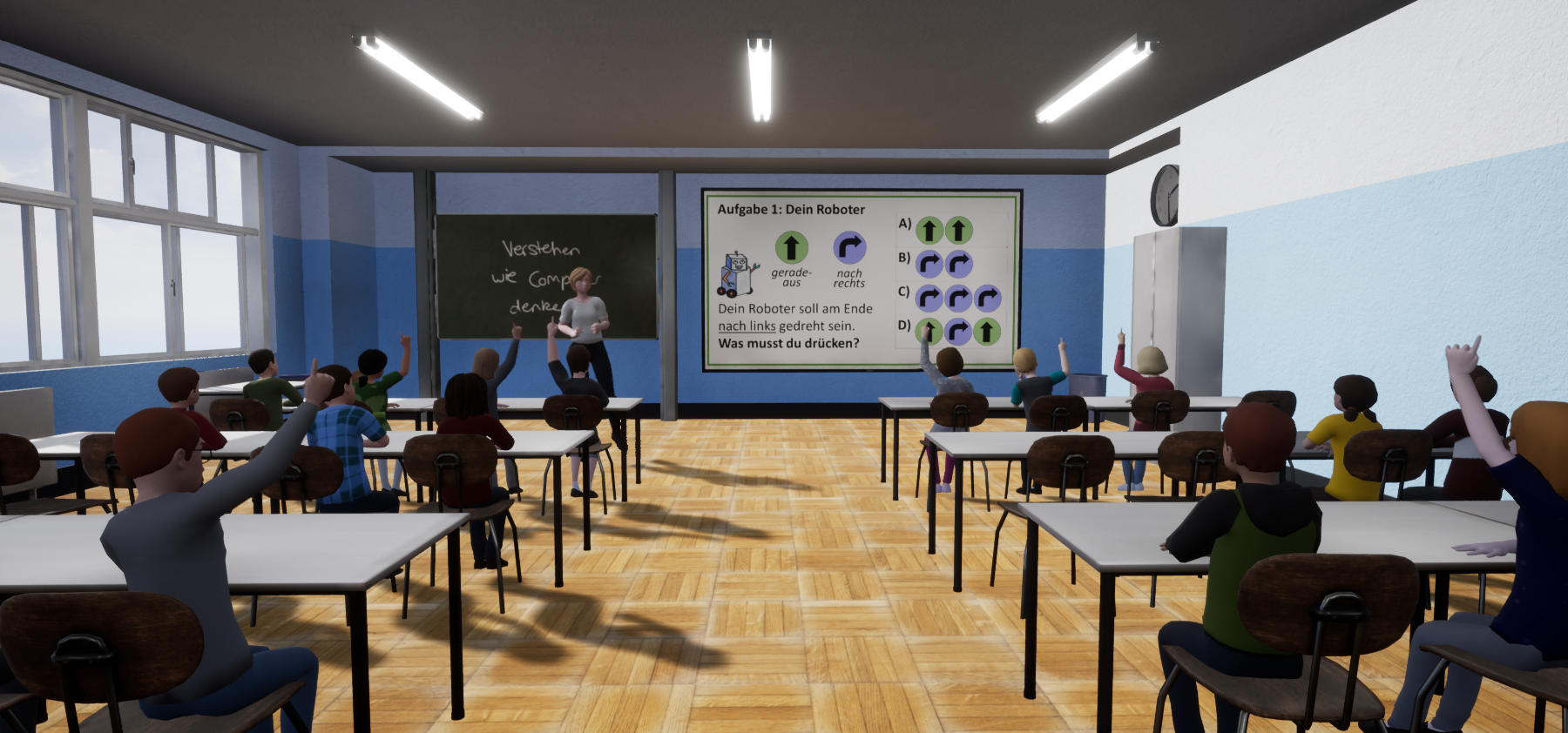}
   \caption{An example view from the used VR classroom.}
  \label{figmain_VR_Classroom}%
\end{figure}

Using raw eye and head tracking data that was collected from 381 sixth-grade students in a between-subjects design during a computational thinking lecture that took approximately $15$ minutes, eye fixation, saccade, and pupil diameter related features were extracted. The raw data only provides angular information that was reported by the eye tracker within the \acs{HMD}. Therefore, a processing pipeline is needed to extract features such as durations of fixations and saccades, amplitudes of saccades, number of fixations, and pupil diameter values. Apart from the pupil diameter values, fixation and saccade extraction algorithms should be tailored especially for the \acs{VR}, as in \acs{VR} one should take head movements into account compared to conventional eye tracking experiments that include chin-rests. For \acs{VR} setups, Agtzidis et al.~\cite{agtzidis2019360} have proposed a Velocity-Threshold Identification (I-VT) similar thresholding approach for 360-degree videos, and, in this work, a similar approach was followed by setting velocity and duration thresholds for fixations and saccades. For the pupil diameters, a standard processing pipeline including smoothing~\cite{savitzky64} and baseline correction~\cite{Mathot2018} components was employed. After pre-processing and feature extraction phases, full factorial \acs{ANOVA}s ($\alpha = 0.05$) were applied for different features to find out whether different classroom manipulations have significant effects on human visual behaviors. For multiple comparisons and non-parametric versions of the analyses, Tukey-Kramer and Aligned Rank Transform (ART)~\cite{10.1145/1978942.1978963} were applied, respectively.

\subsubsection*{Main Findings}
Different classroom manipulations have statistically significant effects on human visual behaviors, including durations of the fixations and saccades, number of fixations, saccade amplitudes, and pupil diameters. More specifically, in terms of participant locations in the virtual classroom, participants sitting in the back of the classroom had significantly longer fixations than those sitting in the front. Furthermore, the front sitting participants had longer saccade durations and larger saccade amplitudes than those sitting in the back. 

In terms of avatar representations, the analyses have yielded beneficial, but less significant results. The participants that experienced cartoon-styled avatars in the virtual classroom had longer fixation durations compared to the participants who observed realistic-styled avatars. In addition, participants that experienced the \acs{VR} classroom with realistic-styled avatars had longer saccadic durations and larger pupil diameters during the virtual lecture than the participants that observed cartoon-styled avatars.

Lastly, in terms of attention towards different hand-raising behaviors, mixed results were found. In particular, participants had significantly larger pupil diameters with 80\% hand-raising peer-learners compared to 35\% of the virtual peers raising their hands. Additionally, in the 65\% hand-raising condition, participants had significantly more fixations than in the 80\% hand-raising configuration. These results have important implications for the design of everyday virtual and interactive classrooms, especially in terms of cognitive processes.

\subsection{Visual Attention on Virtual Objects in Virtual Classrooms}
\label{subsection_main_VR}

This subsection is based on the paper~\ref{publist_lbl_VR21} in Chapter~\ref{chapter_publications}, \emph{Exploiting object-of-interest information to understand attention in VR classrooms} at \emph{2021 IEEE Virtual Reality and 3D User Interfaces}.

\subsubsection*{Motivation and Main Methodology}
The findings that are introduced in Section~\ref{subsection_main_CHI} provide implications primarily for cognitive processes during the virtual classroom experience. Another aspect that is important in virtual environments in general is the assessment of interactions with the virtual objects and the provided 3D stimulus. Virtual objects are essentially gathered together in the virtual environments to create the overall 3D stimulus. Thanks to modern game engines that are mostly used to design such environments, it is also possible to associate physics-related components with 3D virtual objects. For instance, each peer-learner, instructor, and any other static or dynamic content can be represented as objects in the classroom environment and volumetric details can be fetched for further analysis. In terms of human-computer interaction, several interaction models with objects during the experience are possible, such as with hand-held controllers, solely by hand tracking, audio-based input, or gaze-based methods. However, before providing such interaction models, preferably online and real time, one should study visual attention on different objects under different manipulations. Therefore, the main focus of this work is studying the gaze-based attention mainly on the most important objects and object groups in the \acs{VR} classroom that is studied in Section~\ref{subsection_main_CHI}. The most important objects are considered virtual instructor, virtual peer-learners by aggregating attention from each peer-learner, and the lecture screen where the lecture content is visualized. Such analysis and features differ from conventional eye tracking features such as fixations or saccades because they are more related to cognitive processing. Studying the focus on 3D objects in the environment is more related to visual attention and interactions in particular parts of the \acs{VR} classroom.

To do such analysis, the head and eye gaze information should be mapped to the 3D environment. To this end, using the head pose reported by the \acs{HMD} and the gaze vector provided by the eye tracker, an invisible ray is traced to the classroom environment. As virtual objects have invisible geometric colliders around them, the 3D hit point of the traced ray~\cite{ROTH1982109} is calculated. Participants can attend some objects very shortly in an insignificant amount of time and this does not indicate much about the visual attention. To overcome this issue, a duration threshold of 200 milliseconds was used for minimum attention duration. This scheme was applied for each frame and attention times were obtained for each object. The set threshold is greater than fixation detection thresholds in the previous literature. This was done intentionally since participants could have any kind of eye movements on the attended virtual objects. Afterwards, to analyze the implications of each classroom manipulation, full factorial \acs{ANOVA}s ($\alpha = 0.05$) for attention time on each relevant object, namely virtual peer-learners, instructor, and screen were applied. Similar to the method in Section~\ref{subsection_main_CHI}, Tukey-Kramer post-hoc test and \acs{ART}~\cite{10.1145/1978942.1978963} were used for multiple comparisons and non-parametric analyses, respectively.

\subsubsection*{Main Findings}
Analyses showed that the participants that were located in the back of the classroom had significantly longer attention time on the virtual peer-learners than the participants who were located in the front. On the contrary, the front sitting participants had significantly longer attention time on the virtual instructor and lecture screen than the participants who sat in the back.

Cartoon-styled avatars attracted more attention time on the virtual peers than the realistic-styled avatars, while in the realistic-styled avatar configuration, the virtual instructor drew more attention time than in the cartoon-styled avatar configuration. In terms of attention time on the screen, realistic- or cartoon-styled avatar configurations did not differ significantly.

Mixed effects were obtained in the analyses for different hand-raising configurations of the virtual peers as is the case for the conventional eye movement features discussed in Section~\ref{subsection_main_CHI}. The attention time on the peer-learners was the most in the extreme hand-raising configurations, namely 20\% and 80\%. More particularly, the attention time on the peers was significantly greater in the 80\% hand-raising configuration than in the 65\%. Furthermore, 20\% and 65\% configurations differed statistically significantly in terms of focus on the peer-learners, with the 20\% configuration being longer. In the analyses for the attention time on the virtual instructor, the only significant difference was found between 65\% and 80\%, with focus time in the 65\% hand-raising configuration being longer. Furthermore, focus on the lecture screen was significantly longer with 65\% of the virtual peers raising their hands than with 80\% of the peers raising their hands. The focus time on the screen in the 65\% hand-raising configuration was also significantly higher than the 35\% hand-raising configuration; however, the effect was smaller compared to the effect in the 80\% hand-raising configuration.

\subsection{Driver Attention Analysis during a Safety Critical Situation in Virtual Reality}
\label{subsection_main_SAP}
This subsection is based on the paper~\ref{publist_lbl_SAP19} in Chapter~\ref{chapter_publications}, \emph{Assessment of driver attention during a safety critical situation in VR to generate VR-based training} at \emph{2019 ACM Symposium on Applied Perception}.

\subsubsection*{Motivation and Main Methodology}
Whether \acs{VR} setups could help training for a variety of scenarios that humans may encounter in their real lives and cannot train for due to safety reasons or a low likelihood of occurrence is an open research direction in the context of everyday \acs{VR}. In the driving domain, safety critical scenarios are of a particular interest since by using \acs{VR} novice drivers can train at their homes to get accustomed to these scenarios and improve their skills. However, before generating the training packages for these purposes, it is important to understand whether \acs{VR} configurations with gaze-based assistance help and draw the attention of drivers in safety critical situations. To this end, an unexpected pedestrian crossing scenario in a \acs{VR} setup has been designed, with two conditions. The conditions include an experimental condition where participants are informed about the criticality of the pedestrian by gaze-aware red cues around the figure, and a control condition where participants are not informed. A view from the driving vehicle cockpit is depicted in Figure~\ref{figmail_start_point_road}~\cite[p.~2]{bozkir2019person}.

\begin{figure}[!h]
   \includegraphics[width=1\linewidth]{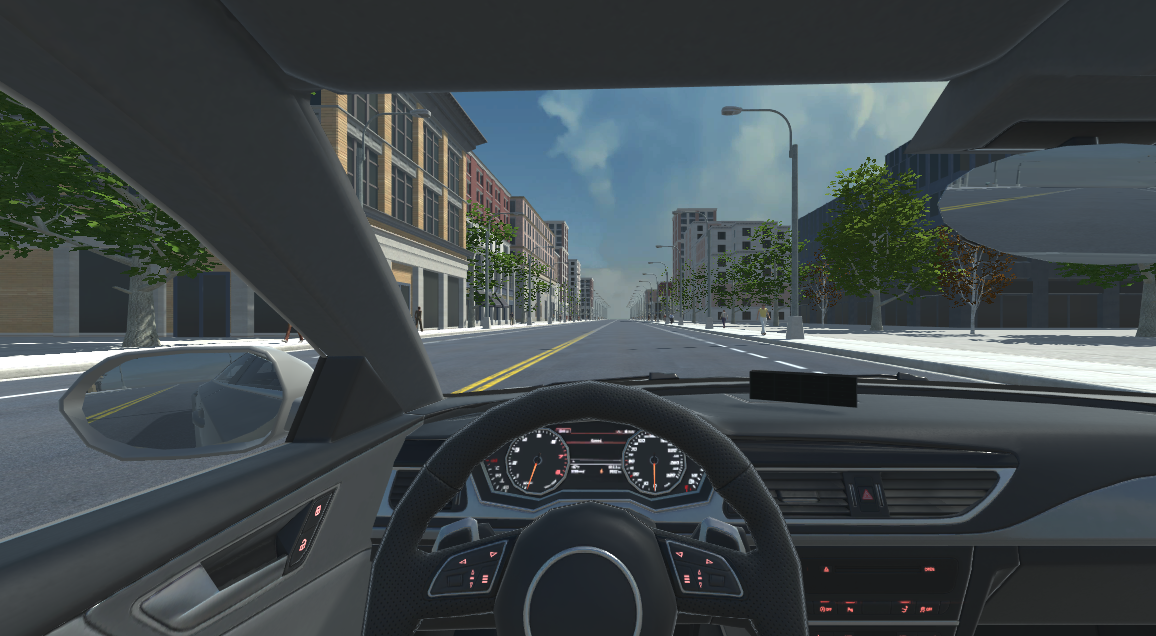}
   \caption{An example view of the road that includes critically crossing pedestrian from the driving vehicle's cockpit. \textcopyright~2019 IEEE.}
  \label{figmail_start_point_road}%
\end{figure}

The study involved 16 participants with between-subjects design. The critical pedestrian started crossing the street when driving vehicle was closer than approximately 45 meters as this distance simulates an expected time-to-collision between approximately 1.8-5 seconds. In such conditions, Rasouli et al.~\cite{DBLP:journals/corr/RasouliKT17} report the high occurrence likelihood of joint attention between pedestrians and drivers. However, in the opposite case, the outcome might be fatal. Since such a scenario cannot be generated and validated in the real world, \acs{VR} setups stand alone for applying such scenarios with low costs. The gaze-aware pedestrian warning was activated when the distance to the crossing pedestrian was approximately 77 meters and deactivated if the participant's gaze was within 5 meters of the crossing pedestrians for at least 0.85 seconds. Intersection between participant gaze and the pedestrian 3D object was carried out with the help of ray-casting~\cite{ROTH1982109}, similar to the study in Section~\ref{subsection_main_VR}. To assess the usefulness of the setup along with gaze-aware cues, the closest distance between vehicles and crossing pedestrians, driver inputs on accelerator and brake, and participant pupil diameters were evaluated. Pupil diameters were smoothed~\cite{savitzky64} and baseline corrected~\cite{Mathot2018} in a similar manner to the study in Section~\ref{subsection_main_CHI}. The experimental and control conditions were compared with two sample T-test in terms of closest distances to crossing pedestrians. Furthermore, within each condition, paired T-tests were applied to driver accelerator inputs and pupil diameters for baseline driving and risky driving timeframes. Baseline driving corresponds to driving without any intervention such as gaze-aware warnings or pedestrian crossing, and is calculated using the observations just before the gaze-aware warnings or start of the pedestrian crossing for experimental and control conditions, respectively. Risky driving timeframe corresponds to the time after the start of the critical pedestrian crossing. 

\subsubsection*{Main Findings}
Analysis on the closest distances to the crossing pedestrian showed that the participants who received the gaze-aware critical pedestrian warnings before the crossing passed statistically significantly distant to the pedestrian than the participants that did not receive the warnings. This indicates that the warnings helped the participants drive safer.

According to the analysis of inputs on accelerator and brake pedals, significant differences on accelerator inputs between baseline and risky driving timeframes started earlier in the participants who received gaze-aware warning cues, indicating that they realized the criticality earlier. In addition, five of the participants who did not receive pedestrian cues performed full braking whereas no participant receiving the cues performed a full brake. In terms of pupil diameters, a trend similar like in the accelerator inputs was observed. Overall results indicated safe and smooth driving experiences for the participants who were provided with gaze-aware risky pedestrian warnings.

\subsection{Cognitive Load Estimation during Virtual Driving}
\label{subsection_main_VRW}
This subsection is based on the paper~\ref{publist_lbl_VRW19} in Chapter~\ref{chapter_publications}, \emph{Person independent, privacy preserving, and real time assessment of cognitive load using eye tracking in a virtual reality setup} at \emph{2019 IEEE Conference on Virtual Reality and 3D User Interfaces Workshops}.

\subsubsection*{Motivation and Main Methodology}
\acs{VR} setups have potential to provide timely benefits to users if their states can be estimated in real time during virtual experiences, which could be helpful in the context of everyday \acs{VR}. As aforementioned, one of the non-intrusive ways to do this is through the eyes of the users. Considering the controlled illumination \acs{VR} provides unlike real world configurations, user cognitive load estimation based on pupillometry is one direction that could be investigated.

Based on the \acs{VR} driving experiment discussed in Section~\ref{subsection_main_SAP}, data annotations were carried out using pupil diameter measurements for low and high cognitive load, based on the time points that pedestrian warnings and pedestrian crossing started for experimental and control groups, respectively. An empirical timeshift of 0.8 seconds was introduced as well, considering that pupils did not dilate directly on time as the manipulations were introduced due to biological factors. Once the data of each participant was annotated for binary classification purposes, classifiers including Support Vector Machine (SVM), Decision Tree (DT), k-Nearest Neighbors (k-NN), and Random Forest (RF) were trained in a leave-one-participant-out cross-validation configuration and validated accordingly. The feature vectors included driver performance measurements incorporating accelerator and brake inputs as well as pupil diameter values. In the validation phase, accuracy, precision, recall, and F1-scores were calculated to assess performance of the classification tasks. In addition, to evaluate the real time working capability of each classifier, the cognitive load prediction time span of each test sample was calculated and averaged using all samples. 

\subsubsection*{Main Findings}
The assessment of the accuracy, precision, recall, and F1-scores revealed that cognitive load estimation for \acs{VR} setups has potential and could be applied in a person-independent way successfully. Particularly, \acs{SVM} performed the best with over 80\% accuracy, whereas \acs{DT} and \acs{RF} worked comparably with accuracies between 70-80\%. 1-NN, 5-NN, and 10-NN were evaluated for the \acs{k-NN} with 10-NN working the best with approximately 79\% accuracy.

Since these estimations should be applied in real time for user assistive tasks, real time working capabilities were assessed with a \acs{VR}-capable computer. On average, \acs{SVM}s and \acs{DT}s worked the fastest with approximately 0.3 milliseconds for estimating cognitive load per sample. Estimation times for \acs{k-NN}s with different k values took relatively longer with approximately 0.74 and 0.76 milliseconds, whereas \acs{RF}-based estimation took the most time per sample with approximately 5.4 milliseconds.  

\section{Privacy Preserving Eye Tracking for Virtual Reality}
\label{section_main_privacy}
In this section, differential privacy for eye tracking and privacy preserving gaze estimation based on eye landmark data are introduced in Section~\ref{subsection_main_diff_privacy} and Section~\ref{subsec_main_ppge}, respectively.

\subsection{Differential Privacy for Eye Tracking}
\label{subsection_main_diff_privacy}
This subsection is based on the paper~\ref{publist_lbl_PLOSONE} in Chapter~\ref{chapter_publications}, \emph{Differential privacy for eye tracking with temporal correlations} at \emph{PLoS ONE} in 2021.

\subsubsection*{Motivation and Main Methodology}
Differential privacy is a rigorous framework for protecting the information about whether an individual participated in a dataset or not~\cite{dwork2006,dwork_DP_only}. Considering a database that includes incomes of individuals, when data is queried for the mean value of the total number of individuals, the mean income for N individuals is obtained. If the same database is queried for N-1 individuals, using two mean values, an adversary could automatically infer the remaining individual's income. In differential privacy, privacy protection is achieved by adding randomly generated noise to the query outcomes based on a privacy parameter $\epsilon$ using, for instance, Laplace or Gaussian distributions, so that the query answers do not significantly change based on the participation of individuals. Provided privacy is increased when the $\epsilon$ value is decreased. On the one hand, the privacy of individuals is preserved thanks to differentially private mechanisms. One the other hand, due to the added noise data quality and utility are decreased. Therefore, it is crucial to find reasonable trade-offs. Formally, $\epsilon$-Differential Privacy ($\epsilon$-DP) is defined as follows.

\begin{definition}{$\epsilon$-Differential Privacy (\emph{$\epsilon$-DP~\cite{dwork2006,dwork_DP_only}}).}
%\label{def:DiffPrivacy} 
A randomized mechanism $M$ satisfies $\epsilon$ differential privacy for all databases $D_1$ and $D_2$ which differ at most in one element for every $S \subseteq Range(M)$ with the following.
\begin{equation} 
\Pr[M(D_1) \in S] \leq e^{\epsilon} \Pr[M(D_2) \in S].
\end{equation}
\end{definition}

The amount of noise added depends on query sensitivities which are defined in the following.

\begin{definition}{\emph{Query sensitivity~\cite{dwork2006}.}}
For any random query of $X^n$ and $w \in \{1,2\}$, the query sensitivity ($\Updelta_{w}$) of $X^n$ is defined as the smallest value for all databases $D_1$ and $D_2$ that differ maximum in one element with
\begin{equation} 
||\, X^n(D_1) - X^n(D_2) ||_{w} \leq \Updelta_{w}(X^n)
\end{equation}
\end{definition}

The standard Laplace mechanism of differential privacy (i.e., Laplace perturbation algorithm (LPA)) achieves differential privacy for $\lambda = \Updelta_{1}(X^n)/{\epsilon}$~\cite{dwork2006} with noisy observations generated according to $\widetilde{X}^n = X^n(D) + Lap^{n}(\lambda)$, where $Lap^{n}(\lambda)$ consists of $n$ independent random variables from a Laplace distribution with a zero mean and variance $2\lambda^{2}$.

While this framework is widely used in database applications, it is not very straightforward to apply it to time-series data. Firstly, in time-series data, observations from different time points are temporally correlated. When standard differential privacy mechanisms are applied, an adversary can use this background information and make inferences on the data with the assumption of independent noise realizations on each time point. According to Zhao et al.~\cite{8269219} and Cao et al.~\cite{8333800}, there exist privacy leaks with the correlations and actual $\epsilon$ value for a pre-defined $\epsilon$ increases with such effect. Secondly, the required amount of noise to make time-series data differentially private significantly increases with long signals. As eye movements and the features that are extracted from raw eye movements are temporally correlated and can have long durations depending on the stimuli or individuals, standard differential mechanisms might perform poorly. In the previous work, there are privacy frameworks for correlated or sensor data such as Pufferfish~\cite{pufferfish_transaction_paper} or Olympus~\cite{olympus_framework_privacy}; however, these require different assumptions such as the necessity of a domain expert or modeling the privacy as adversarial networks. The Fourier perturbation algorithm (FPA) proposed by Rastogi and Nath~\cite{Rastogi:2010:DPA:1807167.1807247} deals with temporally correlated time-series data, and noisy observations are generated according to Algorithm~\ref{algomain_FPA}~\cite{Rastogi:2010:DPA:1807167.1807247,bozkir2020differential}, where DFT, IDFT, and PAD correspond to discrete Fourier transform, inverse discrete Fourier transform, and zero padding, respectively. Unlike the value claimed by Rastogi and Nath~\cite{Rastogi:2010:DPA:1807167.1807247}, the \acs{FPA} achieves $\epsilon$-differential privacy for $\lambda = \frac{\sqrt{n}\sqrt{k}\Updelta_{2}(X^n)}{\epsilon}$, with its proof in the Section~\ref{appendix:B1}.

\begin{algorithm}
  \KwInput{$X^n$, $\lambda$, $k$}
  \KwOutput{$\widetilde{X}^n$}
  $F^{k} = DFT^{k}(X^n)$. \\
  $\widetilde{F}^k = LPA(F^{k}, \lambda)$. \\
  $\widetilde{X}^n = IDFT(PAD^{n}(\widetilde{F}^k))$.
\caption{Fourier Perturbation Algorithm (FPA).}
\label{algomain_FPA}
\end{algorithm}

To solve these issues and make the eye movement features obtained from \acs{VR}/\acs{AR} setups differentially private, \acs{LPA}~\cite{dwork2006} and corrected \acs{FPA}~\cite{Rastogi:2010:DPA:1807167.1807247} are evaluated. Furthermore, two additional mechanisms, particularly chunk-based FPA and difference- and chunk-based FPA have been proposed and named as CFPA and DCFPA, respectively. The major purposes of these extensions are decreasing temporal correlations, the sensitivities required to achieve differential privacy, and ideally computational complexities as the chunk sizes are selected power of 2~\cite{Onurbio3}, namely 32, 64, and 128. The \acs{CFPA} applies the \acs{FPA} mechanism to each chunk, whereas the \acs{DCFPA} applies the \acs{FPA} to the consecutive difference signals within the chunks. Difference signals are observed to be significantly decorrelated, which implies that in this privacy mechanism, the privacy reduction due to the temporal correlations is less than the others. The \acs{CFPA} and \acs{DCFPA} processes are summarized in Figure~\ref{figmain_CFPA-DCFPA}~\cite[p.~9]{bozkir2020differential}.

\begin{figure}[!h]
   \includegraphics[width=1\linewidth]{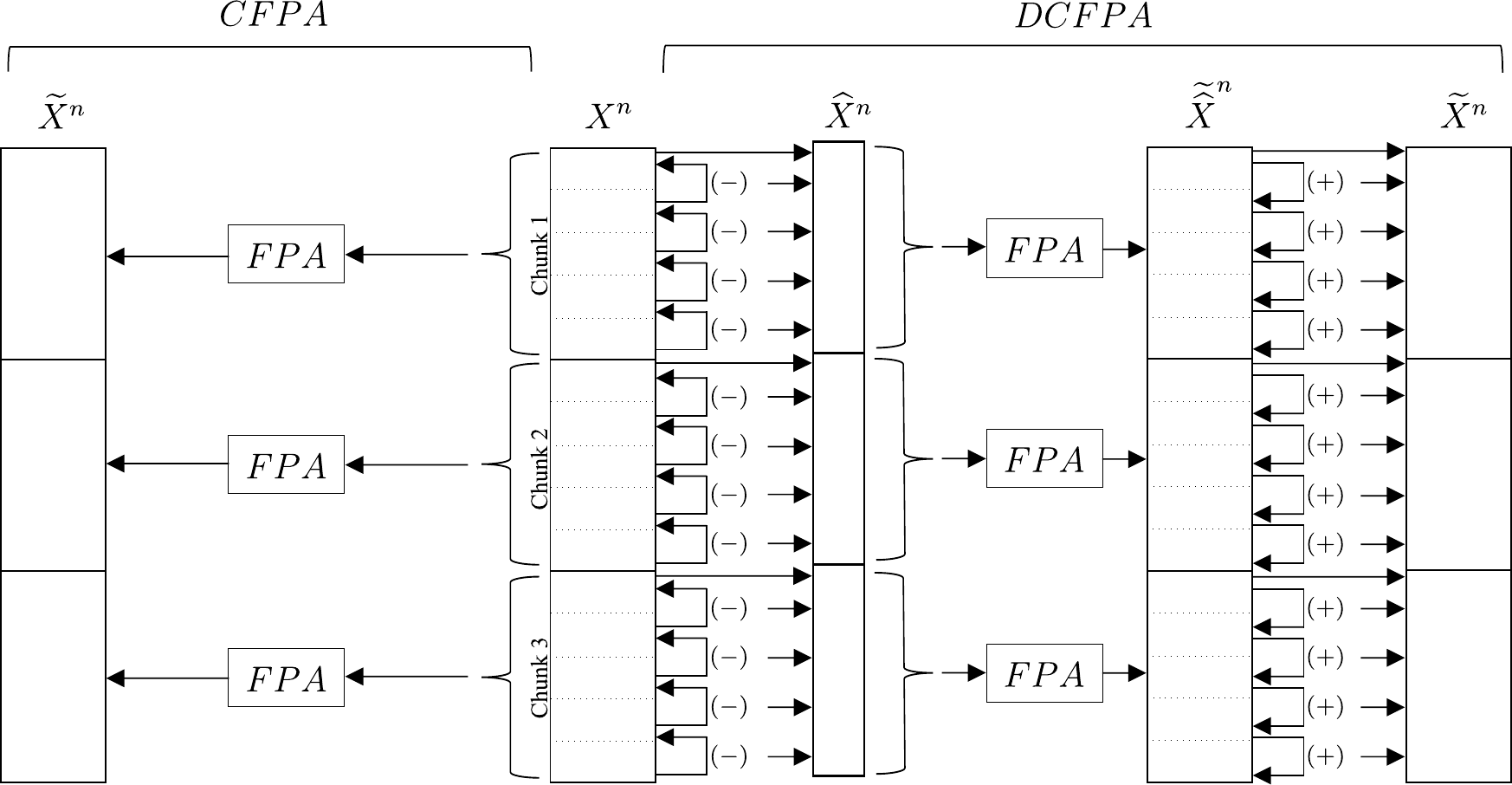}
   \caption{Workflow of the CFPA and DCFPA.}  
  \label{figmain_CFPA-DCFPA}%
\end{figure}

To evaluate proposed mechanisms, MPIIDPEye~\cite{steil_diff_privacy} and MPIIPrivacEye~\cite{Steil:2019:PPH:3314111.3319913} datasets have been used with five different privacy levels, namely $\epsilon = [0.48, 2.4, 4.8, 24, 48]$. The former dataset includes 52 eye movement features extracted from a reading task of three different document types (i.e., comics, news, and textbook) in \acs{VR}, whereas the latter dataset includes the same eye movement features from an \acs{AR} similar setup. As a significant amount of noise is introduced to the extracted features to achieve differential privacy, it is important to validate the usefulness of the private data. To do this, absolute normalized mean square error (NMSE) was used especially for comparison of different privacy mechanisms namely, \acs{LPA}, \acs{FPA}, \acs{CFPA}, and \acs{DCFPA}. However, while this metric shows the divergence trend of noisy data from the original data, it does not directly show how usable the private data is for different tasks. To find this out, for the MPIIDPEye, gender and document type classification tasks were employed. For the MPIIPrivacEye, privacy sensitivity detection of the viewed scene was carried out. In addition, person identification tasks were applied to evaluate whether it is possible to recognize the individuals using machine learning classifiers for both MPIIDPEye and MPIIPrivacEye. For these purposes, Support Vector Machines (SVMs), k-Nearest Neighbors (k-NNs), Decision Trees (DTs), and Random Forests (RFs) were used. Except for person identification tasks, all classifiers were trained and evaluated in a person-independent manner to ensure generic outcomes.

\subsubsection*{Main Findings}
As the analyses are split into two groups, including utility based on the \acs{NMSE} and classification accuracies, the findings are reported separately as well. However, before the usability of the data, data correlations are also analyzed. The difference signals used by the \acs{DCFPA} are observed to be less correlated than the original observations for both datasets. This means that the \acs{DCFPA} method is less vulnerable to temporal correlations in terms of privacy.

Utility evaluations based on the absolute \acs{NMSE} showed that the \acs{CFPA} and \acs{DCFPA} outperform the standard Laplace mechanism of differential privacy. All \acs{CFPA} evaluations outperform the \acs{FPA} as theoretically assumed due to reduced sensitivities. Different chunk sizes with \acs{CFPA} perform very similarly, therefore it is reasonable to use larger chunks as they better reduce correlations. The \acs{DCFPA}, especially with smaller chunks, surpasses other methods in the most private settings (i.e., $\epsilon = 0.48$). In the \acs{DCFPA}, smaller chunks (e.g., 32) perform significantly better than others in terms of absolute \acs{NMSE}.

For the MPIIDPEye dataset, three classification tasks, namely document type, gender classifications, and person identification, were applied. Ideally, individuals and their genders should not be recognized through private data. At the same time, since document type classification is treated as a utility task, the accuracy of this task should be as high as possible. Analyses have shown that when the \acs{FPA} is applied along with majority voting for person identification, very high accuracies (i.e., 100\%) are obtained. This is an indication that even with the addition of noise, it is still possible to recognize individuals. With the \acs{DCFPA}, it is not possible to identify individuals accurately either in high (e.g., $\epsilon = 0.48$) or low privacy ($\epsilon = 48$) regions. When the \acs{CFPA} is applied, it is possible to identify individuals accurately starting from $\epsilon = 24$. However, when high privacy regions are considered, classifiers fail to identify individuals as is the case for the \acs{DCFPA}. In terms of gender classification, it is only possible to detect gender to some extent (e.g., up to accuracy of $0.68$) with the \acs{CFPA} in the lowest privacy regions. Hence, almost all the methods are able to hide gender information unlike person identification. Document types are identified accurately with the \acs{FPA} in all privacy regions with accuracy over $0.85$ with random forests. The \acs{CFPA} works with an accuracy over $0.7$ in the low privacy regions, whereas the \acs{DCFPA} performs better with accuracies of $0.64$ and $0.69$ for high ($\epsilon = 0.48$) and middle ($\epsilon = 4.8$) privacy regions, respectively. Even though the \acs{FPA} works better than other methods for document type classification since it is possible to identify individuals very accurately when this method is used, one should consider either the \acs{CFPA} or \acs{DCFPA} when differential privacy mechanisms for eye movements are considered.

For the MPIIPrivacEye, privacy sensitivity classification works very similarly for the \acs{CFPA} and \acs{DCFPA} with accuracies in the vicinity of $0.6$ in all privacy regions. Similar to the MPIIDPEye dataset, when the \acs{FPA} is applied, it is possible to identify individuals very accurately with majority voting. The \acs{CFPA} and \acs{DCFPA} are able to hide personal identifiers successfully as in all privacy regions the person identification accuracies are close to random guessing probability. The accuracies without majority voting follow a similar trend with either higher or lower accuracies depending on the actual values not only for the MPIIPrivacEye dataset, but also for the MPIIDPEye dataset.

\subsection{Privacy Preserving Gaze Estimation based on Eye Landmarks}
\label{subsec_main_ppge}

This subsection is based on the paper~\ref{publist_lbl_ETRA20} in Chapter~\ref{chapter_publications}, \emph{Privacy preserving gaze estimation using synthetic images via a randomized encoding based framework} at \emph{2020 ACM Symposium on Eye Tracking Research and Applications}.

\subsubsection*{Motivation and Main Methodology}
Eye movements and the features extracted from raw data provide a lot of insight into individuals as discussed in the previous section. However, the focus of privacy mechanisms might not be hiding individual participation in a dataset in all setups like differential privacy. Instead, the focus may be protecting the data as whole, especially if the data is related to health status, medicine and etc. As eye movements in \acs{VR} can even be related to diseases~\cite{7829437}, one should consider such methods for privacy as well. Still, as it is beneficial to use eye movements for user assistive systems applied in \acs{VR} environments, for such cases machine learning models should be trained and tested without providing raw data with setups that include multiple parties. This may be due to a lack of processing power for each individual leading to classifier training in the cloud or a lack of data for specific tasks. 

The main purpose of this work is to show the applicability of eye tracking data analytics collected from VR setups with several parties. While the main focus is \acs{VR} \acs{HMD}s, data collected from other equipment such as smart glasses or optical see through displays can be used as well. One of the main prerequisites is the real time working capability of such frameworks. For validation, a gaze estimation task using a baseline model based on Support Vector Regression (SVR), with three parties including two input and one function party, is employed. Input parties are considered data providers for the function party to create and train gaze estimation models. The function party could be thought of as a cloud instance that processes data. For input data, UnityEyes~\cite{wood2016_etra} has been used to generate 20k synthetic eye images in total, similar to those obtained via eye tracker sensors inside \acs{HMD}s and two samples are visualized in Figure~\ref{figmain:generatedImgs}.

\begin{figure}[ht]
  \centering
   \subfigure[A synthetic eye gazing up.]{{\includegraphics[height = 3.75cm]{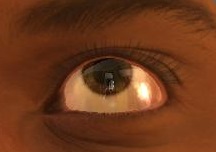}}}
   \qquad
   \quad
   \subfigure[A synthetic eye gazing down.]{{\includegraphics[height = 3.75cm]{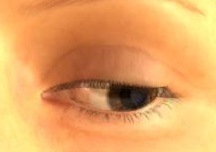} }}%
  \caption{Generated sample eye images with UnityEyes.}
  \label{figmain:generatedImgs}
\end{figure}  

From the generated images, 36 eye landmark-based features~\cite{Park2018} were used. After each input party extracts eye landmarks locally, communication between each party and the cloud instance starts. First input party, named ``Alice'', generates two random vectors and a value and sends these to the second input party, named ``Bob''. Both parties mask their extracted raw eye landmark features with these randomly generated values, and send them to the function party with the gram matrix of their samples. After obtaining the inputs from Alice and Bob, the function party computes the dot product of Alice's and Bob's samples and completes the remaining part of the gram matrix. Later, using the gram matrix, the function party trains the \acs{SVR} to estimate the gaze. The security framework employed in this work is inspired by Ünal et al.~\cite{unal2019framework}, and its security analysis is available in Section~\ref{appendix:B2}. As input parties primarily send the masked data and gram matrix of their samples, and as the function party does not know the generated values from Alice for masking, apart from training the \acs{SVR}, it cannot infer the raw data of Alice or Bob. Similarly, as input parties do not collude, it is not possible for them to make inferences about each other's raw data.

\subsubsection*{Main Findings}
Evaluations on different amounts of data, namely 5k, 10k, and 20k samples, show that when the number of samples are increased, mean angular error slightly decreases even with synthetic data. In particular, mean angular error decreased from 0.21 to 0.18 in the test time when number of samples increased from 5k to 20k. 

As this work is considered as a proof-of-concept for the applicability of a randomized encoding~\cite{applebaum2006cryptography,applebaum2006computationally}-based framework for the \acs{VR} and eye tracking domains, one of the most important issues is evaluating the execution time during testing. In the experiment with 20k samples, it took approximately 4.5 seconds to predict gaze direction of 4k test samples with a standard computer, which corresponds to approximately 1.1 milliseconds per sample.

\section{Accessibility of Virtual Reality}
\label{section_main_accessibility}
This section introduces the eye tracking data collection protocol suitable for remotely located participants in the context of accessible \acs{VR}.

\subsection{Remote Eye Tracking Data Collection Protocol for Virtual Reality}
This subsection is based on the paper~\ref{publist_lbl_AIVR20} in Chapter~\ref{chapter_publications}, \emph{Eye tracking data collection protocol for VR for remotely located subjects using blockchain and smart contracts} at \emph{2020 IEEE International Conference on Artificial Intelligence and Virtual Reality Work-in-progress papers}.

\subsubsection*{Motivation and Main Methodology}
Analyses on visual attention, cognitive processes using eye movements, and privacy preserving manipulation of eye movement signals have great potential to make \acs{VR} more accessible and available in everyday life. However, to make \acs{VR} collectively available, more people need access to \acs{VR} \acs{HMD}s and there should be potential to access more people's data, particularly eye tracking data as in the context of this work. With the recent COVID-19 pandemic, this issue has become more prominent than ever before. To do this, remote data collection protocols are needed for \acs{VR} setups.

Collecting such data remotely is not trivial. Firstly, it is important to keep the data quality high and comparable with laboratory studies. Secondly, data collectors should utilize mechanisms which guarantee that collected data is not altered by malicious users. In the laboratory studies, this is straightforward and trivial since subjects that participate in the experiments do not have access to applications and data collection pipelines. However, when remote collection is considered, subjects carry out tasks on their own computers and thus have opportunity to analyze any kind of application that maybe of interest. Thirdly, in such user studies, subjects are provided with small amount of money or gifts as compensation. In studies of human behavior that are carried out online such as crowdsourcing, services like \acs{AMT}~\cite{amt_online_link} are used for subject compensation. However, \acs{AMT} or similar services do not have direct support for \acs{VR} experiments and eye tracking data collection. Furthermore, such services introduce an extra layer between the experimenter and subjects, which is undesirable from a privacy perspective.

To enable remote eye tracking data collection, a protocol using white-box cryptography~\cite{wyseur_whitebox}, blockchain~\cite{bitcoin_whitepaper}, and smart contracts~\cite{eth_white_paper} has been designed. The overall workflow is as follows. Firstly, subjects obtain the \acs{VR} application from the data collector. After gaining access to the application, the experiment is carried through following instructions. The application informs subjects about the data quality and whether it is good enough for reporting it to the data collector. After this confirmation, subjects initiate data collection smart contract staking double the amount of compensation that they will acquire in the end. Then, the data collector approves the data collection process, staking double the amount of compensation that will be given. At this point, four units of compensation are locked in the blockchain and, as long as both parties do not fulfill their obligations, the staked amounts get stuck in the blockchain. Afterwards, the subject stores the hash value reported by the VR application in the blockchain then sends the collected data to the data collector along with the transaction ID of storing the hash value in the blockchain. The hash value reported to the subject is calculated using the data collector's secret key and white-box cryptography~\cite{whitebox_crypto_alex_biryukov,wyseur_whitebox} so that even though subjects infer the hash function, they do not have the ability to construct a fake hash value for altered data. Therefore, they must behave honestly which means they are not expected to alter the collected data. After obtaining the data and transaction ID of the hash storage on the blockchain, the data collector checks whether the hash value stored by the subject in the blockchain and the local hash value calculated by using the selected secret key overlap. When they match it means that the subject is honest, hence the data collector can confirm the data collection in the smart contract. At this point, the smart contract unlocks the compensation and distributes three units of compensation to the subject and one unit to the data collector. At the end of the process, collected data is transferred to the data collector and subjects receive their compensation without any centralized third party service processing their data or intervening in the compensation management. The aforementioned protocol was realized on the Ropsten Testnet of Ethereum platform~\cite{eth_white_paper} with dummy synthetic data, and is available as follows: \url{https://ropsten.etherscan.io/address/0x0e937a4a4618dd8d5a12ec4a9f8fd61d6bfd13e4}.

\subsubsection*{Main Findings}
Unlike the other studies, this study has shown that remote \acs{VR} data collection for eye tracking is possible with just a little more effort than laboratory studies. For this purpose, \acs{VR} applications should calculate the quality of eye tracking data (e.g., tracking ratios) at the end of each experimental session and report it for further processes. In addition, cryptographic implementations are needed to guarantee that subjects do not alter the data. Lastly, use of blockchain and smart contracts shows that, with a little more effort, compensation distribution along with data transfers can be realized without a centralized service and are a credit to the proposed protocol.

\cleardoublepage

\chapter{Discussion} 
\label{chapter_discussion}
In this chapter, the papers that are summarized in Chapter~\ref{chapter_motivation_findings} and presented in Chapters~\ref{appendix_A},~\ref{appendix_B}, and~\ref{appendix_C} are discussed within the umbrella term of ``Everyday Virtual Reality''. Findings on visual attention and cognitive processes based on eye movements in education and driving domains are discussed in Section~\ref{sec:discussion_visAttention}, based on the papers~\ref{publist_lbl_CHI21},~\ref{publist_lbl_VR21} and~\ref{publist_lbl_SAP19},~\ref{publist_lbl_VRW19} in Chapter~\ref{chapter_publications}, respectively. Implications of privacy preserving eye tracking for \acs{VR} focusing on differential privacy and a randomized encoding-based framework are explained in Section~\ref{sec:discussion_privacy}, based on the papers~\ref{publist_lbl_PLOSONE} and~\ref{publist_lbl_ETRA20}, respectively. Then, how these mechanisms could be combined with the everyday \acs{VR} setups in terms of remote data collection is discussed in Section~\ref{sec:discussion_accessibility}, standing on the paper~\ref{publist_lbl_AIVR20} in Chapter~\ref{chapter_publications}. Finally, the outlook in the realm of all covered topics is drawn in Section~\ref{sec_discussion_outlook}.

\section{Visual Attention and Cognitive Processes}
\label{sec:discussion_visAttention} 
Visual attention and cognitive process related research questions were investigated through two experiments in the education and driving domains. Therefore, the discussion is split into two subsections.

\subsection{Virtual Reality in the Classroom Context}
\label{subsec_discuss_VR_education}
\acs{VR}-based classrooms not only offer the possibility to make online and remote learning more immersive and interactive, they also support the studying different classroom manipulations that are difficult to generate in the real world. While some manipulations are directly related to \acs{VR} environments rather than real world scenarios, such as the visualization styles of avatars (e.g., cartoon or realistic), it is also possible to control, for instance, the number of hand-raising peer-learners during the experience, which could affect student self-concept~\cite{self_concept_1976} in the long run. It is also relatively straightforward to create conditions which typically happen in the real world such as locating students attending to a virtual lecture in the back or front of the virtual classroom.

In the aforementioned studies, these three classroom manipulations have mainly been studied by taking human head and eye movements into account. Locating students in the back or front of the virtual classroom has yielded different implications. The participants sitting in the back had longer fixation durations during the lecture, which implies that they spent more time processing information. This may be related to the relationship between mean fixation durations and task difficulty~\cite{task_difficulty_fixation_durations}. While the task is essentially the same for all participants, participants sitting in the back view the lecture content through a smaller field of view and may have difficulty extracting information. On the contrary, the back sitting participants had shorter saccades and smaller saccade amplitudes, meaning that they shifted their attention less than those sitting in the front which could also be related to content size in their field of view. Furthermore, the back sitting participants engaged significantly more with the virtual peer-learners whereas the front sitting participants engaged more with the virtual instructor and lecture screen. Considering that the virtual instructor and screen are more related to learning lecture content and considering the increased fixation time in the back sitting condition, while designing virtual learning spaces, one might locate students in the frontal regions of the classrooms if visually attending the lecture content is important. If engagement with virtual peers is more important, one might favor either locating the students in the back of the classroom or even organizing the desks in a U-, V-, or O-shape so students can have a better view of their virtual peers. In summary, according to visual attention and cognition findings, the location of the students in the classroom should be determined based on the goal of the virtual lecture.

The avatar representation styles are directly related to \acs{VR} environments because it is not possible to have such configurations in real classrooms. According to the findings of this manipulation, students engaged more with the environment in general with longer fixation and shorter saccade durations when cartoon-styled avatars were presented. While the opposite trend between fixation and saccade durations fits with theoretical expectations, the students that encountered cartoon-styled avatars also engaged more with the peer-learners than the students encountered with realistic-styled avatars. Considering that the students who attended the virtual lecture were small children, engaging more with the cartoon-styled peer-learners is a reasonable and explainable outcome. On the contrary, the students that encountered realistic avatars had larger pupil diameters indicating a higher cognitive load in general. Taking the reasonably controlled illumination of the students' sitting positions into account, the findings indicate that the lecture with realistic characters increased focus and concentration in the learning space. Similar to sitting positions, when such environments are designed, together with the target groups' demographic information, visualization styles should be tailored to the dedicated lecture type, such as interacting with either engaging peers or realistic instructors.

The hand-raising behaviors of the peer-learners can theoretically be manipulated in virtual and real classrooms. However, it is very challenging to have a controlled experiment for such a condition in the real world. Secondly, if these behaviors are manipulated in the real classroom students are already biased by knowing the typical performance of their classmates when reacting to different topics. The behaviors related to such ``artificial'' manipulation might not provide naturalistic conclusions. Therefore, it is more feasible to study these manipulations virtually. However, according to the findings on visual behavior, much research is needed. The results indicate that the extreme hand-raising behaviors of peer-learners yielded higher cognitive load in students. It is likely that when a moderate number of peers raise their hands for questions during the lecture, students find these actions to be straightforward and their level of focus is less than in the extreme levels of hand-raising behaviors. In terms of visual distraction that could occur with many peer-learners raising their hands, the 80\% hand-raising condition gathered the most attention on the peer-learners. This indicates that the efforts of virtual peer-learners to participate in the lecture were noticed by the students. As theoretically expected, this yielded the least attention on the virtual instructor and on the virtual lecture screen. However, the effects of this manipulation are relatively mixed and should be further investigated. In the case of continuous \acs{VR} class attendance, hand-raising should be calibrated carefully due to the possible impact on student self-concepts. Furthermore, one might consider an adaptive manipulation depending on the individual, subject, or topic in the context of everyday \acs{VR}.

In summary, the findings indicate that human visual attention and cognition differ significantly when such educational manipulations are introduced, contributing to the state-of-the-art. While the optimal configurations may depend on multiple factors like the target group characteristics (e.g., ages, \acs{VR} experience, and etc.), lecture content, and the mainstream drawbacks of virtual environments with \acs{HMD}s, such as limited field of view and possible cybersickness when confronted with long durations, such environments have great potential in the digital era. Considering the switch to digital teaching during the recent COVID-19 pandemic, the increasing number of online classes, and even the efforts to generate classroom twins for \acs{VR}~\cite{ahuja_etal_chi21_classroom_twins}, it seems that new teaching and learning paradigms are on the way in our daily life.

\subsection{Virtual Reality in Driving}
\label{subsec_discuss_VR_driving}
The driving studies conducted have shown that multiple implications can be drawn using eye movements not only for driving configurations, but also for time dependent and critical tasks that can be carried out in the everyday \acs{VR} context.

While driving in the real world, many modern vehicles provide driver assistance features such as collision warning and lane keeping. However, in real life, due to human safety, it is not possible to train people for safety critical situations. Humans can train for interactions with pedestrians, driving with automated vehicles, or overtaking scenarios with low-cost \acs{VR} \acs{HMD}s easily at their homes without any significant consequences. To this end, a first step towards interactions with critically crossing pedestrians has shown that even the gaze-aware and minimalistic warning cues for critical pedestrians, which are located around the periphery, help drivers to drive safer and smoother according to assessment of pupil diameters, driver pedal inputs, and distances to the critically crossing pedestrians. Furthermore, during these scenarios cognitive load estimation can be carried out accurately and in real time. While cognitive load assessment for real world situations may be viewed skeptically due to effect of illumination on human pupil sizes, thanks to the relatively controlled illumination that can be provided in \acs{VR}, it is possible to make use of pupil dilations in such setups. Considering that even the Formula 1 drivers practice with simulators~\cite{f1_drivers_simulator} (e.g., Lando Norris practicing with a simulator~\cite{lando_norris_practice}), and their visual attention through eye movements is considered almost superhuman~\cite{F1_tobii_nico_hulkenberg}, and as such similar setups may be used in daily lives for different purposes such as entertainment, it can be argued that ordinary people and particularly novice drivers can train for many different traffic scenarios with the help of relatively low-cost \acs{VR} simulators and eye movements.

As the shift of visual attention in a time dependent context is not necessarily related to driving, there are implications for other everyday \acs{VR} scenarios. Currently, even though \acs{VR} and \acs{HMD}s have several disadvantages such as low resolutions, vergence-accomodation conflict~\cite{kramida_VAC_in_HMDs}, and significant weight of the \acs{HMD}s, with small cues it is possible to shift attention quickly towards important regions of the presented 3D stimulus. In the driving context, this is validated with pedestrians. Other scenarios could include an attention shift for a student attending a class in \acs{VR} to support the overall learning process, for a video gamer to notify important milestones during the game, or for novices in more or less any domain.

\section{Privacy Preserving Eye Tracking}
\label{sec:discussion_privacy} 
Differential privacy mechanisms applied to eye movement features and privacy preserving gaze estimation based on a randomized encoding-based framework are discussed in the following subsections.

\subsection{Differential Privacy}
\label{subsec_discuss_diff_privacy}
Findings on the application of differential privacy mechanisms such as \acs{FPA}, \acs{CFPA}, or \acs{DCFPA} indicate that it is not very trivial to have high utility while preserving privacy due to high amount of noise required by the differential privacy mechanisms applied to correlated time-series data. These effects are discussed in the following based on two evaluation metrics, namely the utility metric based on the \acs{NMSE} and classification accuracies of different tasks.

Firstly, from a higher utility aiming perspective application of chunking in the \acs{CFPA} and \acs{DCFPA} significantly decreased the amount of noise needed to make the eye movement query outcomes differentially private. This can been seen when the \acs{CFPA} is compared with \acs{FPA} and the different chunk sizes (i.e., 32, 64, and 128) within the \acs{DCFPA} are evaluated using the \acs{NMSE}-based utility metric. While chunking could be considered a method from the signal processing domain, according to the Parallel Composition Theorem~\cite{McSherry:2009:PIQ:1559845.1559850}, since the chunks are not overlapping, the differential privacy is preserved. Using larger chunks decorrelates the data more effectively therefore, it should be preferable when the utilities are similar between various chunk sizes. However, when the \acs{DCFPA} is considered, apart from the chunking mechanism since differences between consecutive observations are used, and when the noisy values are propagated within each chunk to obtain final noisy observations, the Sequential Composition Theorem~\cite{McSherry:2009:PIQ:1559845.1559850} is applied for overall privacy and $\epsilon$ calculations. Therefore, when the chunk sizes are larger more noise is needed to achieve differential privacy in this method. While this is a disadvantage due to the divergence of overall eye movement signals from the original signals, since it has been empirically determined that the eye movement difference signals are less correlated compared to the original signals, the privacy leak for this method due to data correlations is less than in others methods. Overall, based on the \acs{NMSE}-based utility metric, since the divergence of the differentially private eye movements is less for the \acs{CFPA} and \acs{DCFPA} compared to the \acs{FPA} and the standard Laplace mechanism of the differential privacy, it is better to use these methods for eye movements. However, there are many trade-offs such as different chunk sizes and $\epsilon$ values, namely different privacy regions. The methods should be tailored according to the eye movement feature generation pipelines, and possibly further tasks that are applied in the everyday VR context.

According to the discussed utility perspective, it is optimal that differentially private eye movement signals are less diverged from original eye movement signals. When it comes to classification tasks, however, the overall goal is more complex than the \acs{NMSE}-based utility metric because there are different tasks such as the classification of document types or person identification. For instance, while it is desirable to have high accuracies in document type or privacy sensitivity classification (e.g., as applied for MPIIDPEye~\cite{steil_diff_privacy} and MPIIPrivacEye~\cite{Steil:2019:PPH:3314111.3319913} datasets), low accuracies in gender prediction and person identification are preferred when the privacy of individuals is considered. Comparing differential privacy mechanisms that are appropriate for time-series data, namely \acs{FPA}, \acs{CFPA}, and \acs{DCFPA}, it is possible to have almost perfect accuracies for person identification tasks with the \acs{FPA} in both datasets. Even if other classification accuracies are obtained in a preferable success, it is likely that the \acs{FPA} is vulnerable to person identification attacks. On the contrary, both the \acs{CFPA} and \acs{DCFPA} significantly decrease person identification accuracies towards guessing probabilities. All the mechanisms successfully hide gender information in a person-independent cross-validation setup in the high privacy regions, which is expected since, even with the clean data, the accuracies are in the vicinity of 70\% also according to the previous work~\cite{steil_diff_privacy}. The \acs{FPA} works over 85\% accuracy in the document type classification task which is comparably higher than the \acs{DCFPA} with 64\% accuracy in the most private regions (i.e., $\epsilon = 0.48$); however, due to its lack of resistance to attacks on person identification, one should determine on a trade-off when using the \acs{FPA}. As the human reading behaviors consist of ``Z''-type (or similar) patterns and the \acs{CFPA} and the \acs{DCFPA} perturbs the eye movement data with chunks, it is suspected that such patterns are removed easily with these mechanisms unlike the \acs{FPA}. Therefore, this is a task-specific outcome of the evaluated methods. In the privacy sensitivity detection solely based on the differentially private eye movements, both the \acs{CFPA} and \acs{DCFPA} outperform the \acs{FPA}, which performs state-of-the-art in terms of differential privacy perspective in the eye tracking domain. Overall, there are multiple trade-offs to consider before applying differential privacy mechanisms on the eye movements. These include further tasks, the stimulus information from the original eye movements that were collected, data correlations, and the amount of background information that an adversary may have on the data. Especially when more practical use-cases in everyday life are considered, practitioners that design privacy mechanisms should take the latter issue into account and propose privacy solutions accordingly.

\subsection{Randomized Encoding}
\label{subsec_discuss_smc}
Unlike differential privacy, if the complete raw data needs to be private cryptographic approaches should be employed. The randomized encoding (RE)-based framework that is utilized falls into this category because the raw data should not be available, for example, to function parties (e.g., third party cloud or server instances). In principle, since eye movements and eye tracking data obtained from \acs{VR} \acs{HMD}s represents visual biometrics, any type of additional task working with encrypted raw eye tracking data could be employed. As aforementioned, when such a system is built for training and testing machine learning models, and a real time interaction mechanism is needed, test times should fit this expectation. For this reason, instead of evaluating more sophisticated tasks such as cognitive load detection, gaze prediction, foveated rendering or similar tasks for everyday \acs{VR} setups, a fundamental gaze estimation task was chosen. This is because gaze estimation is the starting point for all other eye movement related tasks in the \acs{VR} setups.

The use-case of privacy preserving gaze estimation via the randomized encoding based framework has shown that it is possible to estimate gaze using the baseline \acs{SVR} model identically to the non-private version (See the proof in Section~\ref{appendix:B2}). While the decreasing trend of mean angular error is anticipated with a higher amount of data, the most important discussion point is its real time working capability. A prediction time of approximately $1.1$ milliseconds falls in the range of a real time capable system in the \acs{VR} domain. However, since this is only the prediction time on the function party instance, a possible communication latency is introduced during an everyday application scenario. It is assumed that as long as efficient communication between input and function parties is available, the proposed work will function in real time. In addition, from a machine learning perspective the input feature size is 36 for this work and is dedicated to the baseline gaze estimation task. Feature set size might also be important depending on the task and configuration. In the eye tracking literature, there are generic features based on fixations, saccades, pupil diameters, and blinks. For instance, in the works of Steil et al.~\cite{steil_diff_privacy,Steil:2019:PPH:3314111.3319913}, 52 eye movement features are used for estimating stimulus type, gender, or scene privacy accurately based on \acs{SVM}s. As the used feature vector sizes and machine learning models overlap for different tasks, our results imply that all these tasks could be done with multiple input parties as well while preserving the privacy. Such approaches not only protect data privacy, but also help to increase the training data as the total training data consists of data from multiple parties. Therefore, in the case of a lack of data for training models, these frameworks can be utilized as well. The proposed scenario and use-case stands as the only work in the current literature on the intersection of cryptography, randomized encoding, and eye tracking in everyday \acs{VR}.

\section{Remote Eye Tracking Data Collection Possibilities for Virtual Reality}
\label{sec:discussion_accessibility} 
Obtaining eye movement behaviors of human subjects with heterogeneous backgrounds is indeed an important challenge for data driven \acs{VR} systems such as user-assistive or gaze-guidance based on machine learning. Usually these types of studies are conducted within a small group of human subjects of similar ages and backgrounds who might not be representative of different populations. The protocol proposed in Section~\ref{section_main_accessibility} offers a first step to solve this issue and make crowdsourcing possible with high-end \acs{VR} \acs{HMD}s.

As the proposed protocol assumes that the data collectors provide subjects with the \acs{VR} application via a secure and direct method, the advertisement of the experiment should be completed externally via mailing groups, online forums, and etc. While these actions require extra effort to attract \acs{HMD} users for the experiments, Rivu et al.~\cite[pp.~20-21]{rivu2021remote} have reported that this method has the potential for independent studies that require different technical functionalities. However, one should be aware that attracting \acs{HMD} users in such a way will likely create a sample set of experienced \acs{VR} and \acs{HMD} users which may introduce a confounder on user studies that evaluate eye movements.

Another important discussion point is the increased effort required by researchers and developers. For instance, the white-box cryptography~\cite{wyseur_whitebox} paradigm should be implemented within the \acs{VR} application to prevent data altering attacks. This requires the selection of secret keys, a hashing algorithm, and the actual implementation of a cryptographic approach. However, in the future it is likely that software packages will be available for such purposes that every \acs{VR} application can benefit from. In addition, in order to increase the data quality coming from remote participants, practitioners should implement processing pipelines for quality checking and filtering of eye movement data. For instance, one might filter the data by setting thresholds using tracking ratios or eye openness as reported by eye tracking sensors. Although doing these actions within the \acs{VR} application might be seen as overhead by some in general, they are necessary to guarantee the validity of remote data collection.

Lastly, our protocol makes use of blockchain and smart contracts. In particular, we have used Ethereum platform~\cite{eth_white_paper} and its smart contract functionality for our use-case. Blockchain is needed for its immutability for recording the hash value of the collected eye movement data. Smart contracts are utilized for compensation in experiments without a centralized authority between data collector and subjects. While any blockchain-based platform that supports smart contract functionality could be used for this protocol, compensations are distributed in cryptocurrencies in any case. Due to the fact that even well-established cryptocurrencies such as Bitcoin~\cite{bitcoin_whitepaper} and Ether~\cite{eth_white_paper} are highly volatile in value, one might be skeptical about using them for such purposes. In addition, they have not yet been embraced for daily usage by large communities. However, going forward it is possible that such currencies will be used in daily life more frequently (e.g., recently El Salvador have accepted Bitcoin as legal tender along with American dollar~\cite{btc_legal_tender_news}), that communities will soon be more familiar with them. Additionally, such technologies disable the extra layer of centralized institutions between parties, particularly for financial transactions, and decrease the cuts applied by these centralized entities. Another advantage is that, in the case of public blockchain usage, the compensation distribution process is directly transparent via the web. Therefore, while some may consider the use of cryptocurrencies in everyday \acs{VR} experiments utopic or futuristic, there is great potential.

\section{Outlook}
\label{sec_discussion_outlook}
Virtual reality and eye tracking research requires interdisciplinary work. Both research areas consist of components related to computer hardware, computer graphics, human-computer interaction, cognitive science, artificial intelligence, and psychology. On top of these, fields including cryptography and security are involved in the privacy preservation of eye movement data collected from \acs{HMD}s. This work is one of the first that combines these multiple aspects to create an overall framework towards everyday virtual reality. In terms of visual attention and cognition, conventional measures related to fixations, saccades, pupil diameters, object-of-interests similar to area of interests, etc., have been customized and used in the evaluation of \acs{VR} scenarios in education and driving domains along with machine learning and other sensing modalities depending on the context. Privacy preserving paradigms cover two areas: differential privacy and the utilization of a randomized encoding-based framework. In differential privacy, the privatized features are aggregated statistics of fixations, saccades, pupil diameters, or blinks similar to those used in education and driving studies, apart from the feature extraction timespans. The main reason for this is that in the attention and cognition studies, the focus was human visual behaviors throughout the complete experiments, whereas in the differential privacy context, the main goal was time-series representations of such statistics. Contrary to the features used in the differential privacy work, gaze estimation utilizing an \acs{RE}-based framework focused on eye landmarks for a straightforward reason. Estimation of gaze is the initial point of extracting features related to fixations and saccades. At the same time, if the eye movement data is completely encrypted, it is reasonable to encrypt the initial step rather than moderate or final steps of the data processing pipeline. After privacy preserving gaze estimation, one can extract features locally and use them for further analyses with privacy guarantees. Such works on preserving the privacy of individuals that intend to use \acs{VR} related technologies, along with eye tracking, will likely enable usage from wider communities and help \acs{VR} become more accessible in everyday life. Lastly, the remote data collection protocol proposed for eye tracking and \acs{VR} spans all the works as it is possible to collect either raw eye tracking or feature-related data with such protocols. It is even possible to collect data outside of eye tracking, as long as the data quality can be controlled within \acs{VR} applications. As the presented work has multiple aspects, the outlook for each component is discussed separately. Each of these aspects also contribute technically to recent discussions on Metaverse~\cite{snowcrash1992, metaverse_acmcompsurv_2013, all_one_needs_to_know_about_metaverse_21} and they should be taken into consideration.

In the education works, the observed virtual classroom was pre-scripted and did not include a real interaction experiencing students' perspectives. In addition, the peer-learners in the classroom were simple pre-scripted bots. While the results are very important and the first for such setups from a visual attention and cognition perspective, more sophisticated setups could be employed. For instance, pre-scripted peer-learners could be replaced with smart agents, namely computer games like intelligent bots to study attention towards such agents. Furthermore, while synchronization may not be trivial using these custom environments, to generate a more realistic setup each peer-learner and the instructor could be connected to a real person using web-based technologies to replicate a complete virtual and remote classroom environment with actual classmates and teacher. In this case, one should employ other sensing modalities such as hand tracking and audio to ensure interactive virtual environments. Such configurations will also help make \acs{VR} viable in everyday life considering the immersion and real person interactions it provides similar to conventional classrooms in the remote teaching and learning scenario.

The driving research that was conducted is related to setups that one might encounter in real daily driving. The end goal is to enable humans to train for similar time critical situations by making use of eye movements not only in driving, but also in other domains. In terms of driving, future works could focus on interactions within other critical situations such as take-over scenarios, complex traffic situations that include multiple critical pedestrians or vehicles, and even interactions with semi- and fully-automated vehicles. There are works in the direction of automated vehicles, in particular, with more sophisticated and expensive driving simulators (e.g.,~\cite{8082802,7313360}); however, it is an open question as to how visual behaviors and cognition would appear in the case of everyday \acs{VR} with \acs{HMD} scenarios. At the same time, with the growing market size for autonomous vehicles~\cite{Autonomous_vehicles_marketsize_research}, interactions between the human driver and semi-autonomous vehicles will be key, and there is a research gap in this direction. With what \acs{VR} provides, it is likely that such interactions could be studied to ensure a deeper understanding of these relationships in daily life. Additionally, the effects of gaze-aware and minimalistic human visual support, as featured in the aforementioned driving work, could be studied for domains other than driving. Lastly, \acs{VR}-based training packages for critical scenarios could be utilized and evaluated over time.

Differential privacy has been studied in many domains in addition to human-computer interaction. What is proposed in this work is mainly for temporally correlated eye movement feature signals and for reducing the effects of temporal correlations on individual privacy. However, considering substantial improvements on various tasks with deep neural networks, one could employ differential privacy along with deep learning~\cite{goodfellow2016deep}, similar to Abadi et al.'s work~\cite{10.1145/2976749.2978318} while using the eye movements and tasks for \acs{VR} setups. Taking into account the recent open source libraries for privacy such as Opacus~\cite{opacus_online_link}, which enables model training with differential privacy, this direction may be more relevant and focused in the near future. In addition, local differential privacy~\cite{Joseph_Roth_Ullman_Waggoner_2020} scenarios could be employed for not only eye tracking, but also for any type of biometric data collected from the \acs{VR} \acs{HMD}s. As it is required to noise the data locally in local differential privacy, utility and noise dynamics are different and should be studied explicitly.

Privacy preserving gaze estimation work has focused on two input parties and a function party which could be thought of as a cloud instance. However, in real world scenarios the number of input parties may be more than two. Considering many \acs{HMD} users wish to train models while preserving their privacy, this scenario has similarities to N number of hospitals that share their data privately for further processes as proposed by Chen et al.~\cite{10.1145/3375708.3380316}. Such directions are possible for eye tracking data collected from \acs{VR} displays as well. Furthermore, instead of focusing on \acs{SVM}s, different classifiers such as decision trees, random forests, and artificial neural networks could be considered. While the proposed gaze estimation method based on a randomized encoding framework is the first for the eye tracking domain, privacy preserving eye tracking and human-computer interaction is indeed a green field in terms of these research directions in the context of everyday \acs{VR}.

The blockchain-based work for collecting eye movement data from remote participants is a proof-of-concept protocol. Therefore, there are multiple ways to extend this work. The first step is to actually implement the complete workflow within a \acs{VR} application and test its usability by employing self-reported measures provided by participants. While interactions with blockchains and smart contracts are trivial in principle, they are considerably new technologies. The blockchain concept was introduced with Bitcoin~\cite{bitcoin_whitepaper} and smart contracts for blockchains with Ethereum~\cite{eth_white_paper} in the last two decades. Therefore, the usability of such technologies in \acs{VR} applications is a relevant direction to explore. Even though there are some works that explore the user experience of cryptocurrency wallets with augmented reality~\cite{crypto_ar_wallet} and general use-cases for interaction design~\cite{foth_blockchain_for_interaction_design}, not much research has been conducted for the aforementioned. GazeCoin~\cite{gaze_coin_white_paper}, a cryptocurrency for payments based on the eye gaze for \acs{VR}/\acs{AR} was recently introduced as a token. This also shows the potential of such technologies for \acs{VR} and everyday use-cases. For the proposed protocol, while the Ethereum platform and its blockchain were used, newer platforms such as Avalanche~\cite{avax_white_paper} and Polkadot~\cite{polkadot_whitepaper} could be employed along with their extended functionalities. Apart from these, one might strive for more intuitive ways to guarantee data integrity and a decentralized method of experiment compensation in future works.

\section{Conclusion}
\label{sec_discussion_conclusion}
A significant contribution to the scientific research of everyday \acs{VR} using eye tracking was carried out in the context of this thesis. This encompasses research on human attention and cognition understanding based on eye movements and features in multiple domains, including education and driving, privacy preserving manipulations of eye movement features and signals with their algorithmic foundations, and a versatile protocol that may help make \acs{VR} more accessible to a wide range of human participants with different socio-demographic backgrounds.

% use to end the last part if the thesis is composed of parts
\addtocontents{toc}{\vspace{\normalbaselineskip}}
\cleardoublepage
\bookmarksetup{startatroot}

%%%%%%%%%%%%%%%%%%%%%%%%%%%%%%%%%%%%%%%%%%%%%%
%%%%% TAIL: Bibliography, Appendix
%%%%%%%%%%%%%%%%%%%%%%%%%%%%%%%%%%%%%%%%%%%%%%
\appendix
\chapter{Visual Attention and Cognition in VR through Eye Tracking}
\label{appendix_A}

This chapter includes the following publications:
\vspace{1cm}
\begin{enumerate}
	\item\label{appendix_CHIpaper_label} Hong Gao*, \textbf{Efe Bozkir*}, Lisa Hasenbein, Jens-Uwe Hahn, Richard
	Göllner, and Enkelejda Kasneci. Digital transformations of classrooms in virtual reality. In~\emph{Proceedings of the 2021 CHI Conference on Human Factors in Computing Systems (CHI)}, New York, NY, USA, 2021. ACM. doi: 10.1145/3411764.3445596.
	
	\item\label{appendix_VRpaper_label} \textbf{Efe Bozkir*}, Philipp Stark*, Hong Gao, Lisa Hasenbein, Jens-Uwe Hahn, Enkelejda Kasneci, and Richard Göllner. Exploiting object-of-interest information to understand attention in VR classrooms. In~\emph{2021 IEEE Virtual Reality and 3D User Interfaces (VR)}, New York, NY, USA, 2021. IEEE. doi: 10.1109/VR50410.2021.00085.
	
	\item\label{appendix_SAPpaper_label} \textbf{Efe Bozkir}, David Geisler, and Enkelejda Kasneci. Assessment of driver attention during a safety critical situation in VR to generate VR-based training. In~\emph{ACM Symposium on Applied Perception (SAP)}, New York, NY, USA, 2019. ACM. doi: 10.1145/3343036.3343138.
	
	\item\label{appendix_VRWpaper_label} \textbf{Efe Bozkir}, David Geisler, and Enkelejda Kasneci. Person independent, privacy preserving, and real time assessment of cognitive load using eye tracking in a virtual reality setup. In~\emph{2019 IEEE Conference on Virtual Reality and 3D User Interfaces (VR) Workshops}, New York, NY, USA, 2019. IEEE. doi: 10.1109/VR.2019.8797758.

\end{enumerate}

\blfootnote{
\hspace{-14pt}{\scriptsize * indicates equal contribution.\\}
{\scriptsize Publications are included with minor templating modifications. Definitive versions are available via digital object identifiers at the relevant venues. Publications \ref{appendix_CHIpaper_label} and \ref{appendix_SAPpaper_label} are \textcopyright~2021 ACM and \textcopyright~2019 ACM, respectively, and included with relevant permission. Publications \ref{appendix_VRpaper_label} and \ref{appendix_VRWpaper_label} are \textcopyright~2021 IEEE and \textcopyright~2019 IEEE, respectively, and reprinted, with permission, from \ref{appendix_VRpaper_label} and \ref{appendix_VRWpaper_label}. In reference to IEEE copyrighted material which is used with permission in this thesis, the IEEE does not endorse any of University of Tübingen’s products or services. Internal or personal use of this material is permitted. If interested in reprinting/republishing IEEE copyrighted material for advertising or promotional purposes or for creating new collective works for resale or redistribution, please go to \url{http://www.ieee.org/publications_standards/publications/rights/rights_link.html} to learn how to obtain a License from RightsLink. If applicable, University Microfilms and/or ProQuest Library, or the Archives of Canada may supply single copies of the dissertation.}
}

\newpage

\section[Digital Transformations of Classrooms in Virtual Reality]{Digital Transformations of Classrooms in Virtual Reality}
\label{appendix:A1}

\subsection{Abstract}
With rapid developments in consumer-level head-mounted displays and computer graphics, immersive \acs{VR} has the potential to take online and remote learning closer to real-world settings. However, the effects of such digital transformations on learners, particularly for \acs{VR}, have not been evaluated in depth. This work investigates the interaction-related effects of sitting positions of learners, visualization styles of peer-learners and teachers, and hand-raising behaviors of virtual peer-learners on learners in an immersive \acs{VR} classroom, using eye tracking data. Our results indicate that learners sitting in the back of the virtual classroom may have difficulties extracting information. Additionally, we find indications that learners engage with lectures more efficiently if virtual avatars are visualized with realistic styles. Lastly, we find different eye movement behaviors towards different performance levels of virtual peer-learners, which should be investigated further. Our findings present an important baseline for design decisions for \acs{VR} classrooms.

\begin{figure}
  \includegraphics[width=\textwidth]{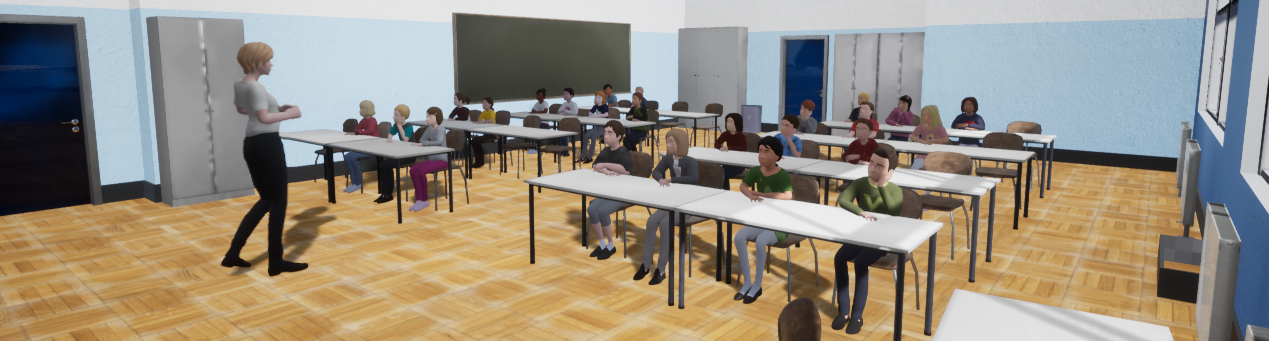}
  \caption{Immersive virtual reality classroom.}
  \label{fig:teaser_CHI21}
\end{figure}

\subsection{Introduction}
Recently, many universities and schools have switched to online teaching due to the COVID-19 pandemic. Online and remote learning may become more prevalent in the near future. However, one of the disadvantages of teaching and learning in such ways compared to conventional classroom-based settings is the limited social interaction with teachers and peer-learners. As this may demotivate learners in the long term, better social engagement providing solutions such as immersive virtual reality (IVR) can be used for teaching and learning. Next-generation \acs{VR} platforms such as Engage\footnote{https://engagevr.io/} or Mozilla Hubs\footnote{https://hubs.mozilla.com/} may offer better social engagement for learners in the virtual environments; however, the effects of such environments on learners have to be better investigated. In addition to the opportunity to provide more efficient social engagement configurations, \acs{VR} also enables building and evaluating situations that are difficult to set up in real life (e.g., due to the privacy-related concerns or current availability).

While \acs{VR} technology has a long history in the education domain~\cite{vreducationdomain, vrandeducation}, the current availability of consumer-grade head-mounted displays (HMDs) allows for the creation of immersive experiences at a reasonable cost, making it possible to employ immersive personalized \acs{VR} experiences in classrooms in the near future~\cite{riftart}. However, the digital transformations of classrooms reflect an important and critical step when developing \acs{VR} environments for learning purposes and require further research. A unique opportunity to understand the gaze-based behavior, and consequently, attention distribution of learners in such \acs{VR} settings is provided through the analysis of the eye movement of learners~\cite{vr_eyetracking_review}. Since some of the high-end \acs{HMD}s already consist of integrated eye trackers, it does not require extensive effort to extract eye movement patterns during simulations in \acs{VR}. A thorough analysis of the eye movements allows to infer information on the users going beyond the gaze position, for example stress~\cite{stress_vr}, cognitive load~\cite{bozkir2019person}, visual attention~\cite{bozkir_vr_attention_et}, evaluation and diagnosis of diseases~\cite{7829437}, future gaze locations~\cite{8998375}, or training evaluation~\cite{8448290}. In the virtual classroom, this rich source of information could even be combined with the virtual teachers' attention, similar to real-world classrooms~\cite{Sumer_2018_CVPR_Workshops,goldberg2019attentive}, to design more responsive and engaging learning environments.

In this study, we design an immersive \acs{VR} classroom that is similar to a real classroom, enabling students to perceive an immersive virtual classroom experience. We focus on exploring the impact of the digital transformation from the classroom to immersive \acs{VR} on learners by analyzing their eye movements. For this purpose, three design factors are studied, including sitting positions of the participating students, different visualization styles of the virtual peer-learners and teachers, and different performance levels of virtual peer-learners with different hand-raising behaviors. Figure~\ref{fig:teaser_CHI21} shows the overall design of the virtual classroom. Consequently, our main contributions are as follows.

\begin{itemize}
\item We design an immersive \acs{VR} classroom and conduct a user study to enable students to virtually perceive ``interactive'' learning.
\item We analyze the effect of different sitting positions on learners, including sitting in the front and back. We find significantly different effects in fixation and saccade durations, and saccade amplitudes in relation to the sitting position.
\item We evaluate the effect of different visualization styles of virtual avatars on learners including cartoon and realistic styles and find significantly different effects in fixation and saccade durations, and pupil diameters.
\item We assess the effect of different performance levels of virtual peer-learners on learners by evaluating various hand-raising percentages, and find significant effects particularly in pupil diameters and number of eye fixations.
\end{itemize}
\vspace{-15px}

\subsection{Related Work}
As head-mounted displays (HMDs) and related hardware become more accessible and affordable, \acs{VR} technology may become an important factor in the educational domain, particularly given its provided immersion and potential for teaching~\cite{review_ivr_education,youngblut1998educational}. Various recent works on \acs{VR} and education indicate that \acs{VR} may offer significant advantages for learning and teaching. For instance, based on the post-session knowledge tests, both augmented and virtual reality (AR/VR) are found to promote intrinsic benefits such as increasing learners' immersion and engagement when used for learning structural anatomy~\cite{doi:10.1002/ase.1696}. In~\cite{alhalabi2016virtual}, the impact of \acs{VR} systems on student achievements in engineering colleges was investigated by evaluating the results of post-quizzes and the results show that \acs{VR} conditions present significant advantages when compared to no-\acs{VR} conditions since students improve their performance, which indicates that \acs{VR} can successfully support teaching engineering classes. Additionally, \acs{VR} was also evaluated to help teachers develop specific skills that can be helpful in their teaching processes~\cite{lambvirtual}. In addition to teaching and learning processes, another aspect under evaluation concerns the types of virtual environment configurations that are used not only for learning, but also for exploring immersion, motivation, and interaction. To this end, different types of \acs{VR} setups have been studied.~\cite{riftart} introduced an immersive \acs{VR} tool to support teaching and studying art history, which indicates, when used for high-school students, an increased motivation towards art history.~\cite{smartphone_vr} explored the possibility of using low-cost \acs{VR} setups to improve daily classroom teaching by using a smartphone-based \acs{VR} system. According to the evaluations using pre- and post tests, the proposed \acs{VR} setup helps students perform better compared to traditional teaching using whiteboard and slides. Furthermore, \acs{HMD}-based \acs{VR} environment was studied in an elementary classroom for teachers to guide their students in exploring learning elements in immersive virtual field trips~\cite{207ed}. It has been concluded that students' motivation was enhanced after the virtual field trips. Overall, such works imply that while increasing motivation and engagement, different types of \acs{VR} environments provide plenty of benefits and can be used to assist learning and teaching processes by providing users with immersive experiences.

One disadvantage of such \acs{VR} and online learning tools is that learners' motivation and performance may be affected by lack of social interaction~\cite{mooc_social_interaction}, peer accompaniment~\cite{doi:10.1177/1052562904271199}, or immersion~\cite{motivation_immersion}. Furthermore, realism in immersive environments can have various implications~\cite{realism}, related to both learning and interaction. To address these issues, several works have focused on how to provide more realistic and immersive environments. For example,~\cite{vrclassroomconstructivist} discusses the design of the \acs{VR} environments for classrooms by replicating real learning conditions and enhancing learning through real-time interaction between learners and instructors. Furthermore,~\cite{8797708} constructed virtual classmates by synthesizing previous learners' time-anchored comments and indicates that when students are accompanied by a small number of virtual peer-learners built with prior learners' comments, their learning outcomes are improved. In addition to virtual peer-learners, the presence of virtual instructors may also have an impact on learning in \acs{VR}.~\cite{livehumanrole} investigated this and reports that learners engaged more with the environment and progressed further with the interaction prompts when a virtual instructor was provided. These works and findings indicate that the styles and types of virtual agents in the virtual environments may have several effects on students' attention and perception during immersion and should be taken into account. The evaluation of real-time visual attention towards similar configurations, which could be carried out using sensors such as eye trackers, may not only help to understand learning processes but also provide empirical insights about interactions during virtual classes for digital transformations of classrooms in \acs{VR}.

From immersion and interaction point of view, video teleconferencing systems share similar goals with the \acs{VR} classrooms as such systems enable people to experience highly immersive and interactive environments~\cite{teleconference_VR} and have been studied in the \acs{VR} context as well. For example,~\cite{teleconference_immersion} proposed a video teleconference experience using a \acs{VR} headset and found that the sense of immersion and feeling of presence of a remote person increases with \acs{VR}. Furthermore, different mixed reality (MR)-based 3D collaborative mediums were studied in terms of teleconference backgrounds and user visualization styles~\cite{teleconference_twoaspects}. The real background scene and realistically constructed avatars promote a higher sense of co-presence. Low-cost setups were investigated also for real-time \acs{VR} teleconferencing~\cite{teleconference_realtime}, as it was done for \acs{VR} learning environments and it is found that it is possible to improve image quality using headsets in these setups. The possibility of having low-cost setups may become an important factor in the future when accessibility and extensive usage of everyday \acs{VR} environments for learning~\cite{alhalabi2016virtual} and interaction~\cite{vrclassroomconstructivist} are considered.

In general, while the visualization styles and rendering are considered to affect learners' perception and attention, in virtual learning environments particularly in \acs{IVR} classrooms, other design factors are also important for attention-related tasks. For instance,~\cite{bailenson_et_al_2008} has studied the effect of being closer to the teacher, being in the teacher's field of view (FOV), and the availability of virtual co-learners in virtual classrooms. In particular, the authors found that students learn more if they are closer to the teacher and by being in the center of the teacher's \acs{FOV}. In addition, when no co-learners or co-learners who have positive attitudes towards the lecture (e.g., looking at the teacher or taking notes) are available, students learn more information about the lecture instead of the virtual room. Gazing time was approximated according to the time students kept the virtual teacher in their \acs{FOV}s; however, real-time gaze information was missing during the experiments. Exact gazing patterns and different eye movement events during learning are particularly needed for understanding moment-to-moment visual behaviors of students. In another work,~\cite{blume_et_al_18} studied the effect of the sitting position on attention-deficit/hyperactivity disorder (ADHD) experiencing students in such classrooms and found indications that front-seated students are affected positively by this configuration in terms of learning. However, similar to~\cite{bailenson_et_al_2008}, the authors did not have gaze information available but identified that the evaluation of eye movements may provide additional insights during learning, particularly in terms of real-time visual interaction, when learning and cognitive processes are taken into consideration. In addition, eye movements are also considered as choice of measurements to study visual perception during learning~\cite{holmqvist_book_eye_tracking,Jarodzka_Holmqvist_Gruber_2017}.~\cite{DazOrueta2014AULAVR} and~\cite{Seo2019JointAV} have studied attention measures and social interaction in similar setups using continuous performance tests and head movements, respectively. The latter work has used head movements as a proxy for visual attention and found that head movements shift between target and interaction partner. This finding partly supports the finding of~\cite{livehumanrole} that the learners' engagement increases when a virtual instructor is presented. However, both works lack eye movement measurements. As also reported by~\cite{Seo2019JointAV}, eye movements should be examined along with head movements to understand attention and interaction more in-depth, since eyes can move differently. In addition,~\cite{Nolin2016ClinicaVRCA} studied the relationship between performance, sense of presence, and cybersickness, whereas~\cite{mangalmurti_2020} examined attention, more particularly \acs{ADHD} with continuous performance task in a virtual classroom. However, both works are more in the clinical domain, which are relatively different from an everyday classroom setup.~\cite{rizzo_bowerly_buckwalter_klimchuk_mitura_parsons_2009} provides a general overview more from clinical perspective. Lastly, although has not been studied extensively in \acs{VR} yet, peer-learners' engagement expressed by hand-raising behavior~\cite{hand_raising_classroom_learning} may also affect the attention and visual behaviors of learners in the \acs{VR} classrooms, which could be further studied.

In summary, while showing that \acs{VR} could be a useful technology to support education, the aforementioned works primarily focused on the importance of used mediums and configurations, visualization styles, participant locations for visual attention, engagement, motivation, and learning of participants in \acs{VR} classrooms. Yet, real-time and moment-to-moment interactions with the environment and visual behaviors of students in an everyday \acs{VR} classroom setup were not studied in depth. Although obtaining such information in real-time is challenging, analyzing eye-gaze and eye movement features can provide valuable understanding into visual attention and interaction in a non-intrusive way, especially for designing such classroom configurations. For instance, long fixations can be related to the increased amount of cognitive process~\cite{just1976eye}, whereas long saccadic behaviors are related to inefficient search behavior~\cite{goldberg1999computer}. Furthermore, pupillometry is highly related to cognitive workload~\cite{Appel:2018,appel2019predicting}. Such information is also argued for consideration in \acs{IVR} environments~\cite{bailenson_2002,bailenson_2004}. In fact, when designing immersive \acs{VR} environments for digital transformations of classrooms in virtual worlds, such features can be key to understand visual attention, cognitive processes, and visual interactions towards different classroom manipulations, which may also affect learning and teaching processes. To address this research gap, we study three configurations in an everyday \acs{VR} classroom setup including different visualization styles of virtual avatars, sitting positions of participants, and hand-raising based performance levels of peer-learners by using eye movement features.

\subsection{Methodology}
The main purpose of our study is to investigate the effects of digital transformations of the classrooms to \acs{VR} settings on learners. Therefore, we designed a user-study to study these effects. In this section, we discuss the participant information, apparatus, experimental design, experiment procedure, measurements, data pre-processing steps, and our hypotheses. Our study and data collection were approved by the institutional ethics committee at the University of T{\"u}bingen (date of approval: 25/11/2019, file number: A2.5.4-106\_aa) as well as the regional council responsible for educational affairs at the district of T{\"u}bingen.

\subsubsection{Participants}
Participants were recruited from local academic track schools via e-mails and invitation letters. After obtaining written informed consent from both students and their parents or legal guardians, all students who indicated interest were admitted to the study. $381$ volunteer sixth-grade students ($179$ female, $202$ male), whose ages range from $10$ to $13$ ($M=11.51$, $SD=0.56$), were recruited to participate in the experiment. Due to hardware problems or incorrect calibration, data from $32$ participants were removed. In addition, data from $61$ participants were also removed due to eye tracker related issues including low eye tracking ratio (lower than $90\%$). Therefore, data from $288$ participants ($137$ female, $151$ male), whose ages range from $10$ to $13$ ($M = 11.47$, $SD = 0.51$), were used for evaluations. We had $16$ different conditions in the experiment and the average number of participants for each condition was $18$ ($SD = 5.3$). In addition to the actual study and data collection, we successfully piloted both our technical setup and the experimental workflow with $55$ similar aged ($M = 11.35$, $SD = 0.52$) sixth-grade students ($20$ female, $35$ male).

\subsubsection{Apparatus}
In our experiments we employed HTC Vive Pro Eye devices with a refresh rate of $90$ Hz and a field of view of $110^{\circ}$. The \acs{VR} environment was designed and rendered using the Unreal Game Engine\footnote{https://www.unrealengine.com/} v$4$.$23$.$1$. The screen resolution for each eye was set to $1440 \times 1600$. To collect eye movement data, we used the integrated Tobii eye tracker with a $120$ Hz sampling rate and a default calibration with $0.5^{\circ}-1.1^{\circ}$ accuracy.

\subsubsection{Experimental Design}
The virtual classroom designed in our study has $4$ rows and $2$ columns of desks along with chairs, as well as other objects which typically exist in the conventional classrooms such as a board and display. In total, there are $24$ virtual peer-learners sitting on the chairs. A virtual teacher standing in front of the classroom teaches a $\approx 15$-minute virtual lecture to the students about computational thinking~\cite{Weintrop2016DefiningCT}. During the lecture, the virtual teacher walks around the podium. The virtual peer-learners and participants sit on the chairs throughout the lecture. The lecture has four phases including \textbf{(a) topic introduction} ($\approx3$ minutes), \textbf{(b) knowledge input} ($\approx4.5$ minutes), \textbf{(c) exercises} ($\approx5.5$ minutes), and \textbf{(d) summary} ($\approx1.5$ minutes). There are distracting behaviors from virtual peer-learners (e.g., raising hands, turning around) in the first, second, and third phases of the lecture.

In the beginning of the first phase, the teacher enters the classroom, stays in the classroom for a while, and then leaves for $\approx20$ seconds, giving participants the opportunity to look around and adjust to the virtual environment. The topic of the lecture is displayed on the board as \emph{“Understanding how computers think”}. During the first phase, the teacher asks five simple questions to interact with the students. Some of the peer-learners raise their hands and answer the questions. In the second phase, the teacher explains two terms to the students, namely, the terms \emph{``loop''} and \emph{``sequence''}. These terms are also shown on the display. Then, the teacher asks four questions about each term and the peer-learners raise their hands to answer the questions. In the third phase, the teacher gives the students two exercises to evaluate whether or not they understand the terms correctly. For each exercise, the students have some time to think. Then, the teacher provides the answers for each exercise, and the peer-learners vote for the correct answer by raising their hands. In the last phase, the teacher stands in the middle of the classroom to summarize the lecture. No questions are asked in this phase; therefore, none of the peer-learners raise their hands.

Our study is in between-subjects design. The participants are located either in the front or back region of the virtual classroom. The participants that sit in the front of the virtual classroom have one row in front of them, whereas the participants that sit in the back have three rows in front of them. The visualization styles of the avatars have two levels as well, in particular cartoon and realistic. Lastly, the hand-raising percentages, which are intended to show the performance levels of the virtual peer-learners, have four different levels, including $20\%$, $35\%$, $65\%$, and $80\%$. Combining all, we have a $2 \times 2 \times 4$ factorial design that forms $16$ different conditions in total. Participants' views from back and front sitting positions, cartoon- and realistic-styled avatars are depicted in Figures~\ref{fig:screenshots_CHI21} (a), (b), (c), and (d), respectively.

\begin{figure*}[ht]
\centering
\subfigure[Back sitting participant experiencing the VR classroom.]{
    {\includegraphics[width=0.4\linewidth,keepaspectratio]{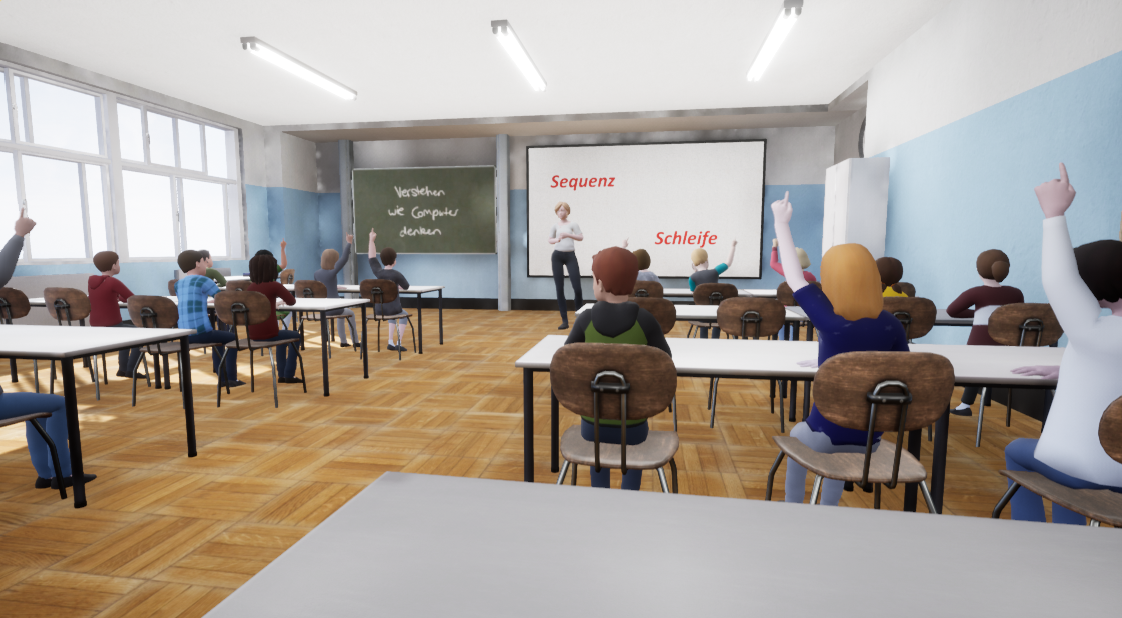}}
}
\qquad
\qquad
\subfigure[Front sitting participant experiencing the VR classroom.]{
    {\includegraphics[width=0.4\linewidth,keepaspectratio]{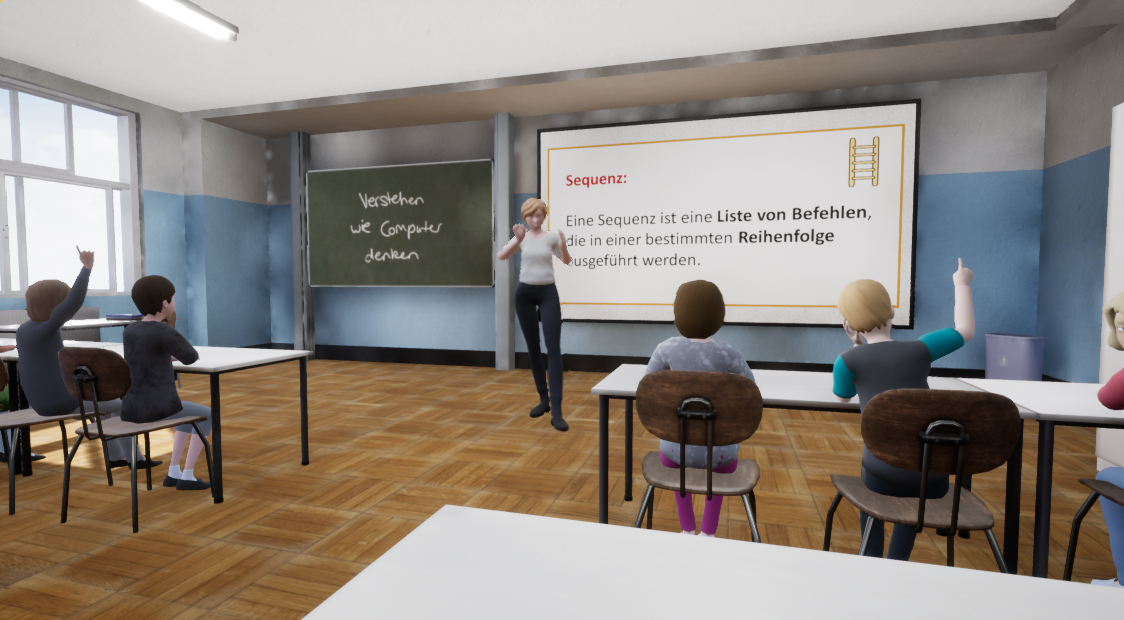}}
}
\qquad
\qquad
\subfigure[Cartoon-styled avatars.]{
    {\includegraphics[width=0.4\linewidth,keepaspectratio]{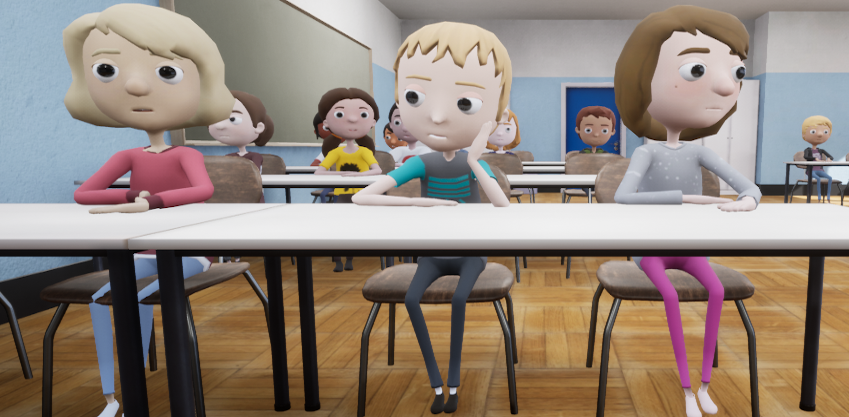}}
}
\qquad
\qquad
\subfigure[Realistic-styled avatars.]{
    {\includegraphics[width=0.4\linewidth,keepaspectratio]{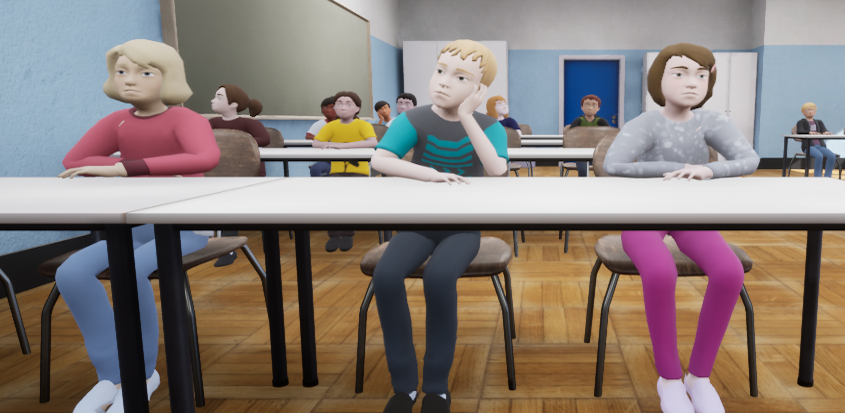} }
}
\caption{Views from the immersive virtual reality classroom.}
\label{fig:screenshots_CHI21}
\end{figure*}

\subsubsection{Procedure}
Each experimental session took $\approx45$ minutes including preparation time. We conducted the experiments in groups of ten participants by assigning each participant randomly to one of the sixteen conditions. Before the data assessment took place at the participating schools, students were informed that they could drop out of the study at any time without consequences. After a brief introduction to the experiment and the data collection process, participants had the opportunity to acclimate with the hardware and the \acs{VR} environment.

The experiment started with the eye tracker calibration. After calibration success, the experimenters pressed the ``Enter'' button to start the actual experiment and data collection process, wherein participants experienced the immersive virtual environment and the lecture. The experiments were supposed to be carried out in one session without breaks, mimicking thus a real classroom teaching session, lasting about 15 minutes. At the end of the experiment, the \acs{VR} application displayed a message telling the participants to take off their \acs{HMD}s. Lastly, participants filled out questionnaires about their experienced presence and perceived realism.

\subsubsection{Measurements}
For this work, our main focus was eye-gaze, head-pose, and pupil related activities of the participants as these are considered to be rich information sources, especially in \acs{VR}. Fixations are the periods during which eyes are stationary within the head while fixated on an area of interest. Saccades, on the other hand, are the high-speed ballistic eye movements that shift eye-gaze from one fixation to another.

Using fixations, saccades, and pupil diameters, plenty of eye movement features are extracted. In this study, we extracted the number of fixations, fixation durations, saccade durations, saccade amplitudes, and normalized pupil diameters to analyze different conditions of the experiment. In the eye tracking literature, longer fixation durations correspond to engaging more with the object or increased cognitive process~\cite{just1976eye}. Fixation durations are mainly related to cognition and attention; however, it is argued that they are affected by the procedures that lead to learning and it is reported that fixation durations can be used to understand learning processes as well~\cite{Negi_Mitra_2020}. For instance,~\cite{CHIEN2015191} has studied fixation patterns during learning in simulation- and microcomputer-based laboratory and found that simulation group had longer fixation duration, which means more attention and deeper cognitive processing. In addition to the fixations, longer saccade durations correspond to less efficient scanning or searching~\cite{goldberg1999computer}, whereas longer saccade amplitudes mean that attention is drawn from a distance~\cite{goldberg2002eye}. Furthermore, a larger pupil diameter is related to higher cognitive load~\cite{beatty:1982}. In addition, while being task dependent,~\cite{doi:10.1177/1541931213601689} has indicated that pupil diameter measurements in high task load correlate with individual's performance. However, as pupil diameter values are also affected by the illumination, a controlled environment is needed to assess it. In our \acs{VR} setup, the illumination is controlled across different conditions. Besides, a general overview of considering eye tracking as a tool to enhance learning with graphics is provided in~\cite{MAYER2010167}.

Additionally, the self-reported presence and realism were assessed by questionnaires. The items in the questionnaires were based on the conceptualizations of~\cite{schubert_presence} and~\cite{lombard_presence} which were developed particularly to assess students' perception of the \acs{VR} classroom situation. The experienced presence and perceived realism were assessed via using a 4-point Likert scales ranging from 1 (``do not agree at all'') to 4 (``completely agree'') with nine (e.g., ``I felt like I was sitting in the virtual classroom.'' or ``I felt like the teacher in the virtual classroom really addressed me.'') and six items (e.g., ``What I experienced in the virtual classroom, could also happen in a real classroom.'' or ``The students in the virtual classroom behaved similarly to real classmates.''), respectively.

\subsubsection{Data Pre-processing}
As the raw eye tracking data collected from the \acs{VR} device does not include fixations, saccades or similar eye movements, we first pre-processed the data to identify these events. Detecting different eye movements in the \acs{VR} setup is a challenging task and different from the traditional eye tracking experiments that include equipment such as chin-rests, as participants have opportunity to move their heads freely in \acs{VR}. In the eye tracking literature, Velocity-Threshold Identification (I-VT) method is used to classify fixations based on velocities~\cite{salvucci2000identifying}. In the \acs{VR} context, ~\cite{agtzidis2019360} applied a similar method to detect eye movement events. We opted for a similar approach.

Before applying the \acs{I-VT}, we first applied linear interpolation for the missing gaze vectors. After the interpolation, we identified the fixations when the \acs{HMD} was stationary. However, the identification of saccades was not restricted by the \acs{HMD} movement. The used velocity and duration thresholds for the \acs{HMD} movement states, fixations, and saccades are depicted in Table~\ref{tab:events_CHI21}, where the velocities and durations are given as $v$ and $\Delta$, respectively. Unlike the fixations and saccades, the pupil diameter values are reported by the eye tracker. As raw pupil diameter values are affected by blinks and noisy sensor readings, we smoothed and normalized the pupil diameter readings using Savitzky-Golay filter~\cite{savitzky64} and divisive baseline correction using a baseline duration of $\approx 1$ seconds~\cite{Mathot2018}, respectively.

\begin{table}[ht]
  \begin{center}
  \caption{Head and eye movement event identification thresholds.}
  \label{tab:events_CHI21}
  \begin{tabular}{ccc}
    \toprule
    Event & Conditions for velocity ($v$) & Conditions for duration ($\Delta$)\\
    \midrule
    Stationary HMD & $v_{head} < 7^{\circ}/s$ & -\\
    Fixation & $v_{head} < 7^{\circ}/s$ and $v_{gaze} < 30^{\circ}/s$ & $100ms < \Delta_{fixation} < 500ms$ \\
    Saccade & $v_{gaze} > 60^{\circ}/s$ & $30ms < \Delta_{saccade} < 80ms$\\
  \bottomrule
\end{tabular}
\end{center}
\end{table}

\subsubsection{Hypotheses}
We developed three hypotheses, each corresponds to one design factor.

\begin{itemize}
\item \textbf{Hypothesis-1 (H1)}: We hypothesize that the different sitting positions of the participants yield different effects on the eye movements. As the participants that sit in the front are closer to the board, displays, and the teacher, we assume that they can attend the virtual lecture more efficiently than participants in the back and have less difficulty extracting information about the lecture. However, as they have a narrower field of view, particularly towards the frontal part of the classroom, they need to shift their attention more than the participants sitting in the back.

\item \textbf{Hypothesis-2 (H2)}: We hypothesize that different visualization styles of virtual avatars affect student visual behaviors differently. More particularly, as students are familiar with realistic styles in the conventional classrooms, we claim that compared to cartoon-styled visualization condition, they attend the scene shorter during fixations in the realistic-styled visualization setting as cartoon-styled avatars are more attractive to the students. Therefore, students engage with the environment more in the cartoon-styled visualization condition than in the realistic-styled condition.

\item \textbf{Hypothesis-3 (H3)}: We hypothesize that different hand-raising percentages of virtual peer-learners can distinctively affect the behaviors of participants. Specifically, we anticipate that when relatively higher percentages of hand-raising levels are provided, such as $65\%$ or $80\%$, the participant's cognitive load will be higher due to the fact that many of the peer-learners attend the lecture with a high focus. Similarly, participants have more fixations in the classroom in the higher hand-raising percentage conditions as a higher number of hand-raising percentage creates an opportunity for various attention and distraction points.
\end{itemize}

\subsection{Results}
As we have three factors that form $16$ different conditions, we applied $3$-way full-factorial analysis of variance (ANOVA) by setting the level of significance to $\alpha = 0.05$ with Tukey-Kramer post-hoc test. For the non-parametric factorial analysis, we used the Aligned Rank Transform (ART)~\cite{10.1145/1978942.1978963} before applying \acs{ANOVA} procedures.

\subsubsection{Analysis on Different Sitting Positions}
Different sitting positions have an impact on the mean fixation and saccade durations, and mean saccade amplitudes. The mean fixation durations of the front and back sitting participants are illustrated in Figure~\ref{fig:sitting_pos_results_CHI21} (a). The participants that sit in the back have significantly longer mean fixation durations ($M = 222.6ms, SD = 14.57ms$) than the participants that sit in the front ($M = 218.75ms, SD = 13.11ms$), with $F(1,272) = 6.7$, $p = .01$.

Both saccade durations and amplitudes are influenced by the sitting positions and are depicted in Figures~\ref{fig:sitting_pos_results_CHI21} (b) and (c), respectively. The results reveal significantly longer saccade durations in the front condition ($M = 50.23ms, SD = 1.7ms$) than in the back condition ($M = 47.9ms, SD = 2.62ms$), with $F(1,272) = 73.76$, $p<.001$. Similarly, the mean saccade amplitude is significantly larger in the front condition ($M = 10.93^{\circ}, SD = 1.54^{\circ}$) than in the back condition ($M = 10.05^{\circ}, SD = 1.38^{\circ}$), with $F(1,272) = 22.6$, $p < .001$.

\begin{figure*}[ht]
  \centering
   \subfigure[Mean fixation durations.]{{\includegraphics[width=0.315\linewidth,keepaspectratio]{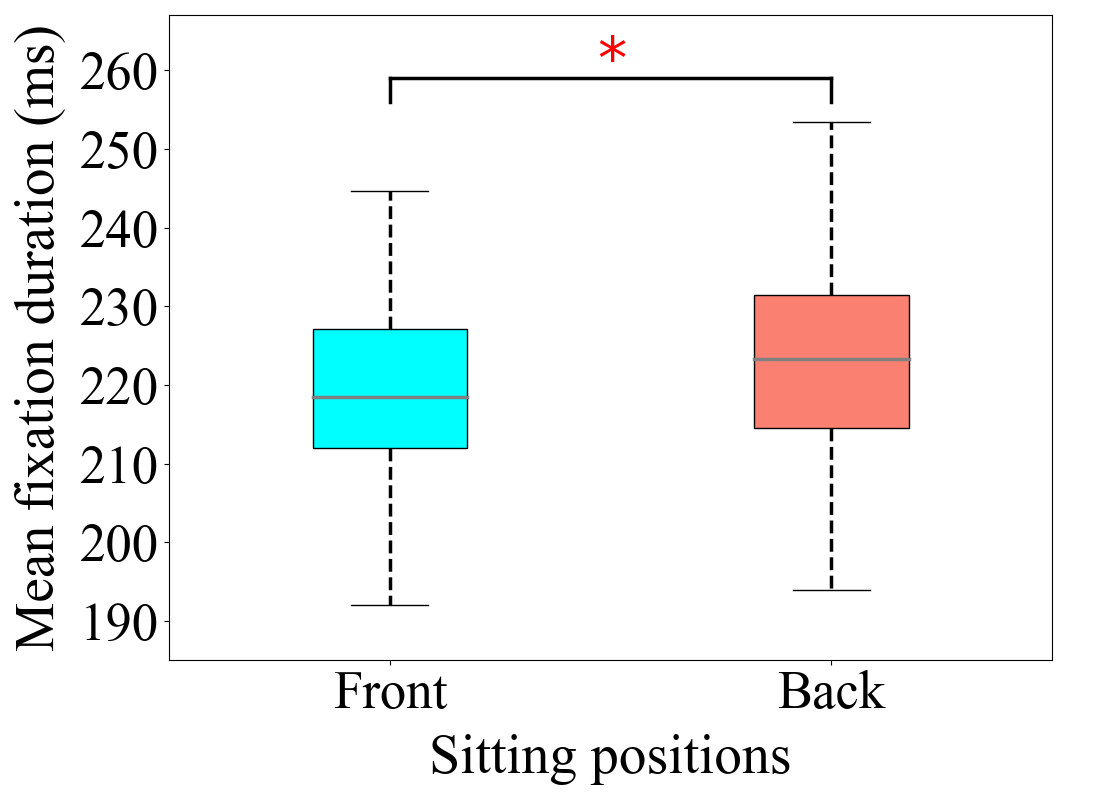}}}%
   \quad
   \subfigure[Mean saccade durations.]{{\includegraphics[width=0.315\linewidth,keepaspectratio]{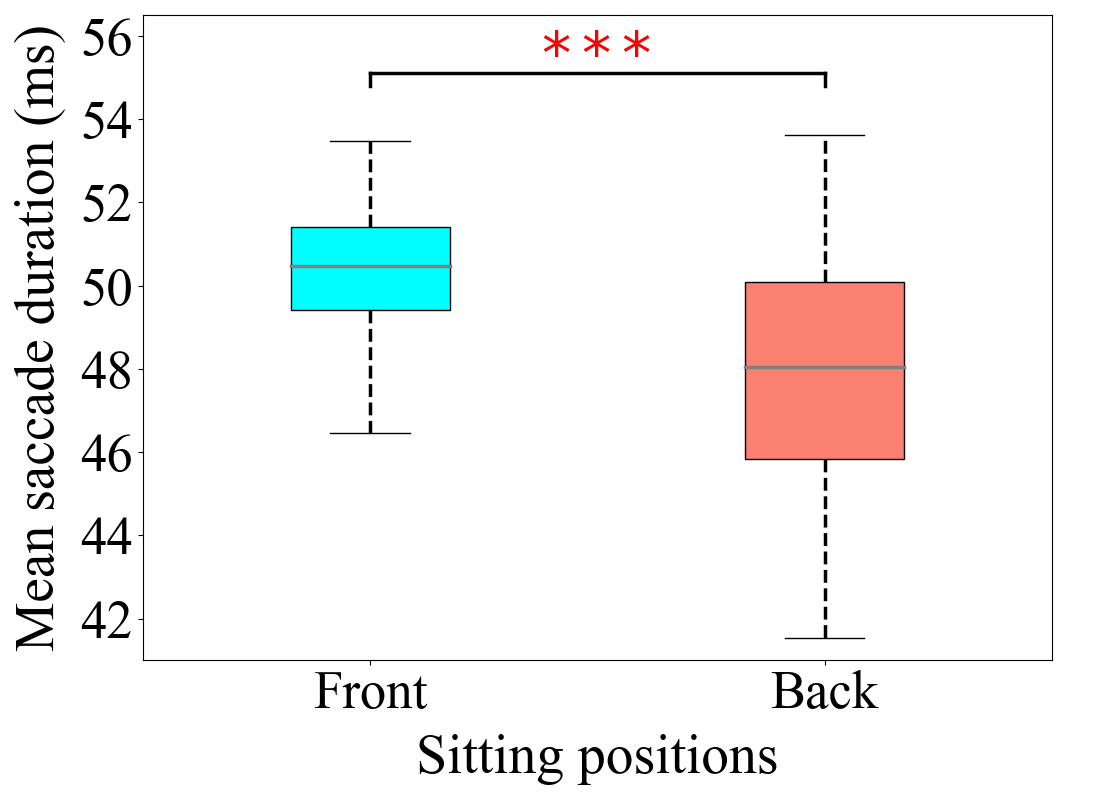} }}%
   \quad
   \subfigure[Mean saccade amplitudes.]{{\includegraphics[width=0.315\linewidth,keepaspectratio]{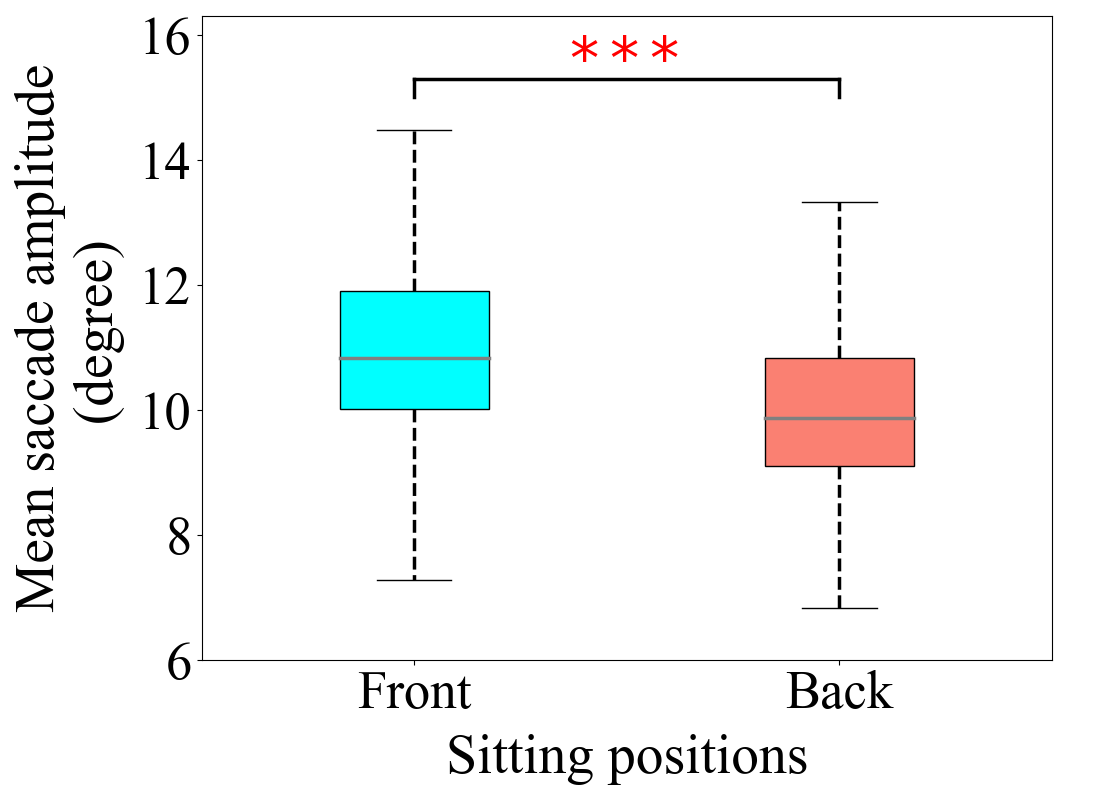}}}
  \caption{Results for different sitting positions. Significant differences are highlighted with * and *** for $p <.05$ and $p <.001$, respectively.}
  \label{fig:sitting_pos_results_CHI21}%
\end{figure*}

\subsubsection{Analysis on Different Avatar Styles}
Different avatar visualization styles affect the mean fixation and saccade durations, and pupil diameters. The results are depicted in Figures~\ref{fig:different_avatar_results_CHI21} (a), (b), and (c), respectively. The mean fixation durations are significantly longer in the cartoon-styled avatar condition ($M = 222.88ms, SD = 14.06ms$) than in the realistic-styled avatar condition ($M = 218.6ms, SD = 13.76ms$), with $F(1,272) = 5.27$, $p = .022$. By contrast, the mean saccade durations are significantly shorter in the cartoon-styled avatar condition ($M = 48.58ms, SD = 2.66ms$) than in the realistic-styled condition ($M = 49.3ms, SD = 2.35ms$), with $F(1,272) = 6.22$, $p = .013$.

\begin{figure*}[ht]
  \centering
   \subfigure[Mean fixation durations.]{{\includegraphics[width=0.315\linewidth,keepaspectratio]{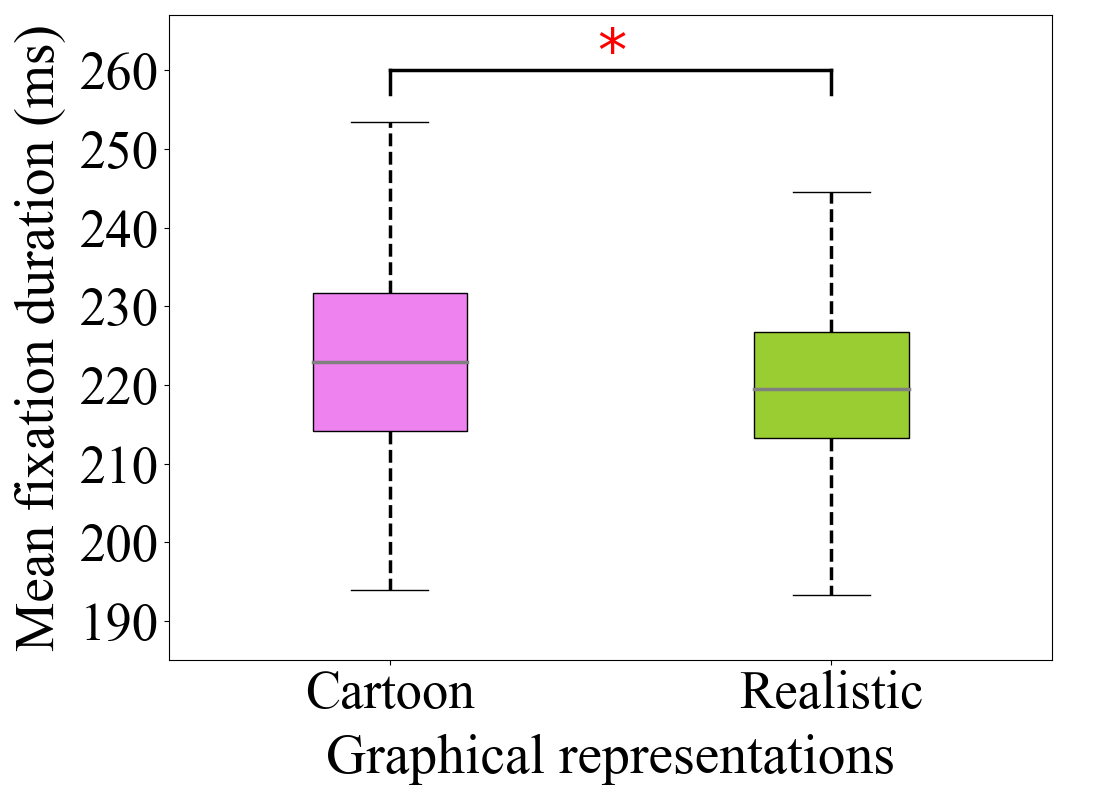}}}%
   \quad
   \subfigure[Mean saccade durations.]{{\includegraphics[width=0.315\linewidth,keepaspectratio]{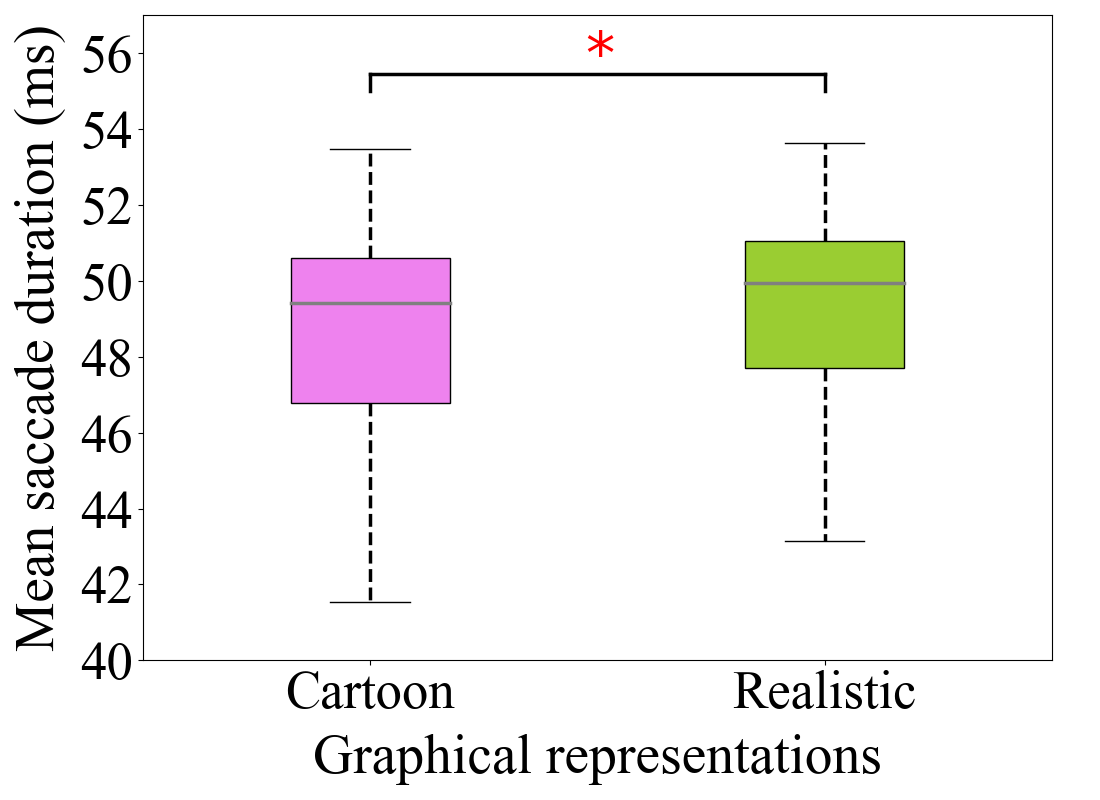} }}%
    \quad
   \subfigure[Pupil diameters.]{{\includegraphics[width=0.315\linewidth,keepaspectratio]{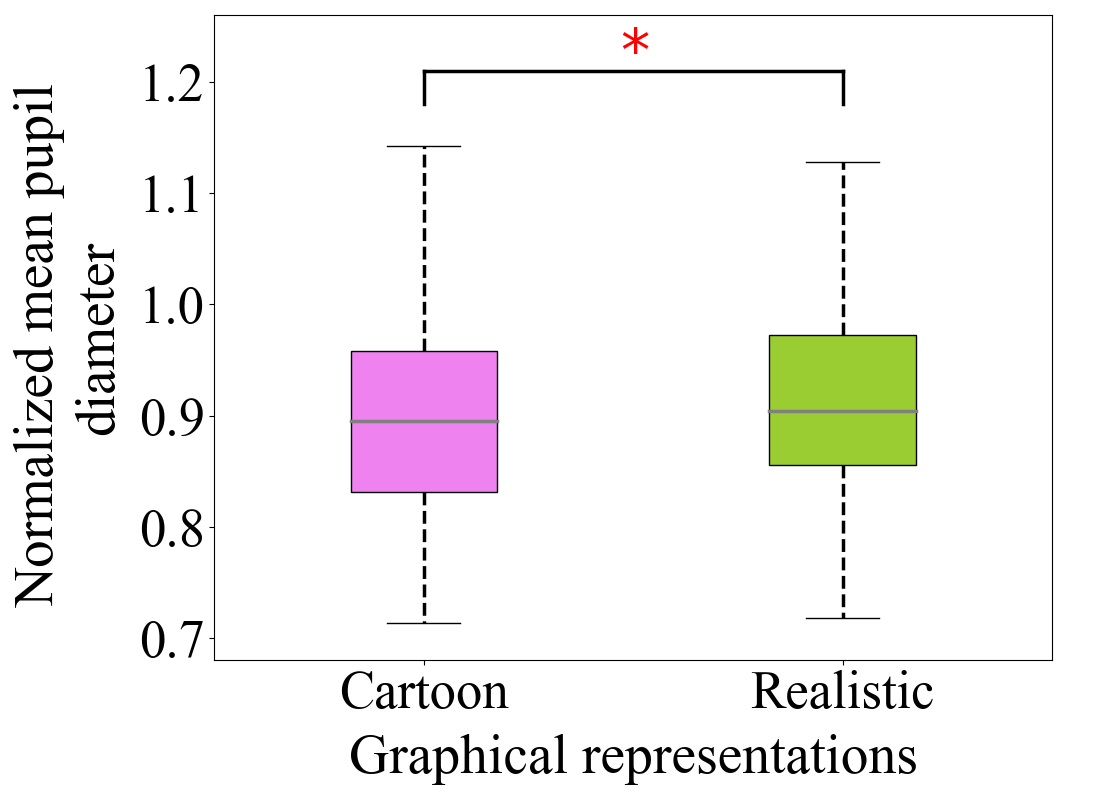}}}
  \caption{Results for different avatar visualization styles. Significant differences are highlighted with * for $p <.05$.}
  \label{fig:different_avatar_results_CHI21}
\end{figure*}

The normalized mean pupil diameter, which reflects the cognitive load, is significantly larger in the realistic-styled avatar condition ($M = 0.94, SD = 0.16$) than in the cartoon-styled avatar condition ($M = 0.91, SD = 0.13$), with $F(1,272) = 3.94$, $p = .048$.

\subsubsection{Analysis on Different Hand-raising Behaviors}
The hand-raising behaviors of virtual peer-learners have significant impacts on the pupil diameters and number of fixations as depicted in Figures~\ref{fig:different_handraising_results_CHI21} (a) and (b), respectively. We found significant effects on normalized mean pupil diameter values with $F(3,272) = 4.78$, $p = .003$. Particularly, mean pupil diameter in the $80\%$ hand-raising condition ($M = 0.96, SD = 0.16$) is significantly larger than in the $35\%$ hand-raising condition ($M = 0.9, SD = 0.12$), with $F(3,272) = 4.78$, $p < .001$. In addition, we found significant effects on number of fixations with $F(3,272) = 3.01$, $p = .03$. More specifically, there are notably more fixations in the $65\%$ hand-raising condition ($M = 1112.92, SD = 245.07$) than in the $80\%$ hand-raising condition ($M = 995.49, SD = 211.98$), with $F(3,272) = 3.01$, $p = .028$.

\begin{figure*}[ht]
  \centering
   \subfigure[Pupil diameters.]{{\includegraphics[width=0.485\linewidth,keepaspectratio]{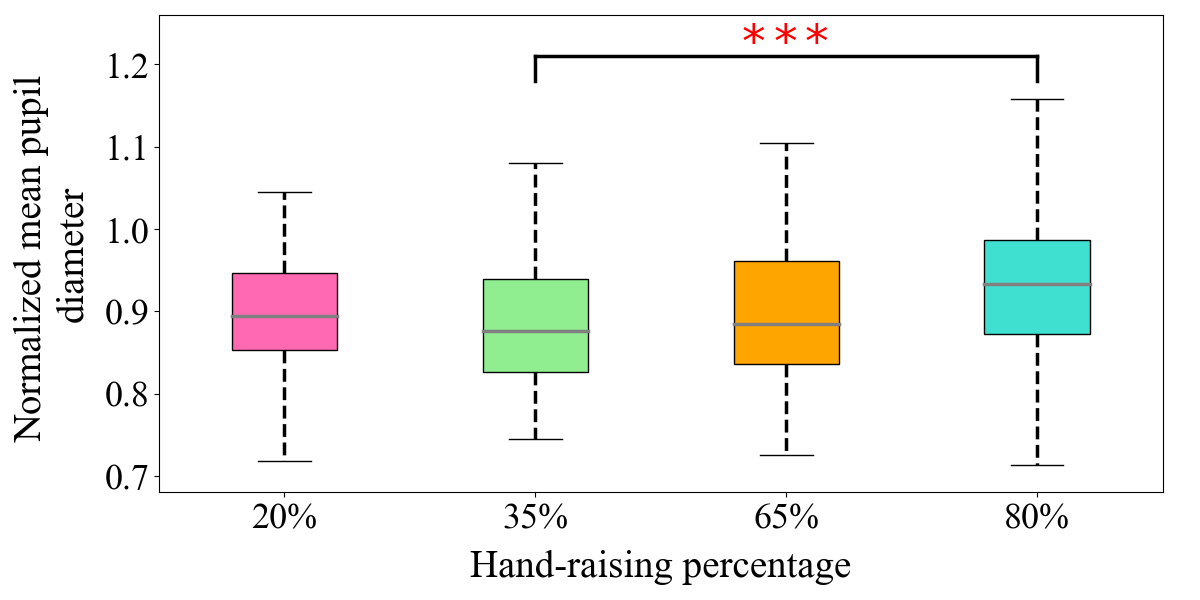} }}%
   \quad
   \subfigure[Number of fixations.]{{\includegraphics[width=0.485\linewidth,keepaspectratio]{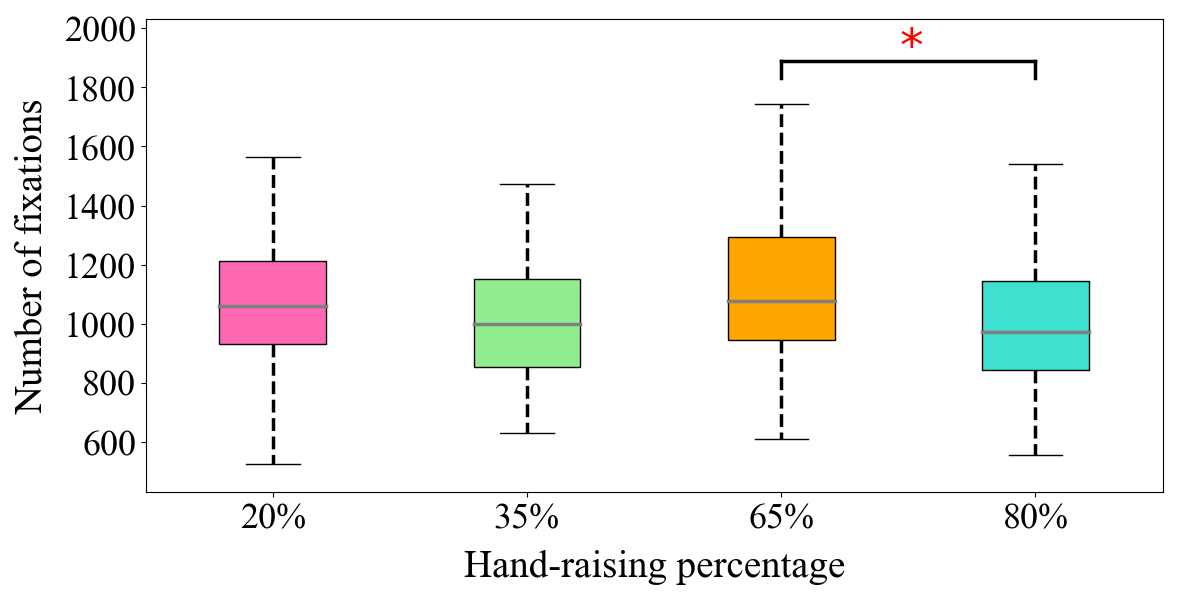}}}%
  \caption{Results for different hand-raising percentages. Significant differences are highlighted with * and *** for $p <.05$ and $p <.001$, respectively.}
  \label{fig:different_handraising_results_CHI21}%
\end{figure*}

\subsubsection{Analysis on Experienced Presence and Perceived Realism}
We did not find significant effects of different experimental conditions on the self-reported experienced presence and perceived realism. Overall, the self-reported experienced presence and perceived realism values are in the vicinity of highest values with ($M = 2.91$, $SD = 0.55$) and ($M = 2.91$, $SD = 0.57$), respectively. These mean that even though we did not obtain statistically significant differences between conditions, the participants experienced high levels of presence and realism in the \acs{IVR} classroom environment.

\subsection{Discussion}
The results show that there are significant differences in the eye movement features between front and back sitting position conditions. Firstly, participants had longer fixations in the back sitting condition. This indicates that they had more processing time than the participants sitting in the front, which can be related to difficulty extracting information, similar to the relationship between task difficulty and mean fixation duration~\cite{task_difficulty_fixation_durations}. Secondly, the participants that sit in the front had longer saccade durations and amplitudes, which suggests that they needed to shift their attention more during the virtual lecture. While being located closer to the lecture content, longer saccade durations indicate that the participants sitting in the front had less efficient scanning behavior~\cite{goldberg1999computer} during the lecture. We assume that this was due to the narrower field of view. These results support our \textbf{H1}. When designing virtual classes, these results should be taken into account, particularly when determining where students should be located in the classroom, depending on the context.

Our results show consequential effects in the eye movement features in different avatar style conditions. As mean fixation durations are longer in the cartoon-styled visualization condition, we assume participants found the cartoon-styled avatars more attractive and attention-grabbing. Therefore, their fixation behaviors were longer during the virtual lecture. On the contrary, the mean saccade durations are longer in realistic-styled conditions as the fixation durations are shorter, which is theoretically expected. Furthermore, the pupil diameters of the participants in the realistic-styled condition are larger, indicating that the cognitive load of these participants was significantly higher during the lecture, which is suggested by the previous work~\cite{beatty:1982}. This is an indication that participants may have taken the lecture more seriously and in a more focused manner when the visualization was realistic. These findings support our \textbf{H2}. Rendering realistic-styled avatars may be computationally expensive depending on the configuration. Therefore, an optimal trade-off should be decided, taking the behavioral results into account while designing the virtual classrooms.

Furthermore, we observe significant effects in attention towards different hand-raising based performance levels of the peer-learners. Particularly, the pupil diameters of the participants in the $80\%$ condition are significantly larger than the pupil diameters of the participants in the $35\%$ condition. We interpret this to mean that when the performance and attendance level of peer-learners was relatively higher, the participants' cognitive load became higher, indicating that they might pay more attention to the lecture content. This partially supports our \textbf{H3}. In addition, a greater number of fixations are observed in the $65\%$ condition than in the $80\%$ condition. We claim that when almost all of the peer-learners participated in hand-raising behaviors during the lecture, participants acknowledged this information without significantly shifting their gaze. However, this claim requires further investigation. Manipulation of different hand-raising conditions may affect student self-concept~\cite{self_concept_1976}, which should be further studied as well.

In our study, the interaction and perception in the immersive \acs{VR} classroom were assessed mainly by using eye-gaze and head-pose information. However, while the virtual teacher and peer-learners talk in the simulations, no response or interaction by means of audio or gestures was expected from the participants. Combining visual perceptions and interactions with such data may provide additional insights particularly for better interaction design in \acs{VR} classrooms. A future iteration can also evolve into an everyday virtual classroom platform where each virtual agent is actually connected to a real person, similar to in platforms such as Mozilla Hubs. To this end, further design settings such as optimal seating arrangement (e.g., U-shape, circle shape) in addition to the sitting positions should be investigated. Evaluation of similar configurations in online learning platforms such as Coursera\footnote{https://www.coursera.org/}, Udemy\footnote{https://www.udemy.com/}, or MOOCs\footnote{https://www.mooc.org/} could provide additional implications for interaction modeling. Furthermore, gaze-based attention guidance can be considered for more interactive \acs{VR} classroom experience and it can be achieved by fine-grained eye movement analysis focusing on short time windows instead of complete experiments. While being out of the scope of this paper, assessing learning outcomes and combining them with visual interaction and scanpath behaviors from immersive \acs{VR} classroom could also offer insights for optimal \acs{VR} classroom design.

\subsection{Conclusion}
In this work, we evaluated three major design factors of immersive \acs{VR} classrooms, namely different participant locations in the virtual classroom, different visualization styles of virtual peer-learners and teachers, including cartoon and realistic, and different hand-raising behaviors of peer-learners, particularly through the analysis of eye tracking data. Our results indicate that participants located in the back of the virtual classroom may have difficulty extracting information during the lecture. In addition, if the avatars in the classroom are visualized in realistic styles, participants may attend the lecture in a more focused manner instead of being distracted by the visualization styles of the avatars. These findings offer valuable insights about design decisions in the \acs{VR} classroom environment. Few indicators were obtained from the evaluation of the different hand-raising behaviors of peer-learners, providing a general understanding of attention towards peer-learner performance. However, these indicators should be further investigated and remain a focus of future work. 

\subsection*{Acknowledgments}
This research was partly supported by a grant to Richard G{\"o}llner funded by the Ministry of Science, Research and the Arts of the state of Baden-W{\"u}rttemberg and the University of T{\"u}bingen as part of the Promotion Program of Junior Researchers. Lisa Hasenbein is a doctoral candidate and supported by the LEAD Graduate School \& Research Network, which is funded by the Ministry of Science, Research and the Arts of the state of Baden-W{\"u}rttemberg within the framework of the sustainability funding for the projects of the Excellence Initiative II. Authors thank Stephan Soller, Sandra Hahn, and Sophie Fink from the Hochschule der Medien Stuttgart for their work and support related to the immersive virtual reality classroom used in this study.

\newpage

\section[Exploiting Object-of-Interest Information to Understand Attention in VR Classrooms]{Exploiting Object-of-Interest Information to Understand Attention in VR Classrooms} 
\label{appendix:A2}

\subsection{Abstract}
Recent developments in computer graphics and hardware technology enable easy access to virtual reality headsets along with integrated eye trackers, leading to mass usage of such devices. The immersive experience provided by virtual reality and the possibility to control environmental factors in virtual setups may soon help to create realistic digital alternatives to conventional classrooms. The importance of such settings has become especially evident during the COVID-19 pandemic, forcing many schools and universities to provide the digital teaching. Researchers foresee that such transformations will continue in the future with virtual worlds becoming an integral part of education. Until now, however, students' behaviors in immersive virtual environments have not been investigated in depth. In this work, we study students' attention by exploiting object-of-interests using eye tracking in different classroom manipulations. More specifically, we varied sitting positions of students, visualization styles of virtual avatars, and hand-raising percentages of peer-learners. Our empirical evidence shows that such manipulations play an important role in students' attention towards virtual peer-learners, instructors, and lecture material. This research may contribute to understanding of how visual attention relates to social dynamics in the virtual classroom, including significant considerations for the design of virtual learning spaces.

\subsection{Introduction}
Everyday use of head-mounted displays (HMDs) is increasing as virtual reality (VR) technology and virtual environments are already being used in various domains such as gaming and entertainment. In addition, some of the consumer-grade \acs{HMD}s are coming to market with integrated eye trackers that may help to assess human attention during immersion and allow for more interactive virtual environments. It is likely that, in the near future, such tools will become widely used mobile devices similar to today's mobile phones or smart watches. To this end, not only should researchers strive to improve the capabilities of these devices, but scrutiny should also be given to understanding human behavior and attention while using such technology. 

Measures of eye movements obtained through eye-tracking are effective indicators of human states and visual behavior to some extent; however, they are dependent on application or task~\cite{10.1145/3352763}. Analyzing and modeling human attention using this data in a specific domain may not be transferable to other domains. Thus, when assessing human attention in digital environments, or more particularly in \acs{VR} for the application in educational technology, specific domain knowledge and configurations should be considered. There is already some history of training and teaching in digital or virtual setups~\cite{KavaLuxtWuen2017ir,review_ivr_education}. Today, due to the COVID-19 pandemic, virtual or digital education has become more popular and even a necessity in many cases. Currently, many schools and universities are carrying out their teaching responsibilities remotely via platforms such as Zoom\footnote{https://www.zoom.us/} or Webex\footnote{https://www.webex.com/}. Such platforms lack the possibility of instructor-student interaction beyond audio and video features and encounter privacy concerns if videos are recorded and stored during classes. \acs{VR} setups offer the immersion, interaction, and privacy preservation that current remote learning platforms lack. In addition, as \acs{VR} allows users to easily control the environmental settings, it is possible to evaluate different classroom manipulations and subsequent effects on human behavior, a step that is exponentially more difficult in real world classrooms.

In this work, we exploit object-of-interest information by using eye-gaze and three main sets of objects in immersive \acs{VR}. We focus on virtual peer-learners, virtual instructor, and screen to understand visual attention through the design of a virtual classroom and a lecture about computational thinking. We choose these objects-of-interests since they are of particular interest with regard to attention towards social dynamics and learning. Our study has three different design factors: Different sitting positions of participating students, different visualization styles of virtual avatars including an instructor and peer-learners, and different hand-raising behaviors of virtual peer-learners. Different sitting positions include seating participating students in the front or back of the virtual classroom. In addition, different visualization styles of avatars consists of two conditions that are cartoon- and realistic-styled avatars. Lastly, different hand-raising behaviors include $20\%$, $35\%$, $65\%$, and $80\%$ of the peer-learners raising their hands to answer questions during the lecture. 
To the best of our knowledge, this is the first work that assesses students' attention by using object-of-interest information in an immersive \acs{VR} classroom through the manipulation of sitting positions of students, visualization styles of peer-learners and instructor, and hand-raising behaviors of peer-learners collectively. Such manipulations may be important indicators of students' visual attention towards lecture contents and social dynamics in the classroom and should be taken into consideration when designing \acs{VR} classrooms. 

\subsection{Related Work}
Since our work benefits from \acs{VR} in education and in eye tracking research, we discuss the state-of-the-art along these two lines. Various studies using \acs{VR} in education settings assess the mechanisms of attention or social dynamics by using pre- or post-tests or by relying on head movement behavior as a proxy for gaze. Using eye tracking in addition to such information presents the possibility of a deeper understanding of visual and situational attention during immersive experiences.

\subsubsection{Virtual Reality in Education and Classrooms}
\acs{VR} offers great promise for supporting teaching and learning procedures, especially when digital learning, physical inabilities, ethical concerns, and situational limitations are considered. An extensive review of immersive \acs{VR} in education and its pedagogical foundations are discussed in~\cite{review_ivr_education} and~\cite{explore_ped_foundations_VR}, respectively. We focus on research on \acs{VR} in education and immersive \acs{VR} classrooms in this section.

The effectiveness of learning in virtual and augmented reality (VR/AR) compared to tablet-based applications and the impact of \acs{VR}-based systems on students' achievements are studied in~\cite{doi:10.1002/ase.1696} and~\cite{alhalabi2016virtual}, respectively, and these works indicate several advantages of \acs{VR}-based conditions. In addition, it has been found that students' motivation increases when \acs{VR} is used as a teaching tool in art history~\cite{riftart} and social studies~\cite{207ed}. \acs{VR} not only supports the effectiveness of learning, but also can improve instructor teaching skills~\cite{lambvirtual}. 

Apart from \acs{VR} applications in teaching and learning, the design and degree of realism in \acs{VR} classrooms have also been studied. Presence of a virtual instructor was found to increase the engagement and progress of users~\cite{livehumanrole}. Furthermore, the processes of synthesizing virtual peer-learners by using previous learner comments~\cite{8797708} and designing \acs{VR} classrooms by replicating real conditions~\cite{vrclassroomconstructivist} which may affect learning are considered.

Several works focused on understanding visual attention and behavior in immersive \acs{VR} classrooms. Bailenson et al.~\cite{bailenson_et_al_2008} and Blume et al.~\cite{blume_et_al_18} studied learning outcomes according to sitting positions and offer compelling evidence that students seated in the front have better learning outcomes. Few studies, however, took head movements into consideration~\cite{doi:10.1089/10949310050078940,DazOrueta2014AULAVR,Nolin2016ClinicaVRCA,Seo2019JointAV} in such setups. In~\cite{DazOrueta2014AULAVR}, the immersive \acs{VR} classroom was used as a tool to study attention measures for attention deficit/hyperactivity disorder (ADHD), whereas in~\cite{Nolin2016ClinicaVRCA} reliability of virtual reality and attention was studied with continuous performance task (CPT) for clinical research. Social interaction using head movements was studied in~\cite{Seo2019JointAV} with users' head movements found to shift between the interaction partner and target. Some studies argued for eye tracking measurements, especially in clinical research for diagnosis or attention related tasks~\cite{rizzo_bowerly_buckwalter_klimchuk_mitura_parsons_2009,mangalmurti_2020}. However, none of the previous works have focused on social interactions and dynamics in the immersive \acs{VR} classroom in an everyday setting by using object-of-interest information and eye movements. 

\subsubsection{Eye Tracking in Virtual Reality}
Eye tracking and gaze estimation are considered challenging tasks in a real world setting because it is difficult to control factors such as occlusions or illumination changes~\cite{fuhl_pupil_det_in_the_wild,eth_xgaze_eccv20}. However, in most of the \acs{VR} setups, eye trackers are located inside of \acs{HMD}s. This creates not only a more controlled and reliable environment for eye tracking, but also provides a unique opportunity to analyze and process human visual behavior during the \acs{VR} experience.

Eye tracking has been used in many applications and shown to be helpful for various tasks in \acs{VR} such as guiding attention in panoramic videos using central and peripheral cues~\cite{directing_attr_attention_20}, predicting motion sickness by using 3D Convolutional Neural Networks~\cite{8642906}, synthesizing personalized training programs to improve skills~\cite{8448290}, foveated rendering using saccadic eye movements and eye-dominance~\cite{Arabadzhiyska2017,9005240}, evaluation and diagnoses of diseases such as Parkinson's disease~\cite{7829437}, re-directed walking using blinking behavior~\cite{redirected_walking_steinicke}, or continuous authentication using eye movements~\cite{Zhang:2018:CAU:3178157.3161410}. While these works have used either the eye tracking or gaze data to derive more meaningful information for related tasks, assessing visual attention via eyes and gaze-based interaction is more relevant for classroom setups in particular. Bozkir et al.~\cite{bozkir_vr_attention_et} assessed visual attention using gaze guidance and pupil dilations in a time-critical situation, whereas Khamis et al.~\cite{VRPursuits_interaction} discussed gaze-based interaction using smooth pursuit eye movements in \acs{VR}. In addition, Sidenmark and Lundstr\"{o}m~\cite{10.1145/3314111.3319815} analyzed eye fixations on interacted objects during hand interaction in \acs{VR} and found that interaction with stationary objects may be favorable. Aforementioned works indicate that eye movements can be used reliably in \acs{VR} setups. Moreover, considering that the majority of objects in a classroom are stationary or have limited spatial movement, visual attention extracted from such data may provide valuable insight into human behavior. While exploiting objects-of-interests could be considered as a primitive task, it forms the foundation of more complex tasks necessary to understand visual attention.

\subsection{Methodology}
The main focus of this work is to investigate object-of-interest information in different manipulations of an immersive \acs{VR} classroom. We focus on three objects that may be considered as the most important objects in the current setup, namely peer-learners, instructor, and screen.

\subsubsection{Participants}
$381$ volunteer sixth-grade students ($179$ female and $202$ male) between $10$ to $13$ years old ($M=11.5$, $SD=0.6$) were recruited for the experiment. In this age group, students are able to use an \acs{HMD}, but do not have much experience with \acs{VR}. They also had no background knowledge about the lecture content. Data from $101$ participants were removed due to hardware related problems, incorrect calibration, low eye tracking ratio (lower than $90\%$), and synchronization issues. The average number of participants per condition was $17.5$ ($SD=5.2$). Finally, we used the data of $280$ participants ($140$ female and $140$ male) with the aforementioned average age and standard deviation. For each condition group separately, participants' gender was also equally distributed ($M = 0.58, SD = 0.08$). The study was approved by the ethics committee of the University of T{\"u}bingen prior to the experiments. Participants and their parents or legal guardians provided written informed consent in advance.

\subsubsection{Apparatus}
For the experiments, HTC Vive Pro Eye devices with integrated Tobii eye trackers were used. The HTC Vive Pro Eye has a refresh rate of $90$ Hz and field of view of $110^{\circ}$. The integrated eye tracker has $120$ Hz sampling rate. The screen resolution per eye was set to $1440 \times 1600$. Unreal Game Engine v$4$.$23$.$1$\footnote{https://www.unrealengine.com/} was used to render the virtual classroom.

\subsubsection{Experimental Design}
The virtual classroom consists of $4$ rows of desks organized in $2$ columns. Next to each desk, chairs are located to let virtual peer-learners sit. There are $24$ virtual peer-learners in the environment and all of them sit on chairs during the entirety of the lecture. Some of the chairs are kept empty so as not to overcrowd the virtual classroom. In addition, the virtual classroom includes other objects, which exist in real classrooms such as board, screen, cupboard, clock, and windows. The lecture content is visualized on the white screen. Additionally, the virtual instructor walks around the podium, replicating behavior similar to that of a real instructor. Figures~\ref{fig:classroom_ss_VR21} (a), (b), (c), and (d) show the overall design, hand-raising peer-learners, realistic-styled peer-learners, and cartoon-styled peer-learners, respectively.

\begin{figure*}[ht]
  \centering
   \subfigure[Overall virtual classroom design.]{{\includegraphics[height = 3.75cm]{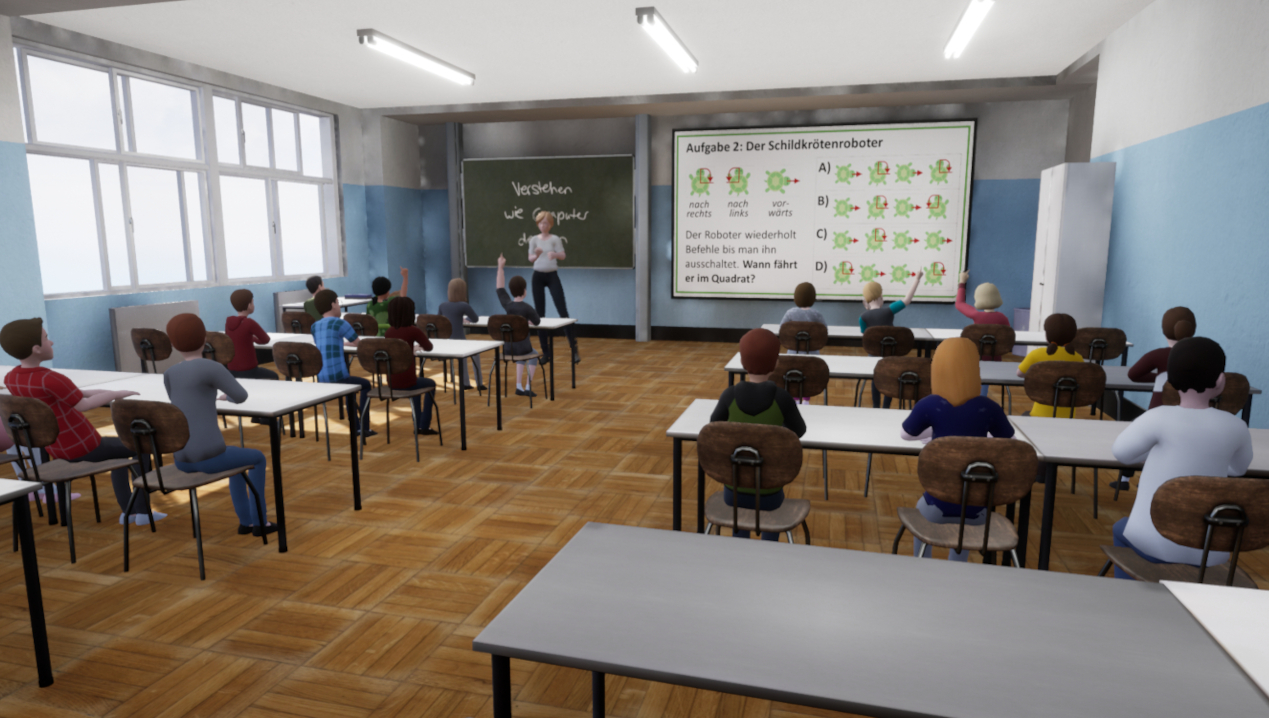}}}
   \qquad
   \subfigure[Hand-raising cartoon-styled peer-learners from back.]{{\includegraphics[height = 3.75cm]{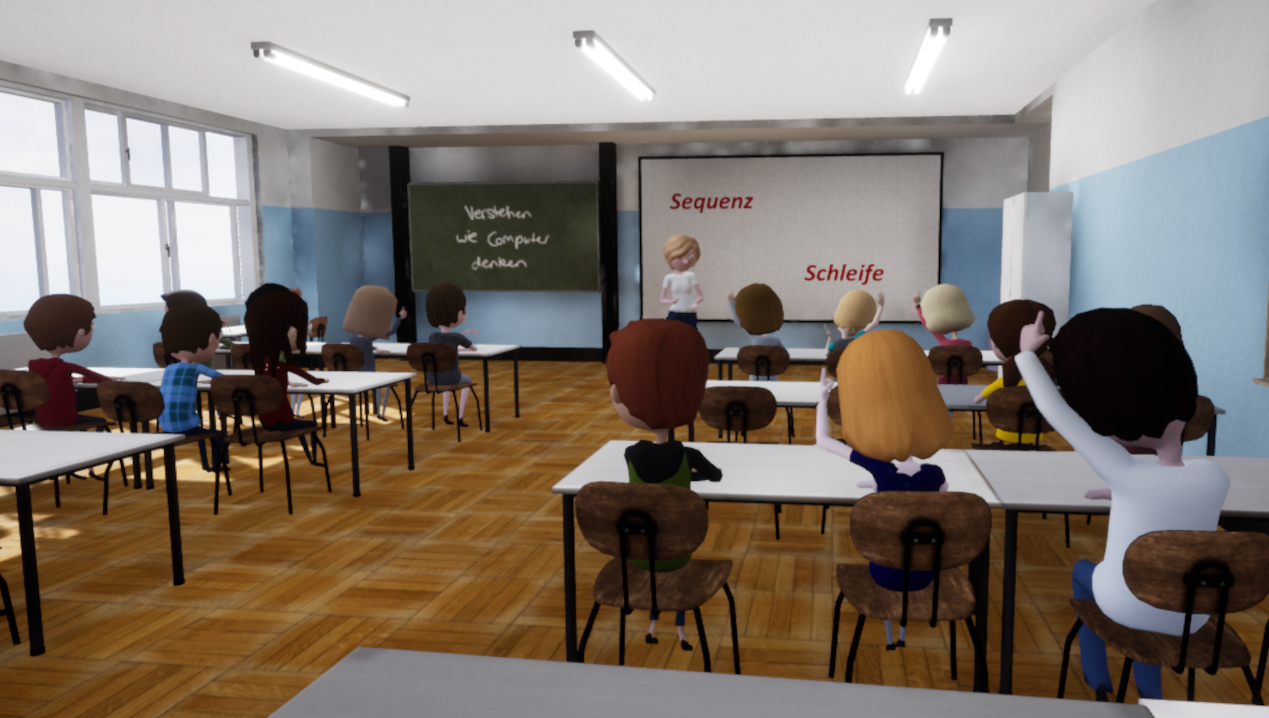} }}%
   \qquad
   \subfigure[Realistic-styled peer-learners.]{{\includegraphics[height = 3.75cm]{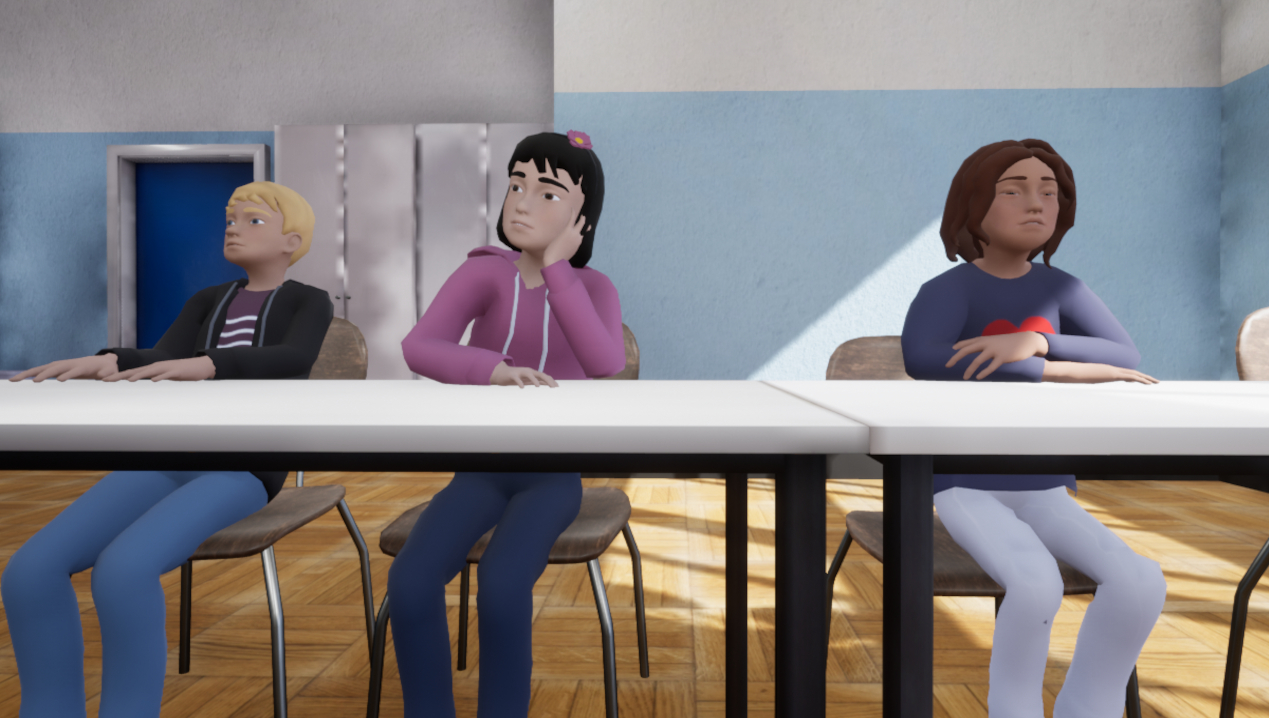} }}
   \qquad
   \subfigure[Hand-raising cartoon-styled peer-learners.]{{\includegraphics[height = 3.75cm]{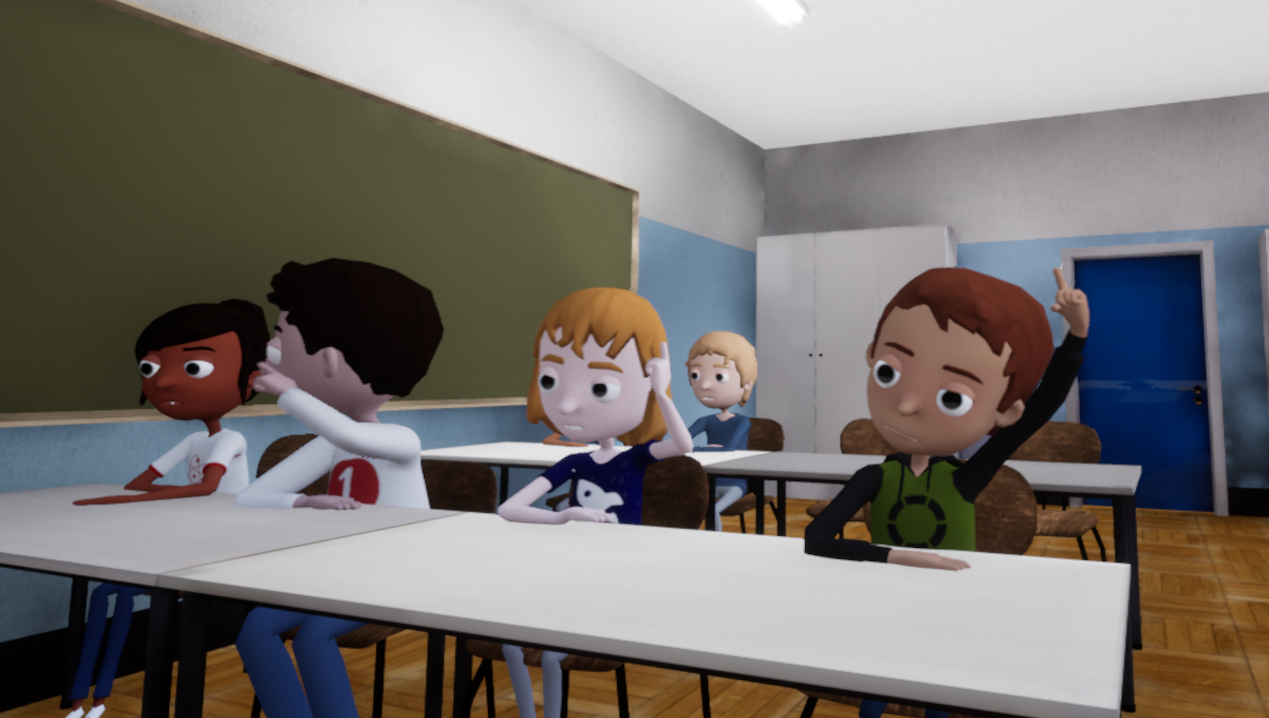}}}
  \caption{Views from the virtual classroom.}
  \label{fig:classroom_ss_VR21}%
\end{figure*}

The content of the virtual lecture is about computational thinking~\cite{Weintrop2016DefiningCT} and the lecture takes $\approx 15$ minutes in total, including $4$ phases. These four phases are grouped as ``Introduction to the topic'', ``Knowledge input'', ``Exercises'', and ``Summary'' and take $\approx3$, $\approx4.5$, $\approx5.5$, and $\approx1.5$ minutes, respectively. The topic of the virtual lecture is visible on the board as ``Understanding how computers think''. The first phase starts with the virtual instructor entering the classroom. After staying for a while, the instructor leaves the classroom for about $20$ seconds. During this time, participants have the opportunity to explore the classroom, look around, and acclimate themselves with the virtual environment. During the initial phase of the lecture, the instructor asks five questions, and some of the virtual peer-learners raise their hands to interact. In the second phase, the instructor describes two terms, ``sequence'' and ``loop'', and shows these terms on the white screen. After the descriptions, the instructor asks four questions about each term and some of the peer-learners raise their hands to answer them. In the third phase, the instructor assigns two exercises and allows students some time to think about them. Later, choices for each exercise are provided by the instructor and, this time, peer-learners raise their hands to vote on the correct answer out of the presented options. In the fourth phase, the instructor summarizes the lecture without asking any questions, which means that peer-learners do not raise their hands. In addition, no hand-raise is expected from the participants as hand poses are not measured during the experiments. 

Our study is conceptualized in a between-subjects design. We evaluated three design factors, namely sitting positions of the participants, visualization styles of virtual avatars, and hand-raising percentages of virtual peer-learners. Participants were seated either in the front or back rows, which means that the participants seated in the front had one row in front of them, whereas participants seated in the back had three rows between them and the screen. Both conditions were aligned in the aisle side of the desks that were on the right side of the classroom. This manipulation can give insights about students' attention during a lecture, when they have either the overview over whole class and see most of their virtual peer-learners or when they are positioned closer to instructor and screen the lecture is presented on. Participants encountered either cartoon- or realistic-styled virtual avatars in the environment, including the virtual instructor and peer-learners. The cartoon-styled avatars have larger heads and tinier arms and legs as compared to the realistic-styled avatars. Since the animation and design of more realistic looking avatars is time and cost expensive, it should be interesting to investigate the impact of such manipulation. In addition, various hand-raising percentages of virtual peer-learners consist of four levels, namely $20\%$, $35\%$, $65\%$, and $80\%$. This means that when a question is asked during the lecture by the virtual instructor, a corresponding percentage of virtual peer-learners raise their hands to answer the question. The last two manipulations are of particular interest, regarding the question how social avatars should be designed in a virtual classroom and how they are perceived by students. Under which condition do students use social information and how does visualization and certain behaviour influence students attention. This helps to simulate and evaluate social dynamics and engagement during the virtual lecture using visual attention. In total, our $2$ (factor $1$) $\times 2$ (factor $2$) $\times 4$ (factor $3$) between-subjects design leads to $16$ treatment groups.

\subsubsection{Procedure}
In the beginning of the experiment, the assistants introduced the experiment and its process to the participants. Participants had the opportunity to familiarize themselves with the hardware and the \acs{VR} environment. Afterwards, the actual experiment and data collection began. Firstly, the eye tracker was calibrated. Then, the experiment was started with assistants pressing a start button. At the end of the virtual lecture, the participants were told to take the \acs{HMD} off by a message which was displayed in the virtual environment. Virtual lectures were carried out without any breaks. After watching the virtual lecture, participants filled out questionnaires about their perceived realism and experienced presence which were conceptualized for the \acs{VR} classroom according to~\cite{lombard_presence,schubert_presence}.

Each session took $\approx45$ minutes in total. The experiments were carried out in groups of ten participants who were randomly allocated to one of the $16$ treatment groups by using a random number generator to ensure the random distribution of conditions within groups. To maintain natural behavior, participants selected the physical seat in the experiment room freely without being informed about experimental conditions. Although research assistants helped with technical issues regarding the use of the \acs{HMD}, participants were blinded to the true purpose and design of the study, as it was solely introduced as a learning experience.

\subsubsection{Data Processing and Measurements}
During the experiments, head location and pose, gaze, and eye related data along with experimental condition were collected. Head movements are particularly helpful for mapping eye-gaze in the virtual environment. These were saved in data sheets for each participant using anonymous identifiers which ensured the privacy of the participants.

As gaze data reported by the eye tracker can be affected negatively by blinks or noisy sensor measurements, we applied a linear interpolation on the gaze vectors to clean the data. Afterwards, using head pose and interpolated gaze data, we applied ray-casting~\cite{ROTH1982109} to map the gaze into the 3D virtual environment. The objects in the 3D environment are surrounded by dedicated colliders; therefore, we were able to calculate 3D gaze points and gazed objects using the procedure visualized in Figure~\ref{fig:ray_cast_VR21}.

\begin{figure}[ht]
  \centering
   \includegraphics[width=\linewidth, keepaspectratio]{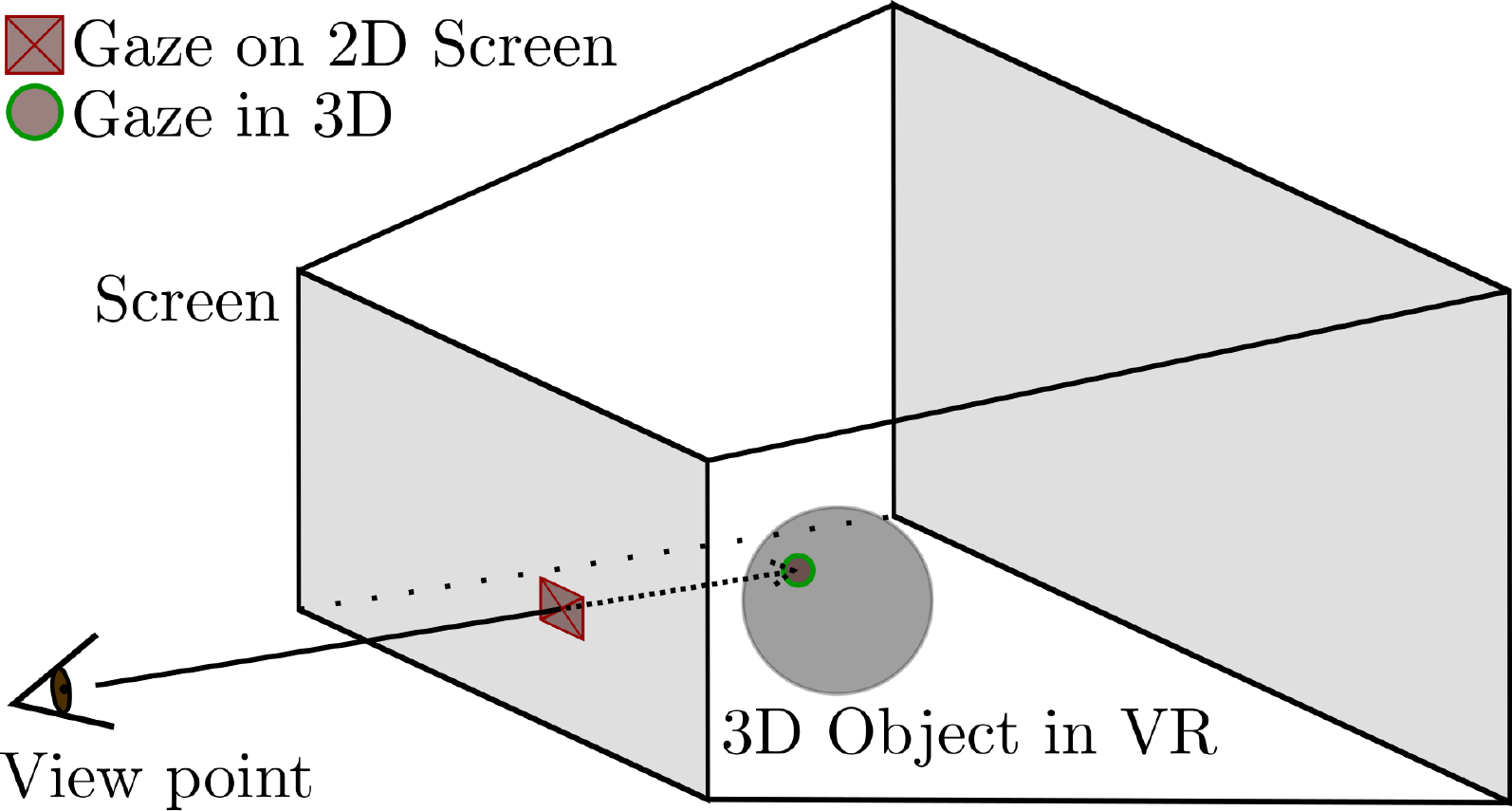}
  \caption{Ray-casting procedure to obtain 3D gazed object.}
  \label{fig:ray_cast_VR21}%
\end{figure}

However, gazed objects may not directly represent visual attention as participants can gaze on some objects unconsciously for a very short time. To overcome this issue, we set an attention threshold of $200$ ms, meaning that we count the objects as object-of-interest if participants stay with their gaze on the objects for at least the amount of the attention threshold. As we assume that both fixations and saccades can occur during attending one object, the selected threshold is larger than classical fixation thresholds applied in eye tracking literature for both conventional~\cite{salvucci2000identifying} or \acs{VR} eye tracking~\cite{agtzidis2019360} setups. While we also experimented with various threshold values, our results show similar trends across different thresholds.

In addition to the data related to visual attention, self-reported perceived realism and experienced presence were obtained at the end of the experiments with $4$-point Likert scales ranging from $1$ (``completely disagree'') to $4$ (``completely agree'') with $6$ (e.g., ``I felt like the teacher and the classmates could be real people'') and $9$ (e.g., ``During the virtual lecture, I almost forgot that I was wearing the \acs{VR} glasses'') items, respectively.

In this study, we focused on three main objects in the virtual classroom, namely peer-learners, virtual instructor, and screen, when we extracted object-of-interest information. We decided that these objects may have a significant impact on social dynamics in the classrooms and for overall course of lecture. In our analyses, the attention time on each peer-learner is aggregated and the object of ``peer-learners'' represents the aggregated object and related attention. In addition, in our classroom setup there is one board and one white screen behind the instructor as depicted in Figure~\ref{fig:classroom_ss_VR21} (a). The lecture content is provided on the white screen only; therefore, in our analysis we refer to the white screen when mentioning screen object.

\subsubsection{Research Hypotheses}
Our hypotheses correspond to the experimental factors of sitting positions, avatar visualization styles, and various hand-raise percentages of virtual peer-learners, respectively. Furthermore, since we analyze behaviors towards three different objects in the virtual classroom, namely peer-learners, instructor, and screen, for simplicity we call attention to attending these objects-of-interests for the rest of the paper.

\paragraph{Visual Attention in Different Sitting Positions (H1) \\}
We expect that participants seated in the front condition have less attention on peer-learners, naturally because they do not have as many peer-learners sitting in front of them as opposed to the participants sitting in the back. In addition, the participants that are located in the front are closer to the virtual instructor and the screen that visualizes lecture content. Due to the proximity and having fewer moving and occluding objects in their field of view (FOV), we hypothesize that these participants have more attention time on both virtual instructor and screen than the participants sit in the back.

\paragraph{Visual Attention in Different Visualization Styles of Virtual Avatars (H2) \\} 
We hypothesize that attention time on peer-learners in the cartoon-styled visualization is longer than in the realistic-styled visualization as cartoon-styled peer-learners are more exciting for participants when ages of our interest group are taken into consideration. In addition, we assume that participants look at the realistic-styled instructor for longer than at cartoon-styled instructor as participants may consider the realistically rendered instructor more credible in a learning environment. Lastly, we do not expect any differences in terms of attention towards virtual screen that lecture content is visualized, as the visualization style of the screen does not change. 

\paragraph{Visual Attention in Different Hand-raising Behaviors of Peer-learners (H3) \\} 
We hypothesize that attention time on peer-learners increases with a higher number of virtual peer-learners raising their hands when questions are asked, as this would create a visually more dynamic classroom. Additionally, we expect that if fewer virtual peer-learners raise their hands, this will lead participants to keep their attention either on the instructor or the lecture screen due to having less amount of visual distractors when questions are provided by the virtual instructor.

\subsection{Results}
In this section, we analyze the total amount of time spent on each object-of-interest (OOI), which we call visual attention, between different conditions. For each \acs{OOI}, we applied a $3$-way full factorial \acs{ANOVA} for statistical comparison using alpha level of $0.05$. For non-parametric analysis, we transformed the data using the aligned rank transform (ART)~\cite{10.1145/1978942.1978963} before applying \acs{ANOVA}s. For the pairwise comparisons, we used Tukey-Kramer post-hoc test as the sample sizes were not equal. While the main focus of this work is to assess visual attention using \acs{OOI} information, here we report experienced presence and perceived realism questionnaires to support our main results. We obtained mean values of $2.91$ for experienced presence and perceived realism with $SD = 0.55$ and $SD = 0.57$, respectively, without any significant differences between conditions.

\subsubsection{Visual Attention on Peer-learners}
Total time spent on peer-learners for different sitting positions, avatar visualization styles, and various hand-raising behaviors are depicted in Figures~\ref{fig:attention_peer-learners_VR21} (a), (b), and (c), respectively. Total time spent on peer-learners is significantly longer in the back seated condition ($M = 115.07$ sec, $SD = 85.28$ sec) than it is in the front seated condition ($M = 33.59$ sec, $SD = 32.45$ sec) with ($F(1,264) = 156.23$, $p < .0001$, $\eta^2 = .36$). 

\begin{figure*}[ht]
  \centering
   \subfigure[Comparison between sitting positions.]{{\includegraphics[height = 5.64cm]{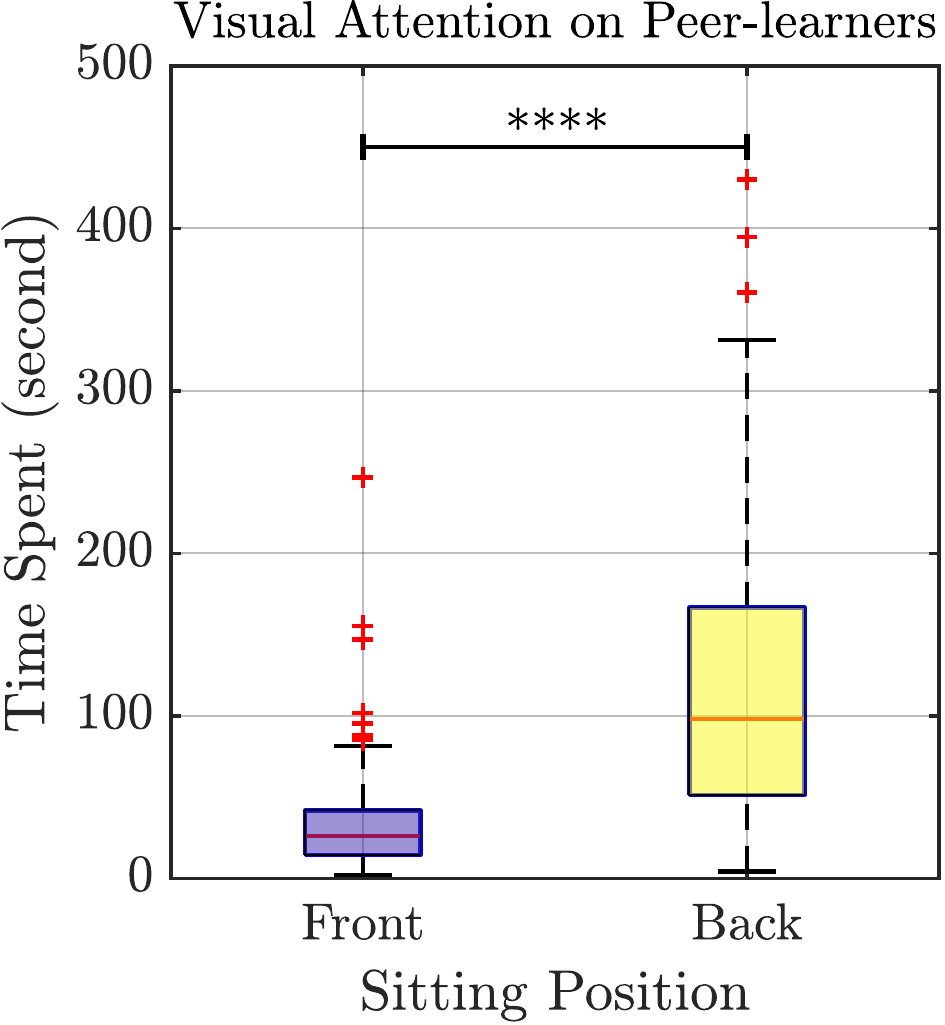} }}%
   \quad
   \quad
   \quad
   \quad
   \subfigure[Comparison between visualization types.]{{\includegraphics[height = 5.64cm]{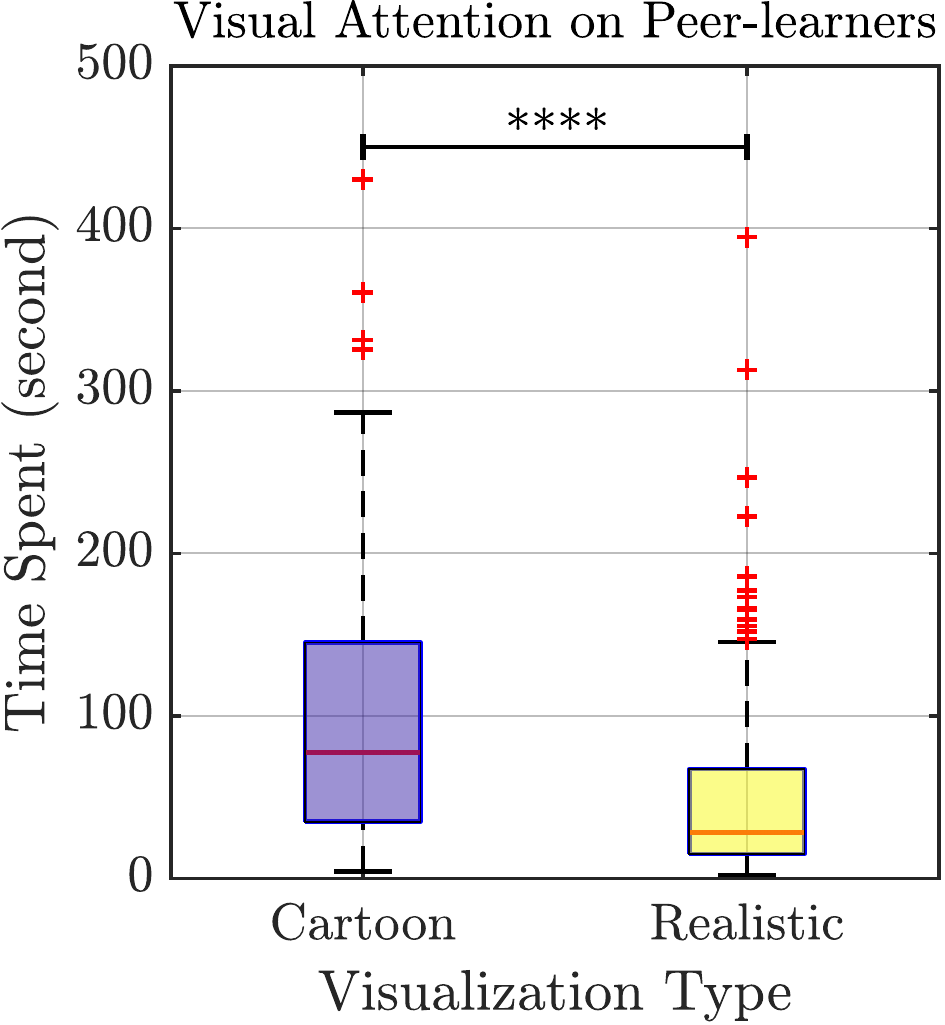} }}
   \quad
    \subfigure[Comparison between hand-raising behaviors.]{{\includegraphics[height = 5.64cm]{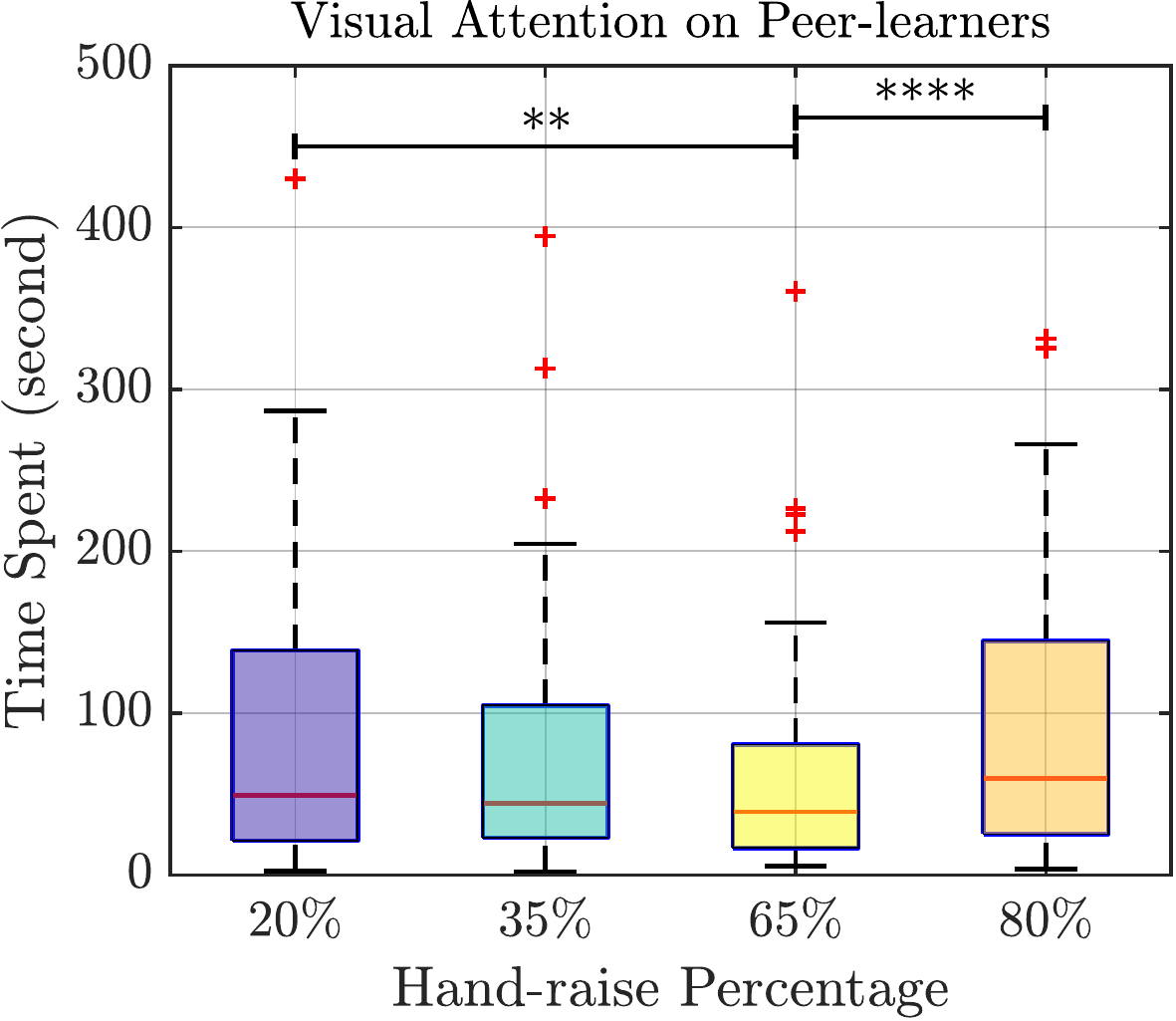}}}
  \caption{Attention towards virtual peer-learners for different classroom manipulation configurations. $**$ and $****$ correspond to the significance levels of $p<.01$ and $p<.0001$, respectively.}
  \label{fig:attention_peer-learners_VR21}%
\end{figure*}

Attention towards peer-learners as different visualization styled avatars differs significantly. Cartoon-styled peer-learners ($M = 98.67$ sec, $SD = 82.79$ sec) drew significantly more attention than the realistic-styled peer-learners ($M = 55.28$ sec, $SD = 65.65$ sec) with ($F(1,264) = 54.13$, $p < .0001$, $\eta^2 = .17$).

Furthermore, for different hand-raising manipulations, attention time on the peer-learners differs significantly with ($F(3,264) = 6.93$, $p < .001$, $\eta^2 = .07$). Particularly, the total time spent on peer-learners in the $80\%$ condition ($M = 88.95$ sec, $SD = 78.15$ sec) is significantly longer than in the $65\%$ condition ($M = 59.23$ sec, $SD = 65.19$ sec) with ($F(3,264) = 6.93$, $p < .0001$, $\eta^2 = .07$). In addition, the total time spent in the $20\%$ condition ($M = 88.62$ sec, $SD = 87.53$ sec) is significantly longer than in the $65\%$ condition ($M = 59.23$ sec, $SD = 65.19$ sec) with ($F(3,264) = 6.93$, $p = .005$). In summary, attention time towards extreme levels of hand-raising percentages are longer than for intermediate levels.

Additionally, we found some significant interaction effects regarding the attention time on the peer-learners. The time on peer-learners in the hand-raising condition depends on the sitting position of the students with ($F(3,264) = 3.88$, $p = .0097$, $\eta^2 = .041 $), as well as the attention time on peer-learners in the avatar visualization styles condition depends on the sitting position with ($F(1,264) = 11.37$, $p < .001$, $\eta^2 = .039$) and vice versa. A small interaction effect was found between the hand-raising condition and the avatar visualization styles with ($F(3,264) = 3.36$, $p = .02$, $\eta^2 = .036$).

\subsubsection{Visual Attention on Instructor}
Total time spent on instructor for different sitting positions, avatar visualization styles, and various hand-raising behaviors are depicted in Figures~\ref{fig:attention_instructor_VR21} (a), (b), and (c), respectively. The participants that are seated in the front ($M = 190.07$ sec, $SD = 93.13$ sec) attended to the virtual instructor significantly more than the participants seated in the back ($M = 80.37$ sec, $SD = 60.78$ sec) with ($F(1,264) = 144.16$ $p < .0001$, $\eta^2 = .34$).

\begin{figure*}[ht]
  \centering
   \subfigure[Comparison between sitting positions.]{{\includegraphics[height = 5.64cm]{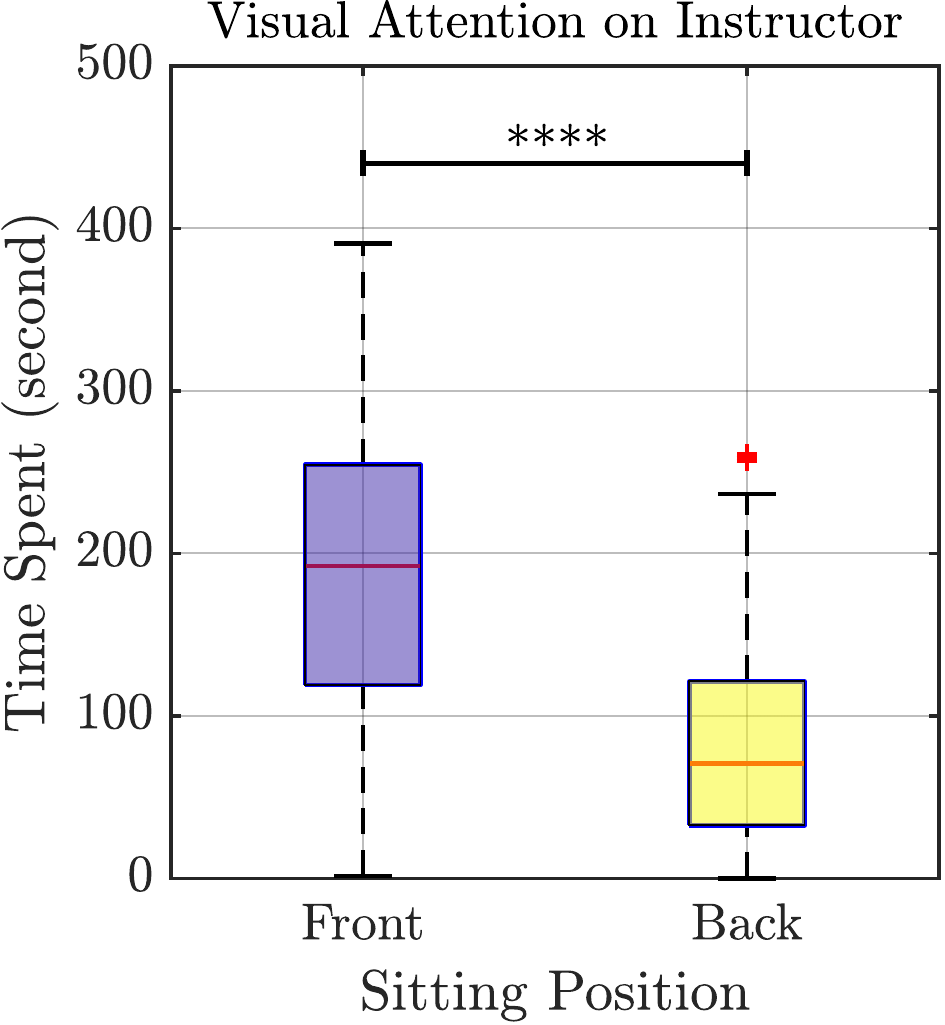} }}%
   \quad
   \quad
   \quad
   \quad
   \subfigure[Comparison between visualization types.]{{\includegraphics[height = 5.64cm]{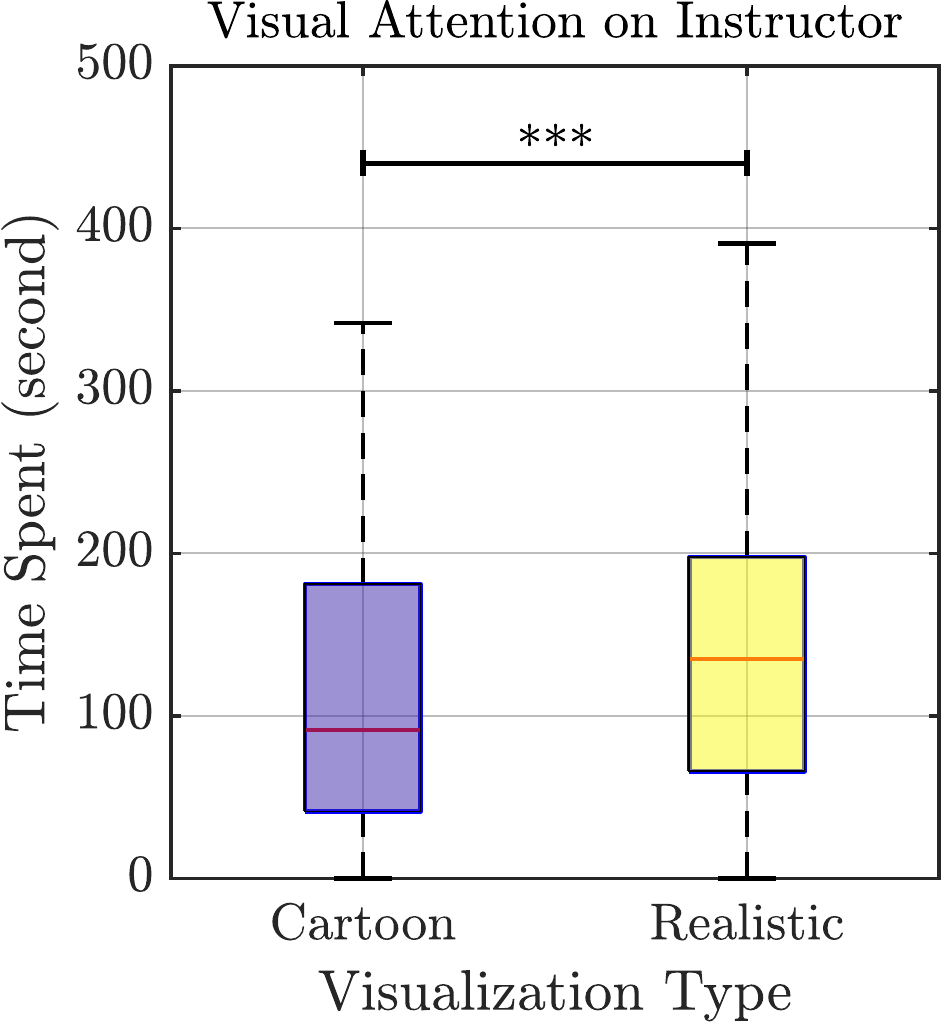} }}
   \quad
   \subfigure[Comparison between hand-raising behaviors.]{{\includegraphics[height = 5.64cm]{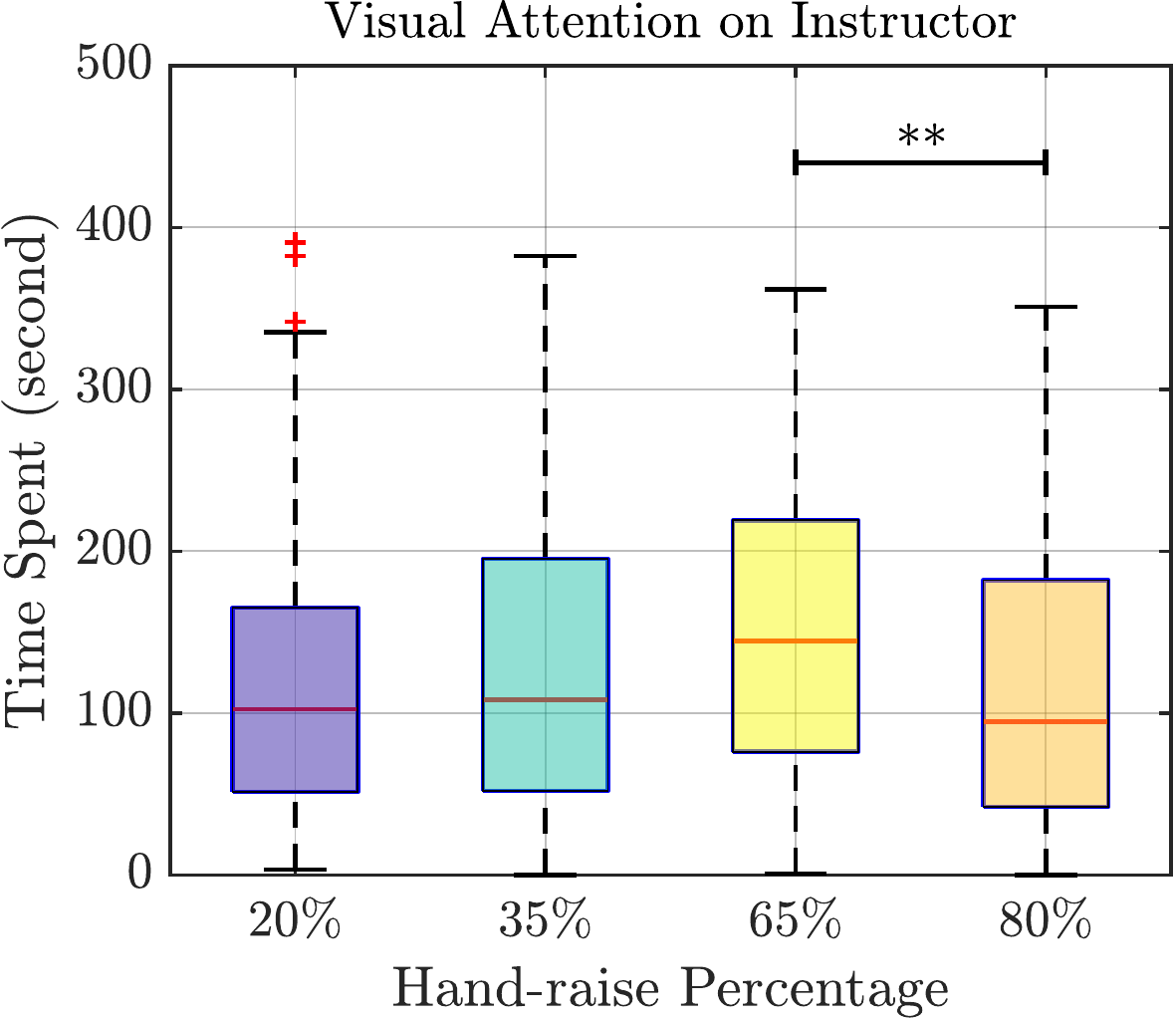}}}
  \caption{Attention towards virtual instructor for different classroom manipulation configurations. $**$, $***$, and $****$ correspond to the significance levels of $p < .01$, $p < .001$, and $p < .0001$, respectively.}
  \label{fig:attention_instructor_VR21}%
\end{figure*}

The virtual instructor drew significantly more attention in the realistic-styled avatar condition ($M = 145.98$ sec, $SD = 96.63$ sec) than in the cartoon-styled avatar condition ($M = 114.82$ sec, $SD = 89.83$ sec) with ($F(1,264) = 11.81$, $p < .001$, $\eta^2 = .04$).

Furthermore, attention time on the instructor is found to differ significantly between different hand-raising behaviors of the peer-learners with ($F(3,264) = 3.54$, $p = .015$, $\eta^2 = .04$). In particular, the total time spent on virtual instructor in the $65\%$ condition ($M = 152.46$ sec, $SD = 91.48$ sec) is significantly longer than the $80\%$ condition ($M = 117.39$ sec, $SD = 91.12$ sec) with ($F(3,264) = 3.54$, $p = .009$, $\eta^2 = .04$). Overall, more attention is drawn by the virtual instructor in the intermediate levels of hand-raising than the extreme levels. There were no interaction effects found for attention time on instructor.

\subsubsection{Visual Attention on Screen}
Total time spent on the screen, where the lecture content visualized for different sitting positions, avatar visualization styles, and various hand-raising behaviors are depicted in Figures~\ref{fig:attention_screen_VR21} (a), (b), and (c), respectively. The participants that are seated in the front ($M = 218.65$ sec, $SD = 78.70$ sec) attended to the lecture screen for a significantly longer period of time than the back seated participants ($M = 154.21$ sec, $SD = 96.88$ sec) with ($F(1,264) = 42.5$, $p < .0001$, $\eta^2 = .14$).  

\begin{figure*}[ht]
  \centering
   \subfigure[Comparison between sitting positions.]{{\includegraphics[height = 5.64cm]{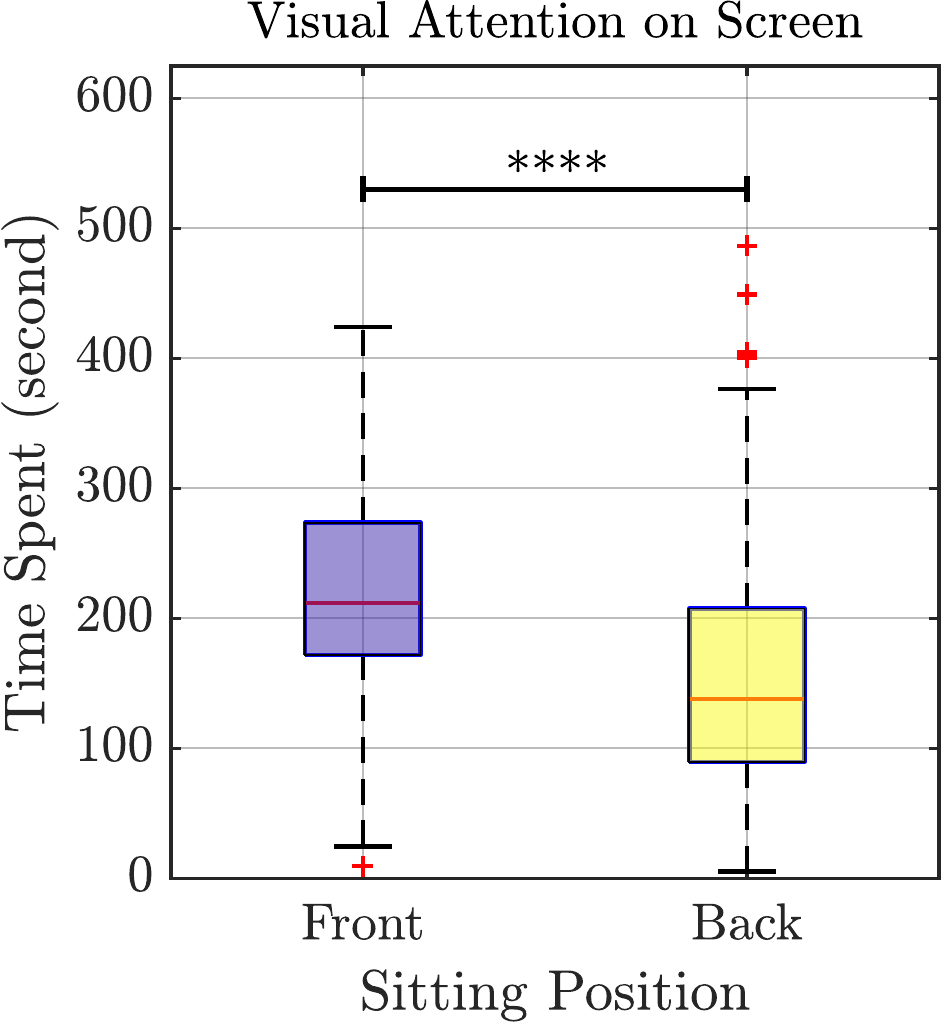} }}%
   \quad
   \quad
   \quad
   \quad
   \subfigure[Comparison between visualization types.]{{\includegraphics[height = 5.64cm]{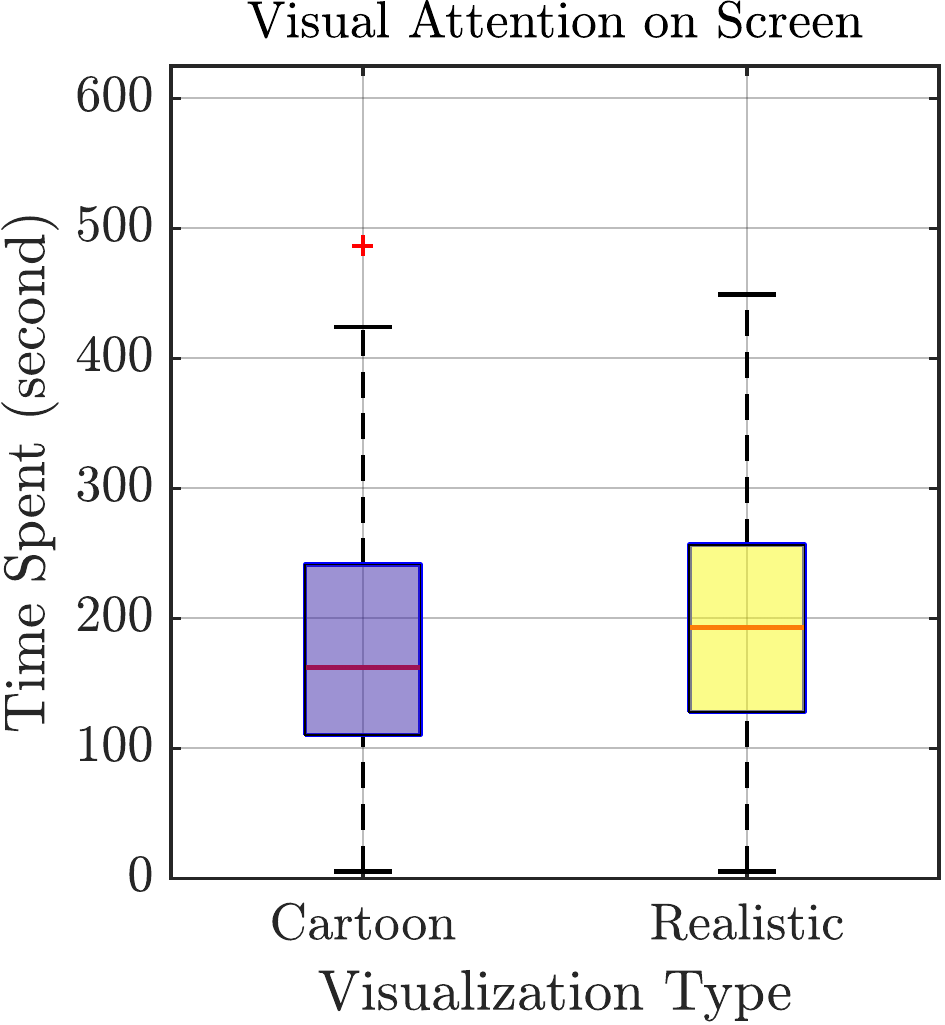} }}
   \quad
   \subfigure[Comparison between hand-raising behaviors.]{{\includegraphics[height = 5.64cm]{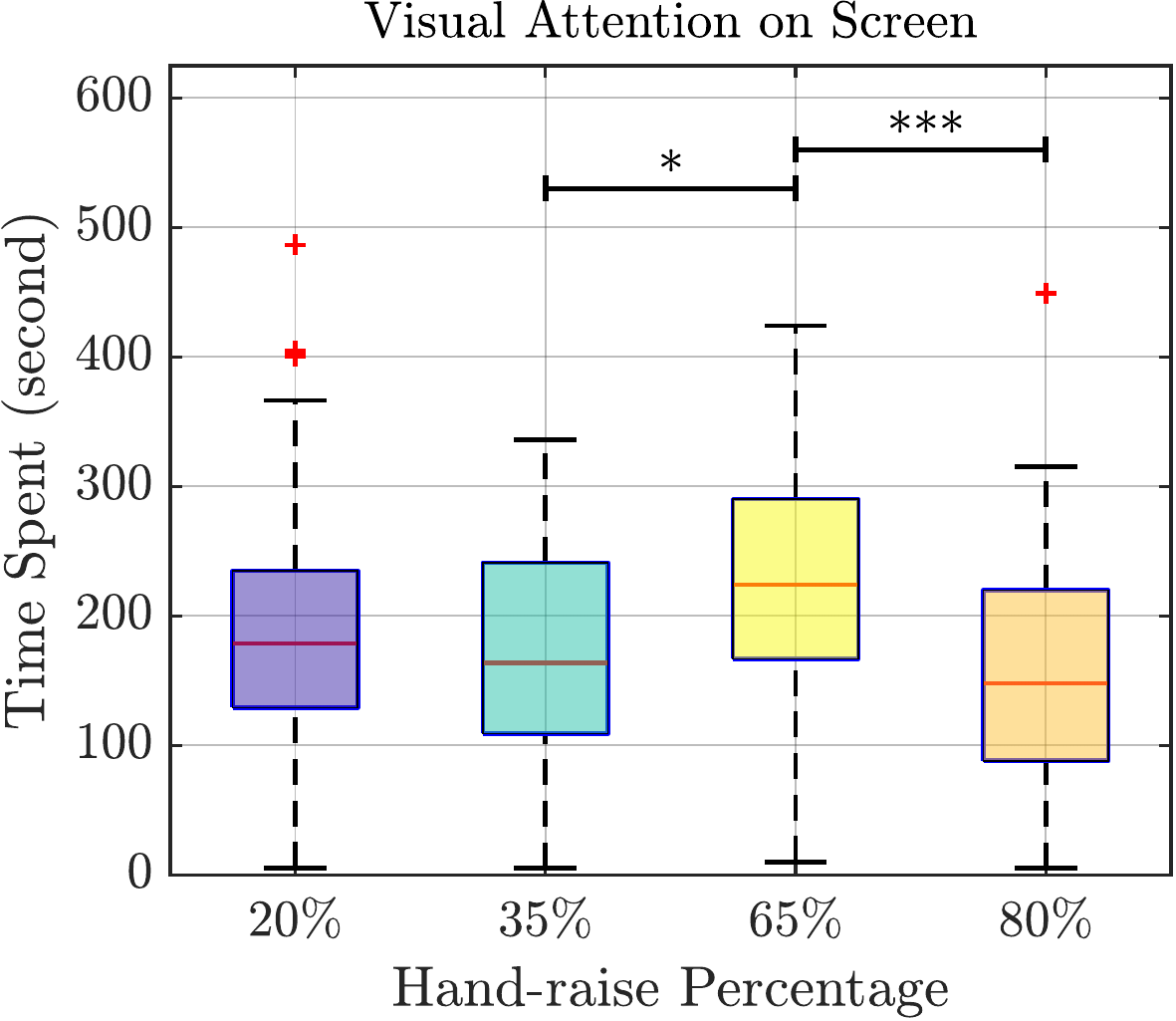}}}
  \caption{Attention towards screen for different classroom manipulation configurations. $*$, $***$, and $****$ correspond to the significance levels of $p < .05$, $p < .001$, and $p < .0001$, respectively.}
  \label{fig:attention_screen_VR21}%
\end{figure*}

We did not find significant effects on screen attention between cartoon- and realistic-styled avatar conditions ($F(1,264) = 1.9$, $p = .17$, $\eta^2 < .01$); however, attention time in realistic style ($M = 193.35$ sec, $SD = 92.30$ sec) was slightly longer than cartoon style ($M = 173.95$ sec, $SD = 96.11$ sec).

In addition, the total attention time on the screen is found to differ significantly between different hand-raising conditions with ($F(3,264) = 5.74$, $p < .001$, $\eta^2 = .06$). In particular, attention time on screen is longer in the $65\%$ hand-raising condition ($M = 222.03$ sec, $SD = 94.90$ sec) than in the $80\%$ condition ($M = 156.06$ sec, $SD = 88.25$ sec) with ($F(3,264) = 5.74$, $p < .001$, $\eta^2 = .06$). In addition, attention time in the $65\%$ condition is also significantly longer than in the $35\%$ hand-raising condition ($M = 174.87$ sec, $SD = 81.28$ sec) with ($F(3,264) = 5.74$, $p = .025$). The overall trend of attention on the lecture screen is similar to virtual instructor with the intermediate conditions being higher than the extreme conditions. There were no interaction effects found for attention time on screen.

\subsection{Discussion}
We discuss experimental results particularly for social interaction and dynamics in \acs{VR} classrooms, usability of eye tracking data, and the advantages of such classrooms along with their limitations.

\subsubsection{Social Dynamics in VR Classroom}
We discuss our findings about social dynamics in the \acs{VR} classroom in three parts, particularly based on \textbf{H1}, \textbf{H2}, and \textbf{H3} which are related to different sitting positions, different avatar visualization styles, and different hand-raise behaviors of peer-learners, respectively.

In our analyses, we found that the participants seated in the front of the classroom attended less on the peer-learners than the participants in the back, which was expected because they had fewer peers in their \acs{FOV}, unless they turn back of the classroom. Assuming that during the course of the lecture, participants are supposed to listen and pay attention to the topics told by the instructor, the visual attention we observed is normal. Briefly, this is an indication that participants focus on the lecture content or instructor instead of visually interacting with their peers when seated in the front. Further, as a supporting evidence to aforementioned result, front seated participants had spent significantly more time visually attending the instructor and the screen than the participants seated in the back. We assume that these results are due to being closer to them and having fewer occluding objects in the frontal participants' \acs{FOV}. These findings confirm our \textbf{H1}. Additionally, the results from the interaction effects support this hypothesis. The differences in visual attention on their virtual peer-learners for the avatar visualization style and hand-raising depend on the sitting position. Participants located in the back of the classroom have more peer-learners in their line of sight and therefore recognize the behaviour of the virtual peer-learners more, than participants seated in the front.  

Our results indicate that students visually attended for longer on the peer-learners when avatars in the classroom were presented in cartoon styles. Considering the number of peer-learners in the environment and the ages of our participants being between $10$-$13$, we argue that participants may have felt like engaging more with their peer-learners due to the emotional reasons as cartoon-styled peers are more appropriate to their ages. Realistic-styled peer-learners may be too ordinary for student engagement with peers in our setup, which led to less amount of attention. On the contrary, participants visually spent more time on the instructor when realistic-styled avatars were used. We conceive that if the avatar styles are ordinary, then the visual attention shifts to the instructor instead of interacting with the peer-learners. Lastly, as we did not find any statistical difference in attention time on the screen between different avatar visualization styles, we conclude that visual attention on the screen is not affected by such avatar visualization styles. Realism that is provided by the avatar styles may introduce additional computational complexity as such visualizations can be computationally expensive or can require additional effort to implement in advance. If the interaction with peer-learners is the main focus of the lecture, then practitioners can opt for cartoon-styled avatars. This also decreases the effort of generating the avatars. Overall, these findings confirm our \textbf{H2}.

In the analysis on different hand-raising behaviors of the peer-learners, we found mixed effects. In the attention time towards peer-learners, we found a clear evidence that attention time in the extreme hand-raising conditions, namely when $80\%$ or $20\%$ of the virtual peer-learners raise their hands after the questions were asked by the virtual instructor is longer than in the intermediate conditions ($35\%$ and $65\%$). The extreme conditions may represent either more or less capable groups of peer-learners in the learning environment and participants may have a higher self-concept when surrounded by a less capable group and the other way around, which is related to the Big-fish-little-pond effect~\cite{Marsh1984DeterminantsOS}. Having reasonably higher attention on peer-learners on these conditions also indicates that \acs{VR} can present an opportunity to create digital environments to further study students' self-concept. On the other hand, intermediate hand-raising conditions may help students to focus more on learning related objects in the classroom instead of peer-learners such as lecture content or instructor as experimentally indicated. However, we expected an approximately linear increase in terms of attention time towards higher hand-raising conditions in the attention time on peer-learners. While we obtained an expected result between the $65\%$ and $80\%$ hand-raising conditions, the results regarding the $20\%$ hand-raising condition do not support our hypothesis \textbf{H3}. This might be due to a moment of surprise when only a handful of peer-learners raises their hands indicating that few number of peer-learners know the answers of the questions. Furthermore, we found that attention time on the instructor tended to be longer in the intermediate levels of hand-raising than in the extreme conditions. Statistically significant results are only found for the difference between the $65\%$ and $80\%$ condition. While a decreasing linear trend towards the higher hand-raising percentages exists between the $65\%$ and $80\%$ for attention on the instructor, the overall trend is against our hypothesis, even though they are aligned with the attention time on peer-learners. Lastly, the experimental results on attention time on the screen is similar as compared to the attention time on the instructor. However, the $35\%$ hand-raising condition drew significantly less attention than the $65\%$ condition, which does not support our hypothesis. Overall, while some of our expectations are verified, \textbf{H3} is not confirmed. Still, the resulting behaviors should be further investigated with regard to effects on students' self-concepts during \acs{VR} learning and considered when creating a classroom students are habituated to. 

In summary, the three different manipulations that we studied have important effects on students' visual behavior in immersive \acs{VR} classrooms in terms of social dynamics. For instance, in practice, students' self-concept can be affected by consistent hand-raising behaviors of virtual avatars over the time. While this may be less problematic in real classrooms as peer students may have different capabilities in different themes, it should carefully considered in the virtual setting, because we could present always the same behavior of the peer-learners. An adaptive strategy for hand-raising behaviors of the virtual peer-learners may be considered in practice. In addition, seating the students in the front along with realistic-styled avatars may help to increase visual attention on the lecture content. However, if a more interactive classroom environment is focused on visual interaction, practitioners can either seat students in locations where they can see their peer-learners clearly or design \acs{VR} classrooms differently in terms of seating plans.

\subsubsection{Usability of Eye Tracking Data}
As eye tracking data is considered a noisy data source, we discuss our insights into the usability of this data, for particularly the immersive \acs{VR} classroom setups. As aforementioned, we defined the visual attention on the different objects by using an attention threshold, which was $200$ ms. In the end, in almost all conditions, the total amount of time that was spent on only the three types of objects was in the vicinity of half of the complete experiment duration despite having a relatively higher attention threshold value compared to fixation detection algorithms in the eye tracking literature. Such amount of total attention time on these three objects empirically validates our assumption of independence between them as well. We removed a significant number of samples from eye movement data due to sensory issues (e.g., lower eye tracking ratio) in order to obtain high-quality data and accurate attention mapping on the objects in the virtual classroom. While this may not be necessary for larger objects such as virtual screen in the classroom, it might cause mapping the attention wrongly for the smaller objects such as virtual avatars if the data quality is low. Considering that the participants were children in our experiments and they did not have experience with virtual reality and eye tracking, number of data removals due to such issues would be more than the experiments that are carried out with adults. In addition, unlike pre- or post-tests, eye tracking allows researchers to analyze time-dependent and temporal visual behavior changes, which can help assess students' states during virtual lectures and adapt to the environment accordingly. Therefore, despite the drawbacks, we suggest using eye movement data in such classrooms as long as an accurate calibration is applied in advance. A further iteration could take relationship of eye movement-based visual attention into consideration or analyze perceived relevance of lecture content along with eye-gaze behaviors such as in~\cite{10.3389/feduc.2020.572546} and~\cite{10.1145/3379155.3391329}, respectively. 

\subsubsection{Advantages and Limitations}
One of the advantages of immersive \acs{VR} classroom setups is the opportunity of simulating different classroom manipulations in remote settings, which are difficult to do in real world, and evaluate students' behaviors and learning under such manipulations. Another advantage of such setups is the possibility of preserving the privacy of students since the videos that include faces are not recorded in such settings. In real world classrooms, it is troublesome to record and store videos of the class while lecturing, even though there are some efforts supporting the automated anonymization~\cite{sumer2020automated} of such data. In contrast, data collected from virtual classrooms can be pseudo-anonymized. However, one should be aware of the amount of personal information that can be extracted from eye movement data and how to manipulate it~\cite{bozkir2020differential,fuhl2020reinforcement,bozkir_ppge}. Furthermore, one should take the relationship between iris texture and biometrics into account and how to preserve privacy in case eye videos are recorded and stored~\cite{10.1145/3379156.3391375}. In addition, we observed during experiments that some of the students intended to raise their hands when seeing the hand-raising behaviors of the virtual peer-learners. While we did not record hand tracking data in our study, it is possible to accurately assess the intentions of students towards questions asked by the virtual instructor by using a hand tracker device on the \acs{HMD}, which is another advantage of \acs{VR} setups compared to real classrooms. Although, hand-raising is a good indicator of children's participation during a lecture, we do not know if students interpret this behaviour of their virtual peers as a sign of competence, engagement, or motivation.

Despite the advantages, there are other technical limitations regarding the use of \acs{VR} classrooms. Long periods of exposure to \acs{VR} lectures can lead to immense levels of cybersickness. In addition, a vast amount of \acs{HMD} movement on the head may cause a drift in eye tracker calibration, leading to incorrect sensor readings. This can affect interaction experience if gaze-aware features are included in virtual environments. These should be taken into consideration when designing a virtual classroom and lecture. Particularly, the duration of the lecture should be chosen carefully to minimize these effects.

\subsection{Conclusion}
To understand the visual attention in \acs{VR} classrooms in different manipulations, we analyzed object-of-interest information based on eye-gaze. We found that participants seated in the front attended more time to the virtual instructor and the screen displaying lecture content. In addition, participants focused on the cartoon-styled peer-learners more than realistic-styled ones, whereas in the realistic-styled avatar manipulation the virtual instructor drew more visual attention. The extreme conditions of hand-raising behaviors drew more attention towards virtual peer-learners, whereas in the intermediate conditions visual attention was focused more on the instructor and screen. These findings are based on the eye movements of the participants and correspond to the social dynamics of \acs{VR} classrooms such as students' self-concept or peer-learner interaction; however, such manipulations may also affect learning outcomes. While our results provide primitive but fundamental cues about how to design immersive \acs{VR} classrooms by taking students' visual behaviors into account for different goals in digital teaching, effects of such manipulations on the learning outcome should be further investigated.

As future work, we plan to specifically investigate the relationship between different manipulations with temporal gaze dynamics as an immediate response to asked questions and related students' performances.

\subsection*{Acknowledgments}
This research was partly supported by a grant to Richard G{\"o}llner funded by the Ministry of Science, Research and the Arts of the state of Baden-W{\"u}rttemberg and the University of T{\"u}bingen as part of the Promotion Program of Junior Researchers. Lisa Hasenbein and Philipp Stark are doctoral candidates and supported by the LEAD Graduate School \& Research Network, which is funded by the Ministry of Science, Research and the Arts of the state of Baden-W{\"u}rttemberg within the framework of the sustainability funding for the projects of the Excellence Initiative II. Authors thank Stephan Soller, Sandra Hahn, and Sophie Fink from the Hochschule der Medien Stuttgart for their work and support related to the immersive virtual reality classroom used in this study.

\newpage

\section[Assessment of Driver Attention during a Safety Critical Situation in VR to Generate VR-based Training]{Assessment of Driver Attention during a Safety Critical Situation in VR to Generate VR-based Training}
\label{appendix:A3}

\subsection{Abstract}
Crashes involving pedestrians on urban roads can be fatal. In order to prevent such crashes and provide safer driving experience, adaptive pedestrian warning cues can help to detect risky pedestrians. However, it is difficult to test such systems in the wild, and train drivers using these systems in safety critical situations. This work investigates whether low-cost virtual reality (VR) setups, along with gaze-aware warning cues, could be used for driver training by analyzing driver attention during an unexpected pedestrian crossing on an urban road. Our analyses show significant differences in distances to crossing pedestrians, pupil diameters, and driver accelerator inputs when the warning cues were provided. Overall, there is a strong indication that \acs{VR} and Head-Mounted-Displays (HMDs) could be used for generating attention increasing driver training packages for safety critical situations.

\subsection{Introduction}
Having safe driving experiences and decreasing the number of crashes are two of the most important issues when it comes to driving safety. Every year, many fatal crashes occur on roads all over the world. According to the Road Safety Annual Report in International Transport Forum 2018, most of the fatal crashes occurred on rural roads; however, the number of fatal crashes in urban roads has been increasing in more than half of the countries since 2000 \cite{IRTAD2018}.

Apart from road or weather conditions, distracted driving can cause fatal crashes. While a total prevention is almost impossible, many crashes can be prevented by training drivers using driver assistant systems. With recent developments in the field of augmented reality (AR) and head-up display (HUD) technology, new means have become available to overlay different warnings to the driver, such as pedestrian warnings or road signs. In fact, many modern cars already employ this technology to a certain degree. The majority of studies that concentrated on driver training and the interaction between these technologies and drivers in safety critical situations used driving simulators. With the recent developments in \acs{VR} and \acs{HMD}s, it is possible to apply these scenarios and trainings in \acs{VR} with lower cost. However, it is an open question whether \acs{VR} and \acs{HMD}s can be used in studying driver training and interaction for safety critical situations. 

In order to assess whether \acs{VR}, \acs{HMD}s, and gaze-aware cues can be useful and driver attention can be increased properly in this context, we focused on an unexpected pedestrian crossing behavior at non-designated crosswalks on urban roads when the Time-to-Collision (TTC) between the driving vehicle and crossing pedestrian is very short ($\approx 1.8$-$5$ seconds). \cite{DBLP:journals/corr/RasouliKT17} mentioned that in this range of \acs{TTC}, there is a high likelihood that joint attention between crossing pedestrian and driver happens. However, in case it does not happen, due to distracted pedestrian or driver, it is more likely that a crash will happen. In our experiments, control group did not receive any critical pedestrian warning cues, whereas the experimental group had the gaze-aware critical pedestrian warning cues. By analyzing closest distances between driving vehicles and crossing pedestrians, pupil diameter changes of drivers between baseline and risky driving timeframes, and driver performance measurements, we found that there is a strong indication that gaze-aware visual warnings for critical pedestrians help increasing the driver attention earlier in \acs{VR}. Therefore, low-cost \acs{VR} setups along with realistic and gaze-aware warnings can be introduced to train drivers for safety critical scenarios. Major contributions of our work are as follows: \textbf{(a)} Demonstrating a very critical scenario in terms of collision risk between driver and pedestrian with and without risky pedestrian warning cues in \acs{VR} and \textbf{(b)} Evaluation of gaze-aware critical pedestrian warning cues in \acs{VR} whether they increase driver attention earlier so that attention increasing \acs{VR}-based training packages can be proposed and further evaluated. Since the dedicated driving scenario is highly dynamic and time-critical, the outcome of the current study can be taken as a basis for any study that includes time-dependent and safety-critical scenarios in \acs{VR}.

\subsection{Related Work}
Driving simulation studies have been conducted in various domains. Two of the most common issues addressed were safety and driver assistance. \cite{Charissis2010} introduced a novel interface for \acs{HUD} over head-down display (HDD). \acs{HUD}s and \acs{AR} cues have been used for various purposes. \cite{doi:10.1016/j.aap.2017.01.019} discussed that specificity of visual warnings provided advantages in gaze, brake reaction times, passing speeds, and collision rates. \cite{Tran:2013:LDA:2516540.2516581} showed the benefits of \acs{HUD}s while turning left, whereas \cite{doi:10.1016/j.aap.2014.05.020} presented positive effects of \acs{AR} cues in terms of time-to-contact and gap response variation to assist elderly drivers during left-turns. In addition, \cite{doi:10.13140/2.1.3582.0801} showed the navigational \acs{AR} aid for recognizing turn locations earlier via 3D volumetric \acs{HUD}. \cite{doi:10.1177/0018720811430502} discussed that adaptive support in \acs{HUD} for lane keeping helped drivers drive more centrally and with less lateral variation. The effect of in-car \acs{AR} system for reducing collisions caused by other vehicles' movements was presented by \cite{doi:10.1109/ISMAR.2013.6671764}. Additionally, increase in situational awareness using \acs{AR} in automated driving for take-over scenarios was studied by \cite{doi:10.1177/1541931214581351} and \cite{doi:10.1109/ITSC.2016.7795767}, whereas classification of drivers' take-over readiness was studied by \cite{8082802}.

While numerous studies can be counted in the context of driver assistance, the studies include pedestrian safety, hazard anticipation, and driver training are more relevant to our work. \cite{doi:10.1016/j.trf.2012.08.007} showed that \acs{AR} cueing increased the response rate for pedestrian and warning sign detection in directing driving attention to roadside hazards. The study of \cite{Pomarjanschi:2012:GGR:2070719.2070721} in a driving simulator with a maximum speed of about $30$km/h showed that gaze guidance reduced number of pedestrian collisions. \cite{7795724} studied three driver awareness levels of a pedestrian in a driving simulator: Perception, vigilance, and anticipation. They showed that \acs{AR} cues were capable of enhancing the driver awareness in all levels. The outdoor study conducted by \cite{8302393} showed that \acs{AR} pedestrian warnings provided positive results on measures such as braking, distances to pedestrians, and gaze-on pedestrian travel distances. The study of \cite{Pradhan_et_al} on eye movements showed that the scanning patterns of novice drivers reflected their failure to recognize potential risks. Driving simulator studies have been used in driver training and \acs{VR} as well. \cite{doi:10.1518/hfes.45.2.218.27241} found out that drivers who were trained in a simulator improved their driving skills in turning into correct lane and proper signal use. Furthermore, \cite{Fisher_et_al} evaluated hazard anticipation and found that trained drivers recognized the risks more often. \cite{8448290} showed the effect of improvement of bad driving habits via synthesizing personalized training programs in \acs{VR}. \cite{Mangalore_et_al} assessed drivers' hazard anticipation across \acs{VR} and driving simulators to evaluate the usage of \acs{VR} headsets and justified that \acs{VR} headsets could be used for measuring driving performance. \cite{YouOrMe} studied personality traits on sacrifice decisions including pedestrians during \acs{VR}-based driving. While the studies which include driving simulators and hazardous situations showed great potential for driver training, it is an open question whether visual cues for critical situations in \acs{VR} can increase driver attention properly, so that \acs{VR}-based training packages can be proposed and synthesized for safety critical situations.

\subsection{Experiment}
We focused on driver behavior in a very critical scenario when pedestrians tried to cross the road with \acs{TTC} was between $\approx 1.8$-$5$ seconds in \acs{VR}. In this range of \acs{TTC}, there is a high likelihood that pedestrian or joint attention occurs \cite{DBLP:journals/corr/RasouliKT17}. However, if it does not occur, the outcome can be fatal. Our experiment included a control group that did not receive any cues, and an experimental group that received gaze-aware critical pedestrian cues. Our major hypothesis is that if the gaze-aware warning cues can successfully increase the driver attention earlier in the safety critical situation in \acs{VR}, similar low-cost \acs{VR} setups along with adaptive warnings could be proposed for driver training for these situations.

\subsubsection{Participants}
$16$ volunteer participants ($4$ female, $12$ male) whose ages range from $25$ to $50$ ($M\approx 31$) and driving experiences range from $5$ to $30$ years ($M=12$) participated in the experiment. Participants were separated into two groups. A control group, receiving no critical pedestrian warning cues, and an experimental group, receiving the warning cues.

\subsubsection{Apparatus}
HTC-Vive along with Pupil-Labs Binocular Add-on \cite{Kassner:2014:POS:2638728.2641695}, which has binocular $120$hz eye tracking cameras and clip-on rings, Logitech G27 Steering Wheel and Pedals, and Phillips headphones were used to create driving setup. Eye tracking was measured using the open-source hmd-eyes of Pupil-Labs with Pupil Service version 1.7. Virtual city was created using Unity3D game engine. For the environment, vehicles, and pedestrians, we purchased and used models from Urban City Pack, City Park Exterior Props, Traffic Sign Set, Modern People, Traffic Cars, Realistic Car HD 02, Realistic Car Controller, Simple Waypoint System, and Playmaker asset packages. We designed the main roads long and straight so that the drivers would have opportunity to speed up as they want and drive naturally. Example scenes from our virtual environment are shown in Figure~\ref{fig:exampleScenes_SAP19}.

\begin{figure}[ht]
  \centering
   \subfigure[Cockpit of driving car.]{{\includegraphics[height = 3.75cm]{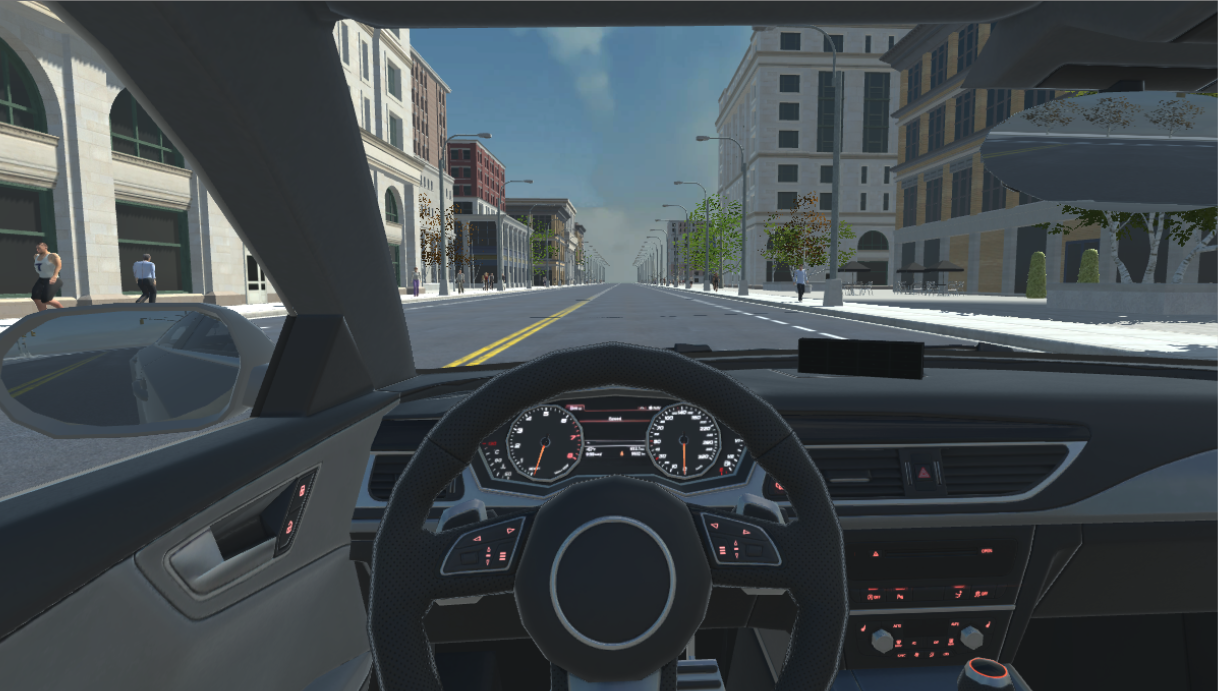}}}
   \qquad
   \subfigure[Main road.]{{\includegraphics[height = 3.75cm]{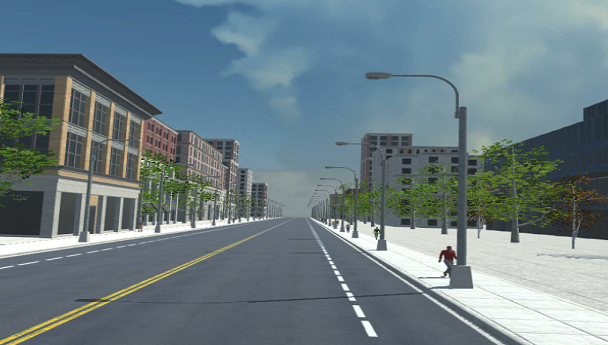} }}%
  \caption{Example scenes from VR environment.}
  \label{fig:exampleScenes_SAP19}%
\end{figure}

The dedicated setups were run on a PC equipped with an NVIDIA Titan X graphics card with 12GB memory, a 3.4GHz Intel i7-6700 processor, and 16GB of RAM. 

Since the visual warning cues for experimental group are very important in our setup, Figure~\ref{fig:PedestrianWithWithoutCues_SAP19} shows a pedestrian model with and without warning cues.

\begin{figure}[ht]
  \centering
   \subfigure[With cue.]{{\includegraphics[height = 3.75cm]{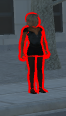}}}
   \qquad
   \qquad
   \qquad
   \subfigure[Without cue.]{{\includegraphics[height = 3.75cm]{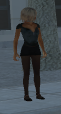} }}%
  \caption{Pedestrian with and without warning cues.}
  \label{fig:PedestrianWithWithoutCues_SAP19}%
\end{figure}

\subsubsection{Procedure}
In the beginning of the experiment, participants were informed about the purpose and scope of the experiment orally. They had the opportunity to stop and cancel the experiment anytime. At the end of the experiment, participants filled a small questionnaire about demographic and qualitative information. The experiment consisted of two phases. For both phases, participants were given written instructions before starting. In the first phase, participants acclimated the setup. This phase did not include any pedestrians or dynamic objects apart from the driver's car; no data were collected during this phase. Generally, this phase lasted in 5-10 minutes, although if participants had not felt comfortable, they could have continued driving. Once they felt comfortable with the setup, they continued to the second phase.

In the beginning of the second phase, 2D calibration with $16$ points using hmd-eyes of Pupil-Labs was performed. After calibration success, participants started the experiment. The starting location of the driving vehicle was in the beginning of the main road, where a critical pedestrian crossing happened. Since there was no intersection until the end of this road, all of the participants were required to drive until the end. At the end of the road, they could have turned left or right and continued driving, however our data analyses did not concentrate on the data acquired after the turn, since they could have encountered with different scenarios. The speed limit of the driven road was $90$km/h, and participants were supposed to realize this by traffic signs. The driving vehicle was also equipped with maximum speed warning. 

The critically crossing pedestrian scenario was as follows. At the beginning of each run, two occurrences of a critical pedestrian were generated along with other non-critical pedestrians on the side walks. The critically crossing pedestrian was determined at random, as active and proceeded to dangerously cross the street before the driving vehicle. Both of these occurrences had dedicated gaze-aware warning cues. Pedestrians were not located in the beginning of the road, so that the drivers had the opportunity to speed-up or slow-down until the crossing. Pedestrian warnings were activated for the experimental group when the distance between front of the driving vehicle and critical pedestrians became $\approx 77m$. The crossing pedestrian started crossing the road from the right side when the distance between vehicle and pedestrian was $d_{critical} \approx 45m$. We assumed that drivers would obey the speed limit ($90$km/h) and also drive faster than $30$km/h. This way, parameter of $d_{critical}$ helps to map expected \acs{TTC} to $\approx 1.8s \leq TTC \leq 5s$ interval. Ray-casting \cite{ROTH1982109} method was used to map gaze signal, which was obtained from Pupil-Labs software, from 2D canvas to 3D environment by the help of Unity3D colliders \cite{Unity3D_Colliders} that were attached to virtual objects. Once the drivers' gaze signal in 3D environment was closer than $5$ meters to the pedestrians for $\approx 0.85$ seconds, the cues were deactivated. Therefore, the cues became gaze-aware. Since the control group did not receive cues, the timeframes consisted of different milestones for each group. $t_w$ and $t_m$ correspond to start of the critical pedestrian warning and start of the pedestrian movement respectively. For the control group, baseline driving corresponds to $[t_{m}-\delta{t}, t_{m}]$, whereas for the experimental group, it is  $[t_{w}-\delta{t},t_{w}]$. $[t_{m}, t_{m}+\delta{t}]$ is the risky driving timeframe for both groups. Setting different values of $\delta{t}$ means changing the durations of the timeframes.

\subsubsection{Measurements} 
The metrics analyzed were the closest distances between the crossing pedestrians and the driving vehicles, driver performance measurements including inputs on accelerator and brake pedals, and pupil diameter changes between baseline and risky driving timeframes. Particularly, since the critically crossing pedestrian is only safety critical for the driving vehicle inside of the driven lane, we took the closest distance in this lane. Driver inputs on pedals are also indicators of perception and reflect the smoothness of the driving experience as well. Lastly, pupil enlargement corresponds to increase in cognitive load \cite{Appel:2018}. Pupil diameter values were fetched from Pupil-Labs software in pixel units. For smoothing and normalization, we applied Savitzky-Golay filter \cite{savitzky64}, and divisive baseline correction using baseline duration of $0.5$ seconds and median~\cite{Mathot2018}.

\subsubsection{Hypotheses}
Our hypotheses are based on the driver attention and actions. Since the experimental group was provided with the risky pedestrian cues, we expected that the closest distances between the crossing pedestrians and the driving vehicles for the experimental group would be more than the control group. In addition, when the visual cues were provided to the drivers, we expected that they would understand the criticality earlier, and their cognitive load would increase earlier. Pupil dilation is one of the indicators of the cognitive load increase, therefore we expected that pupil dilation of the experimental group would happen earlier. Furthermore, cues would affect driver inputs on accelerator and brake pedals, hence it was expected that experimental group drivers would take their foot off the accelerator earlier and perform smoother braking behavior than the control group drivers. In all, we expected that the experimental group would perform safer and smoother driving experience than the control group. 

\subsection{Results}
Analyses for the distances, driver performance measurements and pupil diameters during baseline driving and risky driving timeframes were calculated using MATLAB and are as follows.

\subsubsection{Closest Distance to Crossing Pedestrian}
We measured the closest distances between the crossing pedestrians and driving vehicles until pedestrians completed half of their trajectories, since during the second half, the pedestrians were not safety critical to the driving vehicles anymore. Figure~\ref{fig:Distance-Group-Relationship_SAP19} shows the results for this metric. We applied two sample T-test with alpha level of $0.05$ and found significant difference between two groups with $p \approx 0.00059$ (Cohen's $d \approx 2.21$). One of the participants in the control group hit the pedestrian, and the experiment was terminated at that moment. In addition, since the velocities of the vehicles were not fixed, the difference in distances could vary. However, the deceleration trend in the experimental group started from $t_{w}$, which is a strong indication that they acknowledged the critical situation earlier than the control group and behaved accordingly. Overall, it is clear that the experimental group participants drove safer than the control group participants.

\begin{figure}[ht]
  \centering
  \includegraphics[height = 4.5cm]{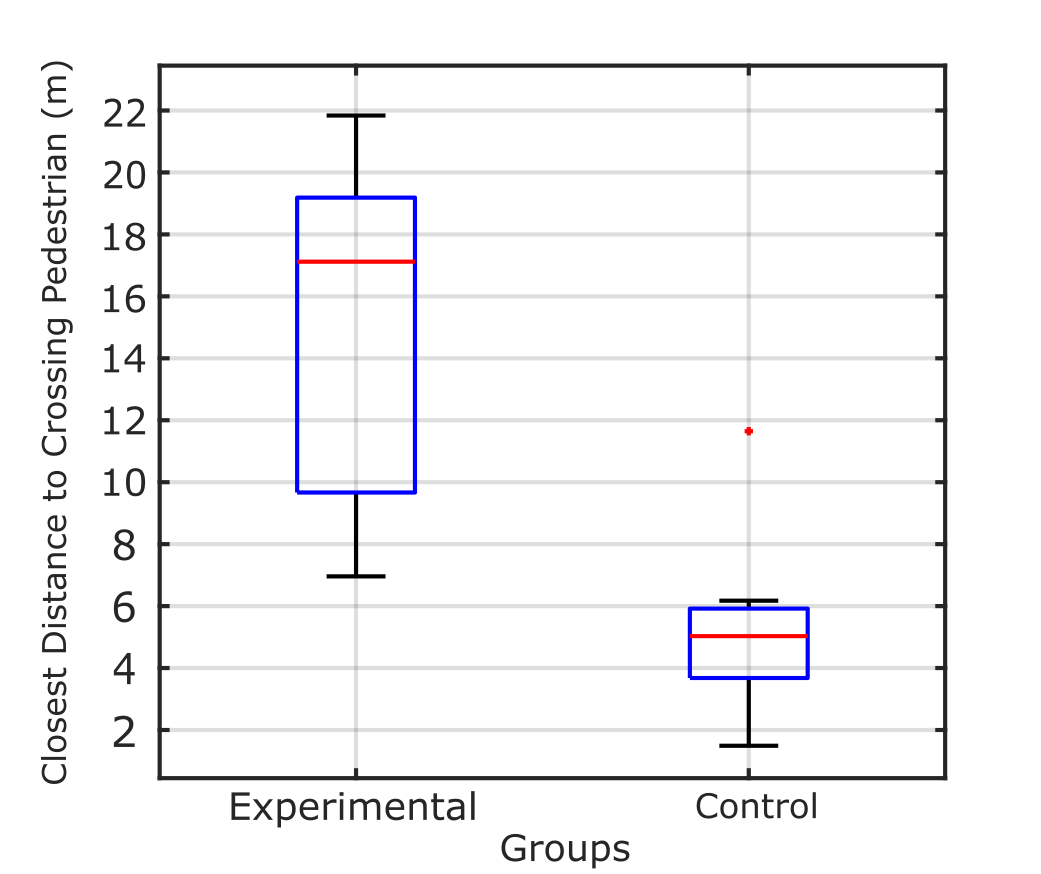}
  \caption{Closest distance to crossing pedestrian - Experiment group relationship.}
  \label{fig:Distance-Group-Relationship_SAP19}%
\end{figure}

\subsubsection{Driver Performance Measurements}
Driver inputs on accelerator and brake pedals are the two main indicators of safe and smooth driving. Therefore, we analyzed the normalized driver inputs on accelerator and brake during different durations of baseline and risky driving timeframes. First, we applied paired T-test with alpha level of $0.05$ between baseline and risky driving timeframes using normalized mean accelerator inputs. As expected, significant differences for experimental group even for very short durations (e.g. $\delta{t} \approx 50ms$, $p=0.0158$, Cohen's $d \approx 1.12$) were found. However, significant differences were found for the control group starting from $\delta{t} \approx 1.4s$ ($p = 0.0495$, Cohen's $d \approx 0.84$). Figure~\ref{fig:acceleratorInputs_SAP19} shows the dedicated analyses. Finding significant differences in shorter $\delta{t}$ values means that the drivers acknowledged the critical situation earlier. Therefore, it is a significant indicator that visual pedestrian cues helped drivers drive safely even during a very dangerous situation. Furthermore, we analyzed braking behaviors by analyzing whether participants performed full brake, since the braking happens in very short time. In total, five of the participants in the control group performed full brake, whereas none of the participants in the experimental group did this. This indicates that visual cues also helped to have smoother driving experience.

\begin{figure}[ht]
  \centering
   \subfigure[Accelerator input - Baseline \& Risky driving relationship for the experimental group.]{{\includegraphics[height=4.5cm]{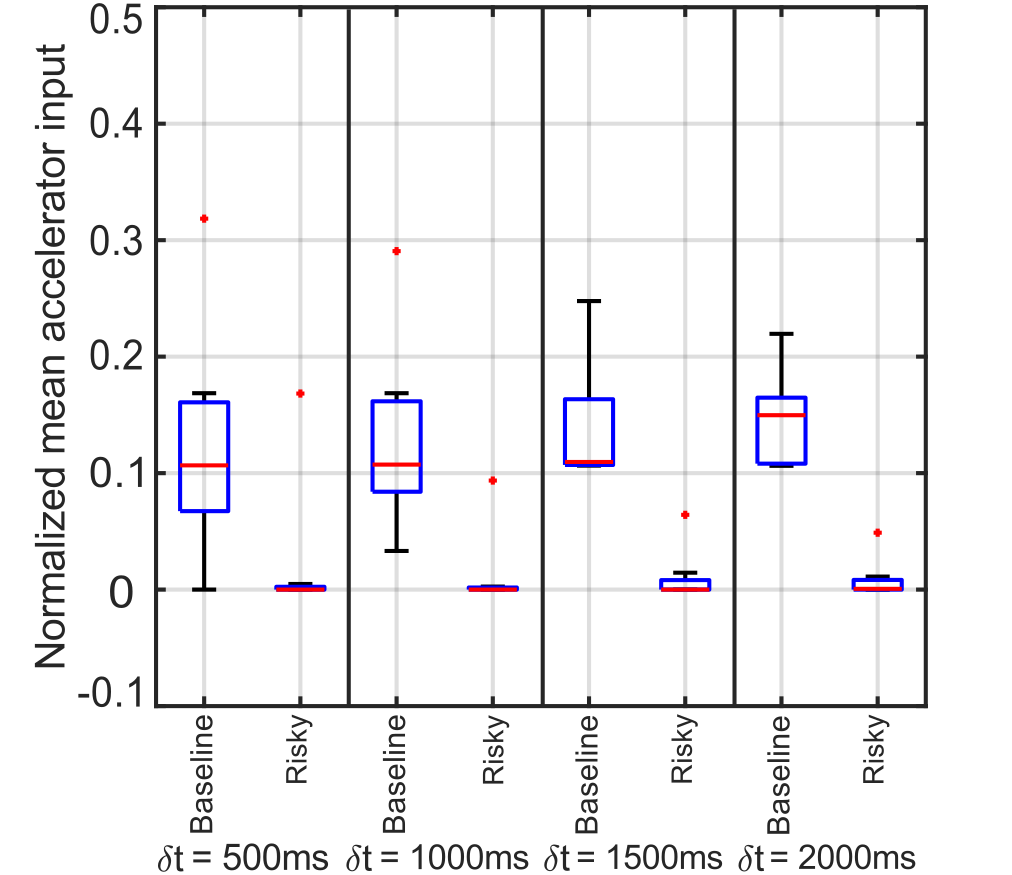}}}
   \qquad
   \qquad
   \subfigure[Accelerator input - Baseline \& Risky driving relationship for the control group.]{{\includegraphics[height=4.5cm]{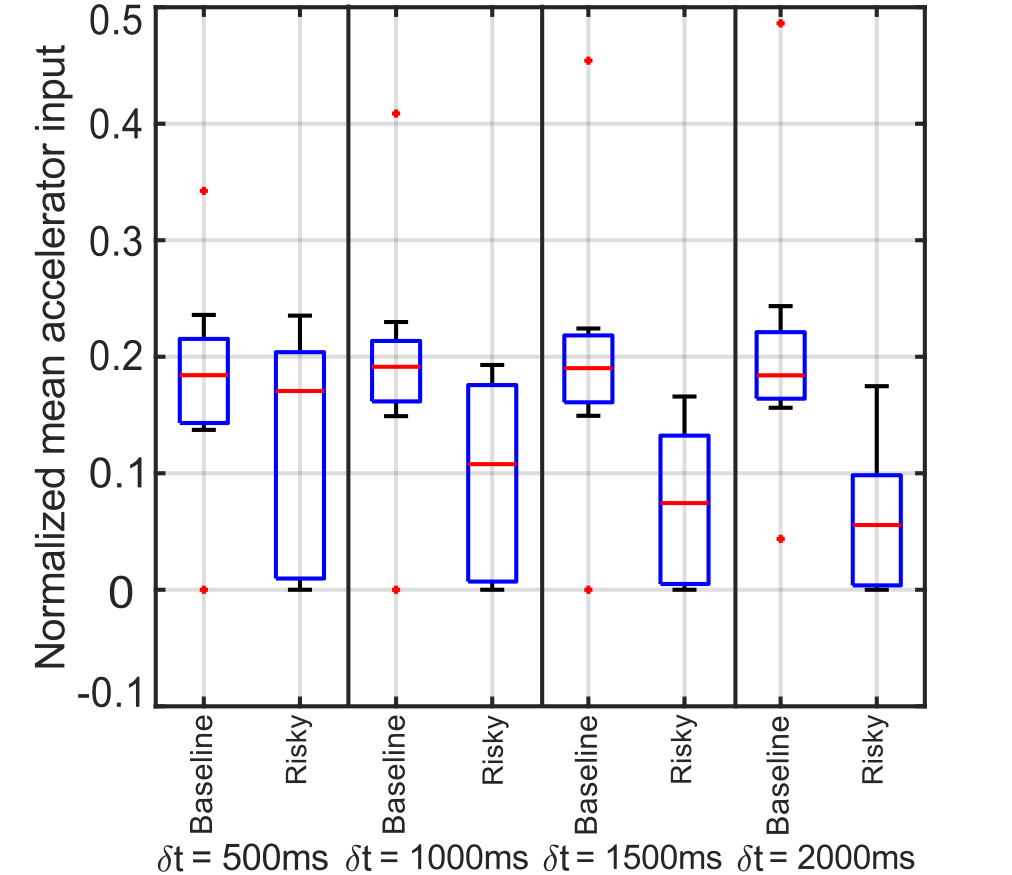} }}%
  \caption{Accelerator inputs - Driving timeframe relationship.}
  \label{fig:acceleratorInputs_SAP19}%
\end{figure}

\subsubsection{Pupil Diameter}
Since pupil dilation is one of the indicators of cognitive load increase, we analyzed normalized pupil diameters of the drivers in the same way as accelerator inputs between baseline and risky timeframes using paired T-test with alpha level of $0.05$. Since \acs{HMD}s and \acs{VR} offer controlled illumination, we expected that pupil dilation would happen due to the increase in cognitive load, and pupil diameters of the experimental group would increase earlier than the control group. Analyses showed that significant difference in pupil diameters between baseline and risky timeframes for the experimental group starts from $\delta{t} \approx 1.4s$ ($p=0.048$, Cohen's $d \approx 0.85$), whereas it starts from $\delta{t} \approx 2.4s$ ($p=0.0489$, Cohen's $d \approx 0.84$) for the control group. Figure~\ref{fig:pupilDiameter_SAP19} shows the results. Overall, there is a strong indication that cues for the critical pedestrians increased cognitive load of the experimental group earlier so that they behaved accordingly.

\begin{figure}[ht]
  \centering
   \subfigure[Pupil diameter - Baseline \& Risky driving relationship for the experimental group.]{{\includegraphics[height=4.5cm]{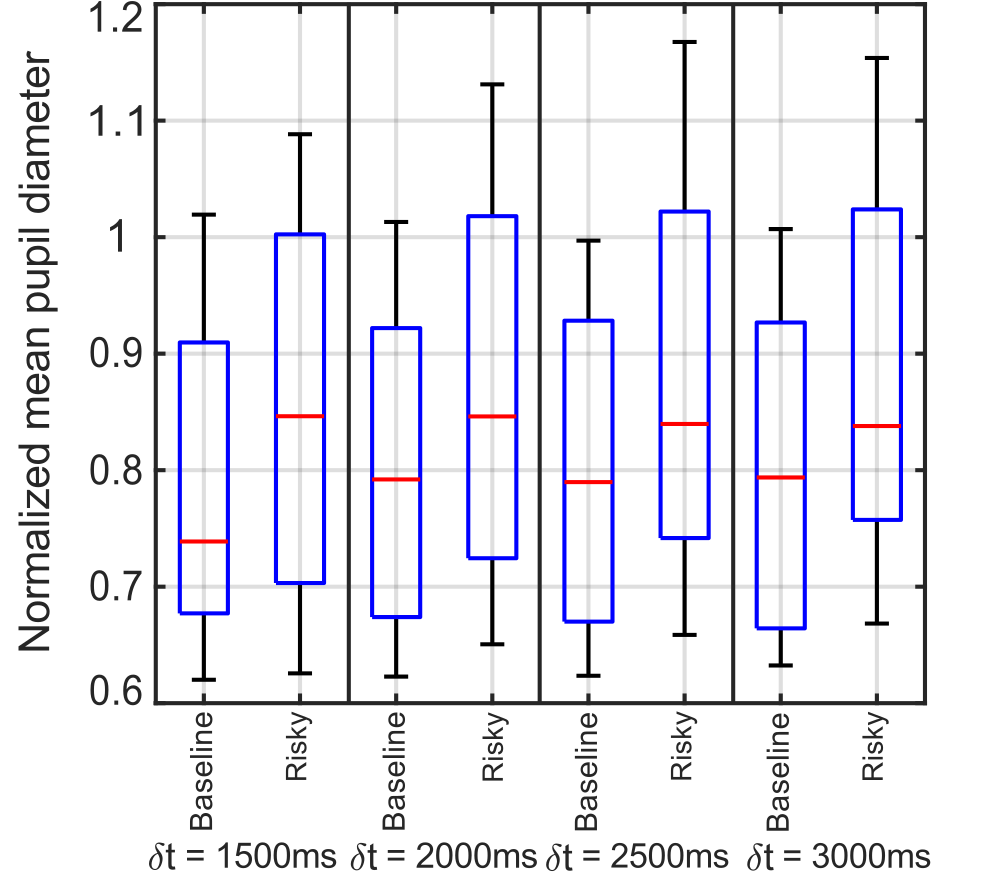}}}
   \qquad
   \qquad
   \subfigure[Pupil diameter - Baseline \& Risky driving relationship for the control group.]{{\includegraphics[height=4.5cm]{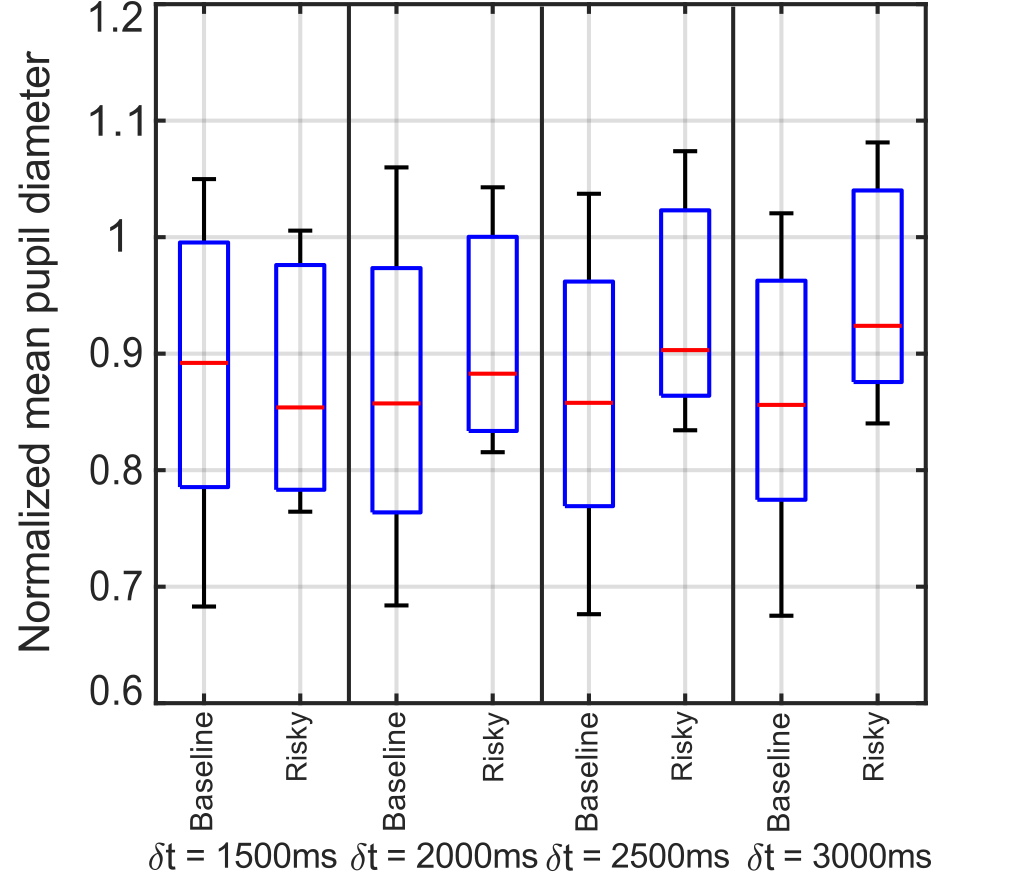} }}%
  \caption{Pupil diameter - Driving timeframe relationship.}
  \label{fig:pupilDiameter_SAP19}%
\end{figure}

\subsection{Conclusion}
We introduced a \acs{VR} driving simulation environment and a safety critical pedestrian crossing to study whether \acs{VR} setups and gaze-aware cues can increase driver attention in critical situations despite the prevalent disadvantages, such as narrow field-of-view, low resolution or weight of \acs{HMD}s, so that low-cost \acs{VR}-based training for safety critical situations can be proposed and further evaluated. To the best of our knowledge, this is the first work that assesses \acs{VR} setups using gaze-aware cues for safety critical situations in driving by analyzing eye tracking and performance metrics. We found significant differences in the distances to crossing pedestrians, accelerator inputs, and pupil diameters between baseline and risky timeframes. Results indicate that driver attention can be increased earlier with minimalistic gaze-aware cues properly in safety critical situations in \acs{VR}. Most of the previous work on driving simulation and training were done using physical driving simulators. However, \acs{VR} setups can decrease cost of implementation and time. Overall, we suggest that driver attention increasing training packages can be introduced in \acs{VR}. Since many modern cars have different warnings for safety critical situations, \acs{VR} could be used to assess these systems and train people to get acclimated with them as well. 

As future work, detailed eye-tracking analyses, a study to generate better attention grabbing cues, and a driver training study for critical situations to assess whether drivers improve their bad driving habits by \acs{VR}-based training can be done.

\newpage

\section[Person Independent, Privacy Preserving, and Real Time Assessment of Cognitive Load using Eye Tracking in a Virtual Reality Setup]{Person Independent, Privacy Preserving, and Real Time Assessment of Cognitive Load using Eye Tracking in a Virtual Reality Setup}
\label{appendix:A4}

\subsection{Abstract}
Eye tracking is handled as key enabling technology to \acs{VR} and \acs{AR} for multiple reasons, since it not only can help to massively reduce computational costs through gaze-based optimization of graphics and rendering, but also offers a unique opportunity to design gaze-based personalized interfaces and applications. Additionally, the analysis of eye tracking data allows to assess the cognitive load, intentions and actions of the user. In this work, we propose a person-independent, privacy-preserving and gaze-based cognitive load recognition scheme for drivers under critical situations based on previously collected driving data from a driving experiment in \acs{VR} including a safety critical situation. Based on carefully annotated ground-truth information, we used pupillary information and performance measures (inputs on accelerator, brake, and steering wheel) to train multiple classifiers with the aim of assessing the cognitive load of the driver. Our results show that incorporating eye tracking data into the \acs{VR} setup allows to predict the cognitive load of the user at a high accuracy above $80$\%. Beyond the specific setup, the proposed framework can be used in any adaptive and intelligent \acs{VR}/\acs{AR} application.

\subsection{Introduction}
Cognitive load is referred to as the amount of information processing activity that is imposed on working memory \cite{cavanaughInbook}. Cognitive load recognition is important and beneficial for many applications. It has been studied extensively in various domains, such as in education, psychology, or driving, since information on the cognitive load of an individual can be helpful to design user-adaptive interfaces. Various ways have therefore been proposed to assess the cognitive load of a subject, such as by means of N-back tasks (e.g., Appel et al. ~\cite{Appel:2018}), through the analysis of electroencephalography (EEG) signals (e.g., Zarjam et al. \cite{7074062}, Walter et al. \cite{10.3389/fnhum.2017.00286}), by means of eye movements studies or through assessment of facial expressions (e.g., Hussain et al. \cite{Hussain2014AutomaticCL}). Eye tracking offers a particularly non-invasive way of cognitive load assessment, especially through the measurement and analysis of the pupil diameter.

Meanwhile, eye tracking has also found its way into the driving domain, not only as a means to study driving behavior, but also as a powerful input modality for advanced driver assistance systems (e.g., K{\"u}bler et al. \cite{kubler2014stress}) or even as a means of driver observation on context of automated driving (e.g. Braunagel et al. ~\cite{braunagel2017online,8082802}). Modern cars are already capable of tasks such as lane following, traffic sign and light detection, automated parking, and collision warning. However, the full autonomous driving task is still too complex without human input and guidance. For this reason, current cars employ a variety of multi-modal warning systems for many different purposes to ensure driving safety and provide smooth driving experience. Augmented reality (AR) and head-up-display (HUD) technologies have been used as interfaces to such systems both in practice and driving simulation research. In the following, we will briefly review related work in this context.

Many driving simulation studies have been conducted in driving simulators or virtual reality (VR) environments in order to analyze driving behavior, safety, performance and training using \acs{HUD}s or virtual warnings. For example, \acs{HUD}s for blind spot detection and warning were discussed in a related work by Kim et al. \cite{Kim:2013:EHA:2516540.2516566}. Tran et al. \cite{Tran:2013:LDA:2516540.2516581} addressed the usage and benefits of \acs{HUD}s during left turns. Moreover, benefits and improvement of driving behavior for lane keeping using adaptive warnings were discussed by Dijksterhuis et al. \cite{doi:10.1177/0018720811430502}. The effect of improving bad driving habits using \acs{VR} in a user-study was discussed by Lang et al. \cite{8448290}. In the context of eye tracking and driving, there are several studies with various goals. For example, Konstantopoulos et al. \cite{KONSTANTOPOULOS2010827} studied eye movements during day, night, and rainy driving in a driving simulator. Braunagel et al. \cite{7313360} introduced a novel approach for driver activity recognition using head pose and eye tracking data. Furthermore, Braunagel et al. \cite{8082802} proposed a classification scheme to recognize driver take-over readiness using gaze, traffic, and a secondary task in conditional automated driving. Pomarjanschi et al. \cite{Pomarjanschi:2012:GGR:2070719.2070721} showed that gaze guidance reduced the number of pedestrian collisions in a driving simulator.

In the driving context, there are many studies that focus on cognitive load and driving. Engstr{\"o}m et al. \cite{doi:10.1177/0018720817690639} analyzed the effect of cognitive load on driving performance and found out that the effects of cognitive load on driving are task dependent. Yoshida et al. \cite{ClassifyCognitiveLoadWithML} proposed an approach to classify driver cognitive load to improve in-vehicle information service using real world driving data. Gabaude et al. \cite{CognitiveLoadMeasurementWhileDriving} conducted a study in a driving simulator to understand the relationship between mental effort and driving performance using cardiac activity, driving performance and subjective data measurements. Mizoguchi et al. \cite{10.1007/978-3-642-38812-5_12} proposed an approach to identify cognitive load of the driver using inductive logic programming with eye movement and driver input measurements in real driving situations. Fridman et al. \cite{Fridman:2018:CLE:3173574.3174226} proposed a scheme to estimate cognitive load in a 3-class problem in the wild for driving scenarios using convolutional neural networks. 

Driving simulation studies for safety critical situations using warnings and cognitive load recognition in driving exist in the literature. However, it is still an open question whether it is possible to recognize cognitive load of the driver in safety-critical situations and especially when the driver is confronted with visual gaze-aware warnings. In order to tackle this issue, we used the data collected using a \acs{VR} setup from our previous work \cite{bozkir_vr_attention_et} where drivers encountered a dangerously crossing pedestrian in an urban road. In order to keep the situation safety critical, Time-to-Collision (TTC) between driving vehicle and crossing pedestrian was kept $1.8sec < TTC < 5sec$. Rasouli et al. \cite{DBLP:journals/corr/RasouliKT17} discussed that in this range of \acs{TTC}, there is a high likelihood that pedestrian or joint attention between driver and pedestrian happens. However, if it does not happen, the outcome can be fatal. Our study was conducted with 16 participants. Half of them received gaze-aware pedestrian warning cues, whereas the other half did not receive any cue. 

In our proposed scheme, the cognitive load of the drivers are recognized using critical and non-critical time frames of driving for each participant. Since critical time frames are very short, we kept non-critical time frames also short in order to have a uniform distribution in the training data. We trained multiple classifiers and evaluated them leave-one-person-out fashion in order to obtain person-independent results. Furthermore, since the time frames that are used in training and testing are very short, they do not reflect the complete intention of driver during driving. Therefore, we obtained a privacy-preserving scheme. In addition to person-independence and privacy-preserving features, our system also works in real time, which brings the opportunity to implement the same system in real life.

In general, when the cognitive load of the driver is recognized in a safety critical situation, visual cues and support can be adapted accordingly in order to provide safer, smoother, and less stressful driving experience even in very risky situations. In this work, we focused on a proof-of-concept in the driving scenario due to its highly dynamic and uncertain nature. However, our results show that the same methodology can be applied to any adaptive and gaze-aware application, especially in \acs{VR}/\acs{AR}.

\subsection{Proposed Approach}

Since the proposed system depends on the driving data which were collected using a \acs{VR} setup, Section~\ref{sec:VRSetup_VRW19} describes first the \acs{VR} setup and the collected data from our previous work \cite{bozkir_vr_attention_et}. Then in Section \ref{sec:CognitiveStateRecognition_VRW19}, data processing, training, and testing procedures are discussed.

\subsubsection{VR Setup and Environment}
\label{sec:VRSetup_VRW19}
In our previous work~\cite{bozkir_vr_attention_et}, we conducted a user-study to evaluate safety during driving in VR.

The hardware setup was created using HTC-Vive, Logitech G27 Steering Wheel and Pedals, Phillips headphones and Pupil-Labs HTC-Vive Binocular Add-on. Figure~\ref{fig:HardwareSetup_VRW19} shows the dedicated setup.

\begin{figure}[ht]
 \centering
 \includegraphics[width=\linewidth]{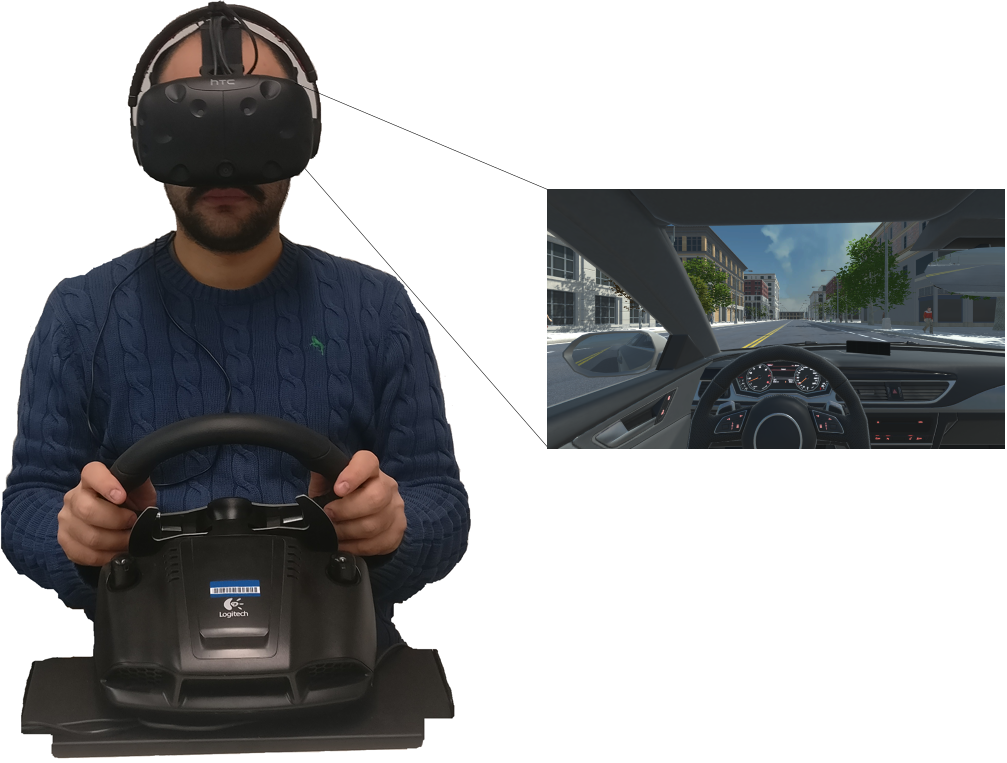}
 \caption{Experimental setup for VR.}
 \label{fig:HardwareSetup_VRW19}
\end{figure} 

The hardware setup was used in a virtual environment created using Unity3D. We used 3D models from Urban City Pack, City Park Exterior, and Traffic Sign Sets packages to design the virtual city. Since we had not only critically crossing pedestrians, but also other pedestrians, we used Modern People asset packages for pedestrian models. Vehicle models and helper scripts were obtained from Realistic Car HD02, Traffic Cars, and Realistic Car Controller asset packages. Lastly, Playmaker and Simple Waypoint System packages were used to make pedestrian and vehicle movements smoother. For the eye tracking measures, Pupil Service version 1.7 of open source hmd-eyes\footnote{https://github.com/pupil-labs/hmd-eyes} from Pupil-Labs was used. Examples scenes from \acs{VR} environment are shown in Figure~\ref{fig:ExampleScenes_VRW19}.

\begin{figure}[ht]
  \centering
   \subfigure[Cockpit of driving vehicle.]{{\includegraphics[width = 0.45\linewidth]{Image1_VRW.png}}}
   \qquad
   \subfigure[Intersection from driving vehicle.]{{\includegraphics[width = 0.45\linewidth]{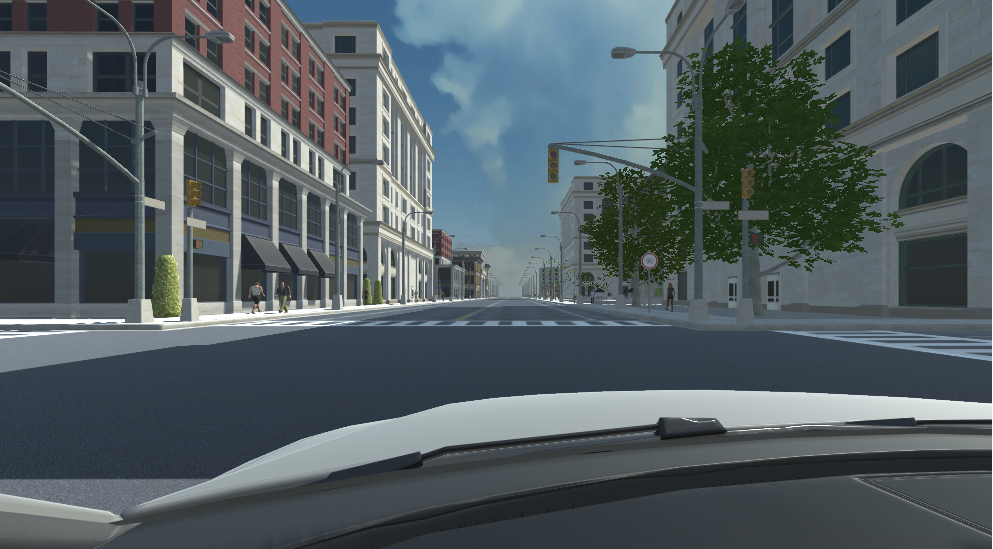} }}%
   \qquad
   \subfigure[Intersection.]{{\includegraphics[width = 0.45\linewidth]{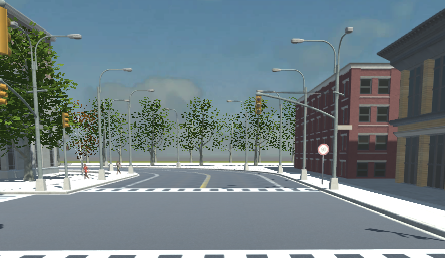} }}%
   \qquad
   \subfigure[Main road.]{{\includegraphics[width = 0.45\linewidth]{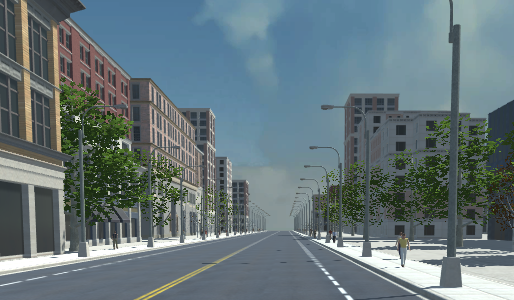} }}%
  \caption{Example scenes from VR environment.}
  \label{fig:ExampleScenes_VRW19}%
\end{figure}

The user-study consisted of acclimation and data collection phases. In the acclimation phase, no data was collected. In the data collection phase, participants encountered with a dangerously crossing pedestrian. Two critical pedestrians on the side walk of main road were generated. In the beginning of the experiment, one of them was marked as crossing pedestrian. This critical pedestrian started crossing the road when the distance between driving vehicle and pedestrian was ($d_{critical}\approx45m$). Every participant encountered with critically crossing pedestrian due to the start position of the vehicle in the data collection phase. They had the opportunity to speed up or slow down until the pedestrian crossing. Half of the participants started observing critical pedestrian warning cues around the pedestrian model with red color $\approx32$ meters in advance to pedestrian crossing. These parameters helped to keep \acs{TTC} as $1.8s<TTC<5s$, since the speed limit of main road was $90km/h$. Participants were supposed to realize the speed limit via speed signs on the road. Otherwise, the vehicle was equipped with maximum speed warning on a small in-car board. The pedestrian cues were made gaze-aware and were deactivated when gaze signal of the driver was closer than $5$ meters to pedestrian for $\approx0.85$ seconds. Gaze signal on 2D canvas was obtained from Pupil Service from Pupil-Labs and then mapped from 2D to 3D with the help of ray-casting and Unity colliders. The hyper-parameters were determined empirically. The measurements, which changed over the time, were recorded in real time and were available for offline analysis. Since the pupil diameter values are very important for recognizing cognitive load, we post-processed pupil diameter measurements to remove the noise and normalize the data. For smoothing and normalization, we applied Savitzky-Golay filter and divisive baseline correction using a baseline duration of $0.5$ seconds respectively.

Corresponding setup and experiments were run on a PC equipped with NVIDIA Titan X graphics card with 12GB memory, a 3.4GHz Intel i7-6700 processor and 16GB of RAM. 

\subsubsection{Cognitive Load Recognition}
\label{sec:CognitiveStateRecognition_VRW19}
The data we obtained from the experiment (mentioned in Section~\ref{sec:VRSetup_VRW19}) is not annotated with regard to the cognitive load levels. Therefore, we first annotated our data with two levels of cognitive load: Low and high. We set $t_{critical}$ for both with-and without-pedestrian cue scenarios. The purpose of $t_{critical}$ is to separate the time domain into low and high cognitive load levels. It is taken as $t_{warning}$ and $t_{movement}$ for with-and without-warning scenarios respectively. The reason of taking two different $t_{critical}$ values is that cognitive load of the drivers who receive critical pedestrian cues starts increasing from $t_{warning}$, whereas cognitive load of others who do not receive any cue increases after the start of pedestrian movement. 

In order to find the time frames to annotate exactly, we applied T-test using the pupil diameter data of each participant between $[t_{critical}-\delta{t}, t_{critical}]$ and $[t_{critical}, t_{critical}+\delta{t}]$. We used pupil diameter measurements due to the fact that pupil diameter is one of the main indicators of cognitive load. Once a significant difference in T-test was found with $p<0.05$, we assumed that we found a proper $\delta{t}$ value. In order to keep the distributions significantly different but rather close to each other, we did not accept distributions where $p<0.01$. Since cognitive load also depends on biological factors, which do not happen immediately, we shifted $t_{critical}$ by $+\theta{t}_{shift} = 0.8s$. In the end, we annotated each frame in the dedicated time frames with low or high cognitive load as it is shown in Table~\ref{tab:Annotations_VRW19}:

\begin{table}[ht]
  \begin{center}
    \caption{Cognitive load annotations for time-frames.}
    \label{tab:Annotations_VRW19}
    \begin{tabular}{c|c}
      \textbf{Time Frame} & \textbf{Cognitive Load} \\
      \hline
      $[t_{critical}+\theta{t}_{shift}-\delta{t}, t_{critical}+\theta{t}_{shift}]$ & Low \\
      $[t_{critical}+\theta{t}_{shift}, t_{critical}+\theta{t}_{shift}+\delta{t}]$ & High \\
    \end{tabular}
  \end{center}
\end{table}

In order to recognize cognitive load of the driver, we trained different classifiers including Support Vector Machines (SVM), decision trees, random forests, and k-Nearest Neighbors (k-NN) using each frame. For the feature set, we used pupil diameters, and driver inputs on accelerator and brake pedals and steering wheel. Min-max normalization was applied to input data. In order to make our approach person-independent, we evaluated the data of each driver against the trained model using rest of the drivers. For example, in order to evaluate the first participant, we trained classifiers with other 15 participants and then evaluated the first participant using the data and its labels. This approach assures that we obtain person-independent results in the end.

Offline analyses offer many insights from the collected data. However, real time working capability is as important as the accuracy of the system especially in \acs{VR}/\acs{AR} fields. With this motivation, we evaluated whether our proposed scheme is capable of working in real time.

\subsection{Results}
In the following, we report results of our automated cognitive load recognition and its real time working capabilities that was conducted using MATLAB on a PC which is equipped with NVIDIA GeForce GTX 1070 mobile graphics card with 8GB of RAM, a 2.2GHz Intel i7-8750 processor, and 32GB of RAM.

In our dataset, there are 1171 frames in total and from each frame, maximum four features were used in training and testing. In addition, there are $\approx 73$ frames (Mean) per participant (SD=$12.5$). We trained \acs{SVM}, decision trees, random forests, and \acs{k-NN} and tested according to the discussed setup in Section~\ref{sec:CognitiveStateRecognition_VRW19} and used different combinations of features along with pupil diameter. We observed that using steering wheel input of driver did not lead to more accurate recognition. Since participants did not need to change steering wheel angle too much during the encountered scenarios, it is acceptable. Taking into account that cognitive load does not change very dramatically in short amount of time and each participant was evaluated against the trained models using the rest of the participants, the cognitive load recognition results are reasonable. The highest accuracy of $80.7$\% was achieved by \acs{SVM}. Adding more training data and participants has a great potential to increase the accuracy of predictions. Accuracy, precision, recall, and F1-score results which were obtained using these classifiers and feature set of pupil diameter and driver inputs on accelerator and brake are shown in Table~\ref{tab:ResultsOfClassifiers_VRW19}. 

\begin{table}[ht]
  \caption{Results of cognitive load recognition.}
  \label{tab:ResultsOfClassifiers_VRW19}
	\centering%
  \begin{tabu}{%
	r%
	*{7}{c}%
	*{2}{r}%
	}
  \toprule
   \textbf{Method} & \rotatebox{90}{\textbf{Accuracy}} &   \rotatebox{90}{\textbf{Precision}} &   \rotatebox{90}{\textbf{Recall}} &   \rotatebox{90}{\textbf{F1-Score}}   \\
  \midrule
	Support Vector Machine & $0.8070$ & $0.7671$ & $0.8574$ & $0.8098$ \\
  	Decision Tree & $0.7344$ & $0.7332$ & $0.7005$ & $0.7165$  \\
  	Random Forest & $0.7436$ & $0.7372$ & $0.7230$ & $0.7299$ \\
  	\midrule
  	1-Nearest Neighbor & $0.6968$ & $0.6846$ & $0.6809$ & $0.6828$ \\
  	5-Nearest Neighbor & $0.7566$ & $0.7473$ & $0.7433$ & $0.7453$ \\
  	10-Nearest Neighbor & $0.7882$ & $0.7947$ & $0.7522$ & $0.7729$ \\
  \toprule
  \end{tabu}%
\end{table}

During the training of classifiers which are mentioned in Table~\ref{tab:ResultsOfClassifiers_VRW19}, we set some hyper-parameters. For \acs{SVM}, we used linear kernel function. For \acs{k-NN} approach, we evaluated 1-NN, 5-NN and 10-NN. The accuracy results increase by increasing the k value. For random forest classifier, we used five trees to train for classification purposes.

Since it is important to apply the proposed approach in real life scenarios, we evaluated whether cognitive load recognition can be done in real time. Table~\ref{tab:MeanTimeSpent_VRW19} shows the mean time spent for one prediction in each method.

\begin{table}[ht]
  \begin{center}
    \caption{Evaluation of mean prediction durations.}
    \label{tab:MeanTimeSpent_VRW19}
    \begin{tabular}{c|c}
      \textbf{Method} & \textbf{Mean Prediction Duration (ms)} \\
      \hline
      Support Vector Machine & $0.319$ \\
      Decision Tree & $0.305$ \\
      Random Forest & $5.42$ \\
      \hline
      1-Nearest Neighbor & $0.741$ \\
      5-Nearest Neighbor & $0.742$ \\
      10-Nearest Neighbor & $0.764$ \\
    \end{tabular}
  \end{center}
\end{table}

It is clear that all methods can be used for real time purposes. However, under this setup, it is reasonable to use \acs{SVM} due to its higher accuracy and low prediction duration. In addition, if the dataset size increases, the real time working capability of \acs{k-NN} is affected negatively. The same applies when the number of trees in random forests is increased.

\subsection{Conclusion and Discussion}
We proposed a scheme to recognize cognitive load of the drivers in safety critical situations using data collected during a driving study in \acs{VR}. The scheme is person-independent because it generalizes well cross-subject. With more training data, there is a high potential for this scheme to work in a generic way. If person-specific setup is requested, the same scheme can be applied by adjusting the training data. In this case, even a more accurate cognitive load recognition can be obtained.

Due to the fact that we concentrated on very short time frames, complete driving data of participants were not exposed and only small amount of frames was used in training and testing. Only, the pupil diameter measurements were baseline-corrected using the first $0.5$ seconds of driving. Therefore, it is a privacy-preserving scheme. Lastly, our scheme is capable of working real time. This outcome is very important and means that same scheme can be used in real driving studies and vehicles. It will enable more adaptive and intelligent feedbacks and inputs in driver warning systems; and eventually lead to safer and smoother driving experiences. We strongly suggest that similar schemes should be applied to real vehicles.

While this study is in driving domain, the outcome shows that our approach can be applied in similar adaptive user studies in \acs{VR} and \acs{AR} fields. The results indicate that there is a unique opportunity to design eye-tracking enabled interfaces and applications. Since we think that eye tracking has a great potential to transform \acs{VR} and \acs{AR} into another level, the outcome is valuable.

Despite the advantages and reasonable outcomes, there are some limitations as well. Firstly, since data were collected under \acs{VR} setup, there is a likelihood that drivers do not behave naturally in \acs{VR}. Virtual environment, weight of Head-Mounted-Display (HMD), or different dynamics of pedals or steering wheels can cause different behaviors than the real life. While we assume that participants became familiar with these in the acclimation phase, one should not ignore this possibility. Secondly, since the safety critical situations during driving happen in very short amount of time, it is difficult to collect big data in this context both using simulations or in real world. 

As future work, more data and features can be used. There is a high likelihood that the accuracy of cognitive load recognition increases with more data. The same scheme can be applied to real driving simulators along with safety critical scenarios. Therefore, the findings in \acs{VR} experiment can be compared with the future driving simulator experiments in terms of cognitive load recognition. Secondly, using raw eye videos along with other extracted features can be used to train deep models to estimate cognitive load. Furthermore, markov models or recurrent neural networks can be used to predict the cognitive load since they are suitable for time dependent data.

\chapter{Privacy Preserving Eye Tracking}
\label{appendix_B}

This chapter includes the following publications:
\vspace{1cm}
\begin{enumerate}
	\item\label{appendix_PLOSONEpaper_label} \textbf{Efe Bozkir*}, Onur Günlü*, Wolfgang Fuhl, Rafael F. Schaefer, and Enkelejda Kasneci. Differential privacy for eye tracking with temporal correlations. \emph{PLoS ONE}, 16(8):e0255979, 2021. doi: 10.1371/journal.pone.0255979.
		
	\item\label{appendix_ETRApaper_label} \textbf{Efe Bozkir*}, Ali Burak Ünal*, Mete Akgün, Enkelejda Kasneci, and Nico Pfeifer. Privacy preserving gaze estimation using synthetic images via a randomized encoding based framework. In~\emph{ACM Symposium on Eye Tracking Research and Applications (ETRA)}, New York, NY, USA, 2020. ACM. doi: 10.1145/3379156.3391364.
\end{enumerate}

\blfootnote{
\hspace{-14pt}{* indicates equal contribution.\\}
{Publications are included with minor templating modifications. Definitive versions are available via digital object identifiers at the relevant venues. Publication \ref{appendix_ETRApaper_label} is \textcopyright~2020 ACM, and included with relevant permission. Publication \ref{appendix_PLOSONEpaper_label} \textcopyright~2021 Bozkir et al. and is an open access article distributed under the terms of the \href{http://creativecommons.org/licenses/by/4.0/}{Creative Commons Attribution License}, which permits unrestricted use, distribution, and reproduction in any medium, provided the original author and source are credited.}
}

\newpage

\section{Differential Privacy for Eye Tracking with Temporal Correlations}
\label{appendix:B1}

\subsection{Abstract}
New generation head-mounted displays, such as \acs{VR} and \acs{AR} glasses, are coming into the market with already integrated eye tracking and are expected to enable novel ways of human-computer interaction in numerous applications. However, since eye movement properties contain biometric information, privacy concerns have to be handled properly. Privacy-preservation techniques such as differential privacy mechanisms have recently been applied to eye movement data obtained from such displays. Standard differential privacy mechanisms; however, are vulnerable due to temporal correlations between the eye movement observations. In this work, we propose a novel transform-coding based differential privacy mechanism to further adapt it to the statistics of eye movement feature data and compare various low-complexity methods. We extend the Fourier perturbation algorithm, which is a differential privacy mechanism, and correct a scaling mistake in its proof. Furthermore, we illustrate significant reductions in sample correlations in addition to query sensitivities, which provide the best utility-privacy trade-off in the eye tracking literature. Our results provide significantly high privacy without any essential loss in classification accuracies while hiding personal identifiers.

\subsection{Introduction}
Recent advances in the field of head-mounted displays (HMDs), computer graphics, and eye tracking enable easy access to pervasive eye trackers along with modern \acs{HMD}s. Soon, the usage of such devices might result in a significant increase in the amount of eye movement data collected from users across different application domains such as gaming, entertainment, or education. A large part of this data is indeed useful for personalized experience and user-adaptive interaction. Especially in virtual and augmented reality (\acs{VR}/\acs{AR}), it is possible to derive plenty of sensitive information about users from the eye movement data. In general, it has been shown that eye tracking signals can be employed for activity recognition even in challenging everyday tasks~\cite{Steil:2015:DEH:2750858.2807520,braunagel2017online,9d76c83aeb334940b29019b5624ca501}, to detect cognitive load~\cite{Appel:2018,10.1371/journal.pone.0203629}, mental fatigue~\cite{YAMADA201839}, and many other user states. Similarly, assessment of situational attention~\cite{bozkir_vr_attention_et}, expert-novice analysis in areas such as medicine~\cite{castner2018scanpath} and sports~\cite{10.1371/journal.pone.0186871}, detection of personality traits~\cite{Berkovsky:2019:DPT:3290605.3300451}, and prediction of human intent during robotic hand-eye coordination~\cite{Razin:2017:LPI:3298023.3298235} can also be carried out based on eye movement features. Additionally, eye movements are useful for early detection of anomias~\cite{10.3389/fnhum.2019.00354} and diseases~\cite{alzeimers_eye_tracking}. More importantly, eye movement data allow biometric authentication, which is considered to be a highly sensitive task~\cite{Onurbio1}. A task-independent authentication using eye movement features and Gaussian mixtures is, for example, discussed by Kinnunen et al.~\cite{Kinnunen:2010:TTP:1743666.1743712}. Additionally, biometric identification based on an eye movements and oculomotor plant model are introduced by Komogortsev and Holland~\cite{6712725} and by Komogortsev et al.~\cite{Komogortsev2010}. Eberz et al.~\cite{Eberz:2016:LLE:2957761.2904018} discuss that eye movement features can be used reliably also for authentication both in consumer level devices and various real world tasks, whereas Zhang et al.~\cite{Zhang:2018:CAU:3178157.3161410} show that continuous authentication using eye movements is possible in \acs{VR} headsets. While authentication via eye movements could be useful in biometric applications, the applications that do not require any authentication step possess privacy risks for the individuals if such information is not hidden in the data. In addition, if such information is linked to personal identifiers, the risk might be even higher.

As biometric content can be retrieved from eye movements, it is important to protect them against adversarial behaviors such as membership inference. According to Steil et al.~\cite[p.~3]{steil_diff_privacy}, people agree to share their eye tracking data if a governmental health agency is involved in owning data or if the purpose is research. Therefore, privacy-preserving techniques are needed especially on the data sharing side of eye tracking considering that the usage of \acs{VR}/\acs{AR} devices with integrated eye trackers increases. As removing only the personal identifiers from data is not enough for anonymization due to linkage attacks~\cite{4531148}, more sophisticated techniques for achieving user level privacy are necessary. Differential privacy~\cite{dwork2006,dwork_DP_only} is one effective solution, especially in the area of database applications. It protects user privacy by adding randomly generated noise for a given sensitivity and desired privacy parameter, $\epsilon$. The differentially private mechanisms provide aggregate statistics or query answers while protecting the information of whether an individual's data was contained in a dataset. However, high dimensionality of the data and temporal correlations can reduce utility and privacy, respectively. Since eye movement features are high dimensional, temporally correlated, and usually contain recordings with long durations, it is important to tackle utility and privacy problems simultaneously. For eye movement data collected from \acs{HMD}s or smart glasses, both local and global differential privacy can be applied. Applying differential privacy mechanisms to eye movement data would optimally anonymize the query outcomes that are carried out on such data while keeping data utility and usability high enough. As opposed to global differential privacy, local differential privacy adds user level noise to the data but assumes that the user sends data to a central data collector after adding local noise~\cite{Erlingsson:2014:RRA:2660267.2660348,NIPS2017_6948}. While both could be useful depending on the application use-case, for this work, we focus on global differential privacy, considering that in many \acs{VR}/\acs{AR} applications which collect eye movement data, there is a central trusted user-level data collector and publisher.

To apply differential privacy to the eye movement data, we evaluate the standard Laplace Perturbation Algorithm (LPA)~\cite{dwork2006} of differential privacy and Fourier Perturbation Algorithm (FPA)~\cite{Rastogi:2010:DPA:1807167.1807247}. The latter is suitable for time series data such as the eye movement feature signals. We propose two different methods that apply the \acs{FPA} to chunks of data using original eye movement feature signals or consecutive difference signals. While preserving differential privacy using parallel compositions, chunk-based methods decrease query sensitivity and computational complexity. The difference-based method significantly decreases the temporal correlations between the eye movement features in addition to the decorrelation provided by the \acs{FPA} that uses the discrete Fourier transform (DFT) as, e.g., in the works of Günlü and İşcan~\cite{Onurbio2} and Günlü et al.~\cite{OnurHWimplforlowcomp}. The difference-based method provides a higher level of privacy since consecutive sample differences are observed to be less correlated than original consecutive data. Furthermore, we evaluate our methods using differentially private eye movement features in document type, gender, scene privacy sensitivity classification, and person identification tasks on publicly available eye movement datasets by using similar configurations to previous works by Steil et al.~\cite{steil_diff_privacy,Steil:2019:PPH:3314111.3319913}. To generate differentially private eye movement data, we use the complete data instead of applying a subsampling step, used by Steil et al.~\cite{steil_diff_privacy} to reduce the sensitivity and to improve the classification accuracies for document type and privacy sensitivity. In addition, the previous work~\cite{steil_diff_privacy} applies the exponential mechanism for differential privacy on the eye movement feature data. The exponential mechanism is useful for situations where the best enumerated response needs to be chosen~\cite{dwork2014}. In eye movements, we are not interested in the ``best'' response but in the feature vector. Therefore, we apply the Laplace mechanism. In summary, we are the first to propose differential privacy solutions for aggregated eye movement feature signals by taking the temporal correlations into account, which can help provide user privacy especially for \acs{HMD} or smart glass usage in \acs{VR}/\acs{AR} setups.

Our main contributions are as follows. (1) We propose chunk-based and difference-based differential privacy methods for eye movement feature signals to reduce query sensitivities, computational complexity, and temporal correlations. Furthermore, (2) we evaluate our methods on two publicly available eye movement datasets, i.e., MPIIDPEye~\cite{steil_diff_privacy} and MPIIPrivacEye~\cite{Steil:2019:PPH:3314111.3319913}, by comparing them with standard techniques such as \acs{LPA} and \acs{FPA} using the multiplicative inverse of the absolute normalized mean square error (NMSE) as the utility metric. In addition, we evaluate document type and gender classification, and privacy sensitivity classification accuracies as classification metrics using differentially private eye movements in the MPIIDPEye and MPIIPrivacEye datasets, respectively. Classification accuracy is used in the literature as a practical utility metric that shows how useful the data and proposed methods are. Our utility metric also provides insights into the divergence trend of differentially private outcomes and is analytically trackable unlike the classification accuracy. For both datasets, we also evaluate person identification task using differentially private data. Our results show significantly better performance as compared to previous works while handling correlated data and decreasing query sensitivities by dividing the data into smaller chunks. In addition, our methods hide personal identifiers significantly better than existing methods.

\subsubsection{Previous Research}
There are few works that focus on privacy-preserving eye tracking. Liebling and Preibusch~\cite{Liebling2014} provide motivation as to why privacy considerations are needed for eye tracking data by focusing on gaze and pupillometry. Practical solutions are; therefore, introduced to protect user identity and sensitive stimuli based on a degraded iris authentication through optical defocus~\cite{John:2019:EDI:3314111.3319816} and an automated disabling mechanism for the eye tracker's ego perspective camera with the help of a mechanical shutter depending on the detection of privacy sensitive content~\cite{Steil:2019:PPH:3314111.3319913}. Furthermore, a function-specific privacy model for privacy-preserving gaze estimation task and privacy-preserving eye videos by replacing the iris textures are proposed by Bozkir and Ünal et al.~\cite{bozkir_ppge} and by Chaudhary and Pelz~\cite{10.1145/3379156.3391375}, respectively. In addition, solutions for privacy-preserving eye tracking data streaming~\cite{davidjohn2021privacypreserving} and real-time privacy control for eye tracking systems using area-of-interests~\cite{263891} are also introduced in the literature. These works lack studying effects of temporal correlations.

For the user identity protection on aggregated eye movement features, works that focus on differential privacy are more relevant for us. Recently, standard differential privacy mechanisms are applied to heatmaps~\cite{Liu2019} and eye movement data that are obtained from a \acs{VR} setup~\cite{steil_diff_privacy}. These works do not address the effects of temporal correlations in eye movements over time in the privacy context. In the privacy literature, there are privacy frameworks such as the Pufferfish~\cite{pufferfish_transaction_paper} or the Olympus~\cite{olympus_framework_privacy} for correlated and sensor data, respectively. These works, however, have different assumptions. For instance, the Pufferfish requires a domain expert to specify potential secrets and discriminative pairs, and Olympus models privacy and utility requirements as adversarial networks. As our focus is to protect user identity in the eye movements, we opt for differential privacy by discussing the effects of temporal correlations in eye movements over time and propose methods to reduce them. It has been shown that standard differential privacy mechanisms are vulnerable to temporal correlations as such mechanisms assume that data at different time points are independent from each other or adversaries lack the information about temporal correlations, leading an increased privacy loss of a differential privacy mechanism over time due to the temporal correlations~\cite{7930028,8333800}. The aggregated eye movement features over time might end up in an extreme case of such correlations due to various user behaviors. Therefore, in addition to discussing the effects of such correlations on differential privacy over time, we propose methods to reduce the correlations so that the privacy leakage due to the temporal correlations are minimal.

\subsection{Materials and Methods}
In this section, the theoretical background of differential privacy mechanisms, proposed methods, and evaluated datasets are discussed.

\subsubsection{Theoretical Background}
Differential privacy uses a metric to measure the privacy risk for an individual participating in a database. Considering a dataset with weights of $N$ people and a mean function, when an adversary queries the mean function for $N$ people, the average weight over $N$ people is obtained. After the first query, an additional query for $N-1$ people automatically leaks the weight of the remaining person. Using differential privacy, noise is added to the outcome of a function so that the outcome does not significantly change based on whether a randomly chosen individual participated in the dataset. The amount of noise added should be calibrated carefully since a high amount of noise might decrease the utility. We next define differential privacy.

\begin{definition}{\emph{$\epsilon$-Differential Privacy ($\epsilon$-DP)~\cite{dwork2006,dwork_DP_only}.}} 
\label{def:DiffPrivacy} 
A randomized mechanism $M$ is $\epsilon$-differentially private if for all databases $D$ and $D^\prime$ that differ at most in one element for all $S \subseteq Range(M)$ with
\begin{equation} 
\Pr[M(D) \in S] \leq e^{\epsilon} \Pr[M(D^\prime) \in S].
\end{equation}
\end{definition}

The variance of the added noise depends on the query sensitivity, which is defined as follows.

\begin{definition}{\emph{Query sensitivity~\cite{dwork2006}.}}
\label{def:QuerySensitivity}
For a random query $X^n$ and $w \in \{1,2\}$, the query sensitivity $\Updelta_{w}$ of $X^n$ is the smallest number for all databases $D$ and $D^\prime$ that differ at most in one element such that
\begin{equation} 
||\, X^n(D) - X^n(D^\prime) ||_{w} \leq \Updelta_{w}(X^n)
\end{equation}
where the $L_{w}$-distance is defined as 
\begin{equation} 
||\,X^n||_w = \sqrt[w]{\sum_{i=1}^{n}{\big(|X_{i}|\big)^{w}}}.
\end{equation}
\end{definition}

We list theorems that are used in the proposed methods. 
\begin{theorem} {Sequential composition theorem~\cite{McSherry:2009:PIQ:1559845.1559850}.}
\label{Theorem:SequentialCompTheorem} 
Consider $n$ mechanisms $M_{i}$ that randomization of each query is independent for $i=1,2,...,n$. If $M_{1}, M_{2}, ..., M_{n}$ are $\epsilon_{1}, \epsilon_{2}, ..., \epsilon_{n}$-differentially private, respectively, then their joint mechanism is $\displaystyle\left(\sum_{i=1}^{n}\epsilon_{i}\right)$-differentially private.
\end{theorem}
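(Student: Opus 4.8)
The plan is to exploit the independence of the randomizations to factorize the joint output distribution of the composed mechanism, turn the differential-privacy guarantee into a pointwise inequality between densities, and then integrate. First I would fix two neighboring databases $D$ and $D'$ differing in at most one element and write the joint mechanism as $M(D) = \big(M_1(D), M_2(D), \ldots, M_n(D)\big)$, whose range is the product space $\prod_{i=1}^{n} Range(M_i)$. Since the coin flips of the individual mechanisms are mutually independent, the law of $M(D)$ is the product of the laws of the $M_i(D)$; writing $p_{i,D}$ for a density of $M_i(D)$ with respect to a common $\sigma$-finite base measure $\mu_i$ (counting measure in the discrete case, Lebesgue measure for continuous outputs such as the Laplace noise), the joint density of $M(D)$ factorizes as $p_D(x_1,\ldots,x_n) = \prod_{i=1}^{n} p_{i,D}(x_i)$.

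The key step is then a pointwise bound on this joint density. From the $\epsilon_i$-differential privacy of $M_i$ I would first pass to its density form, $p_{i,D}(x_i) \le e^{\epsilon_i}\, p_{i,D'}(x_i)$ for $\mu_i$-almost every $x_i$ (if this failed on a set of positive measure, integrating over that set would contradict Definition~\ref{def:DiffPrivacy}), and then multiply the $n$ inequalities to obtain
\begin{equation}
p_D(x_1,\ldots,x_n) = \prod_{i=1}^{n} p_{i,D}(x_i) \le \prod_{i=1}^{n} e^{\epsilon_i}\, p_{i,D'}(x_i) = e^{\sum_{i=1}^{n}\epsilon_i}\, p_{D'}(x_1,\ldots,x_n).
\end{equation}
Integrating this inequality over an arbitrary measurable set $S \subseteq \prod_{i=1}^{n} Range(M_i)$ yields $\Pr[M(D)\in S] \le e^{\sum_{i=1}^{n}\epsilon_i}\,\Pr[M(D')\in S]$, which is precisely $\big(\sum_{i=1}^{n}\epsilon_i\big)$-differential privacy of the joint mechanism by Definition~\ref{def:DiffPrivacy}.

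The main obstacle is the measure-theoretic bookkeeping rather than the algebra: one must justify that the guarantee stated for all measurable output sets can legitimately be transferred to a pointwise (almost-everywhere) statement about densities, which presupposes that dominating densities exist; an equivalent route that avoids densities is to first verify the inequality on product sets $S = S_1 \times \cdots \times S_n$ using independence, and then extend to all measurable $S$ by a monotone-class (Dynkin) argument, since product sets generate the product $\sigma$-algebra. A secondary point worth flagging is that the theorem as stated is the \emph{non-adaptive} composition — each $M_i$ ignores the outputs of the others, which is exactly the regime in which "the randomization of each query is independent"; if later mechanisms were permitted to depend on earlier outputs, the simple factorization of the joint density would have to be replaced by a telescoping chain of conditional $e^{\epsilon_i}$ bounds, but that refinement is not required for the statement here.
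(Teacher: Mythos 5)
Your proof is correct. Note that the paper does not prove this statement at all --- it imports Theorem~\ref{Theorem:SequentialCompTheorem} by citation from McSherry's work and only uses it (together with the parallel composition theorem) to justify the privacy accounting of the CFPA and DCFPA mechanisms --- so there is no in-paper argument to compare against; your argument is the canonical one from the cited literature. The structure is sound: independence of the coin flips gives the product form of the joint law, the event-level guarantee of Definition~\ref{def:DiffPrivacy} transfers to an almost-everywhere density bound (your contradiction argument via integrating over the exceptional set is valid, since a strictly positive integrand on a set of positive measure has strictly positive integral), multiplying the $n$ bounds and integrating over an arbitrary measurable $S$ yields the $\sum_i \epsilon_i$ guarantee, and your alternative route via product sets plus a monotone-class extension correctly sidesteps the density bookkeeping. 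Your closing remark that this is the non-adaptive regime, and that adaptivity would require a telescoping chain of conditional bounds, is also accurate and consistent with how the theorem is applied to the non-overlapping, independently perturbed difference signals in the DCFPA.
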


\begin{theorem} {Parallel composition theorem~\cite{McSherry:2009:PIQ:1559845.1559850}.}
\label{Theorem:ParallelCompTheorem} 
Consider $n$ mechanisms as $M_{i}$ for $i=1,2,...,n$ that are applied to disjoint subsets of an input domain. If $M_{1}, M_{2}, ..., M_{n}$ are $\epsilon_{1}, \epsilon_{2}, ..., \epsilon_{n}$-differentially private, respectively, then their joint mechanism is
$\left(\underset{i \in [1,n]}{\max}\epsilon_{i}\right)$-differentially private.
\end{theorem}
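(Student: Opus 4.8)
The plan is to exploit two facts: that the blocks of the partition are disjoint, and that the randomness used by the individual mechanisms is independent across blocks. First I would fix two neighbouring databases $D$ and $D^\prime$ differing in at most one element, and write $D = D_1 \cup \cdots \cup D_n$ and $D^\prime = D_1^\prime \cup \cdots \cup D_n^\prime$ for the induced partitions into the disjoint subsets on which $M_1,\dots,M_n$ act. The key structural observation is that the block an element belongs to is determined by the element itself and not by the rest of the database; hence the single differing element lies in exactly one block, say block $j$, so that $D_i = D_i^\prime$ for every $i \neq j$, while $D_j$ and $D_j^\prime$ are themselves neighbouring.

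Next I would analyse the joint mechanism $M(D) = (M_1(D_1),\dots,M_n(D_n))$ on product events. For $S = S_1 \times \cdots \times S_n$ with each $S_i \subseteq Range(M_i)$, independence of the internal randomness of the $M_i$ gives
\begin{equation}
\Pr[M(D) \in S] = \prod_{i=1}^{n} \Pr[M_i(D_i) \in S_i].
\end{equation}
For $i \neq j$ the factor $\Pr[M_i(D_i) \in S_i]$ equals $\Pr[M_i(D_i^\prime) \in S_i]$ since $D_i = D_i^\prime$, and for $i = j$ the $\epsilon_j$-DP guarantee of $M_j$ yields $\Pr[M_j(D_j) \in S_j] \le e^{\epsilon_j}\Pr[M_j(D_j^\prime) \in S_j]$. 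Multiplying these bounds and using $\epsilon_j \le \max_{i}\epsilon_i$ gives $\Pr[M(D)\in S] \le e^{\max_i \epsilon_i}\Pr[M(D^\prime)\in S]$, which is precisely Definition~\ref{def:DiffPrivacy} restricted to rectangles; swapping the roles of $D$ and $D^\prime$ gives the other direction.

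Finally I would upgrade the rectangle inequality to arbitrary measurable $S \subseteq Range(M)$. The measurable rectangles form a $\pi$-system generating the product $\sigma$-algebra, and $S \mapsto e^{\max_i\epsilon_i}\Pr[M(D^\prime)\in S] - \Pr[M(D)\in S]$ is a finite signed measure that is non-negative on that $\pi$-system, so a standard Dynkin / monotone-class argument extends non-negativity to the whole $\sigma$-algebra; for discrete output spaces one simply sums the rectangle bound over singletons. I expect this measure-theoretic extension, rather than the arithmetic, to be the only delicate point — one must ensure the output spaces are such that the relevant $\sigma$-algebra is the product one and that countable additivity applies — together with the modelling assumption from the first paragraph that the block assignment is element-defined, without which the ``one differing element stays in one block'' step can fail. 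This completes the argument.
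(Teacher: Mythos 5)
The paper never proves this statement -- it is imported verbatim from McSherry's parallel composition theorem and used as a black box for the CFPA/DCFPA constructions -- so there is no in-paper proof to compare against. Your overall skeleton is the standard argument behind the cited result and is sound: because each record's block is determined by the record itself (which is exactly what ``applied to disjoint subsets of an input domain'' encodes), the single differing element perturbs exactly one block $j$, the joint output law factorizes by independence of the mechanisms' randomness, only the $j$-th factor changes between $D$ and $D^\prime$, and that factor is controlled by $e^{\epsilon_j}\le e^{\max_i\epsilon_i}$.

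The genuine gap is the last step, the passage from rectangles to arbitrary measurable $S\subseteq Range(M)$. The map $\nu(S)=e^{\max_i\epsilon_i}\Pr[M(D^\prime)\in S]-\Pr[M(D)\in S]$ is indeed a finite signed measure and is non-negative on product sets, but Dynkin's $\pi$--$\lambda$ theorem does not transport this: the family $\{S:\nu(S)\ge 0\}$ is not a $\lambda$-system, because it is not closed under proper differences or complements ($\nu(B\setminus A)=\nu(B)-\nu(A)$ can be negative even when $\nu(A),\nu(B)\ge 0$, and likewise $\nu(S^c)=(e^{\max_i\epsilon_i}-1)-\nu(S)$), and monotone-class reasoning only extends identities or closure properties, not one-sided inequalities; in general, domination of one measure by another on a generating $\pi$-system does not imply domination on the generated $\sigma$-algebra. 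The repair uses the product structure you already established rather than a generating-class argument: both output laws are product measures that coincide in every coordinate except $j$, so for arbitrary measurable $S$ write $\Pr[M(D)\in S]=\int \Pr\bigl[M_j(D_j)\in S_{x}\bigr]\,d\mu(x)$, where $x$ collects the outputs of the unchanged mechanisms, $S_{x}$ is the corresponding section of $S$, and $\mu$ is the (common) joint law of those coordinates under $D$ and $D^\prime$; applying the $\epsilon_j$-DP guarantee of $M_j$ to each section and integrating gives $\Pr[M(D)\in S]\le e^{\epsilon_j}\Pr[M(D^\prime)\in S]\le e^{\max_i\epsilon_i}\Pr[M(D^\prime)\in S]$ for all measurable $S$ at once (your singleton-summing remark already handles the purely discrete case), and exchanging $D$ and $D^\prime$ gives the symmetric bound. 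With that substitution for the Dynkin step, the proof is complete.
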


We define the Laplace Perturbation Algorithm (LPA)~\cite{dwork2006}. To guarantee differential privacy, the \acs{LPA} generates the noise according to a Laplace distribution. $Lap(\lambda)$ denotes a random variable drawn from a Laplace distribution with a probability density function (PDF): $\Pr[Lap(\lambda) = h] = \frac{1}{2\lambda} e^{-\mid h \mid/\lambda}$, where $Lap(\lambda)$ has zero mean and variance $2\lambda^{2}$. We denote the noisy and differentially private values as $\widetilde{X}_i = X_i(D) + Lap(\lambda)$ for $i=1,2,\ldots,n$. Since we have a series of eye movement observations, the final noisy eye movement observations are generated as $\widetilde{X}^n = X^n(D) + Lap^{n}(\lambda)$, where $Lap^{n}(\lambda)$ is a vector of $n$ independent $Lap(\lambda)$ random variables and $X^n(D)$ is the eye movement observations without noise. The \acs{LPA} is $\epsilon$-differentially private for $\lambda = \Updelta_{1}(X^n)/{\epsilon}$~\cite{dwork2006}.

We define the error function that we use to measure the differences between original $X^n$ and noisy $\widetilde{X}^n$ observations. For this purpose, we use the metric normalized mean square error (NMSE) defined as

\begin{equation}
    \text{NMSE} = \frac{1}{n}\sum_{i=1}^{n}{\frac{(X_{i} - \widetilde{X}_{i})^2}{\overline{X}\overline{\widetilde{X}}}}
\end{equation} where 
\begin{equation}
    \overline{X} = \frac{1}{n} \sum_{i=1}^{n}{X_{i}}\,,\qquad
    \overline{\widetilde{X}} = \frac{1}{n} \sum_{i=1}^{n}{\widetilde{X}_{i}}.
\end{equation}

We define the utility metric as
\begin{equation}
\label{eqn:Utility}
    \text{Utility} = \frac{1}{|\text{NMSE}|}.
\end{equation}

As differential privacy is achieved by adding random noise to the data, there is a utility-privacy trade-off. Too much noise leads to high privacy; however, it might also result in poor analyses on the further tasks on eye movements. Therefore, it is important to find a good trade-off.

\subsubsection{Methods}
Standard differential privacy mechanisms are vulnerable to temporal correlations, since the independent noise realizations that are added to temporally correlated data could be useful for adversaries. However, decorrelating the data without the domain knowledge before adding the noise might remove important eye movement patterns and provide poor results in analyses. Many eye movement features are extracted by using time windows, as in previous work~\cite{steil_diff_privacy,Steil:2019:PPH:3314111.3319913}, which makes the features highly correlated. Another challenge is that the duration of eye tracking recordings could change depending on the personal behaviors, skills, or personalities of the users. The longer duration causes an increased query sensitivity, which means that higher amounts of noise should be added to achieve differential privacy. In addition, when correlations between different data points exist, $\epsilon^{\prime}$ is defined as the actual privacy metric instead of $\epsilon$~\cite{8269219} that is obtained considering the fact that correlations can be used by an attacker to obtain more information about the differentially private data by filtering. In this work, we discuss and propose generic low-complexity methods to keep $\epsilon^{\prime}$ small for eye movement feature signals. To deal with correlated eye movement feature signals, we propose three different methods: \acs{FPA}, chunk-based FPA (CFPA) for original feature signals, and chunk-based FPA for difference based sequences (DCFPA). The sensitivity of each eye movement feature signal is calculated by using the $L_{w}$-distance such that
\begin{align}
\Updelta^{f}_{w}(X^n) &= \max_{p,\,q}{\left|\left|\,X^{n,(p,f)} - X^{n,(q,f)}\right|\right|}_{w} \nonumber\\  &=\max_{p,\,q}{\sqrt[w]{\sum_{t=1}^{n}{\Big(\Big|X^{(p,f)}_{t} - X^{(q,f)}_{t}\Big|\Big)^{w}}}}\label{eq:ObservationVectSensitivities}
\end{align}
where $X^{n,(p,f)}$ and $X^{n,(q,f)}$ denote observation vectors for a feature $f$ from two participants $p$ and $q$, $n$ denotes the maximum length of the observation vectors, and $w \in \{1,2\}$.

\paragraph{Fourier Perturbation Algorithm (FPA) \\}
In the \acs{FPA}~\cite{Rastogi:2010:DPA:1807167.1807247}, the signal is represented with a small number of transform coefficients such that the query sensitivity of the representative signal decreases. A smaller query sensitivity decreases the noise power required to make the noisy signal differentially private. In the \acs{FPA}, the signal is transformed into the frequency domain by applying Discrete Fourier Transform (DFT), which is commonly applied as a non-unitary transform. The frequency domain representation of a signal consists of less correlated transform coefficients as compared to the time domain signal due to the high decorrelation efficiency of the \acs{DFT}. Therefore, the correlation between the eye movement feature signals is reduced by applying the \acs{DFT}. After the \acs{DFT}, the noise sampled from the \acs{LPA} is added to the first $k$ elements of $DFT(X^n)$ that correspond to $k$ lowest frequency components, denoted as $F^k=DFT^{k}(X^n)$. Once the noise is added, the remaining part (of size $n-k$) of the noisy signal $\widetilde{F}^k$ is zero padded and denoted as $PAD^{n}(\widetilde{F}^k)$. Lastly, using the Inverse DFT (IDFT), the padded signal is transformed back into the time domain. We can show that $\epsilon$-differential privacy is satisfied by the \acs{FPA} for $\lambda = \frac{\sqrt{n}\sqrt{k}\Updelta_{2}(X^n)}{\epsilon}$ unlike the value claimed in previous work~\cite{Rastogi:2010:DPA:1807167.1807247}, as observed independently by Kellaris and Papadopoulos~\cite{FPACorrector}. The procedure is summarized in Figure~\ref{algo:FPA}, and the proof is provided below. Since not all coefficients are used, in addition to the perturbation error caused by the added noise, a reconstruction error caused by the lossy compression is introduced. It is important to determine the number of used coefficients $k$ to minimize the total error. We discuss how we choose $k$ values for \acs{FPA}-based methods below.

\begin{figure}[!h]
   \includegraphics[width=1\linewidth]{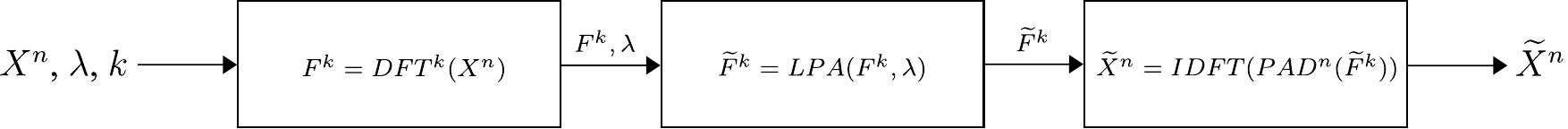}
	\caption{Flow of the Fourier Perturbation Algorithm (FPA).}  
  \label{algo:FPA}%
\end{figure}

\textbf{Proof of FPA being differentially private.}
We next prove that the \acs{FPA} is $\epsilon$ differentially private for $\displaystyle \lambda = (\sqrt{n}\sqrt{k}\Updelta_{2}(X^n))/\epsilon$. First, we prove the inequalities $(a)$ and $(b)$ in the following.\\

\begin{equation}
\Updelta_{1}(\widehat{F}^n) \stackrel{(a)}{\leq} \sqrt{k} \cdot \Updelta_{2}(\widehat{F}^n) \stackrel{(b)}{\leq} \sqrt{n} \cdot \sqrt{k} \cdot \Updelta_{2}(X^n) 
\label{main_eq_for_FPA_proof}
\end{equation} where
$\widehat{F}^n(I) = [\widehat{F}^k(I), 0,0,\ldots,0]$ such that $n-k$ zeros are padded. Consider~(\ref{main_eq_for_FPA_proof})$(a)$, which follows since we have

\begin{eqnarray}
\Updelta_{1}(\widehat{F}^n) &=& \max_{I,\,I^\prime}{\left|\left|\,\widehat{F}^{n}(I) - \widehat{F}^{n}(I^\prime)\right|\right|}_{1} = \max_{I,\,I^\prime}\sum_{j=1}^{n}{\left|\,\widehat{F}_{j}(I) - \widehat{F}_{j}(I^\prime)\right|} \nonumber\\
&=& \max_{I,\,I^\prime}\sum_{j=1}^{k}{\left|\,\widehat{F}_{j}(I) - \widehat{F}_{j}(I^\prime)\right|\cdot1}
\end{eqnarray} so that by applying Cauchy-Schwarz inequality, we obtain

\begin{eqnarray}
    \max_{I,\,I^\prime}\sum_{j=1}^{k}{\left|\,\widehat{F}_{j}(I) - \widehat{F}_{j}(I^\prime)\right|\cdot1} &\leq& \max_{I,\,I^\prime}\Big(\sum_{j=1}^{k}{\left|\,\widehat{F}_{j}(I) - \widehat{F}_{j}(I^\prime)\right|^{2}\Big)^{1/2}\cdot \Big(\sum_{j=1}^{k}{1^2}\Big)^{1/2}} \nonumber\\
    &\leq& \max_{I,\,I^\prime}{\left|\left|\,\widehat{F}^{n}(I) - \widehat{F}^{n}(I^\prime)\right|\right|}_{2} \cdot \sqrt{k} \nonumber \\
    &\leq& \sqrt{k} \cdot \Updelta_{2}(\widehat{F}^n).
\end{eqnarray}

Consider next~(\protect\ref{main_eq_for_FPA_proof})$(b)$, which follows since we obtain

\begin{equation}
    \Updelta_{2}(\widehat{F}^n) = \max_{I,\,I^\prime}{\left|\left|\,\widehat{F}^{n}(I) - \widehat{F}^{n}(I^\prime)\right|\right|}_{2} = \max_{I,\,I^\prime}\Big(\sum_{j=1}^{n}{\left|\,\widehat{F}_{j}(I) - \widehat{F}_{j}(I^\prime)\right|^{2}\Big)^{1/2}}
\end{equation}
and since $F^{n}$ has more non-zero elements than $\widehat{F}^n$, we have

\begin{equation}
    \Updelta_{2}(\widehat{F}^n) \leq \max_{I,\,I^\prime}\Big(\sum_{j=1}^{n}{\left|\,F_{j}(I) - F_{j}(I^\prime)\right|^{2}\Big)^{1/2}}.
    \label{eqn:b_first}
\end{equation}

Recall that $F^{n}(I) = DFT(X^{n}(I))$, $F^{n}(I^{\prime}) = DFT(X^{n}(I^{\prime}))$, and \acs{DFT} is linear, so we have

\begin{equation}
    DFT(X^{n}(I)-X^{n}(I^{\prime})) = F^{n}(I) - F^{n}(I^{\prime}).
\end{equation}

By applying Parseval's theorem to the \acs{DFT}, we obtain

\begin{equation}
    \Big(\frac{1}{n} \cdot \sum_{j=1}^{n}{\left|\,F_{j}(I) - F_{j}(I^\prime)\right|^{2}\Big)^{1/2}} = \Big(\sum_{j=1}^{n}{\left|\,X_{j}(I) - X_{j}(I^\prime)\right|^{2}\Big)^{1/2}}.
    \label{eqn:b_second}
\end{equation}

Combining~(\ref{eqn:b_first}) and~(\ref{eqn:b_second}), we prove~(\ref{main_eq_for_FPA_proof})$(b)$ since we have

\begin{eqnarray}
     \Updelta_{2}(\widehat{F}^n) &\leq& \max_{I,\,I^{\prime}}{\sqrt{\sum_{j=1}^{n}{\left|X_{j}(I) - X_{j}(I^{\prime})\right|^{2}}} \cdot \sqrt{n}} \nonumber\\
     &\leq& \max_{I,\,I^\prime}{\left|\left|\,X^{n}(I) - X^{n}(I^\prime)\right|\right|}_{2} \cdot \sqrt{n} \nonumber\\
     &\leq& \Updelta_{2}(X^{n}) \cdot \sqrt{n}.
\end{eqnarray}

Finally, since the \acs{LPA} that is applied to $\widehat{F}^{k}$ is $\epsilon$-\acs{DP} for $\lambda = \frac{\Updelta_{1}(\widehat{F}^n)}{\epsilon}$~\cite{dwork2006},~(\ref{main_eq_for_FPA_proof}) proves that the \acs{FPA} is $\epsilon$-\acs{DP} for $\displaystyle \lambda = \frac{\sqrt{n}\sqrt{k}\Updelta_{2}(X^n)}{\epsilon}$.

\paragraph{Chunk-based FPA (CFPA) \\}
One drawback of directly applying the \acs{FPA} to the eye movement feature signals is large query sensitivities for each feature $f$ due to long signal sizes. To solve this, Steil et al.~\cite{steil_diff_privacy} propose to subsample the signal using non-overlapping windows, which means removing many data points. While subsampling decreases the query sensitivities, it also decreases the amount of data. Instead, we propose to split each signal into smaller chunks and apply the \acs{FPA} to each chunk so that complete data can be used. We choose the chunk sizes of $32$, $64$, and $128$ since there are divide-and-conquer type tree-based implementation algorithms for fast \acs{DFT} calculations when the transform size is a power of $2$~\cite{Onurbio3}. When the signals are split into chunks, chunk level query sensitivities are calculated and used rather than the sensitivity of the whole sequence. Differential privacy for the complete signal is preserved by Theorem~\ref{Theorem:ParallelCompTheorem}~\cite{McSherry:2009:PIQ:1559845.1559850} since the used chunks are non-overlapping. As the chunk size decreases, the chunk level sensitivity decreases as well as the computational complexity. However, the parameter $\epsilon^{\prime}$ that accounts for the sample correlations might increase with smaller chunk sizes because temporal correlations between neighboring samples are larger in an eye movement dataset. On the other hand, if the chunk sizes are kept large, then the required amount of noise to achieve differential privacy increases due to the increased query sensitivity. Therefore, a good trade-off between computational complexity, and correlations is needed to determine the optimal chunk size.

\paragraph{Difference- and chunk-based FPA (DCFPA) \\}
To tackle temporal correlations, we convert the eye movement feature signals into difference signals where differences between consecutive eye movement features are calculated as
\begin{equation}
    \widehat{X}_t^{(f)} = 
    \Big\{ X_t^{(f)} -  X_{t-1}^{(f)} \Big\} \Big|_{t=2}^{n} \quad \text{,}\quad \widehat{X}_1^{(f)} =  X_1^{(f)}.
\end{equation}

Using the difference signals denoted by $\widehat{X}^{n,(f)}$, we aim to further decrease the correlations before applying a differential privacy method. We conjecture that the ratio $\epsilon^{\prime}/{\epsilon}$ decreases in the difference-based method as compared to the \acs{FPA} method. To support this conjecture, we show that the correlations in the difference signals decrease significantly as compared to the original signals. This results in lower $\epsilon^{\prime}$ and better privacy for the same $\epsilon$. The difference-based method is applied together with the \acs{CFPA}. Therefore, the differences are calculated inside chunks. The first element of each chunk is preserved. Then, the \acs{FPA} mechanism is applied to the difference signals by using query sensitivities calculated based on differences and chunks. For each chunk, noisy difference observations are aggregated to obtain the final noisy signals. This mechanism is differentially private by Theorem~\ref{Theorem:SequentialCompTheorem}~\cite{McSherry:2009:PIQ:1559845.1559850}, and described in Algorithm~\ref{algo:diff-chunk-FPA}. 
\begin{algorithm}
  \KwInput{$X^n$, $\lambda$, $k$}
  \KwOutput{$\widetilde{X}^n$}
  1) $\widehat{X}_t = \Big\{ X_t-X_{t-1}\Big\}\Big|_{t=2}^{n}\quad \text{,}\quad \widehat{X}_1=X_1$. \\
  2) $\widetilde{\widehat{X}}^{n} = FPA(\widehat{X}^{n}, \lambda, k)$.  \\
  3) $\widetilde{X}_t = \Big\{\widetilde{\widehat{X}}_t + \widetilde{\widehat{X}}_{t-1} \Big\} \Big|_{t=2}^{n}\quad \text{,}\quad \widetilde{X}_1 = \widetilde{\widehat{X}}_1$.
\caption{DCFPA.}
\label{algo:diff-chunk-FPA}
\end{algorithm}

Since Theorem~\ref{Theorem:SequentialCompTheorem} can be applied to the \acs{DCFPA} when the consecutive differences are assumed to be independent, which is a valid assumption for eye movement feature signals as we illustrate below, there is also a trade-off between the chunk sizes and utility for the \acs{DCFPA}. If a large chunk size is chosen, then the total $\epsilon$ value could be very large, which reduces privacy. Therefore, we choose chunk sizes of $32$, $64$, and $128$ for the \acs{DCFPA} as well for evaluation We illustrate the \acs{CFPA} and \acs{DCFPA} in Figure~\ref{fig:CFPA-DCFPA}, for instance with three chunks.

\begin{figure}[!h]
   \includegraphics[width=1\linewidth]{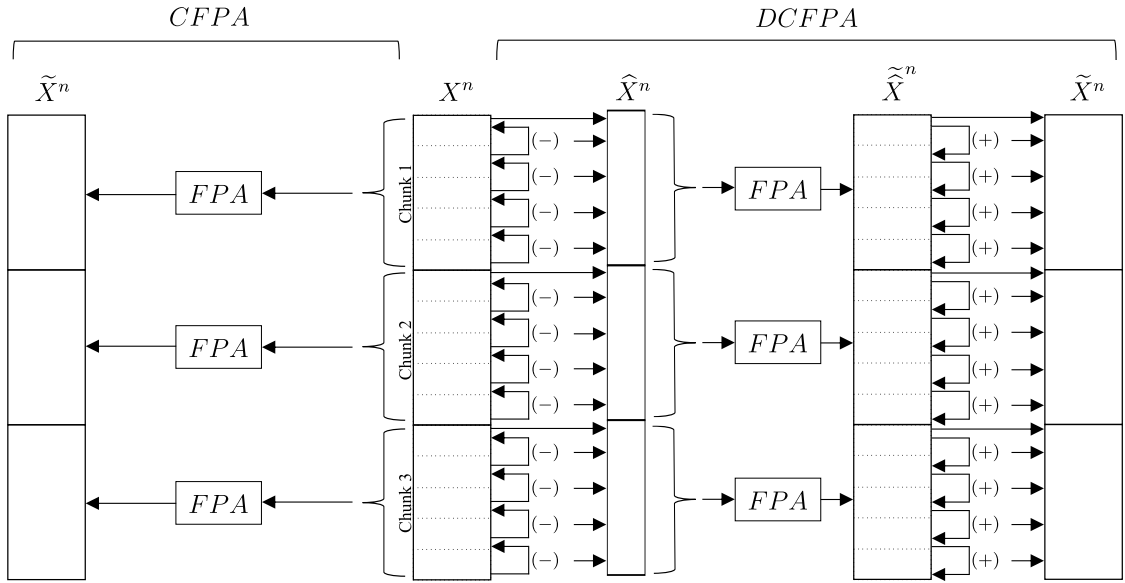}
   \caption{Flow of the CFPA and DCFPA.}
  \label{fig:CFPA-DCFPA}%
\end{figure}

\paragraph{Choice of the Number of Transform Coefficients \\}
The proposed methods require a selection of a value for $k$. A small $k$ value increases the reconstruction error, while a large $k$ value results in an increase in the perturbation error. Therefore, it is important to find an optimal $k$ value that minimizes the sum of the two errors. In this work, we compare a large set of possible $k$ values to choose the best values. 

We apply the aforementioned differential privacy mechanisms by using $100$ noisy evaluations to find optimal $k$ values applied to features or chunks. Optimal $k$ values have the minimum absolute \acs{NMSE} for each chunk, eye movement feature, and document or recording type. In a distributed setting, each party should know the $k$ values in advance. However, in a centralized setting, it is crucial to choose the $k$ values in a differentially private manner. To evaluate the differential privacy in the eye tracking area while taking the temporal correlations into account, we focus on optimal $k$ values for this work. One shortcoming of this approach is that the optimal $k$ value compromises some information about the data, which leaks privacy~\cite{Rastogi:2010:DPA:1807167.1807247}. Our observation is that the information leaked by optimizing the parameter $k$ is negligible as compared to the privacy reduction due to temporally correlated data. Thus, we illustrate the results with optimal $k$ values.

\subsubsection{Datasets}
We evaluate our methods on two different publicly available eye movement datasets namely, MPIIDPEye and MPIIPrivacEye that are dedicated to privacy-preserving eye tracking. Both datasets consist of aggregated and timely eye movement feature signals related to eye fixations, saccades, blinks, and pupil diameters which are commonly used in \acs{VR}/\acs{AR} applications as they represent individual user behaviors. As all minimum values of wordbook features ranging from $1$ to $4$ are zeros in both datasets, we exclude them from the utility and privacy calculations. In addition, we remark that both datasets are available for non-commercial scientific purposes.

\textbf{MPIIDPEye~\cite{steil_diff_privacy}:} A publicly available eye movement dataset consisting of $60$ recordings that is collected from \acs{VR} devices for a reading task of three document types (comics, newspaper, and textbook) from $20$ ($10$ female, $10$ male) participants. Each recording consists of $52$ eye movement feature sequences computed with a sliding window size of $30$ seconds and a step size of $0.5$ seconds.

\textbf{MPIIPrivacEye~\cite{Steil:2019:PPH:3314111.3319913}:} A publicly available eye movement dataset consisting of $51$ recordings from $17$ participants for $3$ consecutive sessions with a head-mounted eye tracker and a field camera, which is similar to an \acs{AR} setup. Each recording consists of $52$ eye movement feature sequences computed with a sliding window size of $30$ seconds and a step size of $1$ second, and each observation is annotated with binary privacy sensitivity levels of the scene that is being viewed. The dataset also consists of scene features extracted with convolutional neural networks. We do not evaluate the last part of the recording $1$ of the participant $10$, as the eye movement features are not available for this region. To detect the privacy level of the scene that is being viewed, we remark that the scene is very important~\cite{Orekondy_2017_ICCV}; however, an individual's eye movements can improve the detection rate when they are fused with the information from the scene.

\subsection{Results}
This section discusses data correlations in addition to evaluations using utility and classification metrics. The utility and classification results are averaged over $100$ noisy evaluations with the optimal $k$ values in MATLAB. We evaluate and compare the utility of differentially private eye movement feature signals by using absolute \acs{NMSE}, as this metric provides analytically trackable results. However, it does not provide implications regarding the practical usability of the private eye movement signals. Therefore, we also report classification accuracies of document type, scene privacy sensitivity, gender prediction, and person identification tasks in order to show the usability of the private data and proposed methods. An optimal trade-off between utility tasks (e.g., low absolute \acs{NMSE}, high classification accuracy in document type prediction) and privacy (e.g., low $\epsilon$, low classification accuracy in person identification or gender prediction tasks) is favorable.

\subsubsection{Correlation Analysis}
Using the correlation coefficient as the metric, we first illustrate high temporal correlation between eye movement feature data. Since there are $52$ eye movement features in both datasets, it is not feasible to illustrate all correlation results. Thus, in the following we illustrate the correlations for the features \textit{ratio large saccade} and \textit{blink rate} in the MPIIDPEye and MPIIPrivacEye datasets, respectively. The correlation coefficients of \textit{ratio large saccade} and \textit{blink rate} for three document and recording types over a time difference $\Updelta t$ w.r.t. the signal samples at, e.g., the fifth time instance for original eye movement feature signals and difference signals for all participants for both datasets are depicted in Figures~\ref{fig:corr_coeffs_raw},~\ref{fig:corr_coeffs_diffs} and Figures~\ref{fig:corr_coeffs_mpiiprivaceye_raw}, ~\ref{fig:corr_coeffs_mpiiprivaceye_diffs}, respectively. As correlations between the difference signals are significantly smaller than correlations between the original eye movement feature signals, the \acs{DCFPA} is less vulnerable to privacy reduction due to temporal correlations, thus ensuring that the value of ${\epsilon}^\prime$ is close to the differential privacy design parameter $\epsilon$.

\begin{figure}[!h]
   \includegraphics[width=1\linewidth]{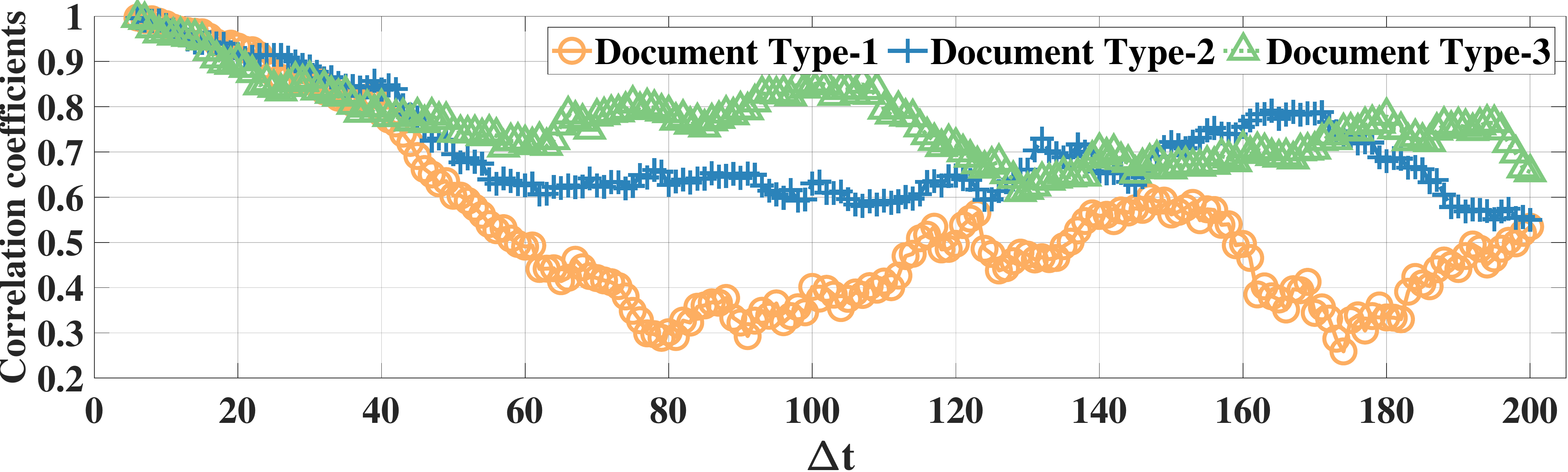}
   \caption{Correlation coefficients of the raw signals of feature \textit{ratio large saccade} in the MPIIDPEye dataset. The values are calculated over a time difference of $\Updelta t$ (Each time step corresponds to $0.5$s) w.r.t. the samples at the fifth time instance.}
  \label{fig:corr_coeffs_raw}%
\end{figure}

\begin{figure}[!h]
  \includegraphics[width=1\linewidth]{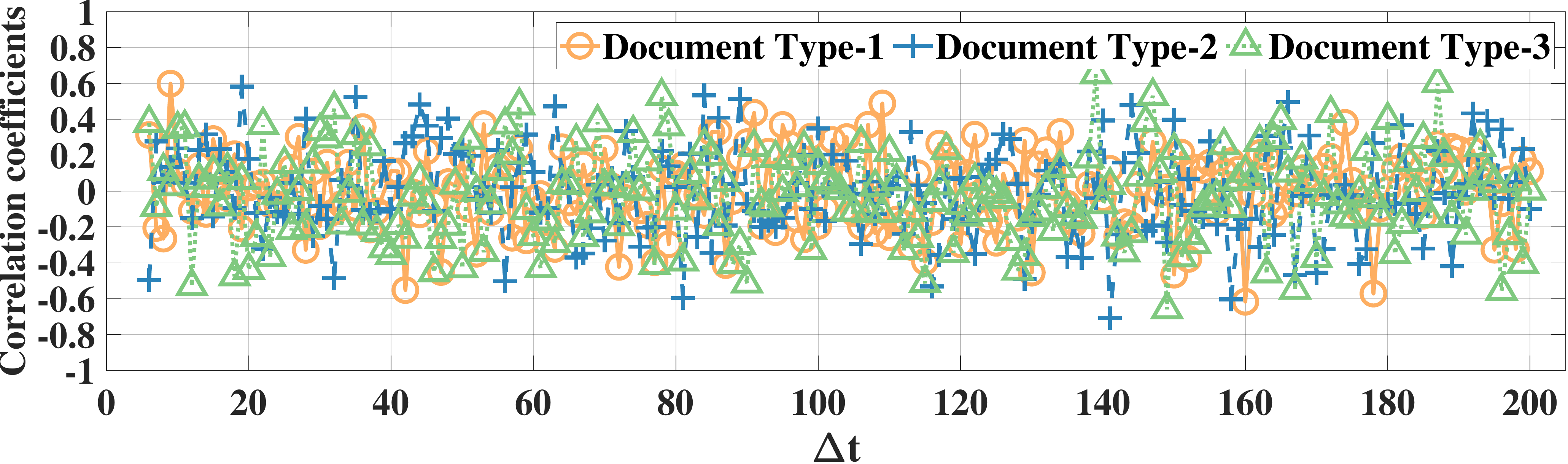}%
  \caption{ Correlation coefficients of the difference signals of feature \textit{ratio large saccade} in the MPIIDPEye dataset. The values are calculated over a time difference of $\Updelta t$ (Each time step corresponds to $0.5$s) w.r.t. the samples at the fifth time instance.}
  \label{fig:corr_coeffs_diffs}%
\end{figure}

\begin{figure}[!h]
  \includegraphics[width=1\linewidth]{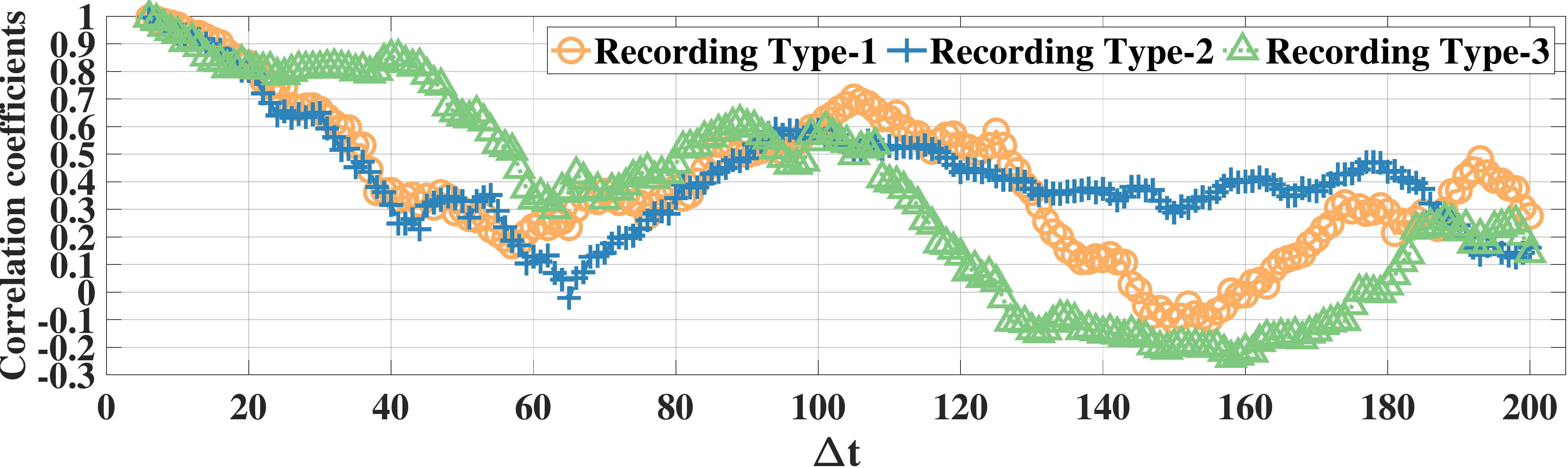}
    \caption{Correlation coefficients of the raw signals of feature \textit{blink rate} in the MPIIPrivacEye dataset. The values are calculated over a time difference of $\Updelta t$ (Each time step corresponds to $1$s) w.r.t. the samples at the fifth time instance.}
  \label{fig:corr_coeffs_mpiiprivaceye_raw}%
\end{figure}

\begin{figure}[!h]
   \includegraphics[width=1\linewidth]{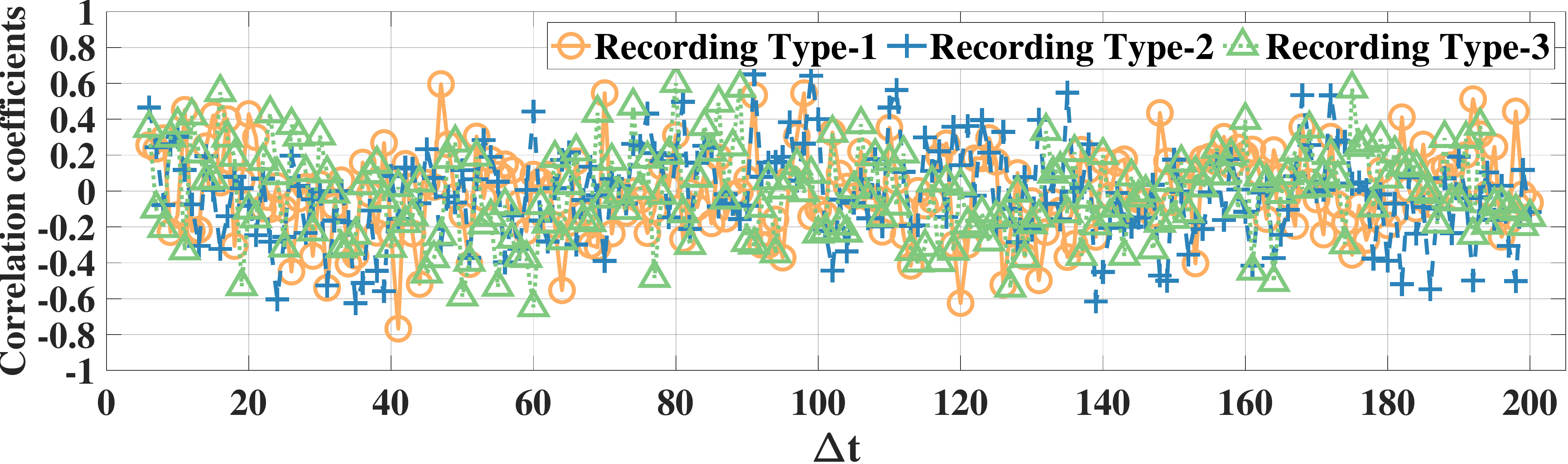}%
     \caption{Correlation coefficients of the difference signals of feature \textit{blink rate} in the MPIIPrivacEye dataset. The values are calculated over a time difference of $\Updelta t$ (Each time step corresponds to $1$s) w.r.t. the samples at the fifth time instance.}
  \label{fig:corr_coeffs_mpiiprivaceye_diffs}%
\end{figure}

\subsubsection{Utility Results}
We evaluate the utility defined in Eq~(\ref{eqn:Utility}) by applying our methods separately to different document and recording types; therefore, we report the utility results separately. As we apply the proposed methods separately to each eye movement feature, we first calculate the mean utility of each feature and then calculate the average utility over all features. The utility results for various $\epsilon$ values for aforementioned methods on the MPIIDPEye and MPIIPrivacEye datasets are given in Figures~\ref{fig:utility1},~\ref{fig:utility2},~\ref{fig:utility3} and Figures~\ref{fig:utility_mpiiprivaceye1},~\ref{fig:utility_mpiiprivaceye2},~\ref{fig:utility_mpiiprivaceye3}, respectively. 

\begin{figure}[!h]
   \includegraphics[width=\linewidth,keepaspectratio]{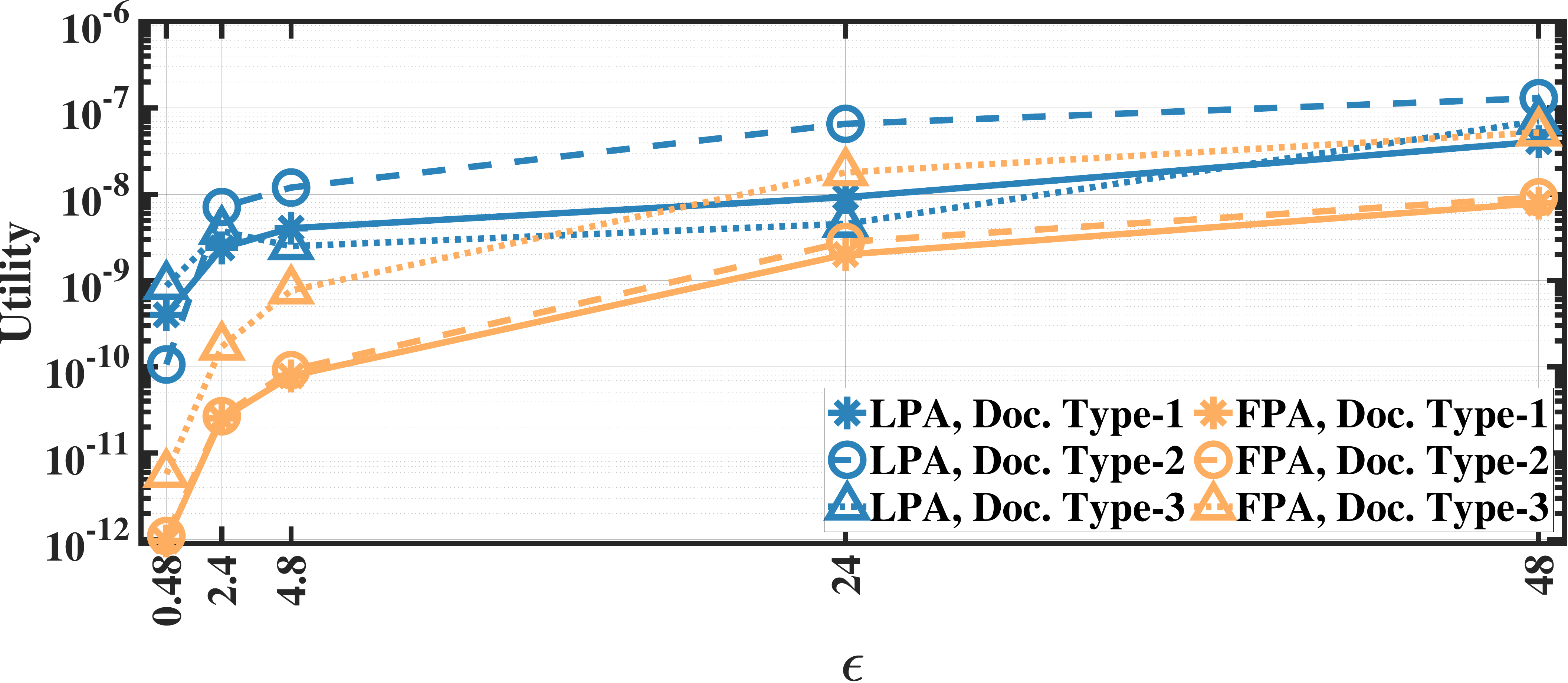}
   \caption{Utility of the LPA and FPA for MPIIDPEye.}
  \label{fig:utility1}%
\end{figure}

\begin{figure}[!h]
   \includegraphics[width=\linewidth,keepaspectratio]{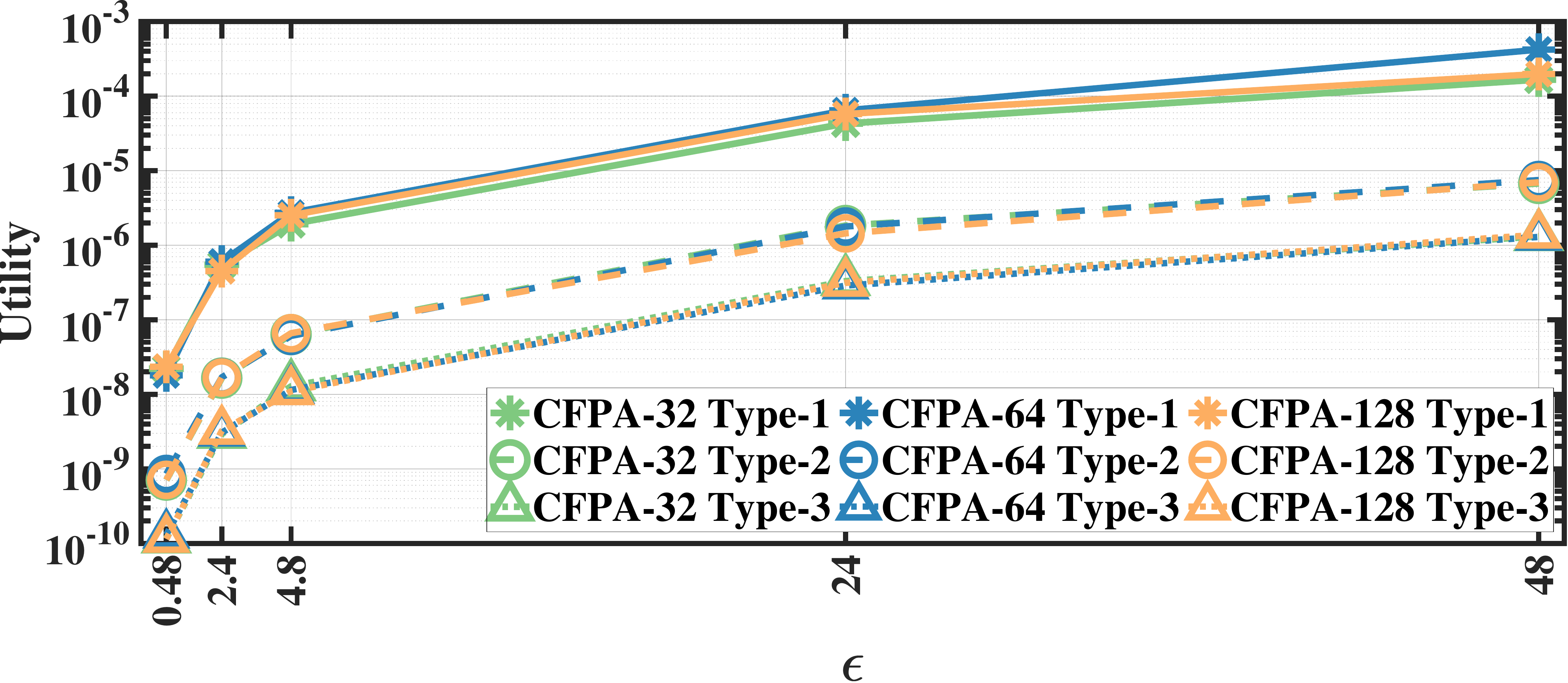}
   \caption{Utility of the CFPA for MPIIDPEye.}
  \label{fig:utility2}%
\end{figure}

\begin{figure}[!h]
   \includegraphics[width=\linewidth,keepaspectratio]{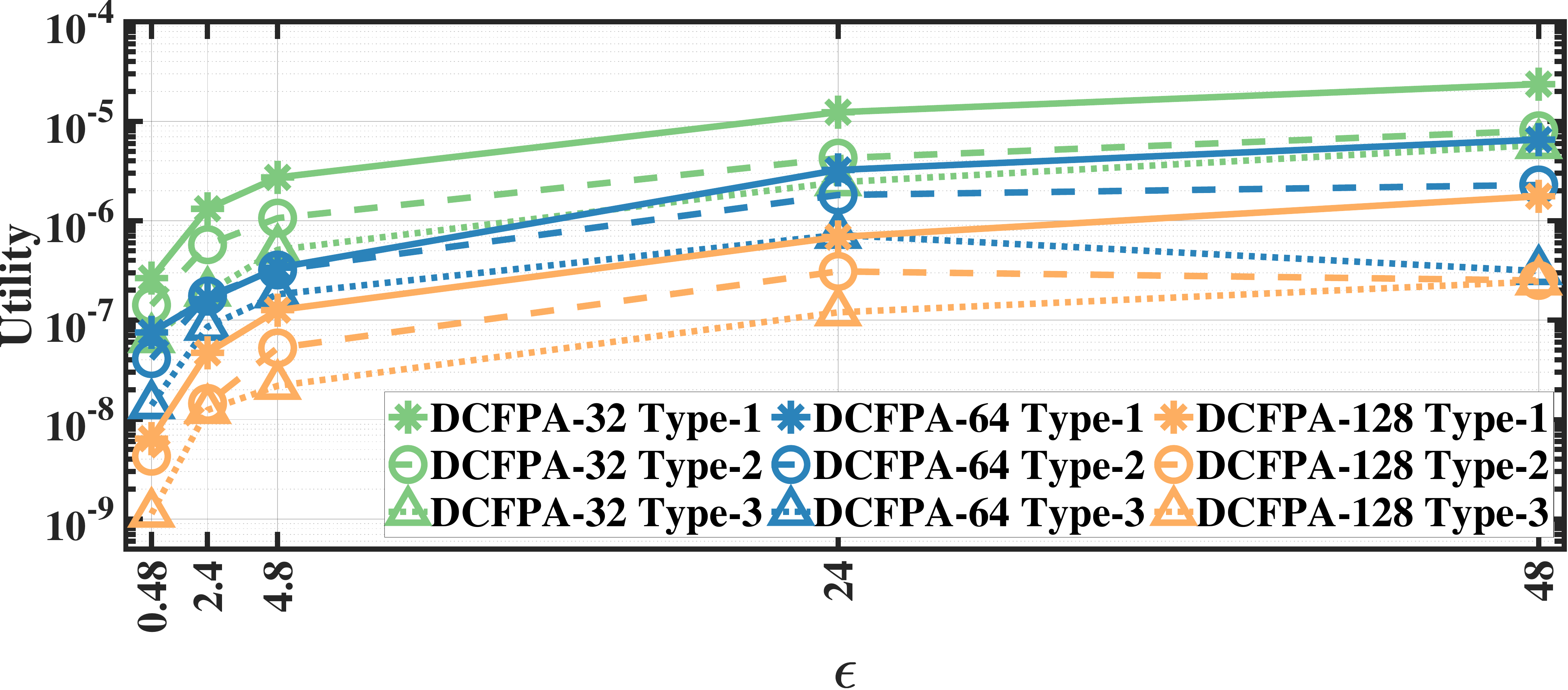}
   \caption{Utility of the DCFPA for MPIIDPEye.}
  \label{fig:utility3}%
\end{figure}

\begin{figure}[!h]
   \includegraphics[width=\linewidth,keepaspectratio]{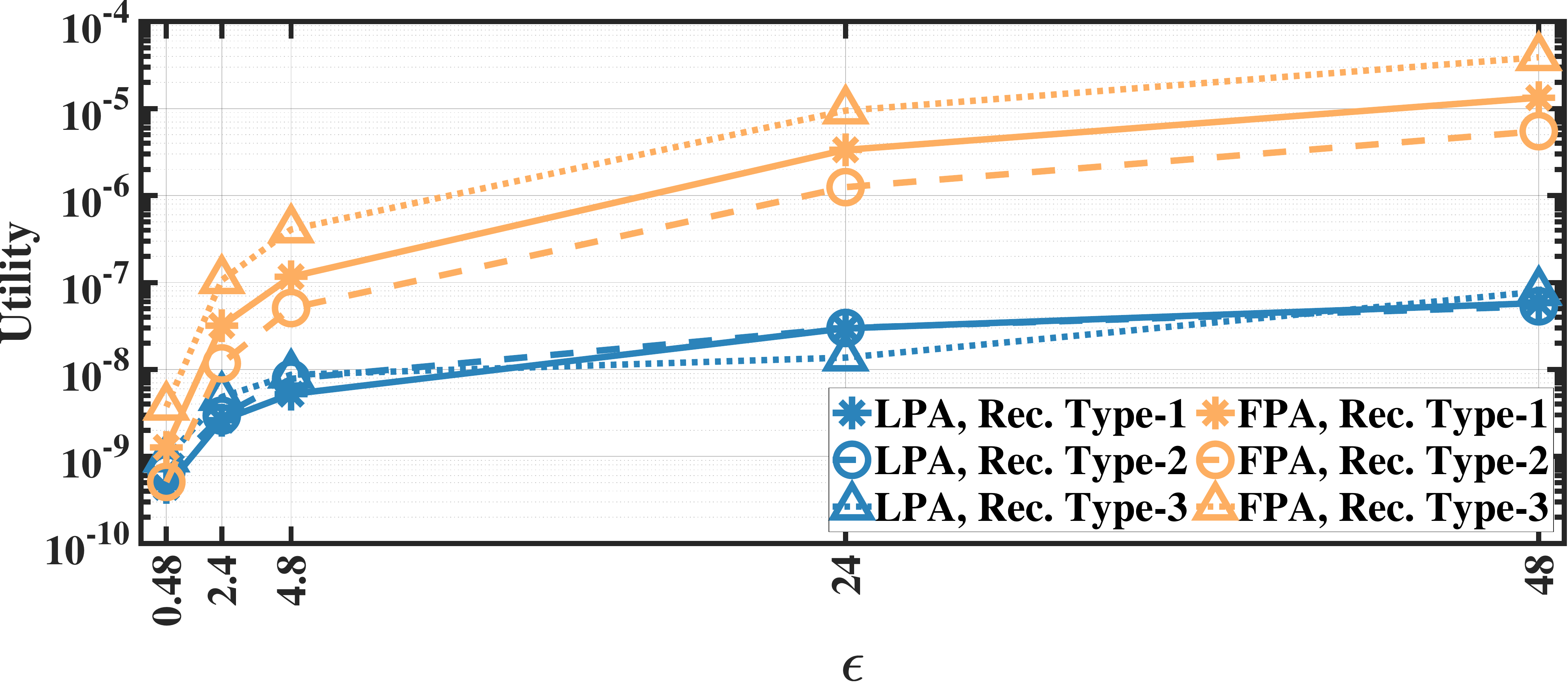}
   \caption{Utility of the LPA and FPA for MPIIPrivacEye.}
  \label{fig:utility_mpiiprivaceye1}%
\end{figure}

\begin{figure}[!h]
  \includegraphics[width=\linewidth,keepaspectratio]{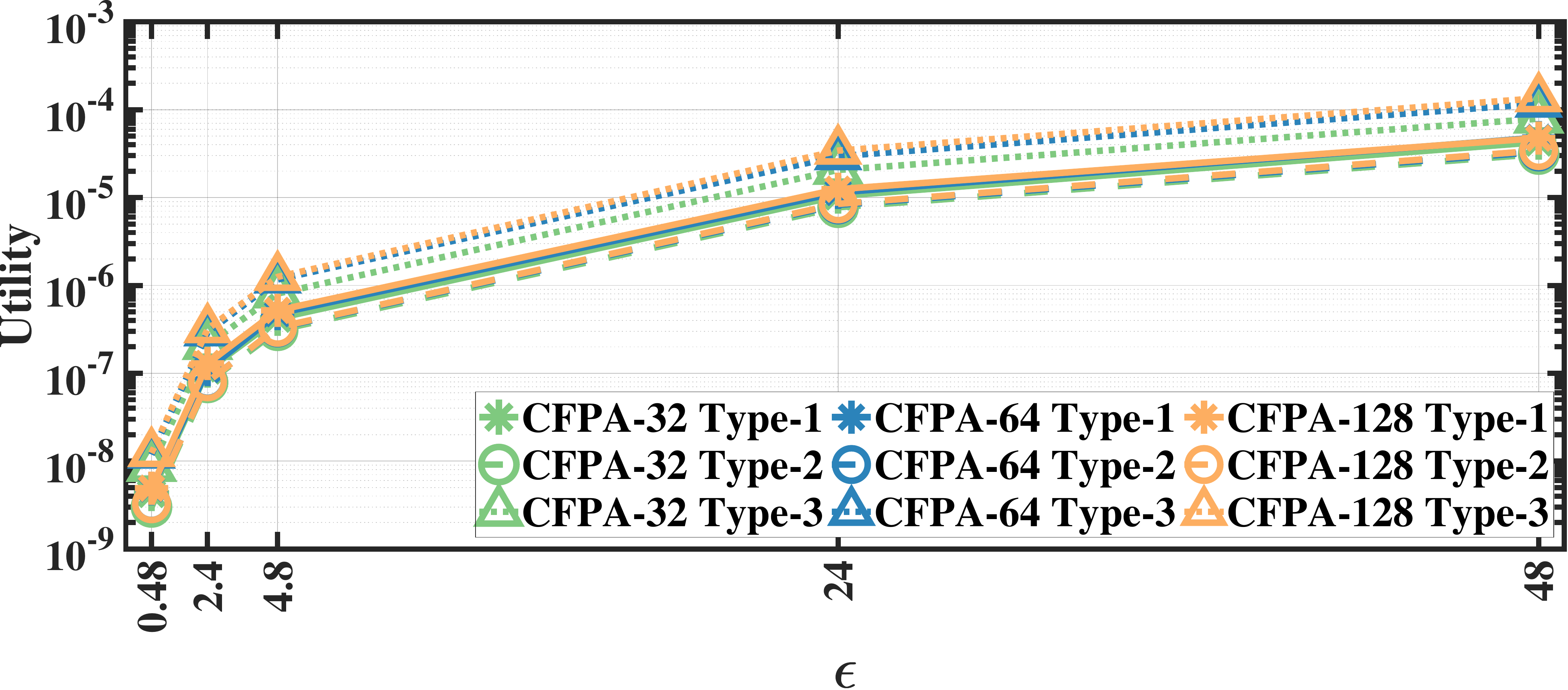}
    \caption{Utility of the CFPA for MPIIPrivacEye.}
    \label{fig:utility_mpiiprivaceye2}%
\end{figure}

\begin{figure}[!h]
   \includegraphics[width=\linewidth,keepaspectratio]{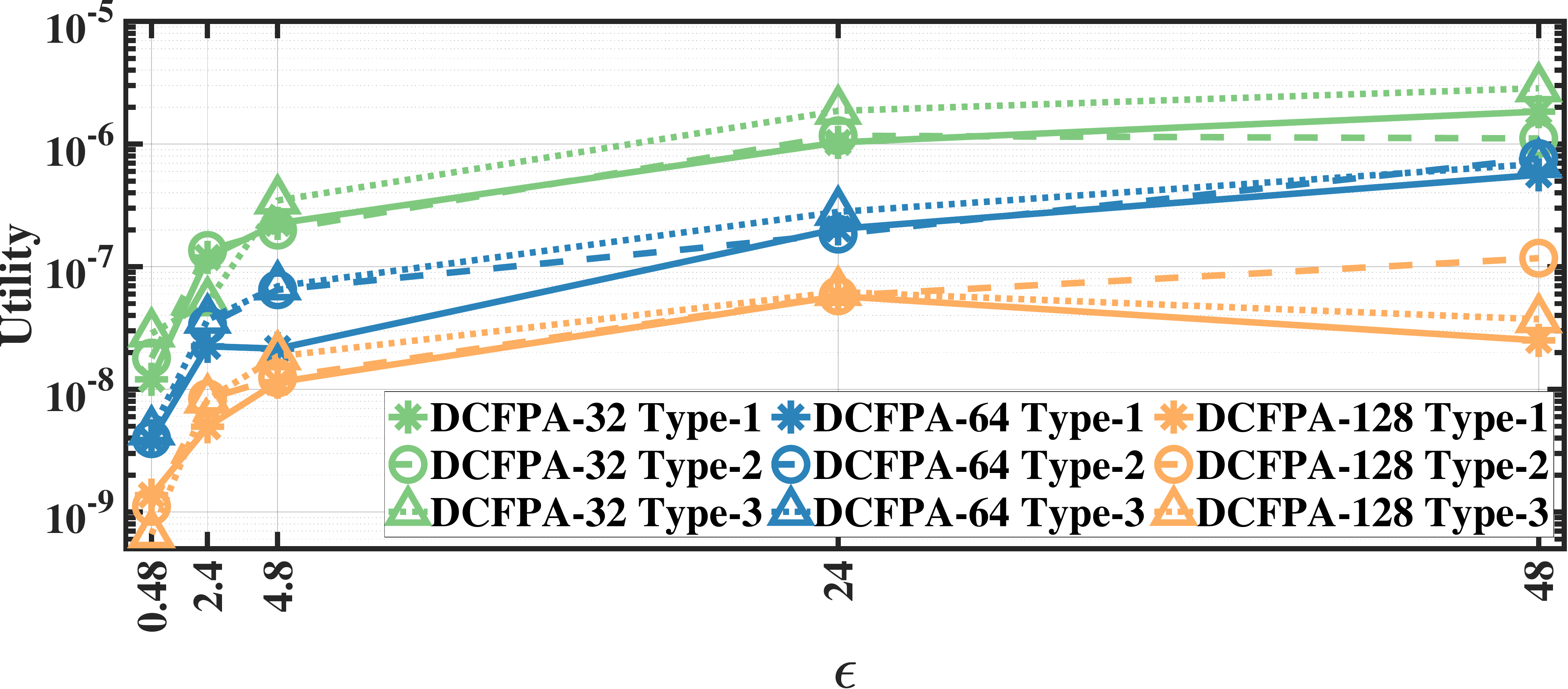}
    \caption{Utility of the DCFPA for MPIIPrivacEye.}
  \label{fig:utility_mpiiprivaceye3}%
\end{figure}

While a high absolute \acs{NMSE}, i.e., low utility, does not necessarily mean that a mechanism is completely useless, higher utility means that the mechanism would perform more effectively than low utility in various tasks. The trend in the utility results of both evaluated datasets are similar. As the query sensitivities are lower in \acs{CFPA}, utilities of \acs{CFPA} are always higher than the utilities of the \acs{FPA} as theoretically expected. The \acs{DCFPA} particularly with small chunks outperforms other methods in the most private settings, namely in the lowest $\epsilon$ regions. When different chunk sizes are compared within the \acs{CFPA} and \acs{DCFPA}, different chunk sizes perform similarly for the \acs{CFPA} method. For the \acs{DCFPA}, there is a significant trend for better utilities when the chunk sizes are decreased. However, as temporal correlations in the smaller chunk sizes higher and since a higher chunk size reduces the temporal correlations better, it is ideal to use a higher chunk size if the utilities are comparable. In general, while the \acs{LPA}, namely the standard Laplace mechanism used for differential privacy, is vulnerable to temporal correlations~\cite{8333800}, our methods also outperform it in terms of utilities. In addition to high utilities, the calculation complexities are decreased with the \acs{CFPA} and \acs{DCFPA} which is another advantage of chunk-based methods.

\subsubsection{Classification Accuracy Results}
We evaluate document type and gender classification results for the MPIIDPEye and privacy sensitivity classification results for the MPIIPrivacEye by using differentially private data generated by the methods which handle temporal correlations in the privacy context. In addition, for both datasets, we evaluate person identification tasks. While a \acs{NMSE}-based utility metric provides analytically trackable way for comparison, evaluating private data using classification accuracies give insights about the usability of the noisy data in practice. Instead of only using Support Vector Machines (SVM) as in previous works~\cite{steil_diff_privacy,Steil:2019:PPH:3314111.3319913}, we evaluate a set of classifiers including \acs{SVM}s, decision trees (DTs), random forests (RFs), and k-Nearest Neighbors (k-NNs). We employ a similar setup as in previous work~\cite{steil_diff_privacy} with radial basis function (RBF) kernel, bias parameter of $C=1$, and automatic kernel scale for the \acs{SVM}s. For \acs{RF}s and \acs{k-NN}s, we use $10$ trees and $k=11$ with a random tie breaker among tied groups, respectively. We normalize the training data to zero mean and unit variance, and apply the same parameters to the test data. Although we do not apply subsampling while generating the differentially private data, which is applied in previous work~\cite{steil_diff_privacy}, we use subsampled data for training and testing for document type, gender, and privacy sensitivity classification tasks with window sizes of $10$ and $20$ for MPIIDPEye and MPIIPrivacEye, respectively, to have a fair comparison and similar amount of data. Apart from the person identification task, all the classifiers are trained and tested in a leave-one-person-out cross-validation setup, which is considered as a more challenging but generic setup. For the person identification task, since it is not possible to carry out the experiments in a leave-one-person-out cross-validation setup, we opt for a similar configuration as in previous work~\cite{steil_diff_privacy} by using the first halves of the signals as training data and the remaining parts as test data. Such setup can be considered as one of the hypothetical best-case scenarios for an adversary as this simulates some set of prior knowledge for an adversary on participants' visual behaviors. For the person identification task, in order to use similar amount of data with other classification tasks from each signal, we use window sizes of $5$ and $10$ for MPIIDPEye and MPIIPrivacEye, respectively. For the MPIIDPEye, we evaluate results both with majority voting by summarizing classifications from different time instances for each participant and recording and without majority voting. Privacy sensitivity classification tasks for MPIIPrivacEye are carried only without majority voting since privacy sensitivity of the scene can change at each time step and applying majority voting to such task in our setup is not reasonable.

While classification results cannot be treated directly as the utility, they provide insights into the usability of the differentially private data in practice. We first evaluate document type classification task in the majority voting setting in Table~\ref{tbl_DocumentType} for MPIIDPEye dataset as it is possible to compare our results with the previous work~\cite{steil_diff_privacy}. As previous results quickly drop to the $0.33$ guessing probability in high privacy regions, we significantly outperform them particularly with \acs{DCFPA} and \acs{FPA} with the accuracies over $0.60$ and $0.85$, respectively. In the less private regions towards $\epsilon = 48$, this trend still exists with the \acs{CFPA} and \acs{FPA} with accuracy results over $0.7$ and $0.85$. Chunk-based methods perform slightly worse than the \acs{FPA} in the document type classifications even though the utility of the \acs{FPA} is lower. We observe that the reading patterns are hidden easier with chunk-based methods; therefore, document type classification task becomes more challenging. This is especially validated with \acs{DCFPA} methods using different chunk sizes, as \acs{DCFPA}-128 outperforms smaller chunk-sized \acs{DCFPA}s even though the sensitivities are higher. Therefore, we conclude that the differential privacy method should be selected for eye movements depending on the further task which will be applied. The document type classification results without majority voting are provided in the table in S1 Table.

Next, we analyze the gender classification results for MPIIDPEye. All methods are able to hide the gender information in the high privacy regions as it is already challenging to identify it with clean data as accuracies are $\approx0.7$ in previous work~\cite{steil_diff_privacy}. While we obtain similar results compared to previous work for the gender classification task, the \acs{CFPA} method is able to predict gender information correctly in the less private regions, namely $\epsilon = 48$, as it also has the highest utility values in these regions. The \acs{FPA} applied to the complete signal and the \acs{DCFPA} are not able to classify genders accurately. We observe that higher amount of noise that is needed by the \acs{FPA} and noising the fine-grained ``difference'' information between eye movement observations with \acs{DCFPA} are the reasons for hiding the gender information successfully in all privacy regions. Overall, the \acs{CFPA} provides an optimal equilibrium between gender and document type classification success in the less private regions if gender information is not considered as a feature that should be protected from adversaries. Otherwise, all proposed methods are able to hide gender information from the data in the higher privacy regions as expected. Gender classification results are depicted in Table~\ref{tbl_Gender}. Especially in some methods with \acs{k-NN}s and \acs{SVM}s, gender classification accuracies are close to zero because of the majority voting and it is validated by the results without majority voting in the table in S2 Table.

Lastly for the MPIIDPEye, we evaluate person identification task using differentially private data. The resulting classification accuracies with majority voting are depicted in Table~\ref{tbl_Person}. By using the \acs{FPA}, it is possible to identify the participants very accurately, which means that even though the document type classification accuracies of the \acs{FPA} are higher than the others, a strong adversary can also identify personal ids when this method is used. The same trend also exists in the without majority voting setting, which is reported in the table in S3 Table. The \acs{CFPA} and \acs{DCFPA} perform well against person identification attempts in the high privacy regions. However, when the \acs{CFPA} is used, it is possible to identify personal ids in the less private regions. Overall, for the MPIIDPEye dataset, the \acs{DCFPA} performs better than the others due to its resistance against person identification and gender classification and relatively high classification accuracies for the document type predictions. We conclude that this is due to the robust decorrelation effect of the \acs{DCFPA}.

For the MPIIPrivacEye, we report privacy sensitivity classification accuracies using differentially private eye movement features in the Table~\ref{tblprivacySensitivityDetection}. The \acs{FPA} performs worse than our methods. The \acs{DCFPA}, particularly with the chunk size of $32$, outperforms all other methods slightly in the higher privacy regions as it is also the case for the utility results. In the lower privacy regions, the \acs{CFPA} performs the best with $\approx0.60$ accuracy. Since performance does not drop significantly in the higher chunk sizes, it is reasonable to use higher chunk-sized methods as they decrease the temporal correlations better. While having an accuracy of approximately $0.6$ in a binary classification problem does not form the best performance, according to the previous work~\cite{Steil:2019:PPH:3314111.3319913}, privacy sensitivity classification using only eye movements with clean data in a person-independent setup only performs marginally higher than $0.60$. Therefore, we show that even though we use differentially private data in the most private settings, we obtain similar results to the classification results using clean data. This means that differentially private eye movements can be used along with scene features for detecting privacy sensitive scenes in \acs{AR} setups.

The results of the person identification task in the MPIIPrivacEye dataset are similar to the results of the MPIIDPEye dataset and the results with majority voting are depicted in Table~\ref{tblPersonIdsMPIIPrivacEye}. Personal identifiers are predicted very accurately when the \acs{FPA} is used. The \acs{CFPA} and \acs{DCFPA} are resistant to person identification attacks in all privacy regions performing around the random guess probability in almost all cases. Similar to the classification results of the MPIIDPEye dataset, the \acs{DCFPA} method performs the best when utility-privacy trade-off is taken into consideration. The person identification results without majority voting are presented in the table in S4 Table.

\subsection{Discussion}
We compared our differential privacy methods with the standard Laplace mechanism as well as the \acs{FPA} method, which is proposed for temporally correlated data, by using the MPIIDPEye and MPIIPrivacEye datasets. The utility results based on the \acs{NMSE} metric show that due to the reduced sensitivities as a result of the chunking operations, the \acs{CFPA} and \acs{DCFPA} perform better than the \acs{FPA} and standard Laplace mechanism. While larger chunk sizes applied with the \acs{CFPA} and \acs{DCFPA} decrease the effects of temporal correlations on the differential privacy mechanisms, they also increase the sensitivities, leading to higher amount of noise addition to the data and worse utility performance. Utility evaluations represent how much differentially private signals diverge from the original signals. Having eye movement feature signals less diverged from the original values by providing the differential privacy would lead better performance in various tasks. While the \acs{FPA}, \acs{CFPA}, and \acs{DCFPA} are appropriate for temporally correlated data, the \acs{DCFPA} uses the consecutive differences of eye movement feature signals, which are significantly less correlated than the original feature signals, as illustrated in Figures~\ref{fig:corr_coeffs_diffs} and~\ref{fig:corr_coeffs_mpiiprivaceye_diffs}. Thus, the \acs{DCFPA} is less vulnerable to temporal correlations in the differential privacy context.

In addition to utility results, we evaluated document type, gender, and person identification tasks for the MPIIDPEye dataset and privacy sensitivity classification of the observed scene and person identification task for the MPIIPrivacEye dataset and compared our results with the previous works especially in the eye tracking literature. The \acs{FPA} outperforms the \acs{CFPA} and \acs{DCFPA} in document type classification task since the chunks perturb ``Z''-type reading patterns. However, this might be a task-specific outcome as the \acs{CFPA} and \acs{DCFPA} perform better in terms of utility. In addition, when the \acs{FPA} is used, personal identifiers can be estimated with high accuracies in both datasets. On the contrary, especially the \acs{DCFPA} provides decreased probabilities for person identification tasks in the MPIIDPEye, and probabilities close to the random guess probability for the MPIIPrivacEye, which are optimal from a privacy-preservation perspective. We remark that this outcome is also related to decreased temporal correlations. Gender information is successfully hidden in all methods and scene privacy can be predicted to some extent using differentially private eye movement signals. In addition, privacy sensitivity detection results on the MPIIPrivacEye are consistent with the utility results based on the \acs{NMSE} metric.

Due to the significant reduction of temporal correlations and high utility and relatively accurate classification results in different tasks, the \acs{DCFPA} is the best performing differential privacy method for eye movement feature signals. In addition, it is not possible to recognize the person accurately from eye movement data when the \acs{DCFPA} is used. From correlation reduction point of view, in both methods namely, \acs{CFPA} and \acs{DCFPA}, when the performances are similar, it is reasonable to use higher chunk sizes as such chunks are less vulnerable to temporal correlations as illustrated in Figures~\ref{fig:corr_coeffs_raw} and~\ref{fig:corr_coeffs_mpiiprivaceye_raw}. Overall, our methods outperform the state-of-the-art for differential privacy for aggregated eye movement feature signals.

\subsection{Conclusion}
We proposed different methods to achieve differential privacy for eye movement feature signals by correcting, extending, and adapting the \acs{FPA} method. Since eye movement features are correlated over time and are high dimensional, standard differential privacy methods provide low utility and are vulnerable to inference attacks. Thus, we proposed privacy solutions for temporally correlated eye movement data. Our methods can be easily applied to other biometric human-computer interaction data as well since they are independent of the used data and outperform the state-of-the-art methods in terms of both \acs{NMSE} and classification accuracy and reduce the correlations significantly. In future work, we will analyze the actual privacy metric $\epsilon^{\prime}$ which takes the data correlations into account and choose $k$ values in a private manner for the centralized differential privacy setting.

\begin{landscape}
\begin{table}[!ht]
\centering
\caption{Document type classification accuracies in the MPIIDPEye dataset using differentially private eye movement features with majority voting.}
\begin{tabular}{cccccc}
    & \multicolumn{5}{c}{Document type classification accuracies (k-NN$|$SVM$|$DT$|$RF)} \\
    \hline
   Method & $\epsilon = 0.48$ & $\epsilon = 2.4$ & $\epsilon = 4.8$ & $\epsilon = 24$ & $\epsilon = 48$ \\
    \hline
    FPA & $0.50|0.63|0.82|\mathbf{0.87}$ & $0.51|0.63|0.81|\mathbf{0.87}$ & $0.5|0.61|0.81|\mathbf{0.87}$ & $0.52|0.63|0.82|\mathbf{0.87}$ & $0.52|0.64|0.83|\mathbf{0.88}$\\
    CFPA-32 & $0.39|0.37|0.45|0.44$ & $0.40|0.38|0.45|0.44$ & $0.40|0.44|0.46|0.44$ & $0.58|0.58|0.55|0.60$ & $\mathbf{0.71}|0.69|0.66|0.66$ \\
    CFPA-64 & $0.41|0.36|0.45|0.45$ & $0.40|0.37|0.44|0.45$ & $0.40|0.41|0.44|0.45$ & $0.57|0.59|0.55|0.59$ & $0.70|0.70|0.66|0.66$ \\
    CFPA-128 & $0.36|0.33|0.45|0.45$ & $0.36|0.33|0.44|0.44$ & $0.37|0.35|0.44|0.45$ & $0.52|0.56|0.52|0.57$ & $0.69|0.68|0.64|0.66$ \\
    DCFPA-32 & $0.51|0.37|0.46|0.44$ & $0.51|0.36|0.47|0.42$ & $0.47|0.35|0.47|0.43$ & $0.49|0.37|0.46|0.44$ & $0.48|0.36|0.47|0.45$ \\
    DCFPA-64 & $0.61|0.45|0.43|0.41$ & $0.55|0.35|0.43|0.41$ & $0.56|0.41|0.43|0.41$ & $\mathbf{0.60}|0.43|0.45|0.42$ & $0.59|0.40|0.44|0.43$ \\
    DCFPA-128 & $\mathbf{0.64}|0.45|0.46|0.48$ & $\mathbf{0.62}|0.42|0.45|0.46$ & $\mathbf{0.69}|0.50|0.44|0.46$ & $0.57|0.45|0.45|0.46$ & $0.60|0.42|0.45|0.46$ \\
    \hline
\end{tabular}
\label{tbl_DocumentType}
\end{table}
\vspace{3em}
\begin{table}[!ht]
\centering
\caption{Gender classification accuracies in the MPIIDPEye dataset using differentially private eye movement features with majority voting.}
\begin{tabular}{cccccc}
    & \multicolumn{5}{c}{Gender classification accuracies (k-NN$|$SVM$|$DT$|$RF)} \\
    \hline
   Method & $\epsilon = 0.48$ & $\epsilon = 2.4$ & $\epsilon = 4.8$ & $\epsilon = 24$ & $\epsilon = 48$ \\
    \hline
    FPA & $0.44|0.30|0.43|0.38$ & $0.45|0.30|0.41|0.37$ & $0.44|0.28|0.41|0.39$ & $0.43|0.27|0.43|0.38$ & $0.44|0.31|0.42|0.39$\\
    CFPA-32 & $0.04|0.01|0.26|0.24$ & $0.05|0.01|0.27|0.25$ & $0.05|0.02|0.28|0.27$ & $0.36|0.30|0.50|0.45$ & $0.62|0.50|0.67|0.53$ \\
    CFPA-64 & $0.08|0.05|0.27|0.26$ & $0.08|0.04|0.28|0.27$ & $0.10|0.06|0.31|0.27$ & $0.38|0.34|0.52|0.47$ & $0.62|0.51|0.68|0.54$ \\
    CFPA-128 & $0.18|0.15|0.32|0.30$ & $0.16|0.12|0.31|0.30$ & $0.18|0.10|0.32|0.31$ & $0.36|0.30|0.50|0.46$ & $0.60|0.47|0.68|0.54$ \\
    DCFPA-32 & $0.03|\approx0|0.22|0.31$ & $0.04|\approx0|0.23|0.32$ & $0.04|\approx0|0.22|0.32$ & $0.04|\approx0|0.23|0.31$ & $0.04|\approx0|0.23|0.32$ \\
    DCFPA-64 & $0.04|\approx0|0.30|0.33$ & $0.04|\approx0|0.30|0.34$ & $0.04|\approx0|0.30|0.32$ & $0.04|\approx0|0.29|0.34$ & $0.03|\approx0|0.30|0.34$ \\
    DCFPA-128 & $0.09|0.01|0.34|0.35$ & $0.08|\approx0|0.32|0.34$ & $0.08|0.01|0.32|0.35$ & $0.07|\approx0|0.33|0.34$ & $0.07|0.01|0.34|0.34$ \\
    \hline
\end{tabular}
\label{tbl_Gender}
\end{table}
\end{landscape}

\begin{landscape}
\begin{table}[!ht]
\centering
\caption{Person identification accuracies in the MPIIDPEye dataset using differentially private eye movement features with majority voting.}
\begin{tabular}{cccccc}
    & \multicolumn{5}{c}{Person identification accuracies (k-NN$|$SVM$|$DT$|$RF)} \\
    \hline
   Method & $\epsilon = 0.48$ & $\epsilon = 2.4$ & $\epsilon = 4.8$ & $\epsilon = 24$ & $\epsilon = 48$ \\
    \hline
    FPA & $1$ & $1$ & $1$ & $1$ & $1$\\
    CFPA-32 & $0.15|0.08|0.44|0.37$ & $0.13|0.08|0.46|0.39$ & $0.11|0.08|0.48|0.41$ & $0.40|0.11|0.64|0.70$ & $0.72|0.16|0.87|0.92$ \\
    CFPA-64 & $0.14|0.08|0.42|0.34$ & $0.13|0.08|0.44|0.37$ & $0.12|0.08|0.45|0.38$ & $0.39|0.11|0.63|0.71$ & $0.70|0.17|0.85|0.92$ \\
    CFPA-128 & $0.16|0.05|0.39|0.36$ & $0.15|0.05|0.41|0.36$ & $0.17|0.05|0.43|0.39$ & $0.45|0.07|0.55|0.63$ & $0.70|0.16|0.74|0.88$ \\
    DCFPA-32 & $0.06|0.10|0.39|0.37$ & $0.06|0.10|0.39|0.36$ & $0.08|0.10|0.39|0.36$ & $0.10|0.10|0.39|0.37$ & $0.10|0.10|0.40|0.38$ \\
    DCFPA-64 & $0.10|0.10|0.33|0.35$ & $0.10|0.10|0.32|0.34$ & $0.10|0.10|0.32|0.33$ & $0.13|0.10|0.31|0.34$ & $0.13|0.10|0.32|0.33$ \\
    DCFPA-128 & $0.09|0.05|0.24|0.28$ & $0.09|0.05|0.25|0.27$ & $0.10|0.05|0.23|0.27$ & $0.10|0.06|0.24|0.26$ & $0.10|0.05|0.22|0.25$ \\
    \hline
\end{tabular}
\label{tbl_Person}
\end{table}
\vspace{3em}
\begin{table}[!ht]
\centering
\caption{Privacy sensitivity classification accuracies in the MPIIPrivacEye dataset using differentially private eye movement features.}
\begin{tabular}{cccccc}
    & \multicolumn{5}{c}{Privacy sensitivity classification accuracies (k-NN$|$SVM$|$DT$|$RF)} \\
    \hline
   Method & $\epsilon = 0.48$ & $\epsilon = 2.4$ & $\epsilon = 4.8$ & $\epsilon = 24$ & $\epsilon = 48$ \\
    \hline
    FPA & $0.49|0.58|0.51|0.55$ & $0.49|0.58|0.51|0.55$ & $0.49|0.58|0.51|0.55$ & $0.50|0.58|0.51|0.55$ & $0.50|0.59|0.51|0.55$\\
    CFPA-32 & $0.55|0.59|0.52|0.56$ & $0.55|0.58|0.52|0.56$ & $0.55|0.58|0.52|0.56$ & $0.56|0.58|0.53|0.57$ & $0.58|\mathbf{0.60}|0.54|0.58$ \\
    CFPA-64 & $0.55|0.58|0.52|0.56$ & $0.55|0.58|0.52|0.56$ & $0.55|0.58|0.52|0.56$ & $0.56|0.58|0.53|0.57$ & $0.58|0.59|0.54|0.58$ \\
    CFPA-128 & $0.55|0.57|0.52|0.56$ & $0.55|0.57|0.52|0.56$ & $0.55|0.57|0.52|0.56$ & $0.56|0.58|0.53|0.57$ & $0.58|0.59|0.54|0.59$ \\
    DCFPA-32 & $0.54|\mathbf{0.59}|0.52|0.56$ & $0.55|\mathbf{0.59}|0.52|0.56$ & $0.55|\mathbf{0.59}|0.52|0.56$ & $0.54|\mathbf{0.59}|0.52|0.56$ & $0.55|0.59|0.52|0.56$ \\
    DCFPA-64 & $0.54|0.58|0.52|0.56$ & $0.54|0.58|0.52|0.56$ & $0.54|0.58|0.52|0.56$ & $0.54|0.58|0.52|0.56$ & $0.54|0.58|0.52|0.56$ \\
    DCFPA-128 & $0.54|0.57|0.52|0.56$ & $0.54|0.57|0.52|0.56$ & $0.54|0.57|0.52|0.56$ & $0.54|0.57|0.52|0.56$ & $0.54|0.57|0.52|0.56$ \\
    \hline
\end{tabular}
\label{tblprivacySensitivityDetection}
\end{table}
\end{landscape}

\begin{landscape}
\begin{table}[!ht]
\centering
\caption{Person identification classification accuracies in the MPIIPrivacEye dataset using differentially private eye movement features with majority voting.}
\begin{tabular}{cccccc}
    & \multicolumn{5}{c}{Person identification classification accuracies (k-NN$|$SVM$|$DT$|$RF)} \\
    \hline
   Method & $\epsilon = 0.48$ & $\epsilon = 2.4$ & $\epsilon = 4.8$ & $\epsilon = 24$ & $\epsilon = 48$ \\
    \hline
    FPA & $1$ & $1$ & $1$ & $1$ & $1$\\
    CFPA-32 & $0.05|0.06|0.07|0.07$ & $0.05|0.06|0.07|0.07$ & $0.06|0.06|0.08|0.07$ & $0.07|0.06|0.09|0.11$ & $0.11|0.06|0.14|0.16$ \\
    CFPA-64 & $0.06|0.06|0.06|0.07$ & $0.06|0.06|0.06|0.06$ & $0.06|0.06|0.07|0.07$ & $0.07|0.06|0.09|0.09$ & $0.11|0.06|0.16|0.16$ \\
    CFPA-128 & $0.06|0.06|0.07|0.07$ & $0.06|0.06|0.07|0.07$ & $0.06|0.06|0.07|0.08$ & $0.07|0.06|0.09|0.11$ & $0.11|0.06|0.15|0.15$ \\
    DCFPA-32 & $0.06|0.05|0.08|0.07$ & $0.06|0.06|0.07|0.08$ & $0.07|0.05|0.08|0.08$ & $0.07|0.05|0.08|0.08$ & $0.07|0.06|0.08|0.08$ \\
    DCFPA-64 & $0.06|0.05|0.06|0.06$ & $0.06|0.05|0.06|0.06$ & $0.06|0.05|0.06|0.06$ & $0.05|0.06|0.06|0.06$ & $0.06|0.05|0.06|0.06$ \\
    DCFPA-128 & $0.05|0.05|0.06|0.06$ & $0.05|0.05|0.05|0.06$ & $0.06|0.05|0.06|0.06$ & $0.05|0.05|0.05|0.06$ & $0.06|0.05|0.05|0.06$ \\
    \hline
\end{tabular}
\label{tblPersonIdsMPIIPrivacEye}
\end{table}
\end{landscape}

\begin{landscape}
\subsection{Supporting Information}

\paragraph*{S1 Table.~} 
\label{lbl:S1_Table}
{\bf ~Document type classification results without majority voting for the MPIIDPEye dataset.}
\begin{table}[!ht]
\centering
\caption{Document type classification accuracies in the MPIIDPEye dataset using differentially private eye movement features without majority voting.}
\begin{tabular}{cccccc}
    & \multicolumn{5}{c}{Document type classification accuracies (k-NN$|$SVM$|$DT$|$RF)} \\
    \hline
   Method & $\epsilon = 0.48$ & $\epsilon = 2.4$ & $\epsilon = 4.8$ & $\epsilon = 24$ & $\epsilon = 48$ \\
    \hline
    FPA & $0.46|0.52|0.68|0.73$ & $0.46|0.52|0.67|0.73$ & $0.45|0.51|0.67|0.73$ & $0.46|0.52|0.68|0.73$ & $0.47|0.52|0.68|0.74$\\
    CFPA-32 & $0.34|0.35|0.36|0.38$ & $0.34|0.35|0.36|0.38$ & $0.34|0.36|0.36|0.38$ & $0.39|0.44|0.38|0.42$ & $0.47|0.53|0.44|0.49$ \\
    CFPA-64 & $0.34|0.35|0.36|0.38$ & $0.34|0.35|0.36|0.38$ & $0.34|0.36|0.36|0.38$ & $0.39|0.44|0.38|0.42$ & $0.47|0.53|0.44|0.49$ \\
    CFPA-128 & $0.34|0.34|0.36|0.39$ & $0.34|0.34|0.36|0.39$ & $0.34|0.34|0.36|0.39$ & $0.38|0.42|0.37|0.42$ & $0.46|0.51|0.43|0.48$ \\
    DCFPA-32 & $0.36|0.35|0.36|0.37$ & $0.36|0.34|0.35|0.37$ & $0.35|0.34|0.36|0.37$ & $0.36|0.35|0.35|0.37$ & $0.35|0.34|0.36|0.38$ \\
    DCFPA-64 & $0.38|0.37|0.35|0.37$ & $0.37|0.35|0.35|0.37$ & $0.37|0.36|0.35|0.37$ & $0.37|0.36|0.35|0.37$ & $0.37|0.36|0.35|0.37$ \\
    DCFPA-128 & $0.40|0.38|0.36|0.38$ & $0.39|0.37|0.35|0.37$ & $0.41|0.39|0.35|0.38$ & $0.38|0.37|0.35|0.38$ & $0.39|0.37|0.35|0.38$ \\
    \hline
\end{tabular}
\label{tbl_DocumentTypeWOMaj}
\end{table}

\vspace{-2em}
\paragraph*{S2 Table.~}
\label{lbl:S2_Table}
{\bf ~Gender classification results without majority voting for the MPIIDPEye dataset.}
\begin{table}[!ht]
\centering
\caption{Gender classification accuracies in the MPIIDPEye dataset using differentially private eye movement features without majority voting.}
\begin{tabular}{cccccc}
    & \multicolumn{5}{c}{Gender classification accuracies (k-NN$|$SVM$|$DT$|$RF)} \\
    \hline
   Method & $\epsilon = 0.48$ & $\epsilon = 2.4$ & $\epsilon = 4.8$ & $\epsilon = 24$ & $\epsilon = 48$ \\
    \hline
    FPA & $0.48|0.42|0.48|0.45$ & $0.48|0.42|0.47|0.44$ & $0.47|0.41|0.47|0.45$ & $0.47|0.41|0.48|0.44$ & $0.48|0.43|0.48|0.45$\\
    CFPA-32 & $0.43|0.31|0.44|0.40$ & $0.43|0.31|0.45|0.41$ & $0.43|0.32|0.46|0.41$ & $0.46|0.42|0.49|0.47$ & $0.51|0.47|0.53|0.53$ \\
    CFPA-64 & $0.44|0.35|0.45|0.40$ & $0.44|0.35|0.45|0.41$ & $0.44|0.35|0.46|0.42$ & $0.46|0.43|0.49|0.47$ & $0.51|0.48|0.54|0.53$ \\
    CFPA-128 & $0.45|0.39|0.46|0.42$ & $0.45|0.38|0.46|0.42$ & $0.45|0.38|0.46|0.42$ & $0.46|0.43|0.49|0.47$ & $0.51|0.47|0.53|0.53$ \\
    DCFPA-32 & $0.44|0.27|0.45|0.42$ & $0.44|0.27|0.45|0.42$ & $0.44|0.27|0.45|0.42$ & $0.44|0.27|0.45|0.42$ & $0.44|0.27|0.46|0.42$ \\
    DCFPA-64 & $0.44|0.30|0.46|0.43$ & $0.43|0.29|0.46|0.43$ & $0.44|0.30|0.46|0.43$ & $0.43|0.30|0.46|0.43$ & $0.43|0.30|0.46|0.43$ \\
    DCFPA-128 & $0.44|0.32|0.46|0.43$ & $0.44|0.32|0.46|0.43$ & $0.44|0.32|0.47|0.43$ & $0.44|0.31|0.46|0.43$ & $0.44|0.32|0.47|0.43$ \\
    \hline
\end{tabular}

\label{tbl_GenderWOMaj}
\end{table}
\end{landscape}

\newpage

\begin{landscape}
\paragraph*{S3 Table.~}
\label{lbl:S3_Table}
{\bf ~Person identification results without majority voting for the MPIIDPEye dataset.}
\begin{table}[!ht]
\centering
\caption{Person identification accuracies in the MPIIDPEye dataset using differentially private eye movement features without majority voting.}
\begin{tabular}{cccccc}
    & \multicolumn{5}{c}{Person identification accuracies (k-NN$|$SVM$|$DT$|$RF)} \\
    \hline
   Method & $\epsilon = 0.48$ & $\epsilon = 2.4$ & $\epsilon = 4.8$ & $\epsilon = 24$ & $\epsilon = 48$ \\
    \hline
    FPA & $1|1|0.98|1$ & $1|1|0.98|1$ & $1|1|0.98|1$ & $1|1|0.97|1$ & $1|1|0.95|1$\\
    CFPA-32 & $0.09|0.11|0.16|0.16$ & $0.08|0.11|0.16|0.17$ & $0.09|0.11|0.17|0.17$ & $0.12|0.15|0.18|0.21$ & $0.18|0.21|0.23|0.27$ \\
    CFPA-64 & $0.09|0.11|0.17|0.17$ & $0.09|0.11|0.17|0.17$ & $0.09|0.11|0.17|0.17$ & $0.12|0.15|0.19|0.21$ & $0.17|0.21|0.23|0.27$ \\
    CFPA-128 & $0.11|0.13|0.17|0.18$ & $0.11|0.13|0.17|0.18$ & $0.11|0.13|0.17|0.18$ & $0.13|0.16|0.18|0.20$ & $0.18|0.20|0.21|0.25$ \\
    DCFPA-32 & $0.09|0.10|0.15|0.16$ & $0.09|0.11|0.14|0.16$ & $0.09|0.11|0.14|0.16$ & $0.09|0.11|0.14|0.16$ & $0.09|0.11|0.15|0.16$ \\
    DCFPA-64 & $0.09|0.10|0.13|0.15$ & $0.09|0.10|0.13|0.15$ & $0.09|0.10|0.13|0.15$ & $0.09|0.10|0.13|0.15$ & $0.09|0.10|0.13|0.15$ \\
    DCFPA-128 & $0.08|0.09|0.12|0.13$ & $0.08|0.09|0.11|0.13$ & $0.08|0.09|0.11|0.13$ & $0.08|0.09|0.12|0.13$ & $0.08|0.09|0.11|0.13$ \\
    \hline
\end{tabular}
\label{tbl_PersonWOMaj}
\end{table}

\vspace{-2em}
\paragraph*{S4 Table.~} 
\label{lbl:S4_Table}
{\bf ~Person identification results without majority voting for MPIIPrivacEye dataset.}
\begin{table}[!ht]
\centering
\caption{
Person identification classification accuracies in the MPIIPrivacEye dataset using differentially private eye movement features without majority voting.}
\begin{tabular}{cccccc}
    & \multicolumn{5}{c}{Person identification classification accuracies (k-NN$|$SVM$|$DT$|$RF)} \\
    \hline
   Method & $\epsilon = 0.48$ & $\epsilon = 2.4$ & $\epsilon = 4.8$ & $\epsilon = 24$ & $\epsilon = 48$ \\
    \hline
    FPA & $1|1|0.99|1$ & $1|1|0.99|1$ & $1|1|0.99|1$ & $1|1|0.97|1$ & $1|1|0.94|1$\\
    CFPA-32 & $0.06|0.07|0.06|0.06$ & $0.06|0.07|0.06|0.06$ & $0.06|0.07|0.06|0.06$ & $0.07|0.09|0.07|0.07$ & $0.08|0.10|0.08|0.08$ \\
    CFPA-64 & $0.06|0.07|0.06|0.06$ & $0.06|0.07|0.06|0.06$ & $0.06|0.07|0.06|0.06$ & $0.07|0.09|0.07|0.07$ & $0.08|0.10|0.07|0.08$ \\
    CFPA-128 & $0.06|0.08|0.06|0.07$ & $0.06|0.08|0.06|0.07$ & $0.06|0.08|0.06|0.07$ & $0.07|0.09|0.07|0.08$ & $0.08|0.11|0.07|0.08$ \\
    DCFPA-32 & $0.06|0.07|0.06|0.06$ & $0.06|0.07|0.06|0.06$ & $0.06|0.07|0.06|0.06$ & $0.06|0.07|0.06|0.06$ & $0.06|0.07|0.06|0.06$ \\
    DCFPA-64 & $0.06|0.06|0.06|0.06$ & $0.06|0.06|0.06|0.06$ & $0.06|0.06|0.06|0.06$ & $0.06|0.06|0.06|0.06$ & $0.06|0.07|0.06|0.06$ \\
    DCFPA-128 & $0.06|0.06|0.06|0.06$ & $0.06|0.06|0.06|0.06$ & $0.06|0.06|0.06|0.06$ & $0.06|0.06|0.06|0.06$ & $0.06|0.06|0.06|0.06$ \\
    \hline
\end{tabular}
\label{tblPersonIdsMPIIPrivacEyeWOMaj}
\end{table}
\end{landscape}
\newpage

\subsection*{Acknowledgments}
O. G\"unl\"u thanks Ravi Tandon for his useful suggestions. E. Bozkir thanks Martin Pawelczyk and Mete Akg\"un for useful discussions.

\subsection*{Author Contributions}
\textbf{Conceptualization:} Efe Bozkir, Onur Günlü, Wolfgang Fuhl, Rafael F. Schaefer, Enkelejda Kasneci.\\
\textbf{Data curation:} Efe Bozkir, Onur Günlü.\\
\textbf{Formal analysis:} Efe Bozkir, Onur Günlü.\\
\textbf{Investigation:} Efe Bozkir, Onur Günlü.\\
\textbf{Methodology:} Efe Bozkir, Onur Günlü.\\
\textbf{Software:} Efe Bozkir, Onur Günlü.\\
\textbf{Supervision:} Efe Bozkir, Onur Günlü, Wolfgang Fuhl, Rafael F. Schaefer, Enkelejda Kasneci.\\
\textbf{Validation:} Efe Bozkir, Onur Günlü.\\
\textbf{Visualization:} Efe Bozkir, Onur Günlü.\\
\textbf{Writing -- original draft:} Efe Bozkir, Onur Günlü.\\
\textbf{Writing -- review \& editing:} Efe Bozkir, Onur Günlü, Wolfgang Fuhl, Rafael F. Schaefer, Enkelejda Kasneci.

\subsection*{Peer Review History}
PLOS recognizes the benefits of transparency in the peer review process; therefore, we enable the publication of all of the content of peer review and author responses alongside final, published articles. The editorial history of this article is available here: \url{https://doi.org/10.1371/journal.pone.0255979}.

\subsection*{Data Availability Statement}
Relevant data files are provided via following url: \url{https://atreus.informatik.uni-tuebingen.de/bozkir/dp_eye_tracking}.

\subsection*{Funding}
O. G\"unl\"u and R. F. Schaefer are supported by the German Federal Ministry of Education and Research (BMBF) within the national initiative for ``Post Shannon Communication (NewCom)'' under the Grant 16KIS1004. We acknowledge support by Open Access Publishing Fund of University of T\"ubingen. The funders had no role in study design, data collection and analysis, decision to publish, or preparation of the manuscript.

\subsection*{Competing Interests}
The authors have declared that no competing interests exist.

\newpage

\section[Privacy Preserving Gaze Estimation using Synthetic Images via a Randomized Encoding Based Framework]{Privacy Preserving Gaze Estimation using Synthetic Images via a Randomized Encoding Based Framework}
\label{appendix:B2}

\subsection{Abstract}
Eye tracking is handled as one of the key technologies for applications that assess and evaluate human attention, behavior, and biometrics, especially using gaze, pupillary, and blink behaviors. One of the challenges with regard to the social acceptance of eye tracking technology is however the preserving of sensitive and personal information. To tackle this challenge, we employ a privacy-preserving framework based on randomized encoding to train a Support Vector Regression model using synthetic eye images privately to estimate the human gaze. During the computation, none of the parties learn about the data or the result that any other party has. Furthermore, the party that trains the model cannot reconstruct pupil, blinks or visual scanpath. The experimental results show that our privacy-preserving framework is capable of working in real-time, with the same accuracy as compared to non-private version and could be extended to other eye tracking related problems.

\subsection{Introduction}
Recent advances in the fields of Head-Mounted-Display (HMD) technology, computer graphics, augmented reality (AR), and eye tracking enabled numerous novel applications. One of the most natural and non-intrusive ways of interaction with \acs{HMD}s or smart glasses is achieved by gaze-aware interfaces using eye tracking. However, it is possible to derive a lot of sensitive and personal information from eye tracking data such as intentions, behaviors, or fatigue since eyes are not fully controlled in a conscious way.

It has been shown that cognitive load \cite{Chen2014UsingTP,Appel:2018}, visual attention \cite{bozkir_vr_attention_et}, stress \cite{kubler2014stress}, task identification \cite{Borji2014}, skill level assessment and expertise \cite{Liu2009,eivazi2017optimal,castner2018scanpath}, human activities \cite{Steil:2015:DEH:2750858.2807520,braunagel2017online}, biometric information and authentication \cite{Kinnunen:2010:TTP:1743666.1743712,Komogortsev2010,6712725,Zhang:2018:CAU:3178157.3161410,Abdrabou:2019:JGW:3314111.3319837}, or personality traits \cite{Berkovsky:2019:DPT:3290605.3300451} can be obtained using eye tracking data. Since highly sensitive information can be derived from eye tracking data, it is not surprising that \acs{HMD}s or smart glasses have not been adopted by large communities yet. According to a recent survey \cite{steil_diff_privacy}, people agree to share their eye tracking data only when it is co-owned by a governmental health-agency or is used for research purposes. This indicates that people are hesitant about sharing their eye tracking data in commercial applications. Therefore, there is a likelihood that larger communities could adopt \acs{HMD}s or smart glasses if privacy-preserving techniques are applied in the eye tracking applications. The reasons why privacy preserving schemes are needed for eye tracking are discussed in \cite{Liebling2014} extensively. However, until now, there are not many studies in privacy-preserving eye tracking. Recently, a method to detect privacy sensitive everyday situations \cite{Steil:2019:PPH:3314111.3319913}, an approach to degrade iris authentication while keeping the gaze tracking utility in an acceptable accuracy \cite{John:2019:EDI:3314111.3319816}, and differential privacy based techniques to protect personal information on heatmaps and eye movements \cite{Liu2019,steil_diff_privacy} are introduced. While differential privacy can be applied to eye tracking data for various tasks, it introduces additional noise on the data which causes decrease in the utility \cite{Liu2019,steil_diff_privacy}, and it might lead to less accurate results in computer vision tasks, such as gaze estimation or activity recognition.

In light of the above, function-specific privacy models are required. In this work, we focus on the gaze estimation problem as a proof-of-concept by using synthetic data including eye landmarks and ground truth gaze vectors. However, the same privacy-preserving approach can be extended to any feature-based, eye tracking problem such as intention, fatigue, or activity detection, in \acs{HMD} or unconstrained setups due to the demonstrated real-time working capabilities. In our study, the gaze estimation task is solved by using Support Vector Regression (SVR) models in a privacy-preserving manner by computing the dot product of eye landmark vectors to obtain the kernel matrix of the \acs{SVR} for a scenario, where two parties have the eye landmark data, each of which we call {\it input-party}, and one {\it function-party} that trains a prediction model on the data of the input-parties. This scenario is relevant when the input-parties use eye tracking data to improve the accuracy of their models and do not share the data due to the privacy concerns. To this end, we utilize a framework employing randomized encoding \cite{unal2019framework}. In the computation, neither the eye images nor the extracted features are revealed to the function-party directly. Furthermore, the input-parties do not infer the raw eye tracking data or result of the computation. Eye images that are used for training and testing are rendered using UnityEyes \cite{wood2016_etra} synthetically and 36 landmark-based features \cite{Park2018} are used. To the best of our knowledge, this is the first work that applies a privacy-preserving scheme based on function-specific privacy models on an eye tracking problem.

\subsection{Threat Model}
We assume that the input-parties are semi-honest (honest but curious) that are not allowed to deviate from the protocol description while they try to infer some valuable information about other parties' private inputs using their views of the protocol execution. We also assume that the function-party is malicious and the input-parties and the function-party do not collude. 

\subsection{Methodology}
In this section, we discuss the data generation, randomized encoding, and privacy-preserving gaze estimation framework.

\subsubsection{Data Generation}
To train and evaluate the gaze estimator, we generate eye images and gaze vectors. As our work is a proof-of-concept and requires high amount of data, synthetic images from UnityEyes \cite{wood2016_etra}, which is based on the Unity3D, are used. \textit{Camera parameters} and \textit{Eye parameters} are chosen as $(0,0,0,0)$ (fixed camera) and $(0,0,30,30)$ (eyeball pose range parameters in degrees), respectively. $20,000$ images are rendered in \textit{Fantastic} quality setting and $512 \times 384$ screen resolution. Then, processing and normalization pipeline from \cite{Park2018} is employed. In the end, we obtain $128 \times 96$ sized eye images, 18 eye landmarks including eight iris edge, eight eyelid, one iris center, and one iris-center-eyeball-center vector normalized according to Euclidean distance between eye corners, and gaze vectors using pitch and yaw angles. Final feature vectors consist of $36$ elements. Figure~\ref{fig:eyeImages_ETRA20} shows an example illustration.

\begin{figure}[ht]
  \centering
   \subfigure[Landmarks.]{{\includegraphics[height = 3.75cm]{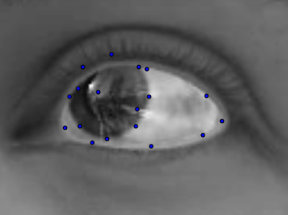}}}
   \qquad
   \qquad
   \qquad
   \subfigure[Gaze.]{{\includegraphics[height = 3.75cm]{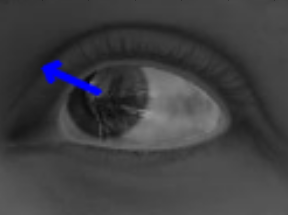} }}%
  \caption{Eye landmarks and gaze on a synthetic image.}
  \label{fig:eyeImages_ETRA20}
\end{figure}

\subsubsection{Randomized Encoding}
The utilized framework employs randomized encoding (RE) \cite{applebaum2006cryptography,applebaum2006computationally} to compute the  dot product of the landmark vectors. The dot product is needed to compute kernel matrix of the \acs{SVR} which is later used for training the gaze estimator and validation of the framework.

In the randomized encoding, the computation of a function $f(x)$ is performed by a randomized function $\hat{f}(x;r)$ where $x$ is the input value, which corresponds to eye landmarks in our setup, and $r$ is the random value. The idea is to encode the original function by using random value(s) such that the combination of the components of the encoding reveals only the output of the original function. In the framework, the computation of the dot product is accomplished by utilizing the decomposable and affine randomized encoding (DARE) of addition and multiplication~\cite{applebaum2017garbled}. The encoding of multiplication is as follows.

\begin{definition}[Perfect \acs{RE} for Multiplication \cite{applebaum2017garbled}]
\label{def:remul}
A multiplication function is defined as $f_m(x_1,x_2)=x_1 \cdot x_2$ over a ring $\mathsf{R}$. One can perfectly encode the $f_m$ by employing the \acs{DARE} $\hat{f_m}(x_1,x_2;r_1,r_2,r_3)$:
\begin{equation*}
\begin{aligned}
    \hat{f_m}(x_1,x_2;r_1,r_2,r_3) = ( & x_1+r_1, x_2+r_2, r_2x_1+r_3, r_1x_2+r_1r_2-r_3),
\end{aligned}
\end{equation*}
where $r_1,r_2$ and $r_3$ are uniformly chosen random values. The recovery of $f_m(x_1,x_2)$ can be accomplished by computing $c_1 \cdot c_2 - c_3 - c_4$ where $c_1=x_1+r_1$, $c_2=x_2+r_2$, $c_3=r_2x_1+r_3$ and $c_4=r_1x_2+r_1r_2-r_3$. The simulation of $\hat{f_m}$ can be done perfectly by the simulator $\mathsf{Sim}(y;a_1,a_2,a_3) := (a_1,a_2,a_3,a_1a_2-y-a_3)$ where $a_1, a_2$ $a_3$ are random values.
\end{definition}

\subsubsection{Framework}
To perform the private gaze estimation task in our scenario, we inspire from the framework as in \cite{unal2019framework} due to its efficiency compared to other approaches in the literature. The framework is proposed to compute the addition or multiplication of the input values of two input-parties in the function-party by utilizing randomized encoding. We utilize the multiplication operation over the eye landmark vectors to compute the dot product of these vectors to obtain kernel matrix of the \acs{SVR} in a privacy-preserving way.

\begin{figure}[ht]
  \centering
   \includegraphics[width = \linewidth]{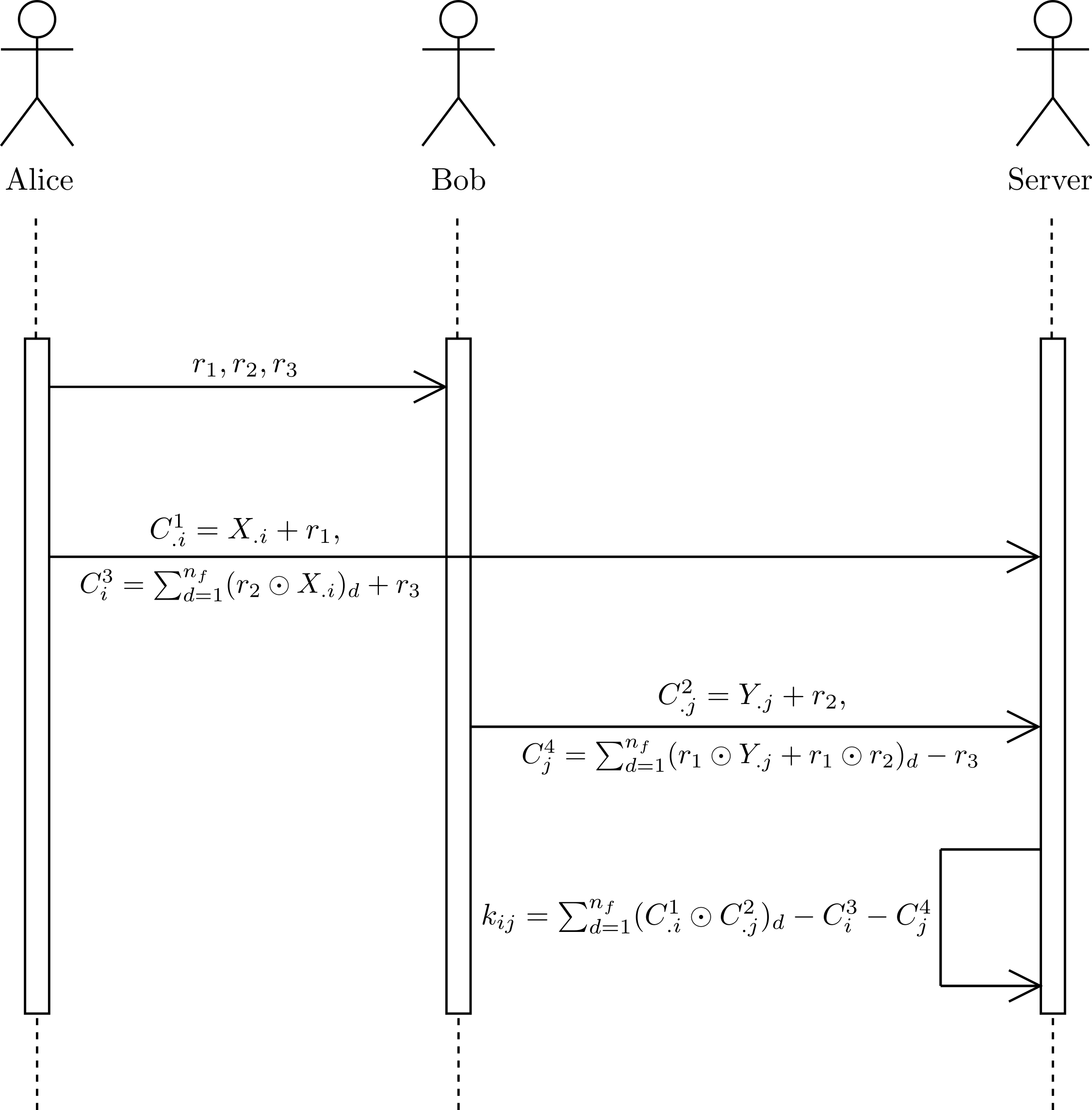}%
  \caption{Overall protocol execution.}
  \label{fig:FlowOfArchitecture_ETRA20}
\end{figure}

We have two input-parties as Alice and Bob, having the eye landmark data as $X \in \mathbb{R}^{n_f \times n_a}$ and $Y \in \mathbb{R}^{n_f \times n_b}$ where $n_a$ and $n_b$ represent the number of samples in Alice and Bob, respectively, and $n_f$ is the number of features. In addition to the input-parties, there exists a server that trains a model on the data of the input-parties. $A_{.j}$ for any matrix $A$ represents the $j$-th column of the corresponding matrix and ''$\odot$`` represents the element-wise multiplication of the vectors. As a first step, Alice creates a uniformly chosen random value $r_3 \in \mathbb{R}$ and two vectors $r_1, r_2 \in \mathbb{R}^{n_f}$ with uniformly chosen random values, which are used to encode the element-wise multiplication of the vectors and shares them with Bob. Afterwards, Bob computes $C^2_{.j}=Y_{.j}+r_2$ and $C^4_j= \sum_{d=1}^{n_f} (r_1 \odot Y_{.j} + r_1 \odot r_2)_d - r_3$,  $\forall j \in \{1,\cdots,n_b\}$ where $C^2 \in \mathbb{R}^{n_f \times n_b}$ and $C^4 \in \mathbb{R}^{n_b}$. Meanwhile, Alice computes $C^1_{.i}=X_{.i}+r_1$ and $C^3_{i}=\sum_{d=1}^{n_f}(r_2 \odot X_{.i})_d + r_3$, $\forall i \in \{1,\cdots,n_a\}$ where $C^1 \in \mathbb{R}^{n_f \times n_a}$ and $C^3 \in \mathbb{R}^{n_a}$. Input-parties send their share of the encoding to the server with the gram matrix of their samples, which is the dot product among their samples. Then, the server computes the dot product between samples of Alice and Bob to complete the missing part of the gram matrix of all samples. To achieve this, the server computes $k_{ij} = \sum_{d=1}^{n_f}(C^1_{.i} \odot C^2_{.j})_d - C^3_{i} - C^4_{j}$, $\forall i \in \{1,\cdots,n_a\}$ and $\forall j \in \{1,\cdots,n_b\}$ where $k_{ij}$ is the $i$-th row $j$-th column entry of the gram matrix between the samples of the input-parties. Once the server has all components of the gram matrix, it constructs the complete gram matrix $K$ by simply concatenating the parts of it. In our solution, Alice and Bob send to the server $(C^1, C^3)$ and $(C^2, C^4)$ tuples, respectively. These components reveal nothing but only the gram matrix of the samples after decoding. Furthermore, the input-parties shuffle their raw data before the computation to avoid the possibility of private information leakage such as the behavior of the person due to the nature of the visual sequence information. The overall flow is summarized in Figure~\ref{fig:FlowOfArchitecture_ETRA20}.

After having the complete gram matrix for all samples that Alice and Bob have, the server uses it as a kernel matrix as if it was computed by the linear kernel function on pooled data. Additionally, it is also possible to compute a kernel matrix as if it was computed by the polynomial or radial basis kernel function (RBF) by utilizing the resulting gram matrix. As an example, the calculation of \acs{RBF} from the gram matrix is as follows.

\begin{equation*} \label{rbfdotproduct}
K(x,y) = \exp\Bigg(-\dfrac{\norm{x \cdot x - 2 x \cdot y + y \cdot y}^2}{2 \sigma^2}\Bigg),
\end{equation*}

where ``$\cdot$'' represents the dot product of vectors, which is possible to obtain from the gram matrix, and $\sigma$ is the parameter utilized to adjust the similarity level. Once the desired kernel matrix is computed, it is possible to train an \acs{SVR} model by employing the computed kernel matrix to estimate the gaze. In the process of the computation of the dot product, the amount of data transferred among parties is $(n_fn_a + n_fn_b + n_a + n_b + 2n_f) \times d$ bytes where $d$ is the size of one data unit.

\subsection{Security Analysis}
A semi-honest adversary who corrupts any of the input-parties cannot learn anything about the private inputs of the other input-party. During the protocol execution, two vectors of random values and a single random value are sent from Alice to Bob. The views of the input-parties consist only of vectors with random values. Using these random values, it is not possible for one party to infer something about the other party's private inputs~\cite{unal2019framework}.

\begin{theorem}
A malicious adversary $\mathcal{A}$ corrupting the function-party learns nothing more than the result of gram matrix. It is computationally infeasible for $\mathcal{A}$ to infer any information about the input-parties’ data $X$ and $Y$ as long as Perfect \acs{RE} multiplication is semantically secure (Definition~\ref{def:remul}). 
\end{theorem}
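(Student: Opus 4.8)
The plan is to establish the statement in the real/ideal simulation paradigm: I will construct a probabilistic polynomial-time simulator $\mathcal{S}$ that, given only the pooled gram matrix $K$ of $[X \mid Y]$ — which the theorem designates as the admissible output — produces a transcript computationally indistinguishable from the real view of a corrupted function-party, so that anything $\mathcal{A}$ computes from its view it could already compute from $K$ alone. Two reductions come first. Because in the dot-product sub-protocol the function-party is a pure receiver — it obtains $(C^1,C^3)$ from Alice, $(C^2,C^4)$ from Bob, together with the in-the-clear within-party gram matrices $X^\top X$ and $Y^\top Y$, and then only performs local arithmetic — a \emph{malicious} function-party controls no protocol message whose choice could enlarge its view, so privacy against a malicious function-party reduces to privacy against a semi-honest one; and the non-collusion assumption lets me analyze the function-party's view in isolation from Alice's and Bob's. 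Moreover $X^\top X$ and $Y^\top Y$ are exactly the diagonal blocks of $K$, so $\mathcal{S}$ can output them verbatim and the entire difficulty is concentrated on simulating the cross term $(C^1,C^2,C^3,C^4)$ consistently with the off-diagonal block of $K$.

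\textbf{Building the simulator from the RE of multiplication.} Next I would observe that each cross entry $k_{ij}=\langle X_{.i},Y_{.j}\rangle=\sum_{d=1}^{n_f} X_{di}Y_{dj}$ is the composition of $n_f$ multiplications and $n_f-1$ additions, and that the quadruple $(C^1,C^2,C^3,C^4)$ the server receives is precisely a reindexing of the components of the decomposable affine randomized encoding (DARE) of this arithmetic expression evaluated on every pair $(i,j)$, with the masks $r_1,r_2$ and the scalar $r_3$ threaded through. Using the perfect \acs{RE} for multiplication of Definition~\ref{def:remul} together with the trivial affine \acs{RE} for addition, the plan is to invoke the substitution/composition lemma for DAREs of~\cite{applebaum2017garbled} to conclude that the concatenation of these encodings is itself an \acs{RE} of the map $(X,Y)\mapsto K$; by the defining property of a randomized encoding this hands me a simulator that depends on the inputs only through $K$. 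Concretely $\mathcal{S}$ samples the ``masked-input'' and ``cross-mask'' components ($C^1$, $C^2$, and the $C^3$-type sums) as random elements subject to the affine consistency relations dictated by the decomposable structure, and then fixes the remaining $C^4$-type components to the unique values that force the decoding identity $k_{ij}=\sum_d(C^1_{.i}\odot C^2_{.j})_d-C^3_i-C^4_j$ to reproduce the prescribed entries of $K$, mirroring the role of $\mathsf{Sim}(y;a_1,a_2,a_3)=(a_1,a_2,a_3,a_1a_2-y-a_3)$ in Definition~\ref{def:remul}.

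\textbf{Reduction to semantic security.} To make the dependence on the hypothesis explicit — and to upgrade ``perfect for one gate'' to ``computational for the whole circuit'' — I would run a hybrid argument over the $n_f n_a n_b$ multiplication instances: hybrid $H_t$ replaces the first $t$ of them by simulated components and keeps the rest real, so $H_0$ is the real function-party view and $H_{\mathrm{last}}$ is $\mathcal{S}$'s output. Any distinguisher $D$ with advantage $\delta$ between $H_0$ and $H_{\mathrm{last}}$ yields, by a union bound, a pair of adjacent hybrids distinguished with advantage $\delta/(n_f n_a n_b)$; an external semantic-security challenger for Perfect \acs{RE} multiplication that returns either $\hat f_m(x_1,x_2;\cdot)$ or $\mathsf{Sim}(x_1x_2;\cdot)$ is then embedded into the $(t+1)$-st instance while the reduction fills all other instances itself (it can: real instances from the known data, already-simulated ones from the partial outputs it has committed to), runs $D$, and echoes $D$'s bit — contradicting semantic security. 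This is exactly the claim that $\mathcal{A}$ learns nothing beyond $K$.

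\textbf{The main obstacle.} The hard part will be the \emph{shared} randomness. The single triple $(r_1,r_2,r_3)$ masks every inner product, and $C^1$ depends only on the column index while $C^2$ depends only on the row index, so the real transcript carries a rigid correlation structure — for instance $C^1_{.i}-C^1_{.i'}=X_{.i}-X_{.i'}$ and $C^2_{.j}-C^2_{.j'}=Y_{.j}-Y_{.j'}$ hold identically. I must therefore verify carefully that this reuse pattern is still a legitimate DARE — that the composition lemma of~\cite{applebaum2017garbled} applies with masks shared across the decomposed branches and across all gate instances, and that the consistency relations it induces are precisely those already determined by $K$ (and the public diagonal blocks) and nothing more. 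If the off-the-shelf composition statement does not cover shared masks verbatim, the fallback is to prove the simulation directly via the entry-wise hybrid above, carrying the shared mask in the simulator's internal state and showing inductively that the joint distribution of the components revealed so far is reproducible from $K$ alone; pinning down that invariant — rather than the routine per-gate indistinguishability, and rather than the reductions of the first paragraph — is where the substantive work lies, and it may additionally require fixing the arithmetic over a sufficiently large finite ring so that the masks act as genuine one-time pads.
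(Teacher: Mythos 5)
Your plan stands or falls on the invariant you flag in your final paragraph, and for the protocol as specified that invariant is false. Alice draws a \emph{single} pair of mask vectors $r_1,r_2\in\mathbb{R}^{n_f}$ and a single scalar $r_3$ and reuses them for every sample: $C^1_{.i}=X_{.i}+r_1$ for all $i$ and $C^2_{.j}=Y_{.j}+r_2$ for all $j$. The server's view therefore determines the within-party differences $C^1_{.i}-C^1_{.i'}=X_{.i}-X_{.i'}$ and $C^2_{.j}-C^2_{.j'}=Y_{.j}-Y_{.j'}$ exactly, and these are \emph{not} functions of the gram matrix $K$: $K$ fixes the pooled configuration only up to an orthogonal transformation, whereas the differences pin down actual coordinates. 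Worse, combining the differences with the diagonal block $X^\top X$ generically recovers $X$ outright: writing $X_{.i}=X_{.1}+d_i$ with $d_i$ known, the norms from $K$ give the linear equations $\langle X_{.1},d_i\rangle=\bigl(k_{ii}-k_{11}-\lVert d_i\rVert^2\bigr)/2$, which determine $X_{.1}$ once the $d_i$ span $\mathbb{R}^{n_f}$ (with $n_f=36$ and thousands of samples this holds generically). Hence no simulator whose only input is $K$ can reproduce the real view, and your fallback hybrid also breaks down: semantic security of the multiplication \acs{RE} concerns encodings with fresh, independent randomness, so the reduction cannot embed an external challenge into one multiplication instance while consistently filling in the remaining instances, which share the same unknown $(r_1,r_2,r_3)$ — it would need either the shared masks or the data to do so. To complete a proof you must either analyze the protocol variant with fresh randomness per sample (or per pair), in which case your per-entry simulator mirroring $\mathsf{Sim}(y;a_1,a_2,a_3)$ and the hybrid go through essentially as you describe, or weaken the theorem so the admissible leakage explicitly includes the within-party difference vectors.

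For comparison, the paper's own proof is far shorter and does not resolve this either: it first checks, via the concatenation lemma for DAREs (splitting $r_3$ into $n_f$ per-coordinate shares), that the concatenated multiplication encodings correctly encode a single dot product, and then exhibits a simulator that outputs four fresh uniformly random vectors, asserting statistical closeness to the real messages — the simulator is not given $K$, no hybrid over instances is performed, and the reuse of masks across samples is never addressed; in effect the masks are treated as one-time pads, which is precisely the step your ``main obstacle'' paragraph correctly refuses to take on faith. Your preliminary reductions (malicious-to-semi-honest because the function-party is a pure receiver, and isolating its view via non-collusion) and your simulator construction are sound and more precise than the paper's treatment; the unresolved shared-randomness issue is the one substantive gap, and it is a gap in the protocol-level claim itself rather than a mere omission in your write-up.
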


\begin{proof}
We first show the correctness of our solution. We assume $n_f = 2$ and encode the function $f_d(x,y) = x_1y_1+x_2y_2$ over some finite ring $\mathsf{R}$ by the following \acs{DARE}:

\begin{equation*}
\begin{aligned}
    \hat{f_d}(x,y;r) = ( & x_1+r_{11}, y_1+r_{12}, x_2+r_{21}, y_2+r_{22}, \\
    & r_{12}x_1+ r_{22}x_2+r_3, \\ 
    & r_{11}y_1+r_{11}r_{12} + r_{21}y_2+r_{21}r_{22}-r_3)
\end{aligned}
\end{equation*}

Given an encoding $(c_1,c_2,c_3,c_4,c_5,c_6)$, $f_d(x,y)$ is recovered by computing $c_1c_2+c_2c_4+c_5+c_6$.

By the concatenation lemma in \cite{applebaum2017garbled}, we can divide $c_5$ and $c_6$ into $n_f$ shares by using $n_f$ random values instead of a single $r_3$ value.

\begin{equation*}
\begin{aligned}
    \hat{f_d}(x,y;r) = ( & x_1+r_{11}, y_1+r_{12}, r_{12}x_1 + r_{13}, r_{11}y_1+r_{11}r_{12} - r_{13}, \\
    & x_2+r_{21}, y_2+r_{22}, r_{22}x_2+r_{23}, r_{21}y_2+r_{21}r_{22}-r_{23})
\end{aligned}
\end{equation*}

Given an encoding $(c_1,c_2,c_3,c_4,c_5,c_6,c_7,c_8)$,

\begin{equation*}
\begin{aligned}
    \hat{f_m}(x_1,y_1;r) = (c_1,c_2,c_3,c_4)\\
    \hat{f_m}(x_2,y_2;r) = (c_5,c_6,c_7,c_8)
\end{aligned}
\end{equation*}

By the concatenation lemma in \cite{applebaum2017garbled}, $\hat{f_d}(x,y;r) = $ $(\hat{f_m}(x_1,y_1;r), \hat{f_m}(x_2,y_2;r))$ perfectly encodes the function $f_d(x,y)$ if Perfect \acs{RE} multiplication is semantically secure. 

After showing the correctness, we analyze the security with the simulation paradigm. In the simulation paradigm, there is a simulator who generates the view of a party in the execution. A party’s input and output must be given to the simulator to generate its view. Thus, security is formalized by saying that a party’s view can be simulatable given its input and output and the parties learn nothing more than what they can derive from their input and prescribed output.

The function-party $\mathcal{F}$ does not have any input and output. A simulator $\mathcal{S}$ can generate the views of incoming messages received by $\mathcal{F}$. $\mathcal{S}$ creates four vectors $C^{1'}$,$C^{2'}$,$C^{3'}$,$C^{4'}$ with uniformly distributed random values using a pseudorandom number generator $G'$. Finally, $\mathcal{S}$ outputs $\{C^{1'}$,$C^{2'}$,$C^{3'}$,$C^{4'}\}$. 

In the execution of the protocol $\pi$, $\mathcal{A}$ receives four messages which are masked with uniformly random values generated using a pseudorandom number generator $G$. The view of $\mathcal{A}$ includes $\{C^1$,$C^2$,$C^3$,$C^4\}$. The distribution over $G$ is statistically close to the distribution over $G'$. This implies that

\[\{\mathcal{S}(C^{1'},C^{2'},C^{3'},C^{4'})\} \overset{c}{\equiv} \{view^\pi_{\mathcal{A}}(C^1,C^2,C^3,C^4)\}\]

\end{proof}

\subsection{Results}
To demonstrate the performance, we conduct experiments on a PC equipped with Intel Core i7-7500U with 2.70 GHz processor and 16 GB memory RAM. We employ varying sizes of eye landmark data, that are $5,000$, $10,000$ and $20,000$ samples of which one-fifth is the test data and we split the data between the input-parties equally. The framework allows us to optimize the parameters of the model in the server without further communicating with the input-parties. Thanks to this, we utilize $5$-fold cross-validation to optimize the parameters, which are the similarity adjustment parameter $\gamma \in \{2^{-3},2^{-2},\cdots,2^{4}\}$ of the Gaussian \acs{RBF} kernel, the misclassification penalty parameter $C \in \{2^{-3},2^{-2},\cdots,2^{3}\}$, and the tolerance parameter $\epsilon \in \{0.005, 0.01, 0.05, 0.1, 0.5, 1\}$ of \acs{SVR}. After parameter optimization, we repeat the experiment on varying sizes of eye landmark data with the optimal parameter set $10$ times to assess the execution time. To evaluate the gaze estimation results, we employ mean angular error in the same way as in \cite{Park2018}. Table~\ref{tab:mae_ETRA20} demonstrates the relationship between the dataset size and the resulting mean angular error. Since no additional noise is introduced during the computation of the kernel matrix, the results from our privacy-preserving framework are the same with the non-private ones. The mean angular errors are lower compared to the state-of-the-art gaze estimation techniques since we use synthetic data and fixed camera position during image rendering.

\begin{table}[ht]
    \caption{The mean angular errors for varying dataset sizes.}
    \label{tab:mae_ETRA20}
    \centering
    \begin{tabular}{ c c }
     \toprule
     \makecell{\# of samples} & \makecell{Mean angular error} \\
     \midrule
     5k & 0.21 \\
     10k & 0.18 \\
     20k & 0.17 \\
     \bottomrule
    \end{tabular}
\end{table}

\begin{figure}[ht]
    \centering
    \subfigure[Execution time of Alice.]{{\includegraphics[width=0.4\linewidth]{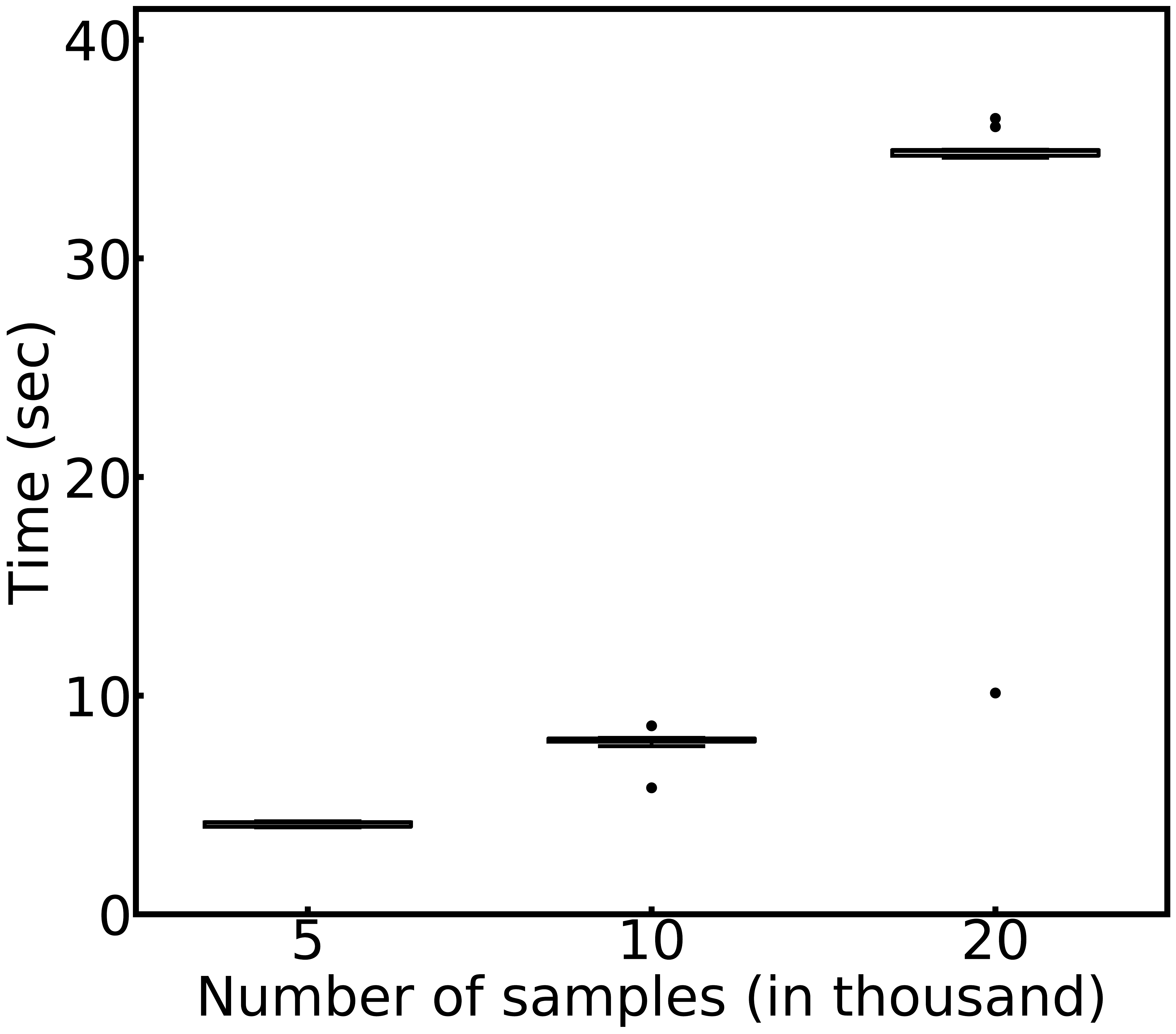}}}
    \hspace{20pt}
    \subfigure[Execution time of Bob.]{{\includegraphics[width=0.4\linewidth]{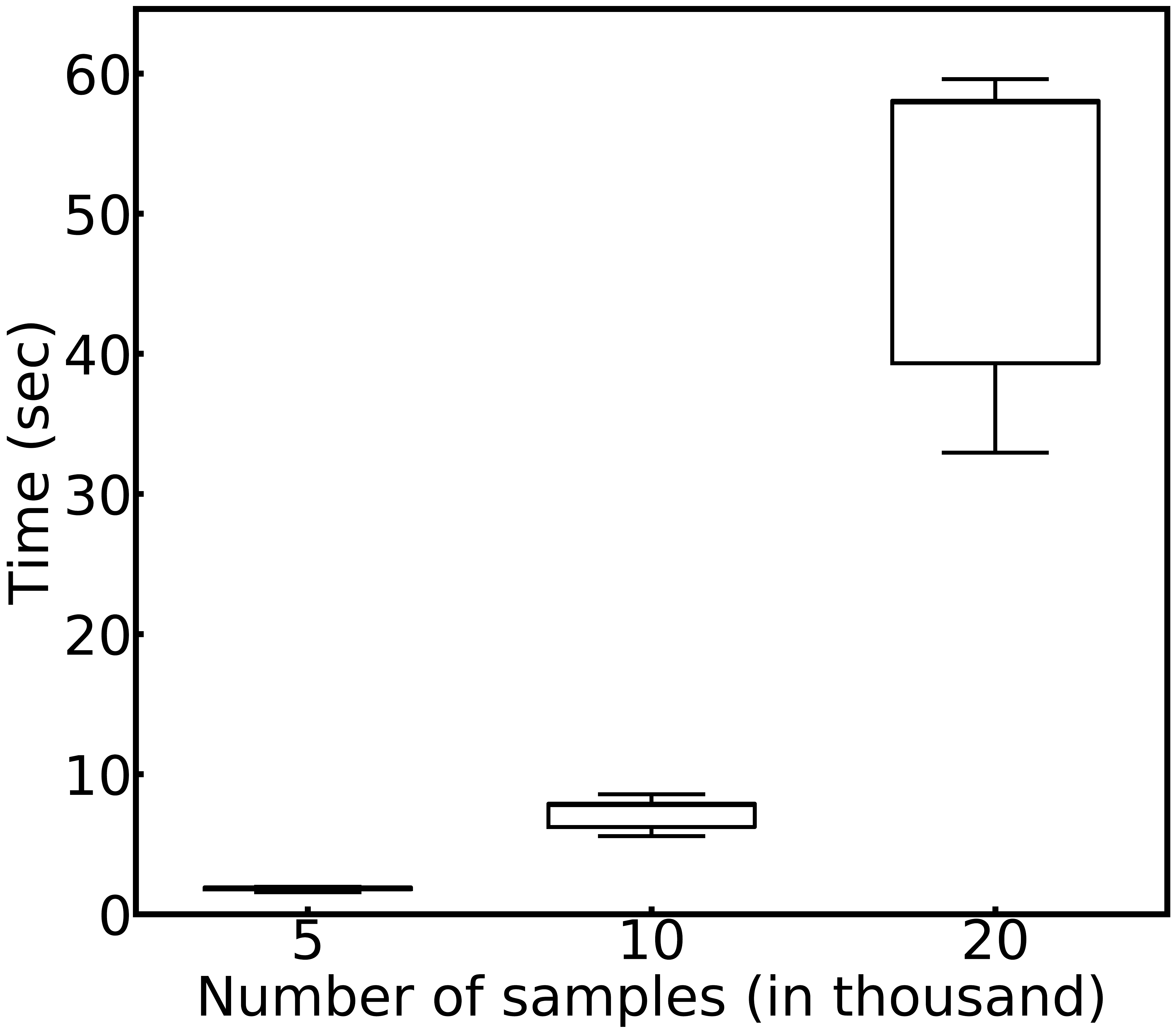}}}
    \hspace{20pt}
    \subfigure[Execution time of Server.]{{\includegraphics[width=0.4\linewidth]{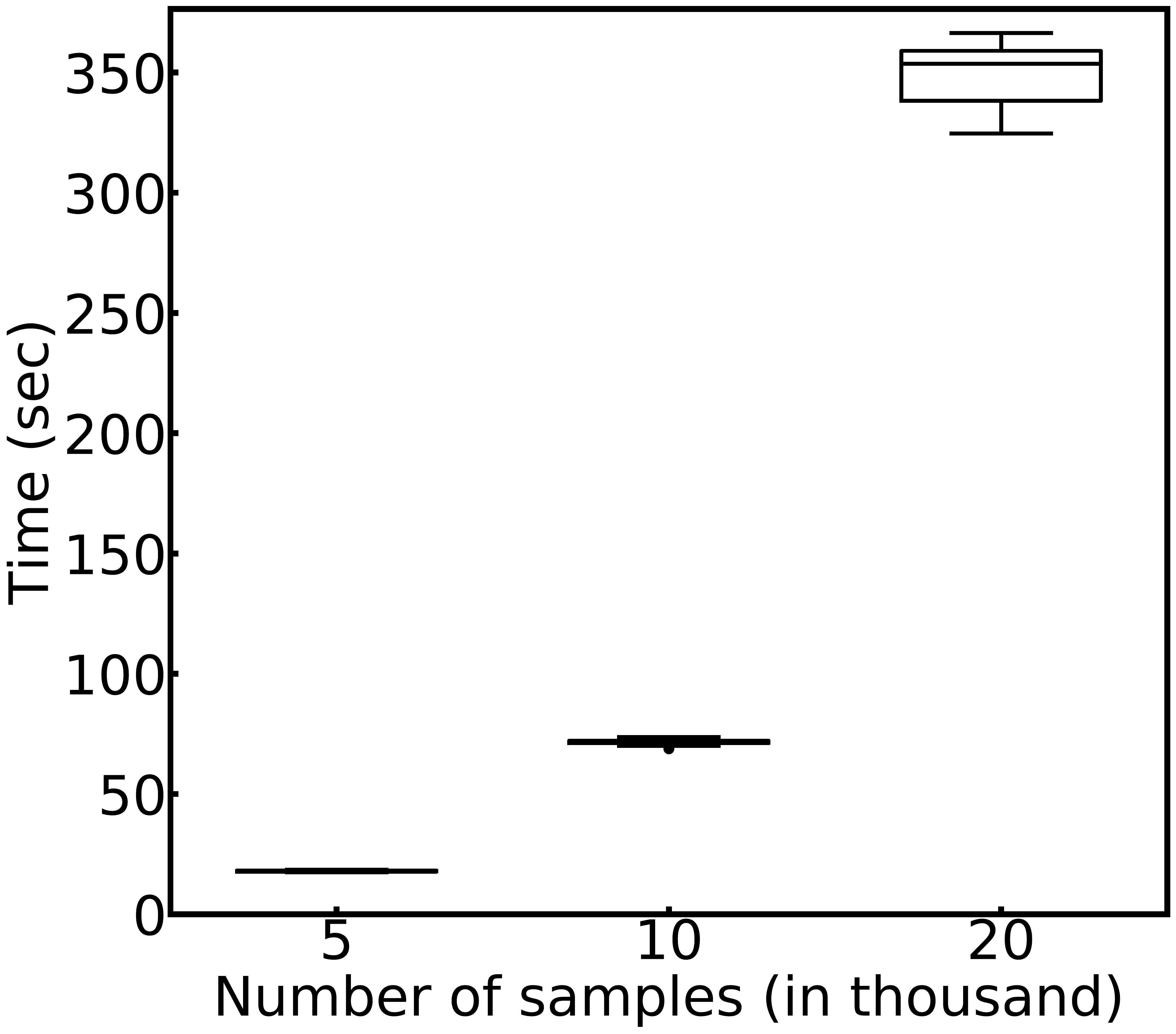}}}
    \hspace{20pt}
    \subfigure[Prediction time of Server.]{{\includegraphics[width=0.4\linewidth]{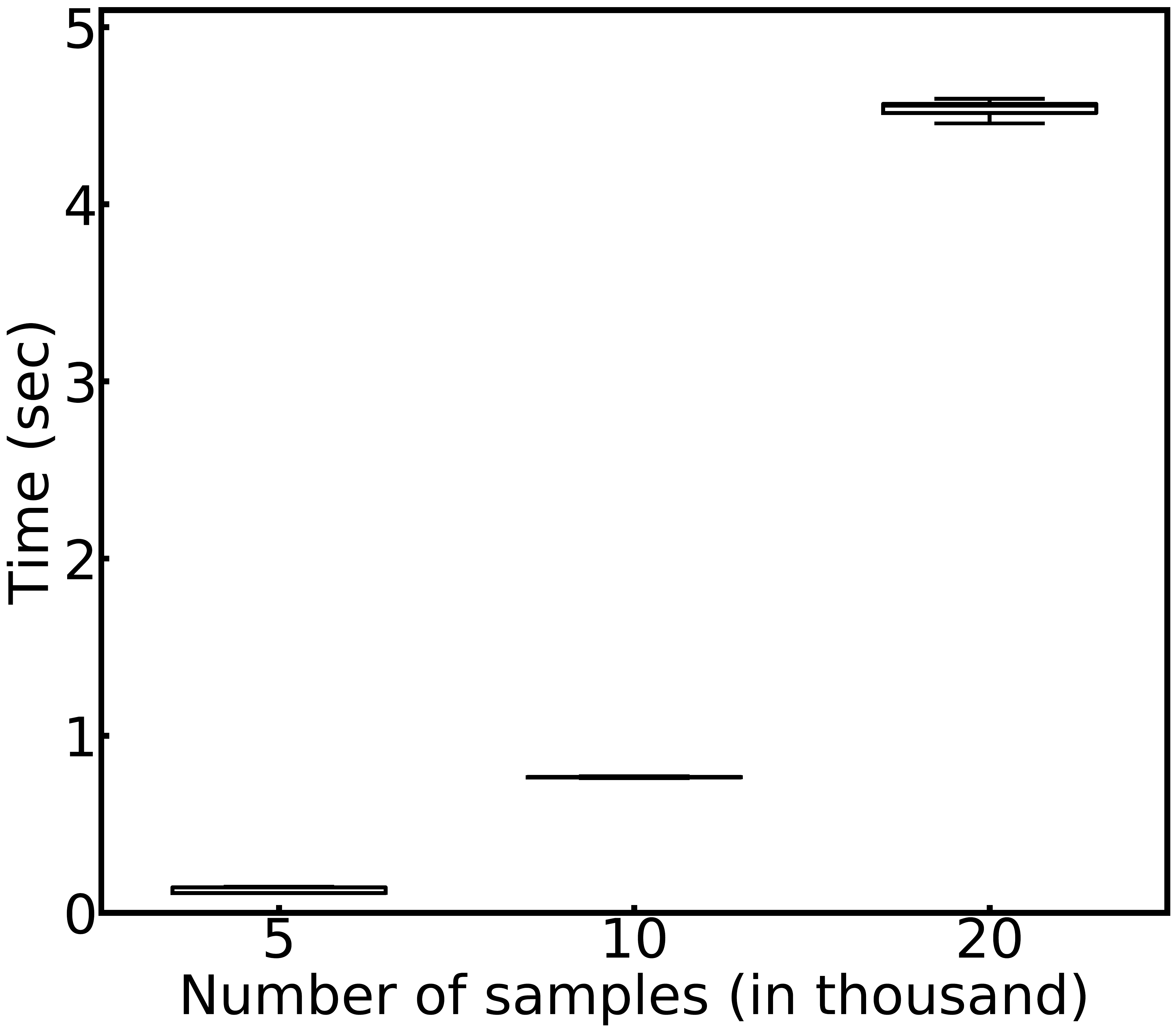}}}
    \caption{The execution time of (a) Alice, (b) Bob and (c) the server are given. We also demonstrate (d) the time required for the prediction of the test samples, which are $20\%$ of the total number of samples in each case.}
    \label{fig:exe_time_ETRA20}
\end{figure}

The amount of time to train and test the models increases as the sample sizes increase since computation requirements get larger. The increment in the dataset size increases the communication cost among parties. The execution times of all parties for $10$ runs with the optimal parameters are shown in Figure~\ref{fig:exe_time_ETRA20}. We also demonstrate the amount of time to predict the test samples, which corresponds to one-fifth of the total number of samples to emphasize the real-time working capabilities. In the experiment with $20,000$ samples, for instance, we spend $\approx4.5$ seconds to predict $4,000$ test samples, which corresponds to $1.125$ ms per sample. When the current sampling frequencies of eye trackers are taken into consideration, it is possible to deploy and use the framework to estimate gaze if an optimized communication between the parties is established.

\subsection{Conclusion}
In this work, we utilized a framework based on randomized encoding to estimate human gaze in a privacy-preserving way and in real-time. Our solution can provide improved gaze estimation if input-parties want to use each other's data for different reasons such as to account for genetic structural differences in the eye region. None of the input-parties has the access to the eye landmark data of the others or the result of the computation in the function party, while the function-party cannot infer anything about the data of the input-parties. Temporal information of the visual scanpath, pupillary, or blinks cannot be reconstructed due to the shuffling of the data, and lack of sensory information and direct access to the eye landmarks. Our solution works in real-time, hence it could be deployed along with \acs{HMD}s for different use-cases and extended to similar eye tracking related problems if similar amount of features is used. To the best of our knowledge, this is the first work based on function-specific privacy models in the eye tracking domain. The number of parties is a limitation of our solution. Thus, as future work we will extend our work to a larger number of parties.

\chapter{Accessibility of Eye Tracking in VR in Daily Setups}
\label{appendix_C}

This chapter includes the following publication:
\vspace{1cm}
\begin{enumerate}
	\item\label{appendix_AIVRpaper_label} \textbf{Efe Bozkir}, Shahram Eivazi, Mete Akgün, and Enkelejda Kasneci. Eye tracking data collection protocol for VR for remotely located subjects using blockchain and smart contracts. In~\emph{2020 IEEE International Conference on Artificial Intelligence and Virtual Reality (AIVR) Work-in-progress papers}, New York, NY, USA, 2020. IEEE. doi: 10.1109/AIVR50618.2020.00083.
\end{enumerate}

\blfootnote{
\hspace{-14pt}{\scriptsize Publication is included with minor templating modifications. Definitive version is available via digital object identifier at the relevant venue. The publication is \textcopyright~2020 IEEE. Reprinted, with permission, from \ref{appendix_AIVRpaper_label}. In reference to IEEE copyrighted material which is used with permission in this thesis, the IEEE does not endorse any of University of Tübingen’s products or services. Internal or personal use of this material is permitted. If interested in reprinting/republishing IEEE copyrighted material for advertising or promotional purposes or for creating new collective works for resale or redistribution, please go to \url{http://www.ieee.org/publications_standards/publications/rights/rights_link.html} to learn how to obtain a License from RightsLink. If applicable, University Microfilms and/or ProQuest Library, or the Archives of Canada may supply single copies of the dissertation.}
}

\newpage

\section[Eye Tracking Data Collection Protocol for VR for Remotely Located Subjects using Blockchain and Smart Contracts]{Eye Tracking Data Collection Protocol for VR for Remotely Located Subjects using Blockchain and Smart Contracts}
\label{appendix:C1}

\subsection{Abstract}
Eye tracking data collection in the virtual reality context is typically carried out in laboratory settings, which usually limits the number of participants or consumes at least several months of research time. In addition, under laboratory settings, subjects may not behave naturally due to being recorded in an uncomfortable environment. In this work, we propose a proof-of-concept eye tracking data collection protocol and its implementation to collect eye tracking data from remotely located subjects, particularly for virtual reality using Ethereum blockchain and smart contracts. With the proposed protocol, data collectors can collect high quality eye tracking data from a large number of human subjects with heterogeneous socio-demographic characteristics. The quality and the amount of data can be helpful for various tasks in data-driven human-computer interaction and artificial intelligence.

\subsection{Introduction}
Over past decades, head-mounted display (HMD) technologies have taken advantage of innovations from imaging and eye tracking research to improve image quality and utility of user interfaces. To date, several consumer level \acs{HMD}s have integrated eye trackers, providing opportunity for researchers to collect eye movement data for user behavior analysis and data-driven interaction.

In the virtual reality (VR) context, it has been shown that eye tracking is helpful for assessing human attention~\cite{bozkir_vr_attention_et}, detecting human stress~\cite{stress_vr}, assessing cognitive load~\cite{bozkir2019person}, predicting human future gaze locations~\cite{8998375}, supporting evaluation and diagnosis of diseases~\cite{7829437}, motion sickness detection~\cite{8642906}, foveated rendering~\cite{Arabadzhiyska2017, 9005240}, continuous authentication~\cite{Zhang:2018:CAU:3178157.3161410}, gaze-based interaction~\cite{VRPursuits_interaction}, training~\cite{8448290}, and redirected walking~\cite{redirected_walking_steinicke}. Many of these tasks are data-driven and require a large quantity of eye tracking data which are usually collected in laboratory settings. Subjects are frequently compensated with some amount of money or gifts for their participation. Two drawbacks of these settings are the lack of heterogeneity in socio-demographic characteristics of data collected subjects and potential for unnatural behaviors of subjects due to the constraints of the laboratory settings. While \acs{VR} is a unique and controlled environment and requires dedicated hardware such as \acs{HMD}s, as personal usage of such devices increases, we foresee that it should be possible to collect data from remotely located subjects, i.e., at their homes. Especially in situations such as COVID-19, this possibility could help experimental works continue in a remote setting. Currently, for crowd-sourcing or similar purposes, platforms such as Amazon Mechanical Turk\footnote{\url{https://www.mturk.com/}} are used. While it is not possible to collect \acs{VR} data with such platforms, for other types of data collection significant compensations are paid to manage the remote subjects' work. In addition, these third-party platforms store and manage data. In fact, as eye tracking and movement data represent unique information about the subjects, the data manipulation possibility of the third parties should be prevented. Third parties should only act as a bridge between the data collector and the subjects in case there is no direct communication between the parties.

To overcome the disadvantages of the laboratory setting and enable remotely located subject participation in eye tracking experiments in the \acs{VR} context, we propose a blockchain-based protocol on the Ethereum blockchain using smart contracts, where we use the blockchain for validation of data integrity and smart contract for compensation management. For this study, we focus mainly on collecting eye tracking data in \acs{VR} environments as many modern \acs{HMD}s come with integrated eye trackers. This means that subjects do not need any additional effort to integrate any sensor into their setup. It is relatively easier to control environmental configurations in \acs{HMD}s when compared to other setups such as illumination and light-sources which may affect subject behaviors or eye movement patterns. However, the proposed protocol can also be used in similar setups as long as identical experiment configurations are guaranteed.

While the first prominent usage of the blockchains is Bitcoin~\cite{bitcoin_whitepaper} and most of the applications are in the financial domain, blockchains also draw attention of the human-computer interaction (HCI), eye tracking, and \acs{VR} communities. Opportunities and challenges for the \acs{HCI} and interaction design and the role of \acs{HCI} community were discussed in ~\cite{foth_blockchain_for_interaction_design} and ~\cite{making_sense_blockchain_hci}, respectively. An augmented reality (AR)-based cryptocurrency wallet was developed in~\cite{crypto_ar_wallet} to familiarize users with blockchain wallet services. In addition, GazeCoin is a cryptocurrency for \acs{VR}/\acs{AR} which is exchanged between content makers, advertisers, and the users~\cite{gaze_coin_white_paper}. Apart from the financial use-cases, due to their immutability blockchains are used as notary. Additionally, Ethereum platform brings the smart contract~\cite{Szabo_1997} concept to the blockchains~\cite{eth_white_paper}. One of the straightforward usages of smart contracts is escrow services. For the remote purchase of goods, buyer and seller parties use the smart contracts without trusting one another and a trusted centralized party during the escrow. The smart contracts that are deployed on the blockchains distribute the money to the parties once buyer and seller parties fulfill their obligations in the remote purchase. In our protocol, we treat recorded eye tracking data as digital good so that compensation distribution is done by the smart contracts. To assure that the recorded data are not altered by the subjects, the hash of the recorded data using white-box cryptography~\cite{whitebox_crypto_alex_biryukov} is stored in the blockchain, which enables the blockchain as a notary for data integrity. Our major contributions are as follows. 

\begin{itemize}
    \item A blockchain-based eye tracking data collection protocol for remotely located subjects that can be used for eye tracking experiments in \acs{VR}, which presents the opportunity to collect data from a various number of subjects.
    \item Delegation of mutual trust issues for compensation management and integrity of the recorded eye tracking data to smart contracts and blockchains, respectively.
    \item Elimination of the centralized third parties for compensation management, data collection and manipulation, which is optimal from a privacy perspective.
\end{itemize}

\subsection{Preliminary Definitions}
As our protocol consists of interdisciplinary work from different domains such as virtual reality, blockchains, and cryptography, we provide some definitions that are used throughout the paper.

\textbf{Blockchain~\cite{bitcoin_whitepaper}:} An immutable ledger that consists of a chain of blocks that keeps records of transactions, maintained by several machines in a peer-to-peer network. Each block consists of a timestamp, transaction data, and the cryptographic hash of the previous block. As each block consists of the cryptographic hash of the one prior, immutability is automatically preserved unless one party has the majority of the computational power.

\textbf{Ethereum~\cite{eth_white_paper}:} Public, open-source, blockchain-based, and smart contract supporting distributed platform. 

\textbf{Ether (ETH)~\cite{eth_white_paper}:} The cryptocurrency of the Ethereum platform.

\textbf{Smart Contract~\cite{eth_white_paper}:} A self-executing, irreversible, and transparent contract between buyer and seller, implemented in the code.

\textbf{White-box cryptography~\cite{wyseur_whitebox}:} ``Software protection technology which allows for the application of cryptographic operations without revealing any critical information such as secret keys.''

\subsection{Protocol}
In this section, we discuss our protocol and its flow, assumptions, and details of the implementation.

\subsubsection{Flow}
\label{subsection:flow}
Our proposed protocol consists of two parties as data collector and subjects. The data collector is responsible for providing the \acs{VR} application for eye tracking data collection and subjects are tasked with carrying out the experiment and providing the recorded eye tracking data. At the end of a valid experiment, subjects are compensated for their participation. Let us assume that each subject is compensated with $X$ unit of \acs{ETH} for the valid data recorded from an experiment session. A relevant amount can be set for compensation depending on the experiment.

Figure~\ref{fig:protocol_AIVR20} shows the overall flow and short descriptions of each step of the protocol. As the \textbf{step 1}, subjects fetch the application from the data collector and carry out the experiment. While the content of the stimuli changes depending on the use-case, the \acs{VR} application validates the eye tracking data quality at the end of each experimental session by using tracking rates or confidence intervals that are provided by the eye tracker. If the recorded eye tracking data are too noisy, subjects are not supposed to send the data to the data collector, where they are informed by the \acs{VR} application. This obligation forces the subjects to follow the instructions of the \acs{VR} experiment, such as eye tracker calibration, carefully while the data collector obtains better quality data in the end. After the validation success, the \acs{VR} application calculates the hash output of the recorded eye tracking data and saves it. Saving the hash output is required for assessing the data integrity; however, adversarial subjects can easily find out the hashing algorithm using the executable of the \acs{VR} application on their own devices. Therefore, we opt for a white-box~\cite{wyseur_whitebox,whitebox_crypto_alex_biryukov} paradigm for calculating the data hash. In the white-box paradigm, the adversary is supposed to have visibility of the inputs, outputs, and other intermediate steps. White-box cryptography achieves protection of confidential information such as secret keys while keeping the application semantically the same. Even if adversaries infer the hash function, due to the lack of secret key, it is not possible to generate a hash output for altered data. Consequently, subjects are obliged to behave honestly, where honest behavior means not altering the recorded data. In the end of the first step, once the recorded data is validated and hash value is saved, the subjects are informed by the \acs{VR} application that the recorded eye tracking data is reportable.

\begin{figure*}[!ht]
  \centering
   \includegraphics[width=\linewidth]{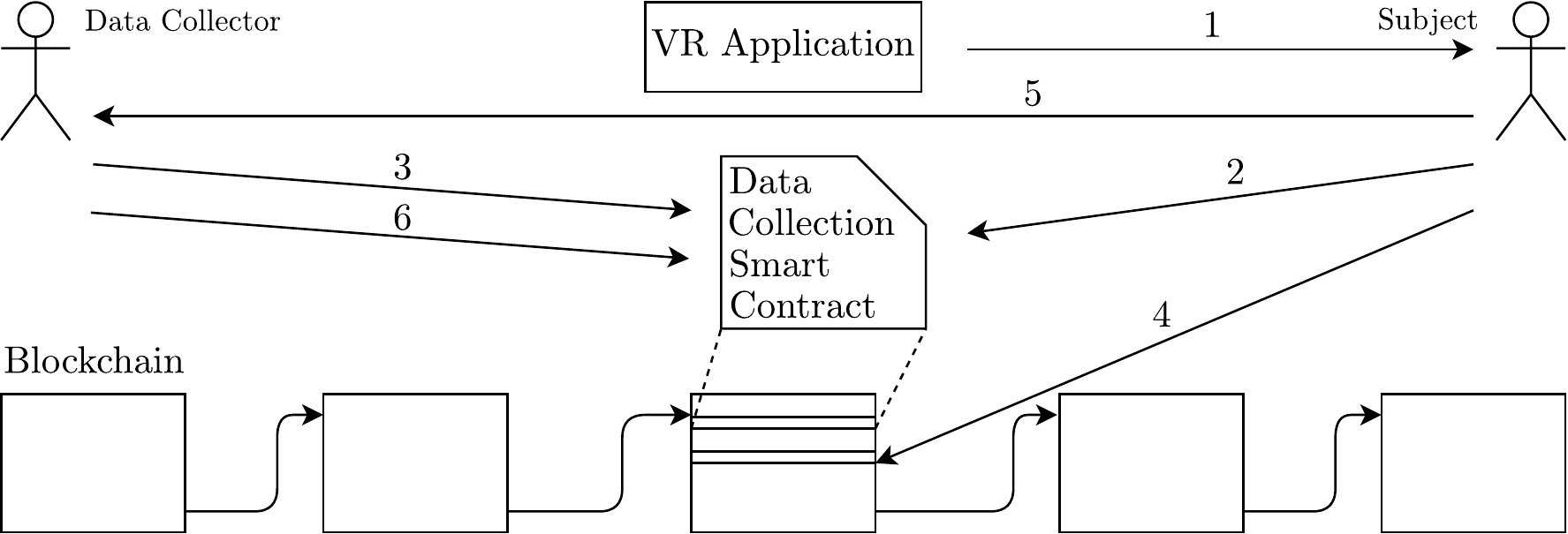}
  \caption{Blockchain-based protocol and its steps. ($1$) Subject fetches the application and carries out the experiment. ($2$) Subject initiates the smart contract. ($3$) Data collector confirms the contract creation and stakes. ($4$) Subject stores the recorded data hash in blockchain. ($5$) Subject transfers the recorded data to the data collector. ($6$) Data collector confirms the data collection.}
  \label{fig:protocol_AIVR20}%
\end{figure*}

As the \textbf{step 2}, the subjects initiate the smart contract and stake double the amount of compensation, which is $2X$ \acs{ETH} for our case. Staking double the amount of compensation that they will obtain from the smart contract forces subjects to act honestly; otherwise, they lose the amount that they stake. As the \textbf{step 3}, the data collector confirms the data collection and stakes the same amount as the subject, which is $2X$ \acs{ETH} to the smart contract. While the compensation is $X$ \acs{ETH} per experiment, the data collector is supposed to stake double the amount of compensation so that it also becomes an obligation to behave honestly. Otherwise, the doubled amount of compensation will be lost without obtaining the recorded data. As the \textbf{step 4}, the subjects store the hash output that is reported by the \acs{VR} application in the blockchain and, as the \textbf{step 5}, they send the recorded data along with the transaction hash of the transaction for storing the data hash in the blockchain to the data collector. If subjects try to alter the data, the hash in the blockchain and the altered data will not match and it will be discovered by the data collector. As the \textbf{step 6}, the data collector obtains the recorded eye tracking data and transaction hash of the data hash and checks whether or not the obtained data and the hash provided by the subjects overlap using the hash function that is implemented in the \acs{VR} application and secret keys. If the reported data and hash value stored in the blockchain overlap, the data collector confirms the smart contract and that the obtained data are valid. Then, the smart contract automatically distributes $3X$ and $X$ \acs{ETH} to the subject and the data collector, respectively. In the end, each subject earns $X$ unit of \acs{ETH} for participation in the experiment, where the data collector obtains the recorded eye tracking data. Due to the immutable nature of blockchains and smart contracts, none of the parties can alter the values in the blockchain and behave as an adversary.

In the protocol, as both parties stake more than the amount they are supposed to spend or earn, they have to act honestly in order to achieve successful data collection and compensation distribution, otherwise data collection is not finalized and parties lose the amount they stake. In particular, the subjects have to stake double the amount of compensation that they will receive whereas the data collectors have to stake double the amount of compensation that they will give. Since the smart contracts are immutable and stored in the blockchain, a third-party application is not needed for compensation distribution or data manipulation, which is useful from a privacy preservation point of view.

\subsubsection{Assumptions}
We have three main assumptions in our protocol. Firstly, validation of the quality of the recorded eye tracking data is automatically completed by the \acs{VR} application at the end of each experiment by using metrics such as tracking ratio or confidence levels reported by the eye tracker. Due to poor calibration for eye tracking, removal of the head-mounted display (HMD) in the middle of experiment, or similar reasons, recorded eye tracking data may have an extensive amount of noise level. Instead of cleaning data offline extensively after the experiments, our protocol assumes that data validity is checked at the end of each experiment by the \acs{VR} application and the application informs the subjects whether the quality of the data is valid and reportable.

Secondly, the recorded eye tracking data is hashed using white-box cryptography and stored at the end of the experiment by the \acs{VR} application to be stored in the blockchain for validation of the data integrity. In traditional eye tracking experiments, subjects participate in the experiments on the devices that are provided by the data collectors. However, in the remotely located subject participation, subjects run the applications on their own devices. Therefore, they have direct access to the provided application and if any adversarial subject analyzes the binary implementation of an application that does not use white-box paradigm, they can easily infer the used hash function and generate hash output for fake data. On the contrary, when using white-box cryptography, the secret keys are not leaked even if adversaries analyze the binary implementation. Even if an adversary infers the hash function, a hash output for fake data cannot be generated without secret keys. Therefore, white-box paradigm is used by the \acs{VR} application. If subjects alter the recorded data or send fake data to the data collector, the generated hash value will not match the recorded data, which leads subjects to lose their staked compensation in the smart contract.

Lastly, as our protocol does not use any centralized third party, a secure direct communication is needed for exchanging the application and the recorded data between the data collector and subjects. In case it is not available, a bridging third party only for communication purposes can be implemented.

\subsubsection{Implementation}
We select the Ethereum platform for our proof-of-concept due to its public blockchain, relatively higher number of nodes, and status as one of the most mature platforms in the blockchain domain. However, any blockchain-based platform that supports smart contracts can be opted in.

We implement the blockchain related part of the protocol, particularly the steps $2$, $3$, $4$, and $6$ discussed in Section~\ref{subsection:flow}, using Solidity\footnote{\url{https://docs.soliditylang.org/}} and a simple purchase smart contract~\cite{ng_smart_contract} on the Ropsten Testnet of the Ethereum platform. In the beginning of the data collection, both the data collector and the subject hold $1$ \acs{ETH} in their wallets. We select the compensation amount as $0.025$ \acs{ETH}. For the calculation of the hash output of the recorded eye tracking data, we use synthetic data; however, any eye tracker integrated to modern \acs{HMD}s can be used in a real-world implementation. The hash value of the data is calculated using Keyed-Hashing for Message Authentication (HMAC)~\cite{hmac_97} and Secure Hash Algorithm$3$-$512$ (SHA$3$-$512$)~\cite{sha3_standard} as it is possible to have white-box implementation of the \acs{HMAC}. The calculated hash value is stored in the input data field of a self transaction from the subject. After the protocol execution, the data collector and the subject hold $\approx 0.975$ and $\approx 1.025$ \acs{ETH} when the transaction fees are subtracted, respectively. The smart contract, overall procedure, the data collector, and the subject parties are available on the Ropsten Testnet via following link: \url{https://ropsten.etherscan.io/address/0x0e937a4a4618dd8d5a12ec4a9f8fd61d6bfd13e4}.

In the above link, there are three transactions in chronological order that correspond to steps $2$, $3$, and $6$ of our protocol. The subject (address starting with $0x89$) and the data collector (address starting with $0x44$) of our implementation are available in the source of the first and the second transactions of the smart contract, respectively. There are three transactions in the subject address. The first and second transactions are for depositing the test \acs{ETH} and initiating the smart contract, respectively. The third transaction in the subject address is a self transaction and corresponds to step $4$ of our protocol. In the ``Input Data'' field of the self-transaction, the calculated data hash is available.

\subsection{Conclusion and Discussion}
We proposed a blockchain-based protocol for collecting eye tracking data in \acs{VR} from remotely located subjects. As eye tracking experiments are usually conducted in laboratory settings with a limited number of subjects from similar backgrounds in terms of socio-demographic characteristics, it is a challenge to draw generic data-driven conclusions. Due to the laboratory settings, subjects may not behave naturally. While our protocol overcomes the drawbacks of the traditional eye tracking data collection setups without needing a centralized third party for data collection and compensation management, it also creates an opportunity to carry out the data collection anonymously, which is optimal for the privacy of subjects. We focused on the eye tracking data collection in \acs{VR} setups as validation of the eye tracking data and generation of the controlled environments with \acs{VR} can be done easily. In addition, current availability of eye tracker integrated \acs{HMD}s in the consumer market supports our protocol for \acs{VR} and eye tracking data; however, the proposed protocol may be useful for other types of eye trackers, sensors, or environments as long as identical configurations between subjects can be generated. In contrast to traditional eye tracking experiments, subject consent, additional questionnaire, or similar information should be collected digitally using our protocol. Our protocol may also require an application-level effort to have one-to-one mapping between subjects and experiments.

As future work, we plan to have an end-to-end implementation of our protocol along with a real \acs{VR} application and \acs{HMD}-integrated eye tracker. In addition, while transactions are applied anonymously on the public blockchains, it is possible to track them. Recent work on eye tracking, \acs{HCI}, and \acs{VR}~\cite{bozkir_ppge, steil_diff_privacy, fuhl2020reinforcement, sumer2020automated, bozkir2020differential} emphasize the importance of privacy preservation. Combining privacy-preserving methods with our protocol remains as part of future work.

\subsection*{Acknowledgments}
E.B. thanks Batuhan Sar{\i}o\u{g}lu for useful discussions on blockchains.

\backmatter

\cleardoublepage
\phantomsection
\bibliographystyle{unsrtnat}
\bibliography{thesis}

\end{document}